\newtheorem{observation}{Observation}[section]
 \newtheorem{theorem}{Theorem}[section]
 \newtheorem{lemma}[theorem]{Lemma}
 \newtheorem{corollary}[theorem]{Corollary}
 \newtheorem{definition}[theorem]{Definition}
 \newtheorem{remark}[theorem]{Remark}
\def\GrabProofArgument[#1]{ #1: \egroup\ignorespaces}
\def\proof{\noindent\textbf\bgroup Proof%
	\@ifnextchar[{\GrabProofArgument}{. \egroup\ignorespaces}}
\newcommand{\agents}{\mathcal{N}}
\newcommand{\items}{\mathcal{M}}
\newcommand{\ite}{b}
\newcommand{\agent}{a}
\newcommand{\inc}{\mathsf{inc}}
\newcommand{\satagents}{\mathcal{S}}
\newcommand{\cone}{\mathcal{C}_1}
\newcommand{\ctwo}{\mathcal{C}_2}
\newcommand{\cthree}{\mathcal{C}_3}
\newcommand{\wcal}{\mathcal{W}}
\newcommand{\agentsv}{\mathcal{Y}}
\newcommand{\itemsv}{\mathcal{X}}
\newcommand{\itemv}{x}
\newcommand{\agentv}{y}
\newcommand{\parttwo}{\hat{\mathcal{Y}}}
\newcommand{\partone}{\hat{\mathcal{X}}}
\newcommand{\vone}{\hat{x}}
\newcommand{\vtwo}{\hat{y}}
\newcommand{\firstset}{f}
\newcommand{\secondset}{g}
\newcommand{\valu}{V}
\newcommand{\domp}{\mathsf{ground}}
\newcommand{\MMS}{\mathsf{MMS}}
\newcommand{\poly}{\mathsf{poly}}
\newcommand{\fitems}{\mathcal{F}}
\newcommand{\epsteinefficient}{Epstein and Levin}
\newcommand{\feigemaximizing}{Feige}
\newcommand{\chakrabartyallocating}{Chakrabarty, Chuzhoy,and Khanna}
\newcommand{\asadpourapproximation}{Asadpour and Saberi}
\newcommand{\procacciafirst}{Procaccia and Wang}
\newcommand{\steinhausfirst}{Steinhaus}
\newcommand{\foleyfirst}{Foley}
\newcommand{\dubinsfirst}{Dubins and Spainer}
\newcommand{\amanatidisapproximation}{Amanatidis \textit{et al.}} 
\newcommand{\goemansapproximating}{Goemans \textit{et al.}} 
\newcommand{\bansalsanta}{Bansal and Sviridenko} 
\newcommand{\amanatidisapproximationful}{Amanatidis, Markakis, Nikzad, and Saberi}   
\newcommand{\bagfilling}{\textsf{bag filling}}
\newcommand{\MCMWM}{\textsf{MCMWM}}
\newcommand{\ceil}[2]{#1^{#2}}
\newcommand*\samethanks[1][\value{footnote}]{\footnotemark[#1]}
\newcounter{proccnt}
\newcommand{\konote}[1]{}
\title{Fair Allocation of Indivisible Goods: Improvement and Generalization
}
\author{
	Mohammad Ghodsi \thanks{Sharif University of Technology} 
	\and MohammadTaghi HajiAghayi \thanks{University of Maryland}
	\thanks{Supported in part by NSF CAREER award CCF-1053605,  NSF BIGDATA grant IIS-1546108, NSF AF:Medium grant CCF-1161365, DARPA GRAPHS/AFOSR grant FA9550-12-1-0423, and another DARPA SIMPLEX grant.
		Portions of this research were completed while the first and the second authors were visitors at the Simons Institute for the Theory of Computing.}
	\and Masoud Seddighin \samethanks[1]
	\and Saeed Seddighin \samethanks[2] \samethanks[3]
	\and Hadi Yami \samethanks[2] \samethanks[3]
}
\begin{document}
	\newcommand{\ignore}[1]{}
\renewcommand{\theenumi}{(\roman{enumi}).}
\renewcommand{\labelenumi}{\theenumi}
\sloppy

%
%

\date{}

\maketitle
\textit{``Being good is easy, what is difficult is being just."} \cite{hugo2000miserables} (Victor Hugo, 1862)

\thispagestyle{empty}

\begin{abstract}
We study the problem of fair allocation for indivisible goods. We use the \textit{the maxmin share} paradigm introduced by Budish~\cite{Budish:first} as a measure for fairness. \procacciafirst ~\cite{Procaccia:first} (EC'14) were first to investigate this fundamental problem in the additive setting. In contrast to what real-world experiments suggest, they show that a maxmin guarantee (1-$\MMS$ allocation) is not always possible even when the number of agents is limited to 3. While the existence of an approximation solution (e.g. a $1/2$-$\MMS$ allocation) is quite straightforward, improving the guarantee becomes subtler for larger constants.
\procacciafirst ~\cite{Procaccia:first}\footnote{Recipient of the best student paper award at EC'14.} provide a proof for existence of a $2/3$-$\MMS$ allocation and leave the question open for better guarantees.

Our main contribution is an answer to the above question. We improve the result of \procacciafirst\! to a $3/4$
 factor in the additive setting. The main idea for our $3/4$-$\MMS$ allocation method is clustering the agents. To this end, we introduce three notions and techniques, namely \textit{reducibility}, \textit{matching allocation}, and \textit{cycle-envy-freeness}, and prove the approximation guarantee of our algorithm via non-trivial applications of these techniques. Our analysis involves coloring and double counting arguments that might be of independent interest.

One major shortcoming of the current studies on fair allocation is the additivity assumption on the valuations. We alleviate this by extending our results to the case of submodular, fractionally subadditive, and subadditive settings. More precisely, we give constant approximation guarantees for submodular and XOS agents, and a logarithmic approximation for the case of subadditive agents. Furthermore, we complement our results by providing close upper bounds for each class of valuation functions. Finally, we present algorithms to find such allocations for additive, submodular, and XOS settings in polynomial time. The reader can find a summary of our results in Tables \ref{resultstable} and \ref{resultstable2}. 

\end{abstract}
\section{Introduction}\label{introduction}
Suppose we have a set of $m$ indivisible items, and wish to distribute them among $n$ agents. Agents have  valuations for each set of items that are not necessarily identical. How hard is it to divide the items between the agents to make sure everyone receives a fair share?

\setcounter{page}{1}
Fair division problems have been vastly studied in the past 60 years, (see, e.g.~\cite{amanatidis2015approximation,asadpour2010approximation, Bouveret:first,Budish:first, Dubins:first,Procaccia:first,Steinhaus:first,alijani2017envy}). This line of research was initiated by the work of {\steinhausfirst}   in 1948~\cite{Steinhaus:first} in which the author introduced the \emph{cake cutting} problem as follows: given a heterogeneous cake and a set of agents with different valuation functions, the goal is to find a fair allocation of the cake to the agents.

In order to study this problem, several notions of fairness are proposed, the most famous of which are \emph{proportionality} and \emph{envy-freeness}, introduced by {\steinhausfirst} in 1948~\cite{Steinhaus:first} and {\foleyfirst} in 1967~\cite{Foley:first}. A division is called proportional, if the total value of the allocated pieces to each agent is at least $1/n$ fraction of his total value for the entire cake, where $n$ is the number of agents. In an envy-free division, no agent wishes to exchange his share with another agent, i.e., every agent's valuation for his share is at least as much as his valuation for the other agents' shares. Clearly, proportionality is implied by envy-freeness.

{\dubinsfirst} in 1961~\cite{Dubins:first} propose a simple {\emph{moving knife} procedure that can guarantee a proportional division of the cake. For envy-freeness, Selfridge and Conway design an algorithm that guarantees envy-freeness when the number of agents is limited to 3. Later, Brams and Taylor extend this guarantee to an arbitrary number of agents in the additive setting~\cite{brams}.

The problem becomes even more subtle when we assume the items are indivisible. It is not hard to show that for indivisible items, neither of proportionality nor envy-freeness can be guaranteed; for instance, when the number of items is smaller than the number of agents, at least one agent receives no items. 


From a theoretical standpoint, proportionality and envy-freeness are too strong to be delivered in the case of indivisible goods. Therefore, Budish~\cite{Budish:first} proposed a newer notion of fairness for indivisible goods, namely \textit{the maxmin share}, which attracted a lot of attention in recent years \cite{Procaccia:first,amanatidis2015approximation,kurokawa2015can,Bouveret:first,caragiannis2016unreasonable,barman2017approximation,suksompong2017approximate,farhadi2017fair}. Imagine that we ask an agent $\agent_i$ to partition a set $\cal M$ of $m$ items into $n$ bundles and collect the bundle with the smallest value. To maximize his profits, agent $\agent_i$ tries to divide $\cal M$ in a way that maximizes the value of the bundle with the lowest value to him. Based on this, the maxmin share of an agent $\agent_i$, denoted by $\MMS_i$, is the value of the least valuable bundle in agent $\agent_i$'s allocation; that is, the maximum profit $\agent_i$ can obtain in this procedure. Clearly, $\MMS_i$ is the most that can be guaranteed to an agent, since if all valuations are the same, at least one agent obtains a valuation of at most $\MMS_i$ from his allocated set. The question is then, whether $\MMS_i$ is a feasible guarantee? Therefore, we call an allocation $\MMS$, if every agent $\agent_i$ receives a collection of items that are together worth at least $\MMS_i$ to him. Bouverret~\cite{Bouveret:first} showed that for the restricted cases, when the valuations of the items for each agent are either $0$ or $1$, or when $m \leq n+3$, an $\MMS$ allocation is guaranteed to exist. In other words, each $\agent_i$ can be guaranteed to receive a profit of at least $\MMS_i$ from his allocated items.


While the experiments support the existence of an $\MMS$ allocation in general~\cite{Bouveret:first}, this theory was refuted by the pioneering work of \procacciafirst~\cite{Procaccia:first}. {\procacciafirst}~\cite{Procaccia:first} provided a surprising counter-example that admits no $\MMS$ allocation. They also show that a $2/3$-$\MMS$ allocation always exists, i.e. there exists an algorithm that allocates the items to the agents in such a way that every agent $\agent_i$ receives a share that is worth at least $2/3 \MMS_i$ to him.
 In particular, they show for $n \leq 4$, their algorithm finds a $3/4$-$\MMS$ allocation. However, their algorithm does not run in polynomial time unless we assume the number of agents is bounded by a constant number. In a recent work, \amanatidisapproximationful~\cite{amanatidis2015approximation}, improve this result by presenting a PTAS algorithm for finding a $(2/3 - \epsilon)$-$\MMS$ allocation to any number of agents. However, the heart of their algorithm is the same as~\cite{Procaccia:first}. In addition to this, \amanatidisapproximation\enspace prove that for $n=3$, a $7/8$-$\MMS$ allocation is always possible. Note that, the counter example provided by \procacciafirst~\cite{Procaccia:first}  requires a number of goods that is exponential to the number of agents. Kurokawa, Procaccia, and Wang in ~\cite{kurokawa2015can} provided a better construction for the counter-example with a linear number of goods.

 In this work, we improve the result of \procacciafirst~\cite{Procaccia:first} by proving that a $3/4$-$\MMS$ allocation always exists. We also give a PTAS algorithm to find such an allocation in polynomial time. Of course, this only holds if the valuation of the agents for the items are additive. We further go beyond the additive setting and extend this result to the case of submodular, XOS, and subadditive settings. More precisely, we give constant approximation algorithms for submodular and XOS settings that run in polynomial time. For the subadditive case, we prove that a $1/10 \lceil\log m\rceil$-$\MMS$ allocation is guaranteed to exist. We emphasize that finding the exact value of $\MMS_i$ for an agent is NP-hard. Furthermore, to the best of our knowledge, no PATS is known for computing the  $\MMS$ values  in non-additive settings. Thus, any $\alpha$-$\MMS$ allocation algorithm in non-additive settings must overcome the difficulty that the value of $\MMS_i$ is not known in advance. Therefore, our algorithms don't immediately follow from our existential proofs.

In order to present the results and techniques, we briefly state the fair allocation problem. Note that you can find a formal definition of the problem with more details in Section \ref{prelim}. The input to a maxmin fair allocation problem is a set $\items$ of $m$ items and a set $\agents$ of $n$ agents. Fix an agent $\agent_i \in \agents$ and let $V_i: 2^\items \rightarrow \mathbb{R}^+$ be the valuation function of $\agent_i$. Consider the set $\Pi_r$ of all partitionings of the items in $\items$ into $r$ non-empty sets. We define $\MMS_{V_i}^r(\items)$ as follows:
 $$\MMS_{V_i}^r(\items) =  \max_{P^* = \langle P^*_1,P^*_2,\ldots,P^*_r\rangle \in \Pi_r} \min_{1 \leq j \leq r} V_i(P^*_j).$$
In the context of fair allocation, we denote the maxmin value of an agent $\agent_i$ by $\MMS_i = \MMS^n_{V_i}(\items)$. The fair allocation problem is defined as follows: \textit{for a given parameter $\alpha$, can we distribute the items among the agents in such a way that every agent $\agent_i$ receives a set of items with a value of at least $\alpha \MMS_i$ to him?} Such an allocation is called an $\alpha$-$\MMS$ allocation. We consider the fair allocation problem in both additive and non-additive settings (including submodular, XOS, and subadditive valuations).  For non-additive settings, we use oracle queries to  access the valuations. Note that, for non-additive settings, eliciting the exact valuation function of each agent needs an exponential number of queries. However, our methods for allocating the items in non-additive settings only uses a polynomial number of queries.

There are many applications for finding fair allocations in the additive and non-additive settings. For example,  
\emph{spliddit}, a popular fair division website\footnote{
\url{http://www.spliddit.org}} suggests indisputable and provably fair solutions for many real-world problems such as sharing rents, distributing tasks, dividing goods, etc. For dividing goods, \emph{spliddit} uses the maximum Nash welfare allocation (the allocation that maximizes the product of utilities). In \cite{caragiannis2016unreasonable} (EC'17), Caragiannis et.al., proved with a tight analysis that a maximum Nash welfare allocation is a $2/(1+\sqrt{4n-3})$-$\MMS$ allocation. However, the current best approximation gurantee and the state-of-the-art method for allocating indivisible goods is based on the result of ~\cite{Procaccia:first} that guarantees a $2/3$-$\MMS$ allocation. We believe our results can improve their performance.  It is worth mentioning that despite the complexity of analysis, the idea behind our algorithm is simple and it can be easily implemented. The reader can find a set of materials including the implementation of our method and an animated explanation of our algorithm in \href{https://www.cs.umd.edu/\~saeedrez/fair.html}{https://www.cs.umd.edu/$\sim$saeedrez/fair.html}.

\subsection{Relation to other Fundamental Problems}
In this work, we study the allocation of indivisible items to maximize fairness. However, maximizing fairness is not the only goal that has been considered in the literature. In the following, we briefly explain other variants of this problem that are very fundamental and have received a lot of attention in recent years.

\textbf{Welfare maximization:} Perhaps the simplest version of the resource allocation problem with indivisible goods is \textit{the welfare maximization problem}. In this problem, we are given a set $\items$ of goods, and the goal is to allocate the items to a number of agents to maximize the welfare. The problem is trivial if we assume the agents to be additive. Therefore, the main focus has been on submodular, XOS, and subadditive agents~\cite{dobzinski2005approximation,dobzinski2006improved,feige2009maximizing,lehmann2001combinatorial}(STOC'05, SODA'06, STOC'06). \feigemaximizing~\cite{feige2009maximizing} gives tight algorithms that solve the problem in the subadditive and XOS settings with approximation factors $1/2$ and $1-1/e$, respectively. The approximation factors match the existing lower bounds for these settings.

\textbf{Max-min allocation:} Another variant of this problem is to maximize the least value that any agent obtains from the allocation.  \asadpourapproximation~\cite{asadpour2010approximation}(STOC'07) give the first polynomial time algorithm for this problem that approximates the optimal solution within a factor of $O(\sqrt{n}\log^3n)$ in the additive setting. This was later improved by \chakrabartyallocating~\cite{chakrabarty2009allocating}(FOCS'09) to an $O(m^\epsilon)$ approximation factor. This problem has also been studied with non-additive agents by \goemansapproximating~\cite{goemans2009approximating}(SODA'09). They give an $O(\sqrt{m}n^{1/4}\log n \log^{3/2}m)$ approximation algorithm that runs in polynomial time and solves the problem when the agents valuations are submodular.

\textbf{Santa Claus:} A special case of the above problem in which the valuation of every agent for an item is either 0, or a fixed value is called \textit{the Santa Claus problem}. This problem was first introduced by \bansalsanta~\cite{bansal2006santa} in STOC'09. In this paper, they give an $O(\log \log n/\log \log \log n)$ approximation algorithm for this problem that runs in polynomial time. Later Feige~\cite{feige2008allocations}(SODA'08) showed that the objective value of the problem can be approximated within a constant factor in polynomial time. This was later turned into a constructive proof by Annamalai \textit{et al.}~\cite{annamalai2015combinatorial}(SODA'14).
\subsection{Our Results and Techniques}\label{results}
Throughout this paper, we study the fair allocation problem for additive and non-additive agents.
\procacciafirst~\cite{Procaccia:first} study the fair allocation problem and show a $2/3$-$\MMS$ allocation is guaranteed to exist for any number of additive agents. We improve this result in two different dimensions: (i) we improve the factor $2/3$ to a factor $3/4$ for additive agents. (ii) we provide similar guarantees for submodular, fractionally subadditive, and subadditive agents. Moreover, we provide algorithms that find such allocations in polynomial time. A brief summary of our results is illustrated in Tables \ref{resultstable} and \ref{resultstable2}.

\subsubsection{Additive Setting}

While the existence of a $1/2$-$\MMS$ allocation is trivial in additive setting (see the rest of this section for more details), obtaining a better bound is more complicated. As mentioned before, the pioneering work of \procacciafirst ~\cite{Procaccia:first} presented the first proof to the existence of a $2/3$-$\MMS$ allocation in the additive setting. 
On the negative side, they show that their analysis is tight, i.e. 
their method cannot be used to obtain a better approximation guarantee. However, whether or not a better bound could be achieved via a more efficient algorithm remains open as \procacciafirst~\cite{Procaccia:first} pose this question as an open problem.

We answer the above question in the affirmative. Our main contribution is a proof to the existence of a $3/4$-$\MMS$ allocation for additive agents. Furthermore, we show that such an allocation can be found in polynomial time. This result improves the work of \procacciafirst ~\cite{Procaccia:first} and \amanatidisapproximation ~\cite{amanatidis2015approximation} where the former gives a proof to the existence of a $2/3$-$\MMS$ allocation and the latter presents a PTAS algorithm for finding a $2/3$-$\MMS$ allocation.

\begin{table}[t]
	\caption{Summary of the results}
	\hspace{-1cm}\begin{tabular}{|l|c|c|c|c|}
		\hline
		& Additive & Submodular & XOS & Subadditive\\
		\hline
		Previous work (existential proof) & $2/3$-$\MMS$  ~\cite{Procaccia:first} & $1/10$-$\MMS$ \cite{barman2017approximation}\footnote{In a parallel work to ours, Barman and Murthy in \cite{barman2017approximation} (\textbf{EC'17}) considered the submodular case and proposed a $1/31$ approximation guarantee.}&- &-\\
		Previous work (polytime algorithm) &  $2/3-\epsilon$-$\MMS$  ~\cite{amanatidis2015approximation} & $1/31$-$\MMS$ ~\cite{barman2017approximation} & - & -\\
		Previous work (upper bound) & $1-\epsilon$-$\MMS$ ~\cite{Procaccia:first} & - & - & -\\
		\hline
		\color{magenta}Our results \color{black} (existential proof) & $3/4$-$\MMS$ & $1/3$-$\MMS$ & $1/5$-$\MMS$  & $1/10 \lceil\log m\rceil$-$\MMS$ \\
		& \color{magenta}Theorem \ref{34main}\color{black} & \color{magenta}Theorem \ref{submodulartheorem}\color{black}& \color{magenta} Theorem \ref{xosproof} \color{black}& \color{magenta} Theorem \ref{subadditiveproof} \color{black}\\
		\hline
		\color{magenta}Our results \color{black} (polytime algorithm) &  $3/4-\epsilon$-$\MMS$ & $1/3$-$\MMS$ & $1/8$-$\MMS$ & -\\
		
		& \color{magenta} Theorem \ref{addpoly} \color{black} & \color{magenta} Theorem \ref{subsubalg} \color{black} & \color{magenta} Theorem \ref{xa} \color{black} & \\
		
		\hline
		
		\color{magenta}Our results \color{black} (upper bound) & - & $3/4$-$\MMS$ & $1/2$-$\MMS$ & $1/2$-$\MMS$\\
		& & \color{magenta} Theorem \ref{xosupperbound} \color{black} & \color{magenta} Theorem \ref{subupperbound} \color{black} &  \color{magenta} Theorem \ref{subupperbound} \color{black} \\
 		\hline
	\end{tabular}

	\begin{tablenotes}
		\item
		\item
	\end{tablenotes}
	\label{resultstable}
\end{table}

\begin{table}[t]\centering
	\caption{Results for a limited number of agents in the additive setting}
	\begin{tabular}{|l|c|c|c|c|}
		\hline
		& $n=3$ & $n=4$\\
		\hline
		\procacciafirst ~\cite{Procaccia:first}  & $3/4$-$\MMS$  & $3/4$-$\MMS$\\
		\hline
		\amanatidisapproximation ~\cite{amanatidis2015approximation}  & $7/8$-$\MMS$  & -\\
		\hline
		\color{magenta}Our result \color{black} & - & $4/5$-$\MMS$\\
		& & \color{magenta} Theorem \ref{45main} \color{black}\\
		\hline
	\end{tabular}

	\begin{tablenotes}
		\item
		\item
		\item
	\end{tablenotes}
	
	\label{resultstable2}
\end{table}

\vspace{0.2cm}
{\noindent \textbf{Theorem} \ref{addpoly} [restated]. \textit{Any fair allocation problem with additive agents admits a $3/4$-$\MMS$ allocation. Moreover, a $(3/4-\epsilon)$-$\MMS$ allocation can be found in time $\poly(n,m)$ for any $\epsilon > 0$.\\}}

The result of Theorem \ref{addpoly} is surprising, since most of the previous methods provided for proving the existence of a $2/3$-$\MMS$ allocation were tight.  
This convinced many in the community that $2/3$ is the best that can be guaranteed. This shows that the current techniques and known structural properties of maxmin share are not powerful enough to prove the bounds better than $2/3$. In this paper, we provide a better understanding of this notion by demonstrating several new properties of maxmin share. For example, we introduce a generalized form of reducibility and develop double counting techniques that are closely related to the concept of maxmin-share.

For a better understanding of our algorithm, we start with the case where valuations of the agents for all items are small enough. More precisely, let $0 < \alpha < 1$ be a constant number and assume for every agent $\agent_i$ and every item $\ite_j$, the value of agent $\agent_i$ for item $\ite_j$ is bounded by $\alpha \MMS_i$. In this case, we propose the following simple procedure to allocate the items to the agents.

\begin{itemize}
	\item Arrange the items in an arbitrary order.
	\item Start with an empty bag and add the items to the bag one by one with respect to their order.
	\item Every time the valuation of an agent $\agent_i$ for the set of items in the bag reaches $(1-\alpha)\MMS_i$, give all items of the bag to that agent, and continue with an empty bag.  In case many agents are qualified to receive the items, we choose one of them arbitrarily. From this point on, we exclude the agent who received the items from the process.
\end{itemize}
We call this procedure the $\bagfilling$ algorithm. One can see this algorithm as an extension of the famous moving knife algorithm for indivisible items.
It is not hard to show that the $\bagfilling$ algorithm guarantees a $(1-\alpha)$-$\MMS$ allocation to all of the agents. The crux of the argument is to show that every agent receives at least one bag of items. To this end, one could argue that every time a set of items is allocated to an agent $\agent_i$, no other agent $\agent_{j}$ loses a value more than $\MMS_j$. This, together with the fact that $\valu_i(\items) \geq n \MMS_i$, shows that at the end of the algorithm, every agent receives a fair share ($(1-\alpha)$-$\MMS$) of the items. 

This observation sheds light on the fact that low-value items can be distributed in a more efficient way. Therefore, the main hardness is to allocate the items with higher values to the agents. To overcome this hardness, we devise a clustering method.
Roughly speaking, we divide the agents into three clusters according to their valuation functions. We prove desirable properties for the agents of each cluster. Finally, via a procedure that is similar in spirit to the $\bagfilling$ algorithm but more complicated, we allocate the items to the agents. 

Our clustering method is based on three important principles: \textit{reducibility}, \textit{matching allocation}, and \textit{cycle-envy-freeness}. 
We give a brief description of each principle in the following.

\textbf{Reducibility:} The reducibility principle is very simple and elegant but plays an important role in the allocation process. Roughly speaking, consider a situation where for an agent $\agent_i$ and a set $S$ of items we have the following properties:
$$\valu_i(S) \geq \alpha \MMS_i$$
and $$ \forall \agent_j \neq \agent_i \qquad \MMS_j^{n-1}(\items \setminus S) \geq \MMS_j,$$ where $\valu_i(S)$ is the valuation of agent $\agent_i$ for subset $S$ of items. Intuitively, since the maxmin shares of all agents except $\agent_i$ for the all items other than set $S$ are at least as much as their current maxmin shares, allocating set $S$ to $\agent_i$ cannot hurt the guarantee. In other words, given that an $\alpha$-$\MMS$ allocation is possible for all agents except $\agent_i$ with items not in $S$, we can allocate set $S$ to agent $\agent_i$ and recursively solve the problem for the rest of the agents. Although the definition of reducibility is more general than what mentioned above, the key idea is that reducible instances of the problem can be transformed into irreducible instances. More precisely, we show that in order to prove the existence of an $\alpha$-$\MMS$ allocation, it only suffices to show this for $\alpha$-irreducible instances of the problem (see Observation \ref{reducibility}). This makes the problem substantially simpler, since $\alpha$-irreducible instances of the problem have many desirable properties. For example, in such instances, the value of every agent $\agent_i$ for each item is less than $\alpha \MMS_i$ (see Lemma \ref{remove1}). By setting $\alpha = 1/2$, this observation along with the analysis of the $\bagfilling$ algorithm, proves the existence of a $1/2$-$\MMS$ allocation. It is worth to mention that a special form of reducibility, where $|S|=1$ is used in the previouse works~\cite{amanatidis2015approximation,Procaccia:first}. 


\textbf{Matching allocation:} At the core of the clustering part, we use a well-structured type of matching to allocate the items to the agents. Intuitively, we cluster the agents to deal with high-value or in other words \textit{heavy} items. 
In order to cluster a group of agents, we find a subset $T$ of agents and a subset $S$ of items, together with a matching $M$ from $S$ to $T$. We choose $T$, $S$, and $M$ in a way that (i) every item assigned to an agent has a value of at least $\beta$ to him, (ii) agents who do not receive any items have a value less than $\beta$ for each of the assigned items. Such an allocation requires careful application of several properties of maximal matchings in bipartite graphs described in Section \ref{additive:observations}.  A matching with similar structural properties is previousely used by Procaccia and Wang \cite{Procaccia:first} to allocate the bundles to the agents. In this paper, we reveal more details and precisely characterise the structure of such matchings. We use such matchings in two main steps: selecting the agents for the first and second clusters and merging the items.



\textbf{Cycle-envy-freeness:} Envy-freeness is itself a well-known notion for fairness in the resource allocation problems. However, this notion is perhaps more applicable to the allocation of divisible goods. In our algorithm, we use a much weaker notion of envy-freeness, namely \textit{cycle-envy-freeness}. A cycle-envy-free allocation contains no cyclic permutation of agents, such that each agent envies the next agent in the cycle. In the clustering phase, we choose a matching $M$ in a way that preserves cycle-envy-freeness for the clustered agents. More details about this can be found in Section \ref{additive:clusters}. 

Cycle-envy-freeness plays a key role in the second phase of the algorithm. As aforementioned, our method in the assignment phase is closely related to the $\bagfilling$ procedure described above. The difference is that the efficiency of our method depends on the order of the agents who receive the items. Based on the notion of cycle-envy-freeness, we prioritize the agents and, as such, we show the allocation is fair. An analogous concept is previousely used in \cite{Saberi:first}, albeit with a different application than ours. 

As mentioned before, our algorithm consists of two phases: (i) clustering the agents and (ii) satisfying the agents. In the first phase, we cluster the agents into three sets namely $\cone$,$\ctwo$, and $\cthree$. In addition to this, for $\cone$ and $\ctwo$ we also have refinement procedures to make sure the rest of the unallocated items have a low value to the agents of these clusters. In the second phase, based on a method similar to the $\bagfilling$ algorithm described above, we allocate the rest of the items to the agents. A flowchart of our algorithm is depicted in Figure \ref{go}. The main steps along with brief descriptions of each step are highlighted in the flowchart. In section \ref{overview}, we present the ideas behind each of these steps and show how the entire algorithm leads to a proper allocation.

In Appendix \ref{45}, we study the case where we only have four additive agents. \procacciafirst~\cite{Procaccia:first} showed that in this case a $3/4$-$\MMS$ allocation is possible. We improve this result by giving an algorithm that finds a $4/5$-$\MMS$ allocation in this restricted setting. Note that this also leads to an algorithm that finds a $4/5-\epsilon$-$\MMS$ allocation in polynomial time.  \amanatidisapproximation~\cite{amanatidis2015approximation} also show that a $7/8$-$\MMS$ allocation is possible when the number of agents is equal to 3. These results indicate that better bounds can be achieved for the additive setting. We believe our framework can be used as a building block to obtain better bounds (see Section \ref{overview} for more details).

\begin{figure}[!htbp]
\centerline{
\includegraphics[scale=0.8]{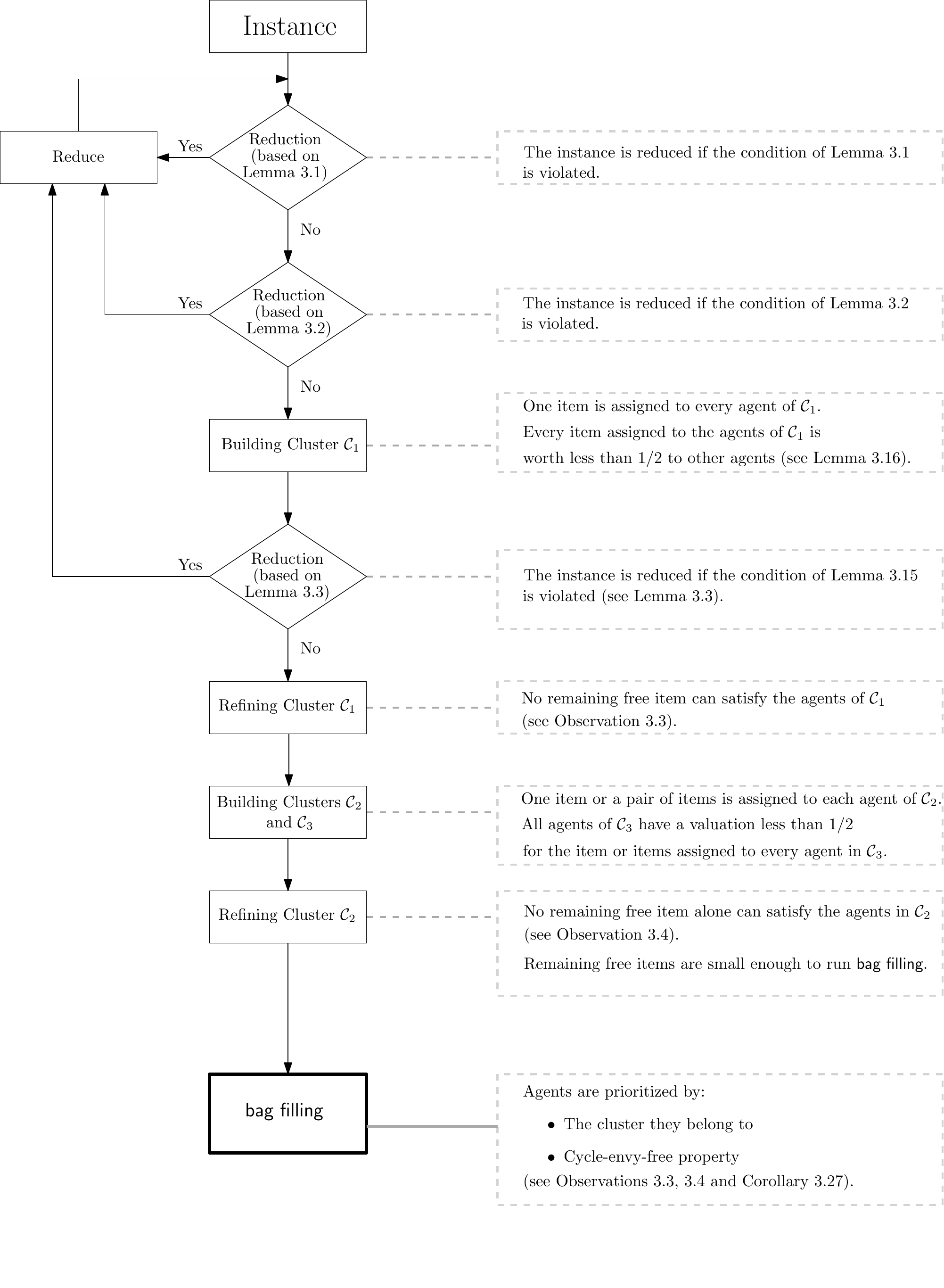}
}
\caption{A flowchart of the $3/4$-$\MMS$ allocation algorithm}
\label{go}
\end{figure} 

\subsubsection{Submodular, XOS, and Subadditive Agents}
Although the problem was initially proposed for additive agents, it is very well-motivated to extend the definition to other classes of set functions. For instance, it is quite natural to expect an agent prefers to receive two items of value 400, rather than receiving 1000 items of value 1. Such a constraint cannot be imposed in the additive setting. However, submodular functions which encompass $k$-demand valuations are strong tools for modeling these constraints. Such generalizations have been made to many similar problems, including the \textit{Santa Claus max-min fair allocation}, \textit{welfare maximization}, and \textit{secretary} problems ~\cite{bateni2013submodular,feige2009maximizing,feige2011maximizing,golovin2005max}. The most common classes of set functions that have been studied before are submodular, XOS, and subadditive functions. We consider the fair allocation problem when the agents' valuations are in each of these classes. In contrast to the additive setting in which finding a constant $\MMS$ allocation is trivial, the problem becomes much more subtle even when the agents' valuations are \textit{monotone submodular}. For instance, the $\bagfilling$ algorithm does not promise any constant approximation factor for submodular agents, while it is straight-forward to show it guarantees a $(1-\alpha)$-$\MMS$ allocation for additive agents.

We begin with submodular set functions. In Section \ref{submodular}, we show that the fair allocation problem with submodular agents admits a $1/3$-$\MMS$ allocation. In addition, we show, given access to \textit{query oracles}, one can find such an allocation in polynomial time. We further complement our result by showing that a $3/4$-$\MMS$ is the best guarantee that one can hope to achieve in this setting. This is in contrast to the additive setting for which the only upper bound is that $1$-$\MMS$ allocation is not always possible. We begin by stating an existential proof.

\vspace{0.2cm}
{\noindent \textbf{Theorem} \ref{submodulartheorem} [restated]. \textit{The fair allocation problem with submodular agents admits a $1/3$-$\MMS$ allocation. 
 \\}}

Our proof for submodular agents is fundamentally different from that of the additive setting. First, without loss of generality, we assume $\MMS_i = 1$ for every agent $\agent_i \in \agents$. Moreover, we assume the problem is $1/3$-irreducible since otherwise we can reduce the problem. Next, given a function $f(.)$, we define the \textit{ceiling function} $f^x(.)$ as follows:
$$f^x(S) = \min\{x, f(S)\} \hspace{1cm}\forall S \subseteq \domp(f).$$
An important property of the ceiling functions is that they preserve submodularity, fractionally subadditivity, and subadditivity (see Lemma \ref{ceilingfunctions}). We define the bounded welfare of an allocation $\mathcal{A}$ as $\sum \valu_i^{2/3}(A_i)$. Given that, we show an allocation that maximizes the bounded welfare is $1/3$-$\MMS$. To this end, let $\mathcal{A}$ be an allocation with the maximum bounded welfare and suppose for the sake of contradiction that in such an allocation, an agent $\agent_i$ receives a bundle of worth less than $1/3$ to him. Since $\MMS_i = 1$, agent $\agent_i$ can divide the items into $n$ sets, where each set is of worth at least $1$ to him. Now, we randomly select an element $\ite_j$ which is \textit{not} allocated to $\agent_i$. By the properties of submodular functions, we show the expected contribution of $\ite_j$ to the valuation function of $\agent_i$ is more than the expected contribution of $\ite_j$ to the bounded welfare of the allocation. Therefore, there exists an item $\ite_j$ such that if we allocate that item to agent $\agent_i$, the total bounded welfare of the allocation will be increased. This contradicts the maximality of the allocation.

Notice that Theorem \ref{submodulartheorem} is only an existential proof. A natural approach to find such a solution is to start with an arbitrary allocation and iteratively increase its bounded welfare until it becomes $1/3$-$\MMS$. The main challenge though is that we do not even know what the $\MMS$ values are. Furthermore, unlike the additive setting, we do not have any PTAS algorithm that provides us a close estimate to these values. To overcome this challenge, we propose a combinatorial trick to guess these values without incurring any additional factor to our guarantee. The high level idea is to start with large numbers as estimates to the $\MMS$ values. Every time we run the algorithm on the estimated values, it either finds a desired allocation, or reports that the maxmin value of an agent is misrepresented by at least a multiplicative factor. Given this, we divide the maxmin value of that agent by that factor and continue on with the new estimates. Therefore, at every step of the algorithm, we are guaranteed that our estimates are not less than the actual $\MMS$ values. Based on this, we show that the running time of the algorithm is polynomial, and that the allocation it finds in the end has the desired properties. The reader can find a detailed discussion in Section \ref{latter}.

\vspace{0.2cm}
{\noindent \textbf{Theorem} \ref{subsubalg} [restated]. \textit{Given access to query oracles, one can find a $1/3$-$\MMS$ allocation to submodular agents in polynomial time.\\}}

Finally, we show that in some instances with submodular agents, no allocation is better than $3/4$-$\MMS$.

\vspace{0.2cm}
{\noindent \textbf{Theorem} \ref{subupperbound} [restated]. \textit{For any integer number $c > 0$, there exists an instance of the fair allocation problem with $n \geq c$  submodular agents for which no allocation is better than $3/4$-$\MMS$.
\\}}

We show Theorem \ref{subupperbound} by a counter-example. In this counter-example we have $n$ agents and $2n$ items. Moreover, the valuation functions of the first $n-1$ agents are the same, but the last agent has a slightly different valuation function that makes it impossible to find an allocation which is better than $3/4$-$\MMS$. The number of agents in this example can be arbitrarily large.

In Section \ref{xos}, we study the problem with fractionally subadditive (XOS) agents. We first give a $1/2$ upper bound on the quality of any allocation. In other words, we show that for some instances of the problem, no allocation can guarantee anything better than $1/2$-$\MMS$ when the agents valuations are XOS. This is followed by a proof to the existence of a $1/5$-$\MMS$ allocation for any instance of the problem with XOS agents.

Similar to the submodular setting, we also provide an upper bound on the quality of any allocation in the XOS setting. We show Theorem \ref{xosupperbound} by a counter-example.

\vspace{0.2cm}
{\noindent \textbf{Theorem} \ref{xosupperbound} [restated].\textit{ For any integer number $c$, there is an instance of the fair allocation problem with XOS agents where $n \geq c$ and no allocation is better than $1/2$-$\MMS$.\\}}

Next, we state the main theorem of this section.

\vspace{0.2cm}
{\noindent \textbf{Theorem} \ref{xosproof} [restated].\textit{ The fair allocation problem with XOS agents admits a $1/5$-$\MMS$ allocation.\\}}

Our approach for proving Theorem \ref{xosproof} is similar to the proof of Theorem \ref{submodulartheorem}. Again, we scale the valuations to make sure $\MMS_i = 1$ all agents and define the notion of bounded welfare, but this time as $\sum \valu_i^{2/5}(A_i)$. However, as XOS functions do not adhere to the nice structure of submodular functions, we use a different analysis to prove this theorem. Let $\mathcal{A}$ be an allocation with the maximum bounded welfare. In case all agents receive a value of at least $1/5$, the proof is complete. Otherwise, let $\agent_i$ be an agent that receives a set of items whose value to him is less than $1/5$. In contrast to the submodular setting, giving no item alone to $\agent_i$ can guarantee an increase in the bounded welfare of the allocation. However, this time, we show there exists a set $S$ of items such that if we take them back from their recipients and instead allocate them to agent $\agent_i$, the bounded welfare of the allocation increases. The reason this holds is the following: since $\MMS_i = 1$, agent $\agent_i$ can split the items into $2n$ sets where every set is worth at least $2/5$ to $\agent_i$, otherwise the problem is $1/5$-reducible (see Lemma \ref{2nsets}). Moreover, since the valuation functions are XOS, we show that giving one of these $2n$ sets to $\agent_i$ will increase the bounded welfare of the allocation. Therefore, if $\mathcal{A}$ is maximal, then $\mathcal{A}$ is also $1/5$-$\MMS$.

Finally, we show that a $1/8$-$\MMS$ allocation in the XOS setting can be found in polynomial time. Our algorithm only requires access to demand and XOS oracles. Note that this bound is slightly worse than our existential proof due to some computational hardnesses. However, the blueprint of the algorithm is based on the proof of Theorem \ref{xosproof}.

\vspace{0.2cm}
{\noindent \textbf{Theorem} \ref{xa} [restated]. \textit{ Given access to demand and XOS oracles, we can find a $1/8$-$\MMS$ allocation for the problem with XOS agents in polynomial time.
\\}}

We start with an arbitrary allocation and increase the bounded welfare until the allocation becomes $1/8$-$\MMS$. The catch is that if the allocation is not $1/8$-$\MMS$, then there exists an agent $\agent_i$ and a set $S$ of items such that if we take back these items from their current recipients and allocate them to agent $\agent_i$, the bounded welfare of the allocation increases. In order to increase the bounded welfare, there are two computational barriers that need to be lifted. First, similar to the submodular setting, we do not have any estimates to the $\MMS$ values. Analogously, we resolve the first issue by iteratively guessing the $\MMS$ values. The second issue is that in every step of the algorithm, we have to find a set $S$ of items to allocate to an agent $\agent_i$ that results in an increase in the bounded welfare. Such a set $S$ cannot be trivially found in polynomial time. That is where the demand and XOS oracles take part. In Section \ref{former} we show how to find such a set in polynomial time. The high-level idea is the following: first, by accessing the XOS oracles, we determine the contribution of every item to the bounded welfare of the allocation. Next, we set the price of every element equal to three times the contribution of that element to the bounded welfare and run the demand oracle to find which subset has the highest profit for agent $\agent_i$. We show this subset has a value of at least $1/4$ to $\agent_i$. Next, we sort the elements of this set based on the ratio of contribution to the overall value of the set over the price of the item, and select a prefix for them that has a value of at least $1/4$ to $\agent_i$. Finally, we argue that allocating this set to $\agent_i$ increases the bounded welfare of the allocation by at least some known lower bound. This, married with the combinatorial trick to guess the $\MMS$ values, gives us a polynomial time algorithm to find a $1/8$-$\MMS$ allocation.

Note that an immediate corollary of Theorems \ref{xa} and \ref{subsubalg} is a polynomial time algorithm for approximating the maxmin value of a submodular and an XOS function within factors $1/3$ and $1/8$, respectively. 
\begin{corollary}
Let $f$ be a submodular/XOS function on a set of ground elements $S$, and let $n$ be an integer number. Given access to query oracle/demand and XOS oracles of $f$, we can partition the elements of $S$ into $n$ disjoint subsets $S_1, S_2, \ldots, S_n$ such that 
$$\min_{i=1}^n f(S_i) \geq c\cdot\MMS_f^n$$
where $\MMS_f^n$ denotes the maxmin value for function $f$ on $n$ subsets. Constant $c$ equals $1/3$ if $f$ is submodular and is equal to $1/8$ for the XOS case.
\end{corollary}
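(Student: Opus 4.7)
The plan is a direct reduction from the single-function maxmin partitioning problem to the fair allocation problem handled in Theorems \ref{subsubalg} and \ref{xa}. First, I would build a fair allocation instance with item set $\items = S$ and $n$ identical agents $\agent_1,\ldots,\agent_n$, each equipped with valuation $\valu_i = f$. Since all agents share the same valuation, the definition $\MMS_i = \MMS_{\valu_i}^n(\items)$ immediately yields $\MMS_i = \MMS_f^n$ for every $i$, so any $\alpha$-$\MMS$ allocation for this instance is precisely a partition in which every part has $f$-value at least $\alpha\,\MMS_f^n$.

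For the submodular case, I would feed the supplied query oracle of $f$ to the algorithm from Theorem \ref{subsubalg}. It returns in time $\poly(n,|S|)$ an allocation $\langle A_1,\ldots,A_n\rangle$ with $f(A_i) \geq (1/3)\,\MMS_i = (1/3)\,\MMS_f^n$ for every $i$. Setting $S_i := A_i$, after dumping any items left unassigned by the algorithm into, say, $A_1$, produces the required partition: monotonicity of $f$ guarantees that this operation does not decrease any $f(A_i)$, and each $A_i$ is automatically non-empty since $f(A_i) > 0$. The XOS case is entirely analogous, invoking Theorem \ref{xa} with the demand and XOS oracles of $f$ and replacing the constant $1/3$ with $1/8$.

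The only subtlety worth flagging is the gap between an \emph{allocation} (the output of the two cited theorems, which may leave some items unassigned) and a \emph{partition} (what the corollary demands, which must cover $S$). Monotonicity of submodular and XOS functions closes this gap at no cost, so no additional argument is needed beyond invoking the cited theorems as a black box.
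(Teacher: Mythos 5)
Your reduction is exactly the one the paper uses: build a fair-allocation instance with $n$ identical agents all valuing by $f$, observe that $\MMS_i = \MMS_f^n$ for every agent, invoke Theorem~\ref{subsubalg} (resp.\ Theorem~\ref{xa}), and read off the resulting allocation as the partition. The ``subtlety'' you flag about unassigned items is actually vacuous here, since the paper defines an allocation $\mathcal{A}=\langle A_1,\ldots,A_n\rangle$ as requiring $\bigcup A_i = \items$; no repair step is needed.
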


Finally, we investigate the problem when the agents are subadditive and present an existential proof based on a reduction to the XOS setting. In Section \ref{subadditive}, we present a lemma that enables us to 
reduce the problem with subadditive agents to the case where agents are XOS.
\begin{lemma}\label{rr1}
	Given a subadditive set function $f(.)$ which is defined on a set $\domp(f)$ and an integer number $n$, there exists an XOS function $g(.)$ such that  
	$$\MMS_g^n \geq \MMS_f^n / \bigg( 2 \lceil \log |\domp(f)|\rceil \bigg)$$
	and $g(S) \leq f(S)$ for every set $S \subseteq \domp(f)$.
\end{lemma}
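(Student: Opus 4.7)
The plan is to build $g$ as a maximum over subsets $T \subseteq \domp(f)$ of additive ``clauses'' $a_T(\cdot)$, where each $a_T$ solves the linear program
\[\max\; a(T) \quad \text{s.t.}\quad a \text{ non-negative and additive on } \domp(f),\; a(S)\le f(S)\; \forall\, S \subseteq \domp(f).\]
Setting $g(S) := \max_{T} a_T(S)$ makes $g$ XOS by construction, and the LP constraints ensure $g(S) \le f(S)$ for every $S \subseteq \domp(f)$. Writing $m := |\domp(f)|$, it then suffices to show that $a_T(T) \ge f(T)/(2\lceil\log m\rceil)$ for every $T$: taking the partition $\langle P_1^*, \ldots, P_n^*\rangle$ of $\domp(f)$ that attains $\MMS_f^n$, this gives $g(P_i^*) \ge a_{P_i^*}(P_i^*) \ge f(P_i^*)/(2\lceil\log m\rceil) \ge \MMS_f^n/(2\lceil\log m\rceil)$ for every $i$, so the same partition witnesses the desired bound on $\MMS_g^n$.

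To prove the key inequality $a_T(T) \ge f(T)/(2\lceil\log m\rceil)$, I will dualize. The LP dual reads
\[\min\; \sum_S y_S f(S) \quad \text{s.t.}\quad y\ge 0,\; \sum_{S\ni e} y_S \ge 1\; \forall\, e \in T,\]
which is exactly the fractional weighted set-cover LP for the universe $T$ with set costs $f(\cdot)$. So the goal becomes to lower bound the fractional cover cost. I will invoke the classical $H_{|T|}$-rounding of set cover (equivalently, the analysis of greedy or randomized rounding) to convert any fractional cover $\{y_S\}$ into an integer cover $\{S_1, \ldots, S_k\}$ of $T$ with $\sum_i f(S_i) \le H_{|T|}\sum_S y_S f(S)$. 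Monotonicity and subadditivity of $f$ then give $f(T) \le f(\bigcup_i S_i) \le \sum_i f(S_i)$, and combining with $H_{|T|} \le H_m \le 2\lceil \log_2 m\rceil$ (for $m \ge 2$) yields $\sum_S y_S f(S) \ge f(T)/(2\lceil \log m\rceil)$. By LP duality, this is precisely the required bound on $a_T(T)$.

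The main obstacle is the integrality-gap step: one must cite (or quickly re-derive) that set cover can be rounded within a factor $H_{|T|}$ against an arbitrary non-negative cost function, and track that the resulting constant is indeed at most $2\lceil\log m\rceil$ rather than something worse. A secondary concern is that the argument implicitly uses monotonicity of $f$ in the step $f(T) \le f(\bigcup_i S_i)$; this is standard in the fair-allocation setting, but should be made explicit in case the paper's subadditive class is not implicitly monotone. A final detail worth checking is that the primal LP attains its optimum (so strong duality applies)—this follows because the feasible region is a nonempty compact polytope in $\mathbb{R}^m_{\ge 0}$ (bounded by the singleton constraints $a(\{e\}) \le f(\{e\})$).
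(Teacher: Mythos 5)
Your proof is correct and follows essentially the same route as the paper: both reduce, via LP duality, to lower-bounding a fractional set cover and extract the $O(\log m)$ factor from the set cover integrality gap. The paper re-derives that gap from scratch (the step you flagged) via $\lceil\log m\rceil+2$ rounds of randomized rounding, builds the XOS function from just the $n$ clauses associated with the MMS partition of $f$ rather than your exponentially many clauses $a_T$, and handles your monotonicity caveat by assuming all valuations are monotone (Section~\ref{prelim}).
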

Proof of Lemma \ref{rr1} follows from the known techniques for reducing subadditive valuations to XOS. For the sake of completeness, we bring a formal proof in Section \ref{subadditive}.

\vspace{0.2cm}
{\noindent \textbf{Theorem} \ref{subadditiveproof} [restated].\textit{ The fair allocation problem with subadditive agents admits a $1/10 \lceil\log m\rceil$-$\MMS$ allocation.\\}}



\section{Preliminaries}\label{prelim}

Throughout this paper we assume the set of agents is denoted by $\agents$ and the set of items is referred to by $\items$. Let $|\agents| = n$ and $|\items| = m$, we refer to the agents by $\agent_i$ and to the items by $\ite_i$, i.e., $\agents = \{\agent_1, \agent_2, \ldots, \agent_n\}$ and $\items = \{\ite_1, \ite_2, \ldots,\ite_m\}$. We denote the valuation of agent $\agent_i$ for a set $S$ of items by $\valu_i(S)$. Our interest is in valuation functions that are monotone and non-negative. More precisely, we assume $\valu_i(S) \geq 0$ for every agent $a_i$ and set $S \subseteq \items$, and for every two sets $S_1$ and $S_2$ we have
$$\forall \agent_i \in \agents\hspace{2cm} \valu_i(S_1 \cup S_2) \geq \max\{\valu_i(S_1), \valu_i(S_2)\}.$$

Due to obvious impossibility results for the general valuation functions\footnote{If the valuation functions are not restricted, no approximation guarantee can be achieved. For instance consider the case where we have two agents and 4 items. Agent $\agent_1$ has value 1 for sets $\{\ite_1,\ite_2\}$ and $\{\ite_3,\ite_4\}$ and 0 for the rest of the sets. Similarly, agent $\agent_2$ has value 1 for sets $\{\ite_1,\ite_3\}$ and $\{\ite_2,\ite_4\}$ and 0 for the rest of the sets. In this case, no allocation can provide both of the agents with sets which are of non-zero value to them.}, we restrict our attention to four classes of set functions:
\begin{itemize}
	\item \textbf{Additive}: A set function $V(.)$ is additive if $V(S_1) + V(S_2) = V(S_1 \cup S_2) - V(S_1 \cap S_2)$ for every two sets $S_1,S_2 \in \domp(V)$.
	\item \textbf{Submodular}: A set function $V(.)$ is submodular if $V(S_1) + V(S_2) \geq V(S_1 \cup S_2) - V(S_1 \cap S_2)$ for every two sets $S_1,S_2 \in \domp(V)$.
	\item \textbf{Fractionally Subadditive (XOS)}: An XOS set function $V(.)$ can be shown via a finite set of additive functions $\{V_1, V_2, \ldots, V_{\alpha}\}$ where $V(S) = \max_{i=1}^{\alpha} V_i(S)$ for any set $S \subseteq \domp(V)$. 
	\item \textbf{Subadditive}: A set function $V(.)$ is subadditive if $V(S_1) + V(S_2) \geq V(S_1 \cup S_2)$ for every two sets $S_1,S_2 \subseteq \domp(V)$. 
\end{itemize}

For additive functions, we assume the value of the function for every element is given in the input. However, representing other classes of set functions requires access to oracles. For submodular functions, we assume we have access to \textit{query oracle} defined below. Query oracles are great identifier for submodular functions, however, they are too weak when it comes to XOS and subadditive settings. For such functions, we use a stronger oracle which is called \textit{demand oracle}. It is shown that for some functions, such as gross substitutes, a demand oracle can be implemented via a query oracle in polynomial time~\cite{leme2014gross}. In addition to this, we consider a special oracle for XOS functions which is called \textit{XOS oracle}. Access to query oracles for submodular functions, XOS oracle for XOS functions, and demand oracles for XOS and subadditive functions are quite common and have been very fruitful in the literature~\cite{dobzinski2005approximation,feige2009maximizing,feige2011maximizing,feige2006approximation,feldmancombinatorial,leme2014gross,vondrak2008optimal}. In what follows, we formally define the oracles:
\begin{itemize}
	\item \textbf{Query oracle:} Given a function $f$, a query oracle $\mathcal{O}$ is an algorithm that receives a set $S$ as input and computes $f(S)$ in time $O(1)$.
	\item \textbf{Demand oracle:} Given a function $f$, a demand oracle $\mathcal{O}$ is an algorithm that receives a sequence of prices $p_1, p_2, \ldots, p_n$ as input and finds a set $S$ such that
	$$f(S) - \sum_{e \in S} p_e$$ is maximized. We assume the running time of the algorithm is $O(1)$.  
	\item \textbf{XOS oracle:} (defined only for an XOS functions $f$) Given a set $S$ of items, it returns the additive representation of the function that is maximized for $S$. In other words, it reveals the contribution of each item in $S$ to the value of $f(S)$. 
\end{itemize}

Let $\Pi_r$ be the set of all partitionings of $\items$ into $r$ disjoint subsets. For every $r$-partitioning $P^* \in \Pi_r$, we denote the partitions by $P^*_1,P^*_2,\ldots,P^*_r$. For a set function $f(.)$, we define $\MMS_f^r(\items)$ as follows:
$$ \MMS_f^r(\items) = \max_{P^* \in {\Pi_r}}  \min_{1 \leq j \leq r} f(P^*_j).$$
For brevity we refer to $\MMS_{f_i}^n(\items)$ by $\MMS_{i}$.


An allocation of items to the agents is a vector $\mathcal{A} = \langle A_1, A_2, \ldots, A_n\rangle$ where $\bigcup A_i = \items$ and $A_i \cap A_j = \emptyset$ for every two agents $\agent_i, \agent_j \in \agents$. An allocation $\mathcal{A}$ is  $\alpha$-$\MMS$, if every agent $a_i$ receives a subset of the items whose value to that agent is at least $\alpha$ times $\MMS_i$. More precisely, $\mathcal{A}$ is $\alpha$-$\MMS$ if and only if
$\valu_i(A_i) \geq \alpha\MMS_i$
for every agent $\agent_i \in \agents$.

We define the notion of \textit{reducibility} for an instance of the problem as follows.

\begin{definition}\label{d1}
	We say an instance of the problem is $\alpha$-reducible, if there exist a set $T \subset \agents$ of agents, a set $S$ of items, and an allocation $\mathcal{A} = \langle A_1, A_2, \ldots, A_n\rangle$ of $S$ to agents of $T$ such that 
	$$\forall \agent_i \in T \hspace{3cm} V_i(A_i) \geq \alpha \MMS_i$$
	and
	$$\forall \agent_i \notin T \hspace{1cm} \MMS_{V_i}^{n-|T|} (\items \setminus S)\geq \MMS_i.$$
\end{definition}
Similarly, we call an instance $\alpha$-\textit{irreducible} if it is not $\alpha$-reducible. The intuition behind Definition \ref{d1} is the following: In order to prove the existence of an $\alpha$-$\MMS$ allocation for every instance of the problem, it only suffices to prove this for the $\alpha$-irreducible instances.

\begin{observation}\label{reducibility}
	Every instance of the fair allocation problem admits an $\alpha$-$\MMS$ allocation if this holds for all $\alpha$-irreducible instances. 
\end{observation}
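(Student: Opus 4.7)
The plan is to proceed by strong induction on the number of agents $n$, with the inductive hypothesis that the conclusion (namely, existence of an $\alpha$-MMS allocation) holds for every instance with fewer than $n$ agents under the standing assumption that every $\alpha$-irreducible instance admits an $\alpha$-MMS allocation. The base case $n=1$ is immediate: the unique feasible allocation hands $\items$ to the sole agent, yielding value $\valu_1(\items) = \MMS_1 \geq \alpha \MMS_1$, and a one-agent instance is trivially irreducible.

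For the inductive step, fix an instance $I$ on $n$ agents. If $I$ is $\alpha$-irreducible, the hypothesis of the observation supplies an $\alpha$-MMS allocation directly. Otherwise, Definition \ref{d1} produces a (nonempty) $T \subset \agents$, a set $S \subseteq \items$, and a partial allocation $\langle A_i \rangle_{\agent_i \in T}$ of $S$ to $T$ for which $\valu_i(A_i) \geq \alpha \MMS_i$ on $T$ and $\MMS_{\valu_j}^{n-|T|}(\items \setminus S) \geq \MMS_j$ for every $\agent_j \notin T$. I would then pass to the residual instance $I'$ on agent set $\agents \setminus T$ and item set $\items \setminus S$, invoke the inductive hypothesis on $I'$ (which has strictly fewer agents since $|T| \geq 1$), and concatenate the resulting allocation $\langle A'_j \rangle_{\agent_j \notin T}$ with the reduction allocation on $T$. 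For every $\agent_j \notin T$ the combined allocation then gives
$$\valu_j(A'_j) \;\geq\; \alpha \cdot \MMS_{\valu_j}^{n-|T|}(\items \setminus S) \;\geq\; \alpha \MMS_j,$$
where the first inequality is the $\alpha$-MMS guarantee inside $I'$ and the second is exactly the reducibility condition from Definition \ref{d1}. Together with $\valu_i(A_i) \geq \alpha \MMS_i$ on $T$, this is an $\alpha$-MMS allocation for $I$.

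The main subtlety -- more a bookkeeping point than a genuine obstacle -- is that the MMS benchmark of an agent $\agent_j \notin T$ changes when we move to $I'$: inside $I'$ that agent partitions $\items \setminus S$ into only $n-|T|$ bundles, so the operative MMS value is $\MMS_{\valu_j}^{n-|T|}(\items \setminus S)$ rather than the original $\MMS_j$. The reducibility condition in Definition \ref{d1} is tailored precisely so that this new benchmark dominates the original one, which is what allows the $\alpha$-factor to be preserved across the recursion without any loss. The strict decrease $|T| \geq 1$ at each reduction also guarantees that the induction bottoms out, so cascading reductions are not a concern; the overall argument is essentially a clean bookkeeping consequence of how reducibility was defined.
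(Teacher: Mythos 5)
Your proof is correct and takes essentially the same approach as the paper: the paper phrases it as a minimal-counterexample contradiction whereas you use strong induction on $n$, but these are the same argument, and both hinge on the reduction strictly decreasing the agent count and on the reducibility condition $\MMS_{\valu_j}^{n-|T|}(\items \setminus S) \geq \MMS_j$ carrying the $\alpha$-guarantee back to the original instance. Your version is somewhat more explicit about the benchmark shift for agents outside $T$, which the paper leaves implicit.
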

\begin{proof}
	Suppose for the sake of contradiction that all $\alpha$-irreducible instances of the problem admit an $\alpha$-$\MMS$ allocation, but there exists an $\alpha$-reducible instance of the problem which does not admit any $\alpha$-$\MMS$ allocation. Among all such instances, we consider the one with the lowest number of agents. Since this instance is $\alpha$-reducible, there exists a subset $T$ of agents and a subset $S$ of items such that an allocation of $S$ to agents of $T$ provides each of them with a valuation of at least $\alpha \MMS_i$. Moreover, the rest of the items and agents make an instance of the problem with a smaller $n$. Thus, an $\alpha$-$\MMS$ allocation can satisfy the rest of the agents and hence the instance admits an $\alpha$-$\MMS$ allocation. This contradicts the assumption. 
\end{proof}

The reducibility argument plays an important role in both the existential proofs and algorithms that we present in the paper. As we see in Sections \ref{additive:observations} and \ref{xos}, irreducible instances of the problem exhibit several desirable properties for additive and non-additive agents. We take advantage of these properties to improve the approximation guarantee of the problem for different classes of set functions.

\section{Additive Agents\protect\footnote{We have created a website at \href{https://www.cs.umd.edu/\~saeedrez/fair.html}{https://www.cs.umd.edu/$\sim$saeedrez/fair.html} for the implemented algorithm and all related materials.}}}\label{additive}
In this section we study the fair allocation problem in the additive setting. We present a proof to the existence of a $3/4$-$\MMS$ allocation when the agents are additive. This improves upon the work of \procacciafirst ~\cite{Procaccia:first} wherein the authors prove a $2/3$-$\MMS$ allocation exists for any combination of additive agents.  As we show, our proof is constructive; given an algorithm that determines the $\MMS$ of an additive set function within a factor $\alpha$, we can implement an algorithm that finds a $3/(4 \alpha)$-$\MMS$ allocation in polynomial time. This married with the PTAS algorithm of \epsteinefficient ~\cite{epstein2014efficient} for finding the $\MMS$ values, results is an algorithm that finds a $3/(4+\epsilon)$-$\MMS$ allocation in polynomial time.

The main idea behind the $3/4$-$\MMS$ allocation is \textit{clustering} the agents. Roughly speaking, we categorize the agents into three clusters, namely $\cone$, $\ctwo$, and $\cthree$. We show that the valuation functions of the agents within each cluster show similar behaviors. Along the clustering process, we allocate the heavy items (the items that have a valuation of at least $1/4$ to some agents) to the agents. By Observation \ref{reducibility}, proving a $3/4$-$\MMS$ guarantee can be narrowed down to only $3/4$-irreducible instances. The $3/4$-irreducibility of the problem guarantees that after the clustering process, the remaining items are light. This enables us to run a $\bagfilling$ process to satisfy the agents. In order to prove the correctness of the algorithm, we take advantage of the properties of each cluster separately.

The organization of this section is summarized in the following: we start by a brief and abstract explanation of the ideas in Section \ref{overview}. In Section \ref{additive:observations} we study the properties of the additive setting and state the main observations that later imply the correctness of our algorithm. Next, in Section \ref{additive:clusters} we discuss a method for clustering the agents and in Section \ref{additive:allocation} we show how we allocate the items to the agents of each cluster to ensure a $3/4$-$\MMS$ guarantee. Finally, in Section \ref{additive:algorithm} we explain the implementation details and prove a polynomial running time for the proposed algorithm.

Throughout this section, we assume $\MMS_i = 1$ for all agents $\agent_i \in \agents$. This is without loss of generality for the existential proof since one can scale the valuation functions to impose this constraint. However, the computational complexity of the allocation will be affected by this assumption since determining the $\MMS$ of an additive function is NP-hard~\cite{epstein2014efficient}. That said, we show in Section \ref{additive:algorithm} that this challenge can be overcome by incurring an additional $1+\epsilon$ factor to the approximation guarantee.  

For brevity, we defer the proofs of Sections \ref{additive:observations}, \ref{additive:clusters}, \ref{additive:allocation}, and \ref{additiveproofs} to Appendices \ref{additiveobservationsproof},\ref{clusteringappendix},\ref{clustering2appendix}, and \ref{additiveproofappendix}, respectively.
\subsection{A Brief Overview of the Algorithm}\label{overview}
The purpose of this section is to present an abstract overview over the ideas behind our algorithm for finding a $3/4$-$\MMS$ allocation in the additive setting. For simplicity, we start with a simple $1/2$-$\MMS$ algorithm mentioned in Section \ref{results}. Recall that the $\bagfilling$ procedure guarantees a $1-\alpha$ approximation solution when the valuations of the agents for each item is smaller than $\alpha$. Furthermore, we know that in every $\alpha$-irreducible instance, all the agents have a value less than $\alpha$ for every items. Thus, the following simple procedure yields a $1/2$-$\MMS$ allocation:
\begin{enumerate}
\item Reduce the problem until no agent has a value more than $1/2$ for any item.
\item Allocate the items to the agents via a $\bagfilling$ procedure.
\end{enumerate} 

Figure \ref{fig:mms12} shows a flowchart for this algorithm. 
\begin{figure}[h]
\centerline{
\includegraphics[scale=0.4]{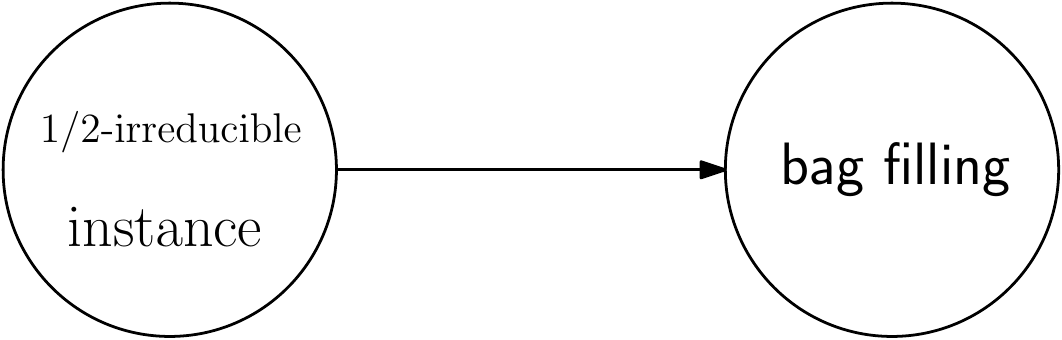}
}
\caption{$1/2$-$\MMS$ Algorithm}
\label{fig:mms12}
\end{figure}

We can extend the idea in $1/2$-$\MMS$ algorithm to obtain a more efficient algorithm. Here is the sketch of the $2/3$-$\MMS$ algorithm:
consider a $2/3$-irreducible instance of the problem. In this instance, we have no item with a value more than or equal to $2/3$ to any agent. Nevertheless, the items are not yet small enough to run a $\bagfilling$ procedure. The idea here is to divide the agents into two clusters $\cone$ and $\ctwo$. Along this clustering, the items with a value  in range $[1/3,2/3)$ are given to the agents. In particular, one item is allocated to every agent in $\cone$ that is worth at least $1/3$ to him. 
Next, we refine Cluster $\cone$. In the refining procedure, if any remaining item could singly satisfy an agent in $\cone$, we do so. After building $\cone$ and $\ctwo$ and refining $\cone$, the remaining items preserve the following two invariants:
\begin{enumerate}
\item Value of every remaining item is less than $1/3$ to every remaining agent.
\item  No remaining item can singly satisfy an agent in $\cone$ (regarding the item that is already allocated to them)
\end{enumerate} 
These two invariants enable us to run a $\bagfilling$ procedure over the remaining items. For this case, the $\bagfilling$ procedure must be more intelligent: in the case that multiple agents are qualified to receive the items of the bag, we prioritize the agents. Roughly speaking, the priorities are determined by two factors: the cluster they belong to, and the cycle-envy-freeness property of the agents in $\cone$. In Figure \ref{23mms} you can see a flowchart for this algorithm.

\begin{figure}[h]
\centerline{
\includegraphics[scale=0.4]{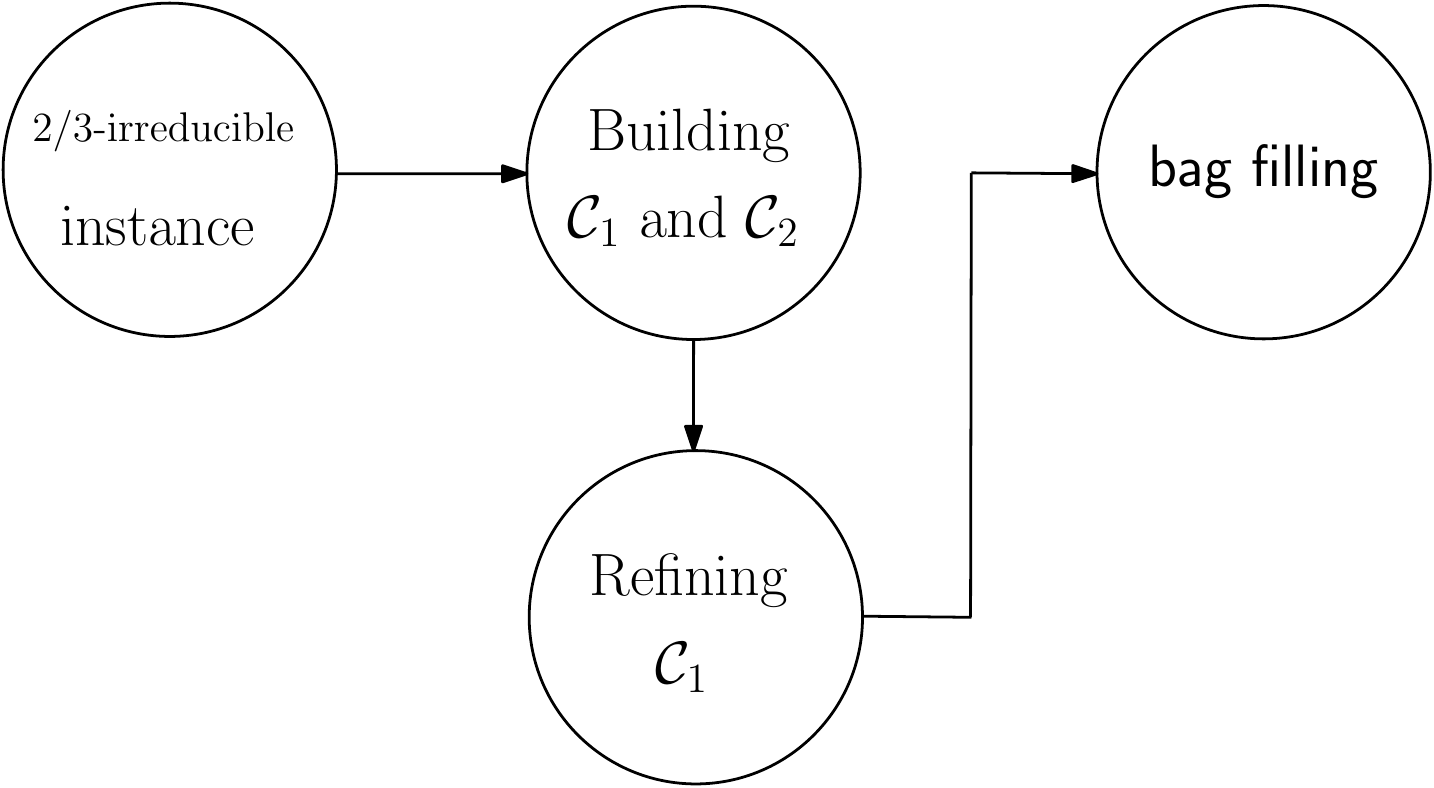}
}
\caption{$2/3$-$\MMS$ Algorithm}
\label{23mms}
\end{figure}     

Our method for a $3/4$-$\MMS$ allocation takes one step further from the previous $2/3$-$\MMS$ algorithm.  Again, we assume that the input is $3/4$-Irreducible since otherwise it can be further simplified.  Via similar ideas, we build Cluster $\cone$ and refine it. Next, we build Clusters $\ctwo$ and $\cthree$ and refine $\ctwo$. After refining Cluster $\ctwo$, the following invariants are preserved for the remaining items:
\begin{enumerate}
\item Almost every remaining item has a value less than $1/4$ to every remaining agent. More precisely, for every remaining agent $\agent_i$, there is at most one remaining item $\ite_j$ with $\valu_i(\{\ite_j\}) \geq 1/4$.
\item  No remaining item can singly satisfy an agent in $\cone$ and $\ctwo$ (regarding the item that is already allocated to them).
\end{enumerate}

Finally, we run a $\bagfilling$ procedure. Again, in the $\bagfilling$ procedure, the priorities of the agents are determined by the cluster they belong to, and the cycle-envy-freeness of the clusters. In Figure \ref{34mms} you can see the flowchart of the algorithm.

\begin{figure}[h]
\centerline{
\includegraphics[scale=0.4]{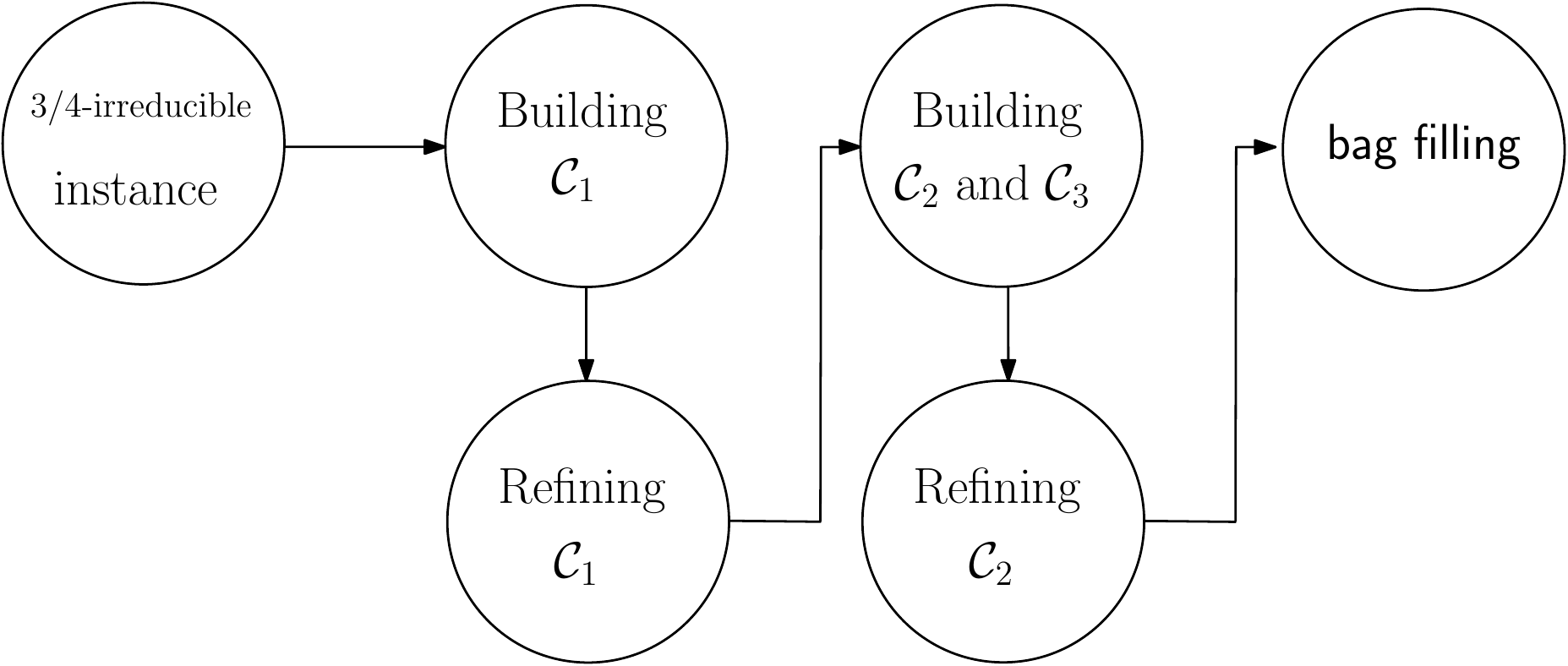}
}
\caption{$3/4$-$\MMS$ Algorithm}
\label{34mms}
\end{figure}     

Our assumption is that the input is $3/4$-irreducible. Hence, we describe our algorithm in two phases: a clustering phase and the $\bagfilling$ phase, as shown in Figure \ref{34mms2}. 
\begin{figure}[h]
\centerline{
\includegraphics[scale=0.4]{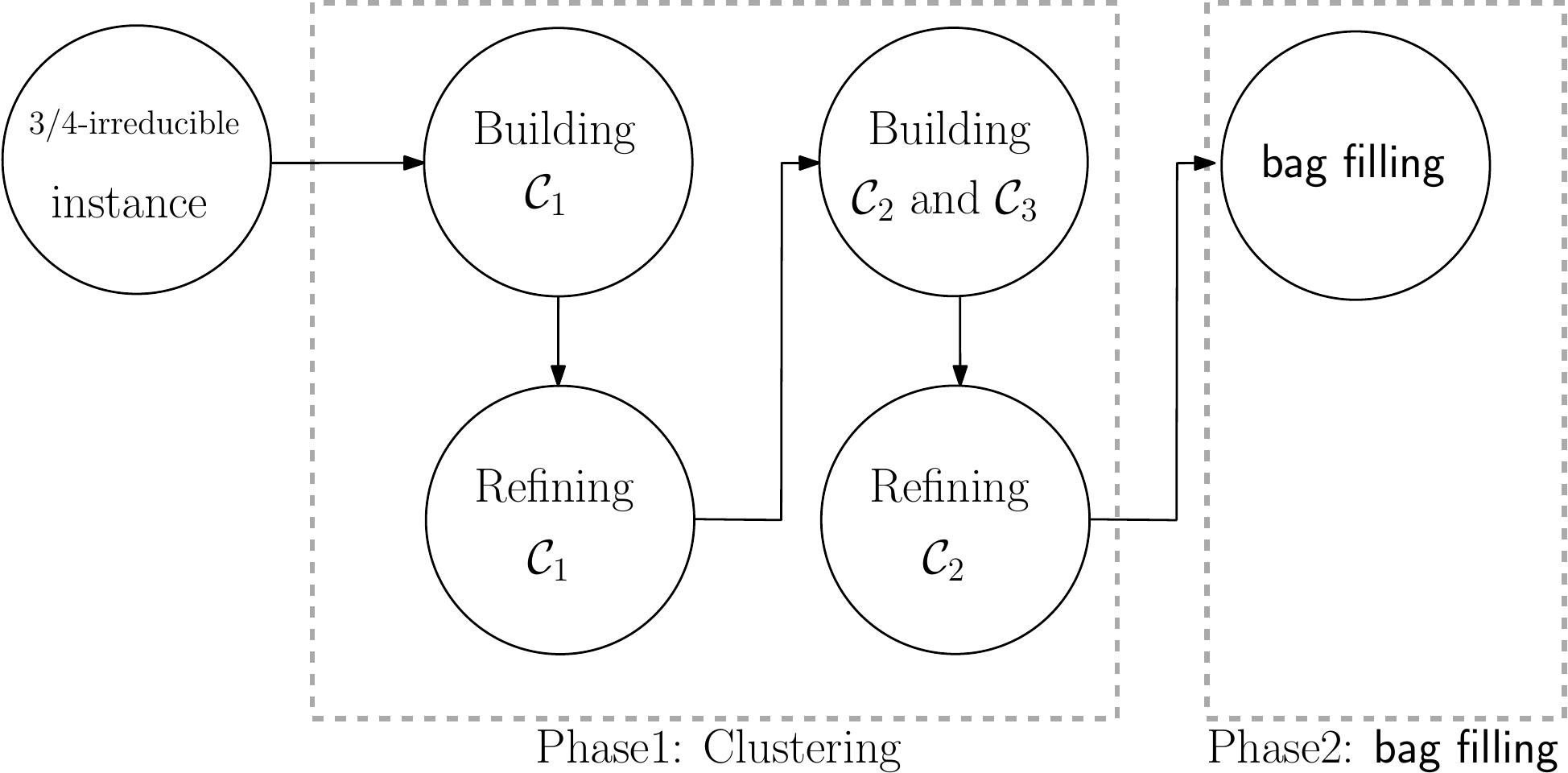}
}
\caption{Algorithm Phases}
\label{34mms2}
\end{figure}     
In Section \ref{additive:algorithm} we show that all the steps of the algorithm can be implemented in polynomial time. Furthermore, we show that the assumption that the input is $3/4$-irreducible is without loss of generality. In fact, in Section \ref{additive:algorithm} we show that it suffices to check some invariants of irreducibility to be held in certain points of the algorithm. In Figure \ref{go}, these steps are specified with caption \emph{Reduction}. 

As a future work, one can consider a more generalized form of this algorithm, where the agents are divided into more than $3$ clusters (see Figure \ref{epsmms}). We believe that this generalization might yield a $(1-\epsilon)$-$\MMS$ allocation, where $\epsilon$ is a small value that depends on the number of agents. However, such a generalization is faced with two main barriers. First, In order to extend the idea to more than 3 clusters, we need a generalized form of Lemmas \ref{remove2} and \ref{remove3} for more than two items. Furthermore, a challenging part of our approximation proof is to show that the second cluster is empty at the end of the algorithm. For this, we define a graph on the items in the second cluster and prove some bounds on the number of edges in this graph. To extend the idea for more clusters, we need to define hypergraphs on the items in the clusters and show similar bounds, which requires deeper and more complicated techniques..   
\begin{figure}[h]
\centerline{
\includegraphics[scale=0.55]{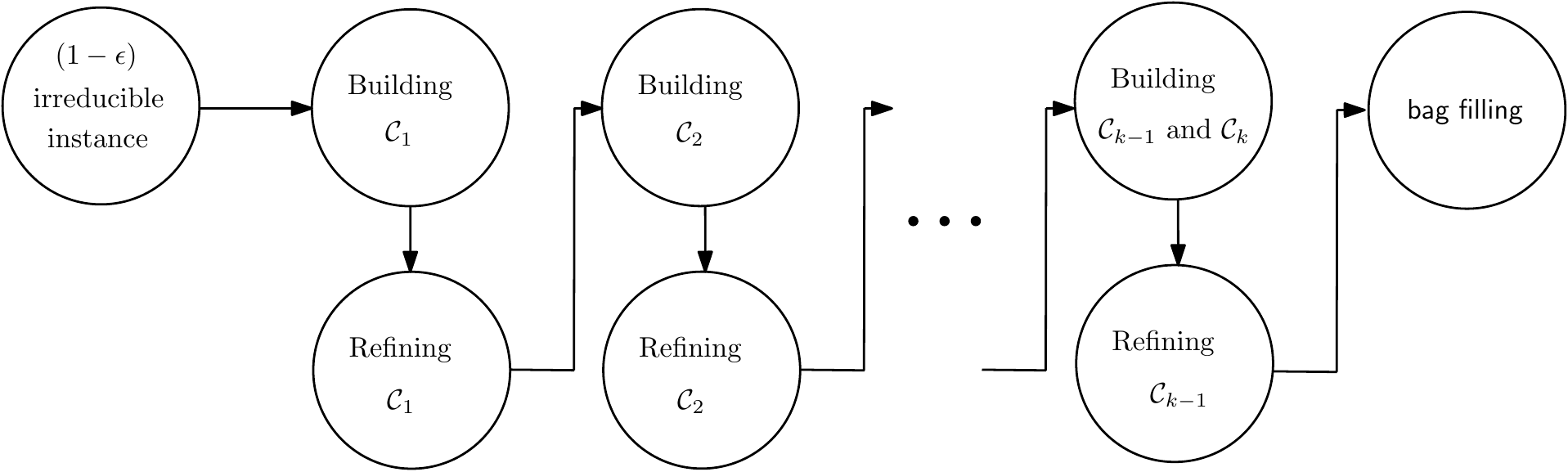}
}
\caption{Generalizing the algorithm into $k$ clusters}
\label{epsmms}
\end{figure}  

Before presenting the algorithm, in Section \ref{additive:observations} we discuss the consequences of irreducibility and  techniques to build the clusters and preserving cycle-envy-freeness in each cluster. Next, we describe the algorithm in more details. 

\subsection{General Definitions and Observations}\label{additive:observations}
Throughout this section we explore the properties of the fair allocation problem with additive agents.
\subsubsection{Consequences of Irreducibility}
 Since the objective is to prove the existence of a $3/4$-$\MMS$ allocation, by Observation \ref{reducibility}, it only suffices to show every $3/4$-irreducible instance of the problem admits a $3/4$-$\MMS$ allocation. Therefore, in this section we provide several properties of the $3/4$-irreducible instances. We say a set $S$ of items \textit{satisfies} an agent $\agent_i$ if and only if $\valu_i(S) \geq 3/4$. Perhaps the most important consequence of irreducibility is a bound on the valuation of the agents for every item. In the following we show if the problem is $3/4$-irreducible, then no agent has a value of $3/4$ or more for an item. 

\begin{lemma}\label{remove1} 
For every $\alpha$-irreducible instance of the problem we have 
$$\forall \agent_i \in \agents, \ite_j \in \items \hspace{1cm} \valu_i(\ite_j) < \alpha.$$
\end{lemma}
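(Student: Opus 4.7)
The plan is to prove the contrapositive: if there exists an agent $\agent_i$ and an item $\ite_j$ with $\valu_i(\ite_j) \geq \alpha$, then the instance is $\alpha$-reducible (recall that $\MMS_i = 1$ for every agent by the scaling assumption, so $\alpha = \alpha \MMS_i$). To witness reducibility in the sense of Definition \ref{d1}, I will take the singleton agent set $T = \{\agent_i\}$, the singleton item set $S = \{\ite_j\}$, and the trivial allocation that gives $\ite_j$ to $\agent_i$. The first condition of reducibility is then immediate, since by hypothesis $\valu_i(\{\ite_j\}) \geq \alpha = \alpha \MMS_i$.

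The only real content is the second condition: for every other agent $\agent_k \neq \agent_i$, I must check that
\[
\MMS_{V_k}^{\,n-1}(\items \setminus \{\ite_j\}) \;\geq\; \MMS_k.
\]
Here I would invoke agent $\agent_k$'s own optimal MMS partition $\langle P_1^*, P_2^*, \ldots, P_n^* \rangle$ of $\items$ into $n$ parts, each of value at least $\MMS_k$ to $\agent_k$. Without loss of generality, say $\ite_j \in P_1^*$. The remaining $n-1$ parts $P_2^*, \ldots, P_n^*$ are disjoint subsets of $\items \setminus \{\ite_j\}$, and each is already worth at least $\MMS_k$ to $\agent_k$. The leftover items $P_1^* \setminus \{\ite_j\}$ can simply be dumped into (say) $P_2^*$; because valuations are monotone (indeed, additive), this can only increase the value of that part. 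The resulting family is a valid $(n-1)$-partition of $\items \setminus \{\ite_j\}$ in which every part has value at least $\MMS_k$ to $\agent_k$, which witnesses $\MMS_{V_k}^{\,n-1}(\items \setminus \{\ite_j\}) \geq \MMS_k$.

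Combining both conditions yields that the instance is $\alpha$-reducible, contradicting $\alpha$-irreducibility. I do not expect any genuine obstacle here; the proof is essentially a direct instantiation of Definition \ref{d1} with a singleton witness, together with the trivial observation that removing one item from a maxmin partition can be absorbed by an adjacent part without decreasing any partition value. The one subtle point to state explicitly is the monotonicity (additivity) of $\valu_k$, which is what justifies moving the residual items $P_1^* \setminus \{\ite_j\}$ into another part without loss.
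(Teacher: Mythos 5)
Your proof is correct and follows essentially the same route as the paper's: assume some $\valu_i(\ite_j)\geq\alpha$, assign $\{\ite_j\}$ to $\agent_i$ as the reducibility witness, and note that removing one item from $\agent_k$'s optimal $n$-partition perturbs at most one part, so (after absorbing the leftovers by monotonicity) the other $n-1$ parts certify $\MMS_{V_k}^{n-1}(\items\setminus\{\ite_j\})\geq\MMS_k$. You are slightly more explicit than the paper about reassigning $P_1^*\setminus\{\ite_j\}$ to form a genuine $(n-1)$-partition, and you correctly identify monotonicity (rather than additivity) as the operative hypothesis, which the paper also remarks on.
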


In other words, Lemma \ref{remove1} states that in a $3/4$-irreducible instance of the problem, no item alone can satisfy an agent. 

It is worth mentioning that the proof for Lemma \ref{remove1} does not rely on additivity of the valuation functions and holds as long as the valuations are monotone. Thus, regardless of the type of the valuation functions, one can assume that in any $\alpha$-irreducible instance, value of any item is less than $\alpha$ for any agent. Hence the statement carries over to the submodular, XOS, and subadditive settings.

As a natural generalization of Lemma \ref{remove1}, we show a similar observation for every pair of items. However, this involves an additional constraint on the valuation of the other agents for the pertinent items. In contrast to Lemma \ref{remove1}, Lemmas \ref{remove2} and \ref{remove3} are restricted to additive setting and their results  do not hold in more general settings.

\begin{lemma}
\label{remove2}
If the problem is $3/4$-irreducible and
$\valu_i(\{\ite_j,\ite_k\}) \geq 3/4$
holds for an agent $\agent_i \in \agents$ and items $\ite_j, \ite_k \in \items$, then there exists an agent $\agent_{i'} \neq \agent_i$ such that
$$\valu_{i'}(\{\ite_j,\ite_k\}) > 1$$
\end{lemma}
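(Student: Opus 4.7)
The plan is to prove Lemma \ref{remove2} by contradiction, using the reducibility argument. Suppose the hypothesis holds but, for every agent $\agent_{i'} \neq \agent_i$, $\valu_{i'}(\{\ite_j, \ite_k\}) \leq 1 = \MMS_{i'}$. I will then show the instance is $3/4$-reducible with $T = \{\agent_i\}$ and $S = \{\ite_j, \ite_k\}$, contradicting the assumption of $3/4$-irreducibility.

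The allocation for $T$ is immediate: give $S$ to $\agent_i$, which yields $\valu_i(S) \geq 3/4 = (3/4)\MMS_i$. The main task is to verify the second condition in Definition \ref{d1}, namely that $\MMS_{\valu_{i'}}^{n-1}(\items \setminus S) \geq \MMS_{i'} = 1$ for every $\agent_{i'} \neq \agent_i$. To this end, fix such an $\agent_{i'}$ and let $P_1, \ldots, P_n$ be an $\MMS$-optimal partition for $\agent_{i'}$ on $\items$, so $\valu_{i'}(P_\ell) \geq 1$ for every $\ell$. Let $s, t$ be the indices (not necessarily distinct) with $\ite_j \in P_s$ and $\ite_k \in P_t$, and pick any $s' \notin \{s,t\}$ (this exists whenever $n \geq 3$; the cases $n \leq 2$ are trivial or excluded by assumption).

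I construct an $(n-1)$-partition of $\items \setminus S$ as follows. If $s = t$, merge $P_s \setminus S$ with $P_{s'}$; the merged block has value, by additivity,
\[
\valu_{i'}(P_s) - \valu_{i'}(S) + \valu_{i'}(P_{s'}) \;\geq\; 1 - 1 + 1 \;=\; 1,
\]
while the remaining $n-2$ untouched blocks each still have value $\geq 1$. If $s \neq t$, merge $(P_s \cup P_t) \setminus S$ into one block, whose value is
\[
\valu_{i'}(P_s) + \valu_{i'}(P_t) - \valu_{i'}(S) \;\geq\; 1 + 1 - 1 \;=\; 1,
\]
and again the remaining $n-2$ blocks are unchanged. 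In either case all $n-1$ blocks of the new partition are worth at least $1$ to $\agent_{i'}$, so $\MMS_{\valu_{i'}}^{n-1}(\items \setminus S) \geq 1 = \MMS_{i'}$.

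Having verified both conditions, the instance is $3/4$-reducible, contradicting the hypothesis. The argument is largely routine once the right merging operation is set up; the only mildly delicate step is the bookkeeping for the case $s = t$, which needs an auxiliary third block to absorb the shortfall, and the use of additivity in the equality $\valu_{i'}(P_s \setminus S) = \valu_{i'}(P_s) - \valu_{i'}(S)$ (this is precisely where the lemma leverages the additive setting and fails in more general classes).
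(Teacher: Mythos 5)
Your proof is correct and follows essentially the same route as the paper's: assume every other agent values $\{\ite_j,\ite_k\}$ at most $1$, then show $\MMS_{\valu_{i'}}^{n-1}(\items\setminus S)\geq 1$ by repairing the optimal $n$-partition, concluding $3/4$-reducibility. Two small remarks on the exposition. First, your characterization of which case is delicate is inverted: when $s=t$, dumping $P_s\setminus S$ into any other block needs only monotonicity (the merged block already contains $P_{s'}$, hence has value $\geq 1$); additivity is genuinely needed only in the $s\neq t$ case, where you merge the two damaged blocks and compute $\valu_{i'}((P_s\cup P_t)\setminus S)=\valu_{i'}(P_s)+\valu_{i'}(P_t)-\valu_{i'}(S)\geq 1$. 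Second, the caveat requiring $n\geq 3$ is unnecessary: when $s=t$ you only need some index $s'\neq s$, which exists for $n\geq 2$, and when $s\neq t$ the merged block and the $n-2$ untouched blocks already form the required $(n-1)$-partition without any auxiliary index; the $n=2$ case also goes through directly since $\valu_{i'}(\items\setminus S)=\valu_{i'}(\items)-\valu_{i'}(S)\geq 2-1=1$.
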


According to Lemma \ref{remove2}, in every $3/4$-irreducible instance of the problem, for every agent $\agent_i$ and items $\ite_j,\ite_k$, either $\valu_i(\{\ite_j,\ite_k\}) < 3/4$ or there exists another agent $\agent_{i'} \neq \agent_i$, such that $\valu_{i'}(\{\ite_j,\ite_k\}) > 1$. Otherwise, we can reduce the problem and find a $3/4$-$\MMS$ allocation recursively. More generally, let $S = \{\ite_{j_1},\ite_{j_2},\ldots,\ite_{j_{|S|}}\}$ be a set of items in $\cal{M}$ and $T =\{\agent_{i_1},\agent_{i_2},\ldots,\agent_{i_{|T|}}\}$ be a set of agents such that
\begin{description}
 \item (i) $|S| = 2|T|$
 \item (ii) For every $\agent_{i_a} \in T$ we have $\valu_{i_a}(\{\ite_{j_{2a-1}},\ite_{j_{2a}}\}) \geq 3/4$.
 \item (iii) For every $\agent_{i} \notin T$ we have $\valu_{i}(\{\ite_{j_{2a-1}},\ite_{j_{2a}}\}) \leq 1$ for every $1 \leq a \leq |T|$.
\end{description}
then the problem is $3/4$-reducible.
\begin{lemma}\label{remove3}
In every $3/4$-irreducible instance of the problem, for every set $T =\{\agent_{i_1},\agent_{i_2},\ldots,\agent_{i_{|T|}}\}$ of agents and set $S = \{\ite_{j_1},\ite_{j_2},\ldots,\ite_{j_{|S|}}\}$ of items at least one of the above conditions is violated.
\end{lemma}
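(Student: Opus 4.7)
The plan is to prove the lemma by contradiction. Assume the instance is $3/4$-irreducible yet some $T$ and $S$ satisfy all of (i), (ii), (iii); the goal is to witness Definition~\ref{d1} with the agent set $T$, the item set $S$, and the natural allocation that hands the pair $\{\ite_{j_{2a-1}}, \ite_{j_{2a}}\}$ to agent $\agent_{i_a}$. Condition (ii) immediately handles the agents inside $T$, since each receives value at least $3/4 = (3/4)\MMS_{i_a}$. The only non-trivial part is to show that for every agent $\agent_i \notin T$ one has $\MMS_{\valu_i}^{n-|T|}(\items \setminus S) \geq \MMS_i = 1$.

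To verify this for a fixed $\agent_i \notin T$, I would fix a witness partition $P_1, \ldots, P_n$ for $\valu_i$ with $\valu_i(P_j) \geq 1$ for every $j$, and then remove the $|T|$ pairs one at a time while maintaining the invariant that after $k$ removals the items $\items \setminus \bigcup_{a \leq k}\{\ite_{j_{2a-1}}, \ite_{j_{2a}}\}$ admit a partition into $n - k$ parts each of value at least $1$. At the $k$-th step, locate the current parts containing the two items of pair $k$. If both items sit in the same part, strip them off and pour the remaining items of that part into any other part; by additivity of $\valu_i$, the receiving part's value only grows, and one part has been eliminated. If the two items sit in distinct parts $P_{\ell_1}, P_{\ell_2}$, merge those two parts after removing the pair, so that the new part has value at least $\valu_i(P_{\ell_1}) + \valu_i(P_{\ell_2}) - \valu_i(\{\ite_{j_{2k-1}}, \ite_{j_{2k}}\}) \geq 2 - 1 = 1$, where the final bound uses condition (iii) applied to $\agent_i \notin T$.

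After $|T|$ iterations the invariant yields a partition of $\items \setminus S$ into $n - |T|$ parts each of value at least $1$, proving $\MMS_{\valu_i}^{n-|T|}(\items \setminus S) \geq \MMS_i$ and completing the reducibility witness, contradicting $3/4$-irreducibility. I do not expect any real obstacle here: the proof is essentially an $|T|$-fold iteration of the one-pair argument already implicit in Lemma~\ref{remove2}, and conditions (ii) and (iii) are tuned precisely so that each single step of the induction goes through. The only bookkeeping subtlety is that the partition evolves during the iteration; however, since the pairs are item-disjoint (this is exactly why $|S| = 2|T|$), at every step the pair being removed lies cleanly in at most two current parts, and the two-case analysis applies verbatim to whatever partition is current.
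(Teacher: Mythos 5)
Your proof is correct and takes essentially the same approach as the paper: both arguments iterate the one-pair merging technique underlying Lemma~\ref{remove2}, removing pairs from $S$ one at a time while maintaining a partition of the remaining items into parts of value at least $\MMS_i$, with condition~(iii) supplying exactly the bound $\valu_i(\{\ite_{j_{2k-1}},\ite_{j_{2k}}\}) \leq 1$ needed when the two items of a pair fall into distinct parts and those parts are merged.
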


\subsubsection{Modeling the Problem with Bipartite Graphs}\label{MtPwBG}
In our algorithm we subsequently make use of classic algorithms for bipartite graphs. Let $G = \langle V(G),E(G)\rangle$ be a graph representing the agents and the items. Moreover, let $V(G) = \itemsv \cup \agentsv$ where $\agentsv$ corresponds to the agents and $\itemsv$ corresponds to the items. More precisely, for every agent $\agent_i$ we have a vertex $\agentv_i \in \agentsv$ and every item $\ite_j$ corresponds to a vertex $\itemv_j \in \itemsv$. For every pair of vertices $\agentv_i \in \agentsv$ and $\itemv_j \in \itemsv$, there exists an edge $(\itemv_j,\agentv_i) \in E(G)$ with weight $w(\itemv_j,\agentv_i) = \valu_i(\{\ite_j\})$. We refer to this graph as \emph{the value graph}.

We define an operation on the weighted graphs which we call \textit{filtering}. Roughly speaking, a filtering is an operation that receives a weighted graph as input and removes all of the edges with weight less than a threshold from the graph. Next, we remove all of the isolated\footnote{A vertex is called isolated if no edge is incident to that vertex.} vertices and report the remaining as the filtered graph. In the following we formally define the notion of filtering for weighted graphs.
\begin{definition}
A $\beta$-filtering of a weighted graph $H\langle V(H),E(H)\rangle$, denoted by $H_{\beta}\langle V_\beta(H),E_\beta(H)\rangle$, is a subgraph of $H$ where $V_\beta(H)$ is the set of all vertices in $V(H)$ incident to at least one edge of weight $\beta$ or more and 
$$E_\beta(H) = \{(u,v) \in E(H)| w(u,v) \geq \beta\}.$$ 
\end{definition}
For the case of the value graph, we also denote by $\agentsv_\beta$ and $\itemsv_\beta$ the sets of agents and items corresponding to vertices of $V_\beta(G)$.
\begin{figure}[t!]
    \centering
    \includegraphics[scale=0.8]{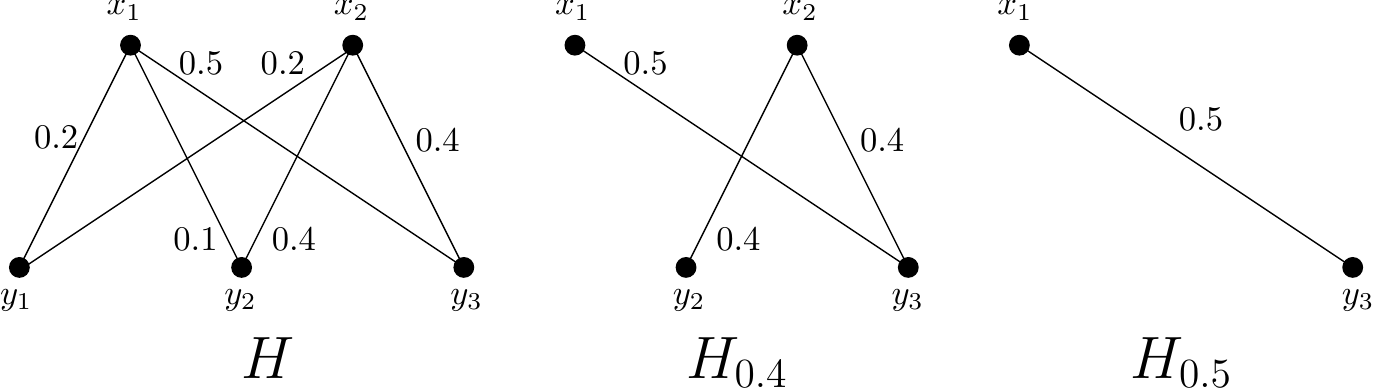}
    \caption{An example of $\beta$-filtering on a graph. After removing the edges with a value smaller than $\beta$, some vertices may become isolated. All such vertices are removed from the filtered graph.}
    \label{fig:filtering}
\end{figure}
Figure \ref{fig:filtering} illustrates an example of a graph $H$, together with $H_{0.4}$ and $H_{0.5}$. Note that none of the vertices in $H_{0.4}$ or $H_{0.5}$ are isolated. 
%

Denote by a maximum matching, a matching that has the highest number of edges in a graph. In definition \ref{FG}, we introduce our main tool for clustering the agents. 
\begin{definition}
\label{FG}
Let $H\langle V(H),E(H)\rangle$ be a bipartite graph with $V(H) = \partone \cup \parttwo$ and let $M$ be a maximum matching of $H$. Define $\parttwo_1$ as the set of the vertices in $\parttwo$ that are not saturated by $M$. Also, define $\parttwo_2$ as the set of vertices in $\parttwo$ that are connected to $\parttwo_1$ by an alternating path and let $\partone_2 = M(\parttwo_2)$, where  $M(\parttwo_2)$ is the set of vertices in $\partone$ that are matched with the vertices of $\parttwo_2$ in $M$. We define $F_{H}(M,\partone)$ as the set of the vertices in $\partone \setminus \partone_2$. 
\end{definition}

For a better understanding of Definition \ref{FG}, consider Figure \ref{fig:FG}. By the definition of alternating paths, there is no edge between the saturated vertices of $F_H(M,\partone)$ and $ \parttwo_1 \cup \parttwo_2$. On the other hand, since $M$ is maximum, the graph doesn't have any augmenting path. Thus, there is no edge between unsaturated vertices in $F_H(M,\partone)$ and $ \parttwo_1 \cup \parttwo_2$. As a result, there is no edge between $F_H(M,\partone)$ and $ \parttwo_1 \cup \parttwo_2$. Furthermore, $F_H(M,\partone)$ has another important property: there exists a matching from $N(F_H(M,\partone))$ to $F_H(M,\partone)$, that saturates all the vertices in $N(F_H(M,\partone))$, where $N(F_H(M,\partone))$ is the set of neighbors of  $F_H(M,\partone)$.

\begin{figure}
\centering
\includegraphics[scale=0.8]{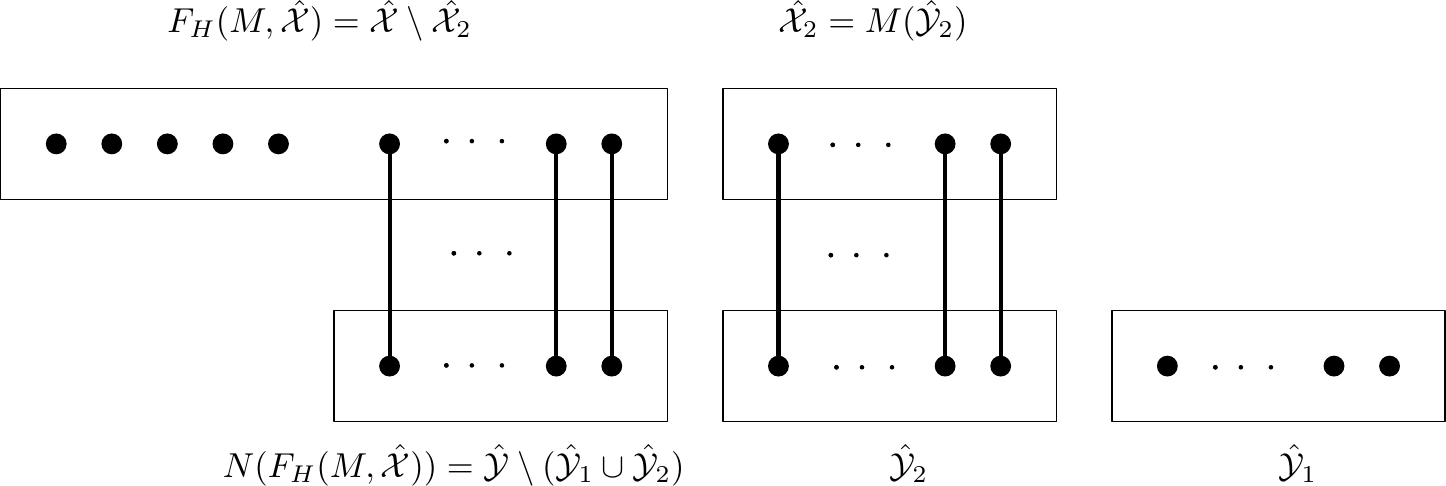}
\caption{Definition of $F_H$}
\label{fig:FG}
\end{figure}

In Lemmas \ref{iff} and \ref{rem}, we prove two remarkable properties for bipartite graphs. As a consequence of these two lemmas, Corollary \ref{remcol} holds for every bipartite graph. We leverage the result of Corollary \ref{remcol} in the clustering phase.
 
\begin{lemma}
\label{rem}
Let $H(V,E)$ be a bipartite graph with $V = \partone \cup \parttwo$ and let $M$ be a maximum matching of $H$. Then, for every set $T \subseteq \partone \setminus F_H(M,\partone)$ we have $|N(T)| > |T|$, where $N(T)$ is the set of neighbors of $T$. 
\end{lemma}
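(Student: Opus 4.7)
My plan is to prove Lemma~\ref{rem} by an augmenting-path argument on top of the definition of $F_H(M,\partone)$. The first step is to unpack the hypothesis: by the definition of $F_H$, the complement $\partone \setminus F_H(M,\partone)$ equals $\partone_2 = M(\parttwo_2)$, so every vertex of $T$ is saturated by $M$ and its $M$-partner lies in $\parttwo_2$. Consequently $M(T) \subseteq N(T)$ and $|M(T)| = |T|$, which already gives the weak bound $|N(T)| \geq |T|$; the whole content of the lemma is to upgrade this to a strict inequality.

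For the upgrade I would proceed by contradiction: assume $|N(T)| = |T|$, so $N(T) = M(T)$. Fix any $t^* \in T$. Since $M(t^*) \in \parttwo_2$, the definition of $\parttwo_2$ supplies an $M$-alternating path from some unsaturated $y_0 \in \parttwo_1$ to $M(t^*)$. Because both endpoints of this path lie in $\parttwo$, its length is even and its last edge must be matched; but the only matched edge incident to $M(t^*)$ is $(t^*, M(t^*))$, so the path has the canonical form
\[
y_0,\; x_1,\; y_1,\; x_2,\; y_2,\; \ldots,\; y_{r-1},\; x_r = t^*,\; y_r = M(t^*),
\]
with $y_i = M(x_i)$ for every $1 \leq i \leq r$.

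The heart of the argument is then to walk along this path until hitting $T$ for the first time. Let $i^*$ be the smallest index $i \geq 1$ with $x_i \in T$; this is well-defined because $x_r = t^* \in T$. By minimality $x_1, \ldots, x_{i^*-1} \notin T$, so their matches $y_1, \ldots, y_{i^*-1}$ lie outside $M(T)$, while $y_0$ is unsaturated and so also lies outside $M(T)$. In particular $y_{i^*-1} \notin M(T)$. On the other hand, the edge $(y_{i^*-1}, x_{i^*})$ of the path witnesses $y_{i^*-1} \in N(x_{i^*}) \subseteq N(T)$, so $y_{i^*-1} \in N(T) \setminus M(T)$, contradicting $N(T) = M(T)$.

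I do not foresee a substantial obstacle beyond one piece of bookkeeping: verifying that the alternating path from $y_0$ to $M(t^*)$ can be taken in the canonical form ending at the matched edge $(t^*, M(t^*))$. This is forced by parity, since both endpoints are in $\parttwo$ so the last edge must be matched, and $M(t^*)$'s unique matched edge is the one to $t^*$. Once the path is in that form, the contradiction falls out from a single sweep along it.
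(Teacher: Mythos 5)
Your proof is correct and follows essentially the same route as the paper's: both unpack $\partone \setminus F_H(M,\partone) = M(\parttwo_2)$, walk along an alternating path from an unsaturated vertex in $\parttwo_1$ until the first vertex of $T$ is encountered, and observe that the preceding $\parttwo$-vertex lies in $N(T) \setminus M(T)$. One small plus of your write-up over the paper's is that you explicitly cover the case $i^* = 1$, where the witness is the unsaturated endpoint $y_0$ rather than a matched vertex.
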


\begin{lemma}
\label{iff}
For a bipartite graph $H(V,E)$ with $V = \partone \cup \parttwo$, $F_H(M,\partone) = \emptyset$ holds, if and only if
for all $T \subseteq \partone$ we have  $|N(T)| > |T|$, where $N(T)$ is the set of neighbors of $T$.

\end{lemma}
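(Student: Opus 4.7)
The plan is to handle the two implications separately. The forward direction is essentially immediate: if $F_H(M,\partone) = \emptyset$, then $\partone \setminus F_H(M,\partone) = \partone$, so Lemma \ref{rem} applied to the whole of $\partone$ yields $|N(T)| > |T|$ for every $T \subseteq \partone$.

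For the backward direction, I would argue by contradiction. Assume the strict Hall-type condition $|N(T)| > |T|$ holds for every (nonempty) $T \subseteq \partone$, but $F_H(M,\partone) \neq \emptyset$ for the given maximum matching $M$. Write $A = \partone_2 = M(\parttwo_2)$ and $B = F_H(M,\partone) = \partone \setminus A$, so $B \neq \emptyset$. The strict Hall condition in particular implies standard Hall, which combined with $M$ being maximum forces $M$ to saturate all of $\partone$; hence every vertex of $B$ has a matched partner $M(v) \in \parttwo$.

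The heart of the proof is the structural claim that $N(B) \subseteq \parttwo \setminus (\parttwo_1 \cup \parttwo_2)$. Suppose instead that some $v \in B$ has a neighbor $u \in \parttwo_1 \cup \parttwo_2$. If $u \in \parttwo_1$ then $u$ is unsaturated and the edge $(v,u)$ is unmatched; if $u \in \parttwo_2 \setminus \parttwo_1$ then $u$ is matched to $M(u) \in A$, and since $v \in B$ is distinct from $M(u)$, the edge $(v,u)$ is again unmatched. In either case I can take the alternating path from $\parttwo_1$ to $u$ (trivial or of even length, ending with a matched edge at $u$), append the unmatched edge $(u,v)$ and then the matched edge $(v,M(v))$, obtaining an alternating path from $\parttwo_1$ to $M(v)$. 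This forces $M(v) \in \parttwo_2$ and therefore $v \in M(\parttwo_2) = A$, contradicting $v \in B$.

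Finally I would show that every vertex $u \in \parttwo \setminus (\parttwo_1 \cup \parttwo_2)$ is matched into $B$: such $u$ is saturated (it is not in $\parttwo_1$), and if its partner $M(u)$ lay in $A = M(\parttwo_2)$ then the alternating path witnessing $M(u) \in A$ could be extended by the matched edge $(M(u),u)$ to show $u \in \parttwo_2$, a contradiction. Thus $M$ restricted to $\parttwo \setminus (\parttwo_1 \cup \parttwo_2)$ is an injection into $B$, so $|N(B)| \leq |\parttwo \setminus (\parttwo_1 \cup \parttwo_2)| \leq |B|$, violating the assumed strict inequality $|N(B)| > |B|$. The main obstacle is tracking the parity of alternating paths carefully so that each extension by a matched or unmatched edge is actually legal; once that bookkeeping is right, the argument is essentially a König-type analysis of the residual structure outside the alternating-reachable set.
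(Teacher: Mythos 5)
Your proposal is correct and follows essentially the same approach as the paper: the forward direction is exactly Lemma \ref{rem} applied to $\partone \setminus F_H(M,\partone) = \partone$, and the backward direction contradicts the strict Hall condition by exhibiting a matching from $N(F_H(M,\partone))$ into $F_H(M,\partone)$ that saturates all of $N(F_H(M,\partone))$, forcing $|N(T)| \le |T|$ for $T = F_H(M,\partone)$. The only stylistic difference is that the paper invokes this matching directly as a property already noted in the discussion after Definition \ref{FG}, whereas you re-derive it from scratch via the alternating-path structure (establishing $N(B)\subseteq \parttwo\setminus(\parttwo_1\cup\parttwo_2)$ and that $M$ injects that set into $B$); your extra preliminary step that strict Hall forces $M$ to saturate $\partone$ is harmless but not actually needed, since the edge-free claim holds for the unsaturated vertices of $F_H(M,\partone)$ as well by maximality of $M$.
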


\begin{corollary}[of Lemmas \ref{iff} and \ref{rem}]
\label{remcol}
Let $H(V,E)$ be a bipartite graph with $V = \partone \cup \parttwo$ and let $M$ be a maximum matching of $H$. 
Furthermore, let  $H'(V',E')$ be the induced sub-graph of $H$, with $V' = \partone' \cup \parttwo'$, where $\partone' = \partone \setminus F_H(M,\partone)$ and $\parttwo' = \parttwo \setminus N(F_H(M,\partone))$. Then, for any maximum matching $M'$ of $H'$, $F_{H'}(M',\partone') = \emptyset$ holds. 
\end{corollary}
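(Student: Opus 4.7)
The plan is to reduce Corollary~\ref{remcol} directly to Lemma~\ref{iff}. Concretely, I would aim to show that for every $T \subseteq \partone'$ the inequality $|N_{H'}(T)| > |T|$ holds, where $N_{H'}(T)$ is the neighborhood of $T$ inside the induced subgraph $H'$; once this is established, Lemma~\ref{iff} applied to $H'$ immediately gives $F_{H'}(M',\partone') = \emptyset$ for every maximum matching $M'$ of $H'$, which is the desired conclusion. The starting point is easy: since $\partone' = \partone \setminus F_H(M,\partone)$, Lemma~\ref{rem} applied in the original graph yields $|N_H(T)| > |T|$ for every $T \subseteq \partone'$. So the real task is to check that passing from $H$ to $H'$ does not shrink the neighborhood, i.e.\ that $N_H(T) \subseteq \parttwo'$.

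The main step, and what I expect to be the only piece of substance, is the structural claim that $N_H(\partone') \subseteq \parttwo_2$. Recall $\partone' = \partone_2 = M(\parttwo_2)$, so each $t \in \partone'$ is matched by $M$ to some vertex $M(t) \in \parttwo_2$, which is reached by an alternating path from $\parttwo_1$. Let $w$ be any neighbor of $t$ in $H$. If $w = M(t)$, then $w \in \parttwo_2$ by definition. Otherwise the edge $(t,w)$ is a non-matching edge, and by appending $t \to w$ to the alternating path from $\parttwo_1$ that reaches $t$ through the matching edge $(M(t),t)$, we obtain a longer alternating path from $\parttwo_1$ ending at $w$; hence $w \in \parttwo_2$ as well. (Simplicity of the extended path is fine because if $w$ already appears earlier on the path, it still lies in $\parttwo_2$ trivially.) This is precisely the standard alternating-tree reasoning used to prove Lemmas~\ref{rem} and \ref{iff}, repackaged for the needs of the corollary.

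Finally, it remains to observe that $\parttwo_2 \subseteq \parttwo'$. This uses the remark made in the excerpt just after Definition~\ref{FG}: maximality of $M$ forbids augmenting paths, which together with the alternating-path definition forces $F_H(M,\partone)$ to have no edges into $\parttwo_1 \cup \parttwo_2$. In particular, $N(F_H(M,\partone)) \cap \parttwo_2 = \emptyset$, so removing $N(F_H(M,\partone))$ from $\parttwo$ does not touch $\parttwo_2$, giving $\parttwo_2 \subseteq \parttwo'$. Combining with the previous paragraph, $N_H(T) \subseteq \parttwo_2 \subseteq \parttwo'$, hence $N_{H'}(T) = N_H(T)$ and the bound $|N_H(T)| > |T|$ from Lemma~\ref{rem} transfers verbatim to $H'$. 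Lemma~\ref{iff} then closes the argument.
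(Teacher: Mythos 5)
The central claim of your argument --- that $N_H(\partone') \subseteq \parttwo_2$, so that passing from $H$ to $H'$ cannot shrink the neighborhood of any $T \subseteq \partone'$ --- is false, and the flaw is exactly in the step where you ``append $t \to w$.'' You have the orientation of the alternating paths reversed. Since the roots of these paths lie in $\parttwo_1 \subseteq \parttwo$, every $\partone$-vertex on such a path is \emph{entered} by a non-matching edge and \emph{exited} by its unique matching edge; in particular, the path certifying $M(t) \in \parttwo_2$ reaches $t$ through a non-matching edge $(\hat{y}_{j-1}, t)$ and then continues to $M(t)$ via the matching edge $(t, M(t))$. Appending the non-matching edge $(t, w)$ to the prefix ending at $t$ therefore produces two consecutive non-matching edges, not an alternating path. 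And the claim really does fail: take $\parttwo = \{a,b,c,d\}$, $\partone = \{x,y,z\}$, $M = \{(a,x),(b,y),(c,z)\}$ and edges $\{(a,x),(b,y),(c,z),(d,y),(b,z),(a,z)\}$. Then $\parttwo_1 = \{d\}$, $\parttwo_2 = \{b,c\}$, $\partone' = \partone_2 = \{y,z\}$, $F_H(M,\partone) = \{x\}$, and $a \in N_H(z)$ although $a \notin \parttwo_1 \cup \parttwo_2$. The remark following Definition~\ref{FG} gives only the one-sided non-adjacency $N\bigl(F_H(M,\partone)\bigr) \cap (\parttwo_1 \cup \parttwo_2) = \emptyset$; it does \emph{not} give the converse inclusion $N(\partone') \subseteq \parttwo_1 \cup \parttwo_2$, which is what the identity $N_{H'}(T) = N_H(T)$ would require. (In the example, $a$ happens also to be a neighbor of $x \in F_H(M,\partone)$, and so it is removed when passing to $H'$; that is why the corollary itself still holds even though your intermediate claim does not.)

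The argument can instead bypass $N_{H'}(T) = N_H(T)$ and establish $F_{H'}(\cdot, \partone') = \emptyset$ directly for one convenient maximum matching, after which Lemma~\ref{iff} does the rest. Let $\bar{M} = M \cap E(H')$. Each $t \in \partone' = M(\parttwo_2)$ has $M(t) \in \parttwo_2 \subseteq \parttwo'$, so $\bar{M}$ saturates $\partone'$ and is therefore a maximum matching of $H'$. Furthermore, every alternating path in $H$ from $\parttwo_1$ that witnesses membership in $\parttwo_2$ lies entirely in $V(H')$: its $\parttwo$-vertices are in $\parttwo_2 \subseteq \parttwo'$, and its $\partone$-vertices are matched by edges of the path to those $\parttwo_2$-vertices, hence belong to $M(\parttwo_2) = \partone'$. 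Since $\parttwo_1 \subseteq \parttwo'$ and those vertices remain unmatched under $\bar{M}$, the same paths are alternating paths from $\parttwo'_1$ in $H'$ with respect to $\bar{M}$. Consequently $\parttwo_2 \subseteq \parttwo'_2$, so $\partone' = M(\parttwo_2) \subseteq \bar{M}(\parttwo'_2) = \partone'_2$, i.e.\ $F_{H'}(\bar{M}, \partone') = \emptyset$. Lemma~\ref{iff} then yields $|N_{H'}(T)| > |T|$ for every $T \subseteq \partone'$, and applying Lemma~\ref{iff} once more gives $F_{H'}(M', \partone') = \emptyset$ for every maximum matching $M'$ of $H'$.
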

\subsubsection{Cycle-envy-freeness and $\MCMWM$}\label{additive:cef}

In the algorithm, we satisfy each agent in two steps. More precisely, we allocate each agent two sets of items that are together of worth at least $3/4$ to him. We denote the first set of items allocated to agent $\agent_i$ by $\firstset_i$ and the second set by $\secondset_i$. Moreover, we attribute the agents with labels \textit{satisfied}, \textit{unsatisfied}, and \textit{semi-satisfied} in the following way:
\begin{enumerate}
	\item An agent $\agent_i$ is satisfied if $\valu_i(\firstset_i \cup \secondset_i) \geq 3/4$.
	\item An agent $\agent_i$ is semi-satisfied if $\firstset_i \neq \emptyset$ but $\secondset_i = \emptyset$. In this case we define $\epsilon_i = 3/4-\valu_i(\firstset_i)$.
	\item An agent $\agent_i$ is unsatisfied if $\firstset_i = \secondset_i = \emptyset$.
\end{enumerate}
As we see, the algorithm maintains the property that for every semi-satisfied agent $\agent_i$, $\valu_i(\firstset_i) \geq 1/2$ holds and hence, $\epsilon_i < 1/4$. 

To capture the competition between different agents, we define an attribution for an ordered pair of agents. We say a semi-satisfied agent envies another semi-satisfied agent, if he prefers to switch sets with the other agent. 

\begin{definition}
\label{winloose}
Let $T$ be a set of semi-satisfied agents. An agent $\agent_i \in T$ envies an agent $\agent_j \in T$, if $\valu_i(\firstset_j) \geq \valu_i(\firstset_i)$. Also, we call an agent $ \agent_i \in T$ a winner of $T$, if $\agent_i$ envies no other agent in $T$. Similarly, we call an agent $\agent_i$ a loser of $T$, if no other agent in $T$ envies $\agent_i$.
\end{definition}

Note that it could be the case that an agent $\agent_i$ is both a loser and a winner of a set $T$ of agents. Based on Definition \ref{winloose}, we next define the notion of \textit{cycle-envy-freeness}.

\begin{definition}
 We call a set $T$ of semi-satisfied agents cycle-envy-free, if every non-empty subset of $T$ contains at least one winner and one loser. 
\end{definition}

Let $C$ be a cycle-envy-free set of semi-satisfied agents. Define the representation graph of $C$ as a digraph $G_C(V(G_C),\overrightarrow{E}(G_C))$, such that for any agent $\agent_i \in C$, there is a vertex $v_i$ in $V(G_C)$ and there is a directed edge from $v_i$ to $v_j$ in $\overrightarrow{E}(G_C)$, if $\agent_i$ envies $\agent_j$. In Lemma \ref{dag}, we show that $G_C$ is acyclic.
\begin{lemma}
\label{dag}
For every cycle-envy-free set of semi-satisfied agents $C$, $G_C$ is a DAG. 
\end{lemma}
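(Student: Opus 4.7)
The plan is to prove the contrapositive: if $G_C$ contains a directed cycle, then some non-empty subset of $C$ lacks a winner (equivalently, a loser), which would contradict cycle-envy-freeness. So I would assume for contradiction that $G_C$ has a directed cycle $v_{i_1} \to v_{i_2} \to \cdots \to v_{i_k} \to v_{i_1}$, and let $T = \{\agent_{i_1}, \agent_{i_2}, \ldots, \agent_{i_k}\} \subseteq C$.

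Next, I would unpack what this cycle means in terms of Definition \ref{winloose}. By construction of $G_C$, the edge $v_{i_t} \to v_{i_{t+1}}$ (indices mod $k$) exists precisely because $\agent_{i_t}$ envies $\agent_{i_{t+1}}$, and both lie in $T$. Hence every agent in $T$ envies some other agent of $T$, so no member of $T$ is a winner of $T$. Symmetrically, every agent in $T$ is envied by the preceding agent in the cycle, so no member of $T$ is a loser of $T$ either.

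Since $T$ is a non-empty subset of the cycle-envy-free set $C$, by the definition of cycle-envy-freeness $T$ must contain at least one winner (and one loser). This contradicts the previous paragraph, so $G_C$ cannot contain any directed cycle and is therefore a DAG.

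I do not expect any real obstacle here; the statement is essentially a direct rewording of the definitions, and the main care is only to ensure that the vertex set $T$ of any hypothetical directed cycle is taken as the subset on which cycle-envy-freeness is applied, so that envying is measured inside $T$ rather than inside all of $C$.
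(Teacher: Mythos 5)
Your proof is correct and follows essentially the same route as the paper: take a hypothetical directed cycle, form the subset $T$ of agents on the cycle, and observe that $T$ then has neither a winner nor a loser, contradicting cycle-envy-freeness. No issues.
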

\begin{definition}
 A topological ordering of a cycle-envy-free set $C$ of semi-satisfied agents, is a total order $\prec_O$ corresponding to the topological ordering of the representation graph $G_C$. More formally, for the agents $\agent_i,\agent_j \in C$ we have $\agent_i \prec_O \agent_j$ if and only if $v_i$ appears before $v_j$, in the topological ordering of $G_C$.  
\end{definition}

Note that in the topological ordering of a cycle-envy-free set $C$ of semi-satisfied agents, if $\agent_i \in C$ envies $\agent_j \in C$, then $\agent_i \prec_O \agent_j$.

\begin{observation}
\label{epsofcluster}
Let $C$ be a cycle-envy-free set of semi-satisfied agents. Then, for every agent $\agent_i \in C$ such that $\agent_j \prec_O \agent_i$, we have:
$$\valu_i(\firstset_j) \leq 3/4 - \epsilon_i.$$ 

\end{observation}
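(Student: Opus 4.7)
The plan is to unpack the definitions of envy, topological ordering, and $\epsilon_i$, and observe that the statement is essentially an immediate consequence of how $\prec_O$ is constructed.

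First, I would recall that by definition of the representation graph $G_C$, there is a directed edge from $v_i$ to $v_j$ in $\overrightarrow{E}(G_C)$ exactly when $\agent_i$ envies $\agent_j$, and that the topological ordering $\prec_O$ is consistent with all these edges: if there is an edge from $v_i$ to $v_j$, then $\agent_i \prec_O \agent_j$. Taking the contrapositive, for any two distinct agents $\agent_i, \agent_j \in C$, if $\agent_j \prec_O \agent_i$, then $\agent_i$ does not envy $\agent_j$.

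Next, I would invoke Definition \ref{winloose}: $\agent_i$ envies $\agent_j$ iff $\valu_i(\firstset_j) \geq \valu_i(\firstset_i)$. So, under the hypothesis $\agent_j \prec_O \agent_i$, we obtain
$$\valu_i(\firstset_j) < \valu_i(\firstset_i).$$
Finally, recall that for any semi-satisfied agent $\agent_i$ we have $\epsilon_i = 3/4 - \valu_i(\firstset_i)$, so $\valu_i(\firstset_i) = 3/4 - \epsilon_i$. Substituting into the inequality above yields the desired conclusion $\valu_i(\firstset_j) \leq 3/4 - \epsilon_i$ (in fact with strict inequality).

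There is essentially no obstacle; the only subtle point is to verify that $\prec_O$ is well-defined and total, which relies on Lemma \ref{dag} guaranteeing that $G_C$ is a DAG so that a topological ordering exists and can be extended to a total order. Once that is in place, the observation is purely a definition-chase.
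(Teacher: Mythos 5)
Your proof is correct and matches the paper's intended (unstated) justification: the observation is indeed a direct definition-chase from the representation graph, the topological ordering, Definition~\ref{winloose}, and the identity $\epsilon_i = 3/4 - \valu_i(\firstset_i)$. The paper states this as an observation without proof precisely because the argument you give is the obvious one, and you correctly note that Lemma~\ref{dag} is what makes $\prec_O$ well-defined.
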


We define a maximum cardinality maximum weighted matching of a weighted graph as a matching that has the highest number of edges and among them the one that has the highest total sum of edge weights. For brevity we call such a matching an $\MCMWM$. In Lemma \ref{wm}, we show that an $\MCMWM$ of a weighted bipartite graph has certain properties that makes it useful for building cycle-envy-free clusters. 

\begin{lemma}
\label{wm}
Let $H\langle V(H),E(H)\rangle$ be a weighted bipartite graph with $V(H) = \partone \cup \parttwo$ and let $M = \{(\vone_1,\vtwo_1),...,(\vone_k,\vtwo_k)\}$ be an $\MCMWM$ of $H$. Then, for every subset $T \subseteq \{\vtwo_1,\vtwo_2, \ldots,\vtwo_k\}$, the following conditions hold:

\begin{minipage}[t]{\linegoal}
\begin{enumerate}[leftmargin=*]
 \item There exists a vertex $\vtwo_j \in T$ which is a winner in $T$, i.e.,  $w(\vone_j,\vtwo_{j}) \geq w(\vone_i,\vtwo_j)$, for all $\vone_i \in M(T)$ and  $(\vone_i,\vtwo_j) \in E(H)$. 
\item There exists a vertex $ \vtwo_j \in T$ which is a loser in $T$, i.e.,  $w(\vone_i,\vtwo_i)  \geq w(\vone_j,\vtwo_i) $, for all $\vtwo_i \in T$ and $(\vone_j,\vtwo_i) \in E(H)$.
\item For any vertex $\vtwo_i \in T$ and any unsaturated vertex $\vone_j \in \partone$ such that $(\vone_j,\vtwo_i) \in E(H)$, $w(\vone_i,\vtwo_i) \geq w(\vone_j,\vtwo_i)$. 
\end{enumerate}
\end{minipage}
\\[6pt]
where $M(T)$ is the set of vertices which are matched by the vertices of $T$ in $M$.
\end{lemma}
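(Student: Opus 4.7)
The plan is to exploit the optimality of $M$ through local swap arguments that preserve matching cardinality while strictly increasing the total edge weight, each such swap contradicting the $\MCMWM$ property of $M$.

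Part (3) follows from the simplest swap. Suppose for contradiction that an unsaturated $\vone_j \in \partone$ and a matched pair $(\vone_i, \vtwo_i) \in M$ satisfy $w(\vone_j, \vtwo_i) > w(\vone_i, \vtwo_i)$. Define $M' = (M \setminus \{(\vone_i, \vtwo_i)\}) \cup \{(\vone_j, \vtwo_i)\}$. Since $\vone_j$ was unsaturated, $M'$ is a valid matching of the same cardinality $k$ but strictly larger total weight, contradicting the $\MCMWM$ assumption on $M$.

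For parts (1) and (2) I would first establish a central claim: $M$ admits no \emph{envy cycle} among its matched pairs, where an envy cycle is a sequence of distinct matched right-endpoints $\vtwo_{i_1}, \ldots, \vtwo_{i_r}$ with $(\vone_{i_{a+1}}, \vtwo_{i_a}) \in E(H)$ and $w(\vone_{i_{a+1}}, \vtwo_{i_a}) > w(\vone_{i_a}, \vtwo_{i_a})$ for every $a$ (indices taken mod $r$). Given such a cycle, removing all pairs $(\vone_{i_a}, \vtwo_{i_a})$ from $M$ and inserting all pairs $(\vone_{i_{a+1}}, \vtwo_{i_a})$ produces a matching in which every affected vertex is rematched exactly once; cardinality is preserved while summing the $r$ strict inequalities shows the total weight strictly increased --- contradicting $\MCMWM$.

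Given the no-envy-cycle property, both (1) and (2) reduce to pigeonhole arguments on the auxiliary digraph $D_T$ with vertex set $T$ and an arc $\vtwo_a \to \vtwo_b$ whenever $\vone_b \in M(T)$, $(\vone_b, \vtwo_a) \in E(H)$, and $w(\vone_b, \vtwo_a) > w(\vone_a, \vtwo_a)$. For (1), if no vertex of $T$ were a winner, every vertex of $D_T$ would have out-degree at least one, so following arrows from any starting vertex eventually revisits a vertex and yields a directed cycle --- exactly an envy cycle, contradiction. For (2), a symmetric argument applies: if no vertex of $T$ were a loser, every vertex of $D_T$ would have in-degree at least one, and a finite digraph with minimum in-degree one must contain a cycle, again an envy cycle. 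The main obstacle is careful bookkeeping: correctly tracking the direction of envy (``$\vtwo_j$ envies $\vtwo_i$'' versus ``$\vtwo_j$ is envied by $\vtwo_i$''), confirming that each rematched edge indeed lies in $E(H)$ (built into the envy definition via the incidence requirement), and verifying that the cyclic swap preserves cardinality --- a subtle point since $\MCMWM$ prohibits dropping even a single matched vertex.
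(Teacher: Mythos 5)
Your proposal is correct and takes essentially the same approach as the paper: for part (3) a single edge swap, and for parts (1) and (2) an auxiliary digraph on $T$ whose arcs record strict envy, followed by the observation that minimum out-degree (resp.\ in-degree) at least one forces a directed cycle, which yields a cyclic rematching of strictly greater weight contradicting MCMWM. The only organizational difference is that you isolate the ``no envy cycle'' property as a standalone claim and then derive both (1) and (2) from one digraph $D_T$, whereas the paper rebuilds the (effectively identical) digraph separately for each part; this is a tidier packaging of the same argument, not a new route.
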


Notice the similarities of the first and the second conditions of Lemma \ref{wm} with the conditions of the winner and loser in 
 Definition \ref{winloose}. In Section \ref{additive:clusters}, we assign items to the agents based on an $\MCMWM$ of the value-graph. Lemma \ref{wm} ensures that such an assignment results in a cycle-envy-free set of semi-satisfied agents.   





\subsection{Phase 1: Building the Clusters}\label{additive:clusters}
In this section, we explain our method for clustering the agents. Intuitively, we divide the agents into three clusters $\cone,\ctwo$ and $\cthree$. As mentioned before, during the algorithm, two sets of items $\firstset_i,\secondset_i$ are allocated to each agent $\agent_i$. Throughout this section, we prove a set of lemmas that are labeled as \emph{value-lemma}. In these lemmas we bound the value of $f_i$ and $g_i$ allocateed to any agent for other agents. A summary of these lemmas is shown in Tables \ref{table0}, \ref{table4} and \ref{table1}.

After constructing each cluster, we refine that cluster. In the refinement phase of each cluster, we target a certain subset of the remaining items. If any item in this subset could satisfy an agent in the recently created cluster, we allocate that item to the corresponding agent. The goal of the refinement phase is to ensure that the remaining items in the targeted subset are light enough for the agents in that cluster, i.e., none of the remaining items can satisfy an agent in this cluster.

We denote by $\satagents$, the set of satisfied agents. In addition, denote by $\satagents_1, \satagents_2$, and $\satagents_3$ the subsets of $\satagents$, where $\satagents_i$ refers to the agents of $\satagents$ that previously belonged to ${\mathcal C}_i$. Furthermore, we use $\satagents_1^r$ and $\satagents_2^r$ to refer to the agents of $\satagents_1$ and $\satagents_2$ that are satisfied in the refinement phases of $\cone$ and $\ctwo$, respectively.
\subsubsection{Cluster $\cone$} \label{cluster1:building}
Consider the filtering $G_{1/2}\langle V_{1/2}(G),E_{1/2}(G) \rangle$ of the value-graph $G$ and let $M$ be an $\MCMWM$ of $G_{1/2}$. We define Cluster $\cone$ as the set of agents whose corresponding vertex is in $N(F_{G_{1/2}}(M,\itemsv_{1/2}))$. 

For brevity, denote by $V_{\cone}$ the set of vertices in $V(G)$ that correspond to the agents of $\cone$. In other words:
$$V_{\cone} = N(F_{G_{1/2}}(M,\itemsv_{1/2})).$$

Also, let $F_{G_{1/2}}(M,\itemsv_{1/2}) $ be $U_1 \cup S_1$, where $U_1$ is the set of unsaturated vertices in $F_{G_{1/2}}(M,\itemsv_{1/2})$ and $S_1$ is the set of the saturated vertices. For each edge $(\itemv_j,\agentv_i) \in M$ such that $\itemv_j \in S_1$, we allocate  item $\ite_j$ to agent $\agent_i$. More precisely, we set $\firstset_i = \{\ite_j \}$. Since $w(\itemv_j,\agentv_i)\geq {1/2}$, we have:
$$\forall \agent_k \in \cone \qquad V_k(f_k) \geq {1/2}.$$
According to the definition of $\epsilon_i$, we have
\begin{equation}
\forall \agent_k \in \cone \qquad \epsilon_k \leq {1/4}.
\end{equation} 

By the definition of $F_{G_{1/2}}$, for every agent which is not in $\cone$, the condition of Lemma \ref{forc2c3} holds. Note that all the agents that are not in $\cone$, belong to either $\ctwo$ or $\cthree$.

\begin{lemma}[value-lemma]
\label{forc2c3}
For all $\agent_i \in \ctwo \cup \cthree$ we have: \[ \forall \agent_j \in \cone \qquad \valu_i(\firstset_j) < 1/2. \]
\end{lemma}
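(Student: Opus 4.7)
The plan is to unpack the construction of $\cone$ and the allocation of the first sets. By construction, for each agent $\agent_j \in \cone$ the set $\firstset_j$ is a singleton $\{\ite_k\}$ where $\itemv_k \in S_1 \subseteq F_{G_{1/2}}(M,\itemsv_{1/2})$ and $(\itemv_k,\agentv_j) \in M$. Consequently, bounding $\valu_i(\firstset_j)$ is the same as bounding the weight $w(\itemv_k,\agentv_i)$ in the original value graph $G$.

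I would then proceed by contradiction: assume some $\agent_i \in \ctwo \cup \cthree$ satisfies $\valu_i(\firstset_j) \geq 1/2$, i.e. $w(\itemv_k,\agentv_i) \geq 1/2$. By the definition of the $1/2$-filtering, this edge is retained in $E_{1/2}(G)$ and both endpoints lie in $V_{1/2}(G)$. Hence $\agentv_i$ is a neighbor in $G_{1/2}$ of the vertex $\itemv_k \in F_{G_{1/2}}(M,\itemsv_{1/2})$, so
\[
\agentv_i \in N\bigl(F_{G_{1/2}}(M,\itemsv_{1/2})\bigr) = V_{\cone}.
\]
This contradicts the assumption $\agent_i \notin \cone$, and the lemma follows.

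I do not anticipate any real obstacle: the statement is essentially a tautology once the definitions of the filtering, of $F_{G_{1/2}}$, and of $\cone$ are in place, because $\cone$ was \emph{defined} as the neighborhood of $F_{G_{1/2}}(M,\itemsv_{1/2})$ in the filtered graph. The only points worth pausing on are (i) verifying that first sets are assigned only for items $\itemv_k \in S_1$ (which is already built into the construction, so that a matching edge $(\itemv_k,\agentv_j)\in M$ exists whenever $\firstset_j\neq\emptyset$), and (ii) noting that any agent in $\ctwo \cup \cthree$ whose corresponding vertex does not even survive to $V_{1/2}(G)$ trivially has value less than $1/2$ for every single item, so the singleton $\firstset_j$ of any $\agent_j \in \cone$ is certainly worth less than $1/2$ to them.
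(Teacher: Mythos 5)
Your proposal is correct and follows essentially the same reasoning as the paper's proof: both arguments reduce to the observation that no edge of $G_{1/2}$ connects a vertex of $F_{G_{1/2}}(M,\itemsv_{1/2})$ to an agent vertex outside $N(F_{G_{1/2}}(M,\itemsv_{1/2}))$, together with the trivial case of agents filtered out of $\agentsv_{1/2}$. You phrase it as a proof by contradiction while the paper states it directly, but the content is identical.
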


For each vertex $\agentv_i \in V_{\cone}$, denote by $N_{\agentv_i}$ the set of vertices $\itemv_j  \in \itemsv \setminus \itemsv_{1/2}$, where $w(\itemv_j,\agentv_i) \geq \epsilon_i$ and let \[W_1 = U_1 \cup \bigcup_{\agentv_i \in V_{\cone}} N_{\agentv_i}.\]

Note that by definition, for any vertex $\itemv_j \in U_1$ and $\agentv_i \notin V_{\cone}$, there is no edge between $\itemv_j$ and $\agentv_i$ in $G_{1/2}$ and hence $w(\itemv_j,\agentv_i)<1/2$. Also, since the rest of the vertices in $W_1$ are from $\itemsv \setminus \itemsv_{1/2}$, for any vertex $\agentv_i$ and $\itemv_j \in (W_1 \setminus U_1)$, $w(\itemv_j,\agentv_i)<1/2$ holds. Thus, we have the following observation:

\begin{observation}
\label{w1small}
For every item $\ite_j$ with $\itemv_j \in W_1$ and every agent $\agent_i$ with $\agentv_i \notin V_{\cone}$, $\valu_i(\{\ite_j\})<1/2$.
\end{observation}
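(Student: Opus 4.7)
The plan is to establish the inequality by a straightforward case analysis on how $\itemv_j$ enters $W_1 = U_1 \cup \bigcup_{\agentv_i \in V_{\cone}} N_{\agentv_i}$, relying only on the defining properties of $F_{G_{1/2}}(M,\itemsv_{1/2})$ and of the $1/2$-filtering. No further structure of the matching $M$ or of the valuations will be needed.

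First I would handle the case $\itemv_j \in U_1 \subseteq F_{G_{1/2}}(M,\itemsv_{1/2})$. The key fact is the structural property of $F_H(M,\partone)$ observed just after Definition \ref{FG}: vertices of $F_H(M,\partone)$ have no neighbors in $\parttwo_1 \cup \parttwo_2$, so every edge of $H$ incident to such a vertex goes to $N(F_H(M,\partone))$. Applied to $H = G_{1/2}$, this says that all edges of $G_{1/2}$ incident to $\itemv_j$ lie in $V_{\cone} = N(F_{G_{1/2}}(M,\itemsv_{1/2}))$. Hence for any $\agentv_i \notin V_{\cone}$ there is no edge between $\itemv_j$ and $\agentv_i$ in $G_{1/2}$, which by the definition of the $1/2$-filtering means $w(\itemv_j,\agentv_i) < 1/2$, i.e.\ $\valu_i(\{\ite_j\}) < 1/2$. (If $\agentv_i \notin \agentsv_{1/2}$ at all, the same conclusion holds trivially, since then $\agent_i$ has no incident edge of weight $\geq 1/2$ in the value graph.)

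Next I would handle the remaining case, $\itemv_j \in W_1 \setminus U_1$. Then $\itemv_j \in N_{\agentv_k}$ for some $\agentv_k \in V_{\cone}$, and by the definition of $N_{\agentv_k}$ we have $\itemv_j \in \itemsv \setminus \itemsv_{1/2}$. Thus $\itemv_j$ has no incident edge at all in $G_{1/2}$, so for \emph{every} agent $\agent_i$ (in particular, every $\agent_i$ with $\agentv_i \notin V_{\cone}$) we get $\valu_i(\{\ite_j\}) = w(\itemv_j,\agentv_i) < 1/2$. Combining the two cases yields the observation.

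I do not anticipate any genuine obstacle here; this observation is essentially a bookkeeping statement that collects the two properties already highlighted in the paragraph preceding it, and its purpose is to set up the refinement phase of $\cone$ in the following subsection.
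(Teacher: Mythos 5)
Your proof is correct and uses essentially the same two-case split that the paper gives in the paragraph immediately preceding the observation: for $\itemv_j \in U_1$ you argue there is no $G_{1/2}$-edge from $\itemv_j$ to any $\agentv_i \notin V_{\cone}$, and for $\itemv_j \in W_1 \setminus U_1$ you note that $\itemv_j \in \itemsv \setminus \itemsv_{1/2}$ and hence carries no $G_{1/2}$-edges at all. One small remark: in the $U_1$ case you invoke the structural fact that $F_{G_{1/2}}(M,\itemsv_{1/2})$ has no edges to $\parttwo_1 \cup \parttwo_2$, but that detour is unnecessary here — once one recalls that $V_{\cone}$ is \emph{defined} as $N(F_{G_{1/2}}(M,\itemsv_{1/2}))$, the claim that every $G_{1/2}$-neighbor of a vertex in $F_{G_{1/2}}(M,\itemsv_{1/2})$ lies in $V_{\cone}$ is immediate from the definition of a neighborhood, with no appeal to alternating/augmenting paths.
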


Now, define $\itemsv'$ and $ \agentsv' $ as follows:

$$\itemsv' = \itemsv \setminus (W_1 \cup S_1),$$ $$\agentsv' = \agentsv \setminus V_{\cone}.$$ 

Let $G'\langle V(G'),E(G')\rangle $ be the induced subgraph of $G$ on $V(G') = \agentsv' \cup \itemsv'$. We use graph $G'$ to build Cluster $\ctwo$.

\subsubsection{Cluster $\cone$ Refinement} Before building Cluster $\ctwo$, we satisfy some of the agents in $\cone$ with the items corresponding to the vertices of $W_1$. Consider the subgraph $G_1 \langle V(G_1),E(G_1) \rangle$ of $G$ with $V(G_1) = W_1 \cup V_{\cone}$. In $G_1$, There is an edge between $\agentv_i \in V_{\cone}$ and $\itemv_j \in W_1$, if $V_i(\{\ite_j\}) \geq \epsilon_i$. Note that $G_1 \langle V(G_1),E(G_1) \rangle$ is not necessarily an induced subgraph of $G$. We use $G_1$ to satisfy a set of agents in $\cone$. To this end, we first show that $G_1$ admits a special type of matching, described in Lemma \ref{nicematch}.

\begin{lemma}
\label{nicematch}
There exists a matching $M_1$ in $G_1$, that saturates all the vertices of $W_1$ and for any edge $(\itemv_i,\agentv_j) \in M_1$ and any unsaturated vertex $\agentv_k \in N(\itemv_i)$, $\agent_k$ does not envy $\agent_j$. 
\end{lemma}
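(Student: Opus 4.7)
My plan is to split the lemma into two claims — the existence of a matching saturating $W_1$ in $G_1$, and the envy condition — and address them in turn.

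For the existence of a saturating matching, I would verify Hall's marriage condition on the $W_1$-side of $G_1$. Fix $T \subseteq W_1$ and let $A = N_{G_1}(T)$; suppose toward contradiction that $|A| < |T|$ and take $T$ of minimum size with this property. A standard König-style argument then forces $|A| = |T| - 1$ and Hall's condition to also hold on the $A$-side of the subgraph induced by $A \cup T$, so some matching $M^\star$ from $A$ into $T$ saturates $A$. For each $\agent_i \in A$, form the pair $\{\firstset_i, M^\star(\agent_i)\}$: additivity together with $\valu_i(\firstset_i) \geq 1/2$ (since $\firstset_i \in S_1$ is matched to $\agentv_i$ in $M$) and $\valu_i(M^\star(\agent_i)) \geq \epsilon_i = 3/4 - \valu_i(\firstset_i)$ gives the pair value at least $3/4$ for $\agent_i$. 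For any $\agent_k \notin A$ the pair is worth strictly less than $1$: if $\agentv_k \notin V_{\cone}$, both items have value $< 1/2$ to $\agent_k$ (Observation \ref{w1small} for the $W_1$-item, and the $F_{G_{1/2}}$-structure for the $S_1$-item, since $\agentv_k \notin N(F_{G_{1/2}})$); if $\agentv_k \in V_{\cone} \setminus A$, Lemma \ref{remove1} gives $\valu_k(\firstset_i) < 3/4$ and $\agentv_k \notin N_{G_1}(M^\star(\agent_i))$ gives $\valu_k(M^\star(\agent_i)) < \epsilon_k \leq 1/4$. All three conditions of Lemma \ref{remove3} are therefore met with agent set $A$ and the $2|A|$ pair items, forcing $3/4$-reducibility and contradicting the standing assumption. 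Hence Hall's condition holds and a matching saturating $W_1$ exists.

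For the envy condition, I would take $M_1$ to be a maximum cardinality maximum weighted matching in $G_1$ (edge weights $\valu_i(\ite_j)$) among matchings saturating $W_1$, and argue by exchange in the spirit of Lemma \ref{wm}. Suppose $(\itemv_i, \agentv_j) \in M_1$ and $\agentv_k$ is unsaturated in $M_1$ with $(\itemv_i, \agentv_k) \in E(G_1)$ and $\valu_k(\firstset_j) \geq \valu_k(\firstset_k)$. Swapping the first items between $\agent_j$ and $\agent_k$ inside the $\MCMWM$ $M$ used to build $\cone$, and rerouting the $M_1$-edge $(\itemv_i, \agentv_j)$ to $(\itemv_i, \agentv_k)$, would produce an equal-cardinality pair of matchings whose combined $G_{1/2}$-weight is at least as large as $M$'s, contradicting the $\MCMWM$ choice of $M$ whenever the envy is strict; ties can be broken by refining the priority rule used to pick $M_1$ so that, all else equal, losers (in the sense of Definition \ref{winloose}) inside $\cone$ are preferred over winners.

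The main obstacle will be closing the exchange argument cleanly. The swap modifies $M$ inside $G_{1/2}$, so the new edges must remain in $G_{1/2}$ — in particular $\valu_j(\firstset_k) \geq 1/2$ — which is not automatic. I expect this will require a case distinction on $\valu_j(\firstset_k)$: in the problematic case, instead of a direct two-edge swap, one must use an alternating-path argument inside $M$ that leverages the $F_{G_{1/2}}$-structure on which $\cone$ is built, and simultaneously verify that rerouting inside $M_1$ preserves the saturation of $W_1$ guaranteed by the first part.
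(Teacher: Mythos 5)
Your first half --- showing that $G_1$ admits a matching saturating $W_1$ --- is essentially the paper's own argument (Lemma~\ref{v1size}), phrased through a minimum-size Hall violator $T$ and its neighborhood $A$ rather than through $F_{G_1}(M_1,W_1)$. Either way you extract a matching from the agents of $A$ into $T$ saturating $A$, form the same pairs $\{\firstset_i, M^\star(\agent_i)\}$, and verify the three conditions of Lemma~\ref{remove3} in the same way (irreducibility giving $\valu_k(\firstset_i) < 3/4$, Observation~\ref{w1small} and the $F_{G_{1/2}}$-structure giving $< 1/2$ for agents outside $V_{\cone}$, and non-adjacency in $G_1$ giving $< \epsilon_k$). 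This part is sound and equivalent to the paper.

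The second half has a genuine gap, and you essentially concede it yourself. Taking $M_1$ to be a $\MCMWM$ of $G_1$ under the edge weights $\valu_i(\ite_j)$ gives you, via Lemma~\ref{wm}, a comparison between $\valu_j(\ite_i)$ and $\valu_k(\ite_i)$ for the matched $W_1$-item $\ite_i$; it says nothing about $\valu_k(\firstset_j)$ versus $\valu_k(\firstset_k)$, which is what envy is about. You then try to rescue this by simultaneously modifying $M$ (swapping $\firstset_j$ and $\firstset_k$) and rerouting $M_1$. That exchange does not close: you control $\valu_k(\firstset_j) \geq \valu_k(\firstset_k)$ but nothing about $\valu_j(\firstset_j)$ versus $\valu_j(\firstset_k)$, so the total $G_{1/2}$-weight of the modified $M$ may decrease even under strict envy, and --- as you note --- the swapped edge $(\firstset_k,\agentv_j)$ need not even survive the $1/2$-filter. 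Gesturing at an unspecified alternating-path argument does not repair this; the claim of a contradiction with the $\MCMWM$-ness of $M$ is unsupported.

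The paper never touches $M$ at this stage. The trick is to use the $\MCMWM$-ness of $M$ only through its already-extracted consequence: by Lemma~\ref{wm} the set $\cone$ is cycle-envy-free, hence by Lemma~\ref{dag} its representation graph is a DAG with a topological order $\prec_O$. Let $p_{\agent_i}$ be the position of $\agent_i$ in $\prec_O$, and among all $W_1$-saturating matchings of $G_1$ pick $M_1$ minimizing $\sum_{(\itemv_j,\agentv_i)\in M_1} p_{\agent_i}$. If an unsaturated $\agentv_k$ adjacent to $\itemv_i$ envied the matched $\agent_j$, then $\agent_k \prec_O \agent_j$ forces $p_{\agent_k} < p_{\agent_j}$, and the single-edge replacement $(\itemv_i,\agentv_j)\mapsto(\itemv_i,\agentv_k)$ strictly decreases the potential while preserving cardinality and $W_1$-saturation, a contradiction entirely within $G_1$ with $M$ left untouched. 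That is the argument you should use in place of the exchange-in-$M$ plan.
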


Let $M_1$ be a matching of $G_1$ with the property described in Lemma \ref{nicematch}. For every edge $(\agentv_i,\itemv_j) \in M_1$, we allocate  item $\ite_j$ to agent $\agent_i$ i.e., we set $\secondset_i = \{\ite_j\}$. By the definition, $\agent_i$ is now satisfied. Thus, we remove $\agent_i$ from $\cone$ and add it to $\cal S$. Note that, after refining $\cone$, none of the items whose corresponding vertex is in $\itemsv' \setminus \itemsv'_{1/2}$ can satisfy any remaining agent in $\cone$. Thus, Observation \ref{fsmallc1} holds.

\begin{observation}
\label{fsmallc1}
For every item $\ite_j$ such that $\itemv_j \in \itemsv'$, either $\itemv_j \in \itemsv'_{1/2}$ or for all $\agent_i \in \cone$, $V_i(\{\ite_j\}) < \epsilon_i$.
\end{observation}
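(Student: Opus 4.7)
The plan is to argue by contradiction. Suppose some item $\ite_j$ with $\itemv_j \in \itemsv'$ satisfies $V_i(\{\ite_j\}) \geq \epsilon_i$ for some remaining agent $\agent_i \in \cone$, yet $\itemv_j \notin \itemsv'_{1/2}$. I will derive a contradiction by tracing $\itemv_j$ through the structure of the maximum matching $M$ on $G_{1/2}$.

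First, I will show $\itemv_j \in \itemsv_{1/2}$. Since $\itemv_j \in \itemsv' = \itemsv \setminus (W_1 \cup S_1)$ and $W_1 \supseteq N_{\agentv_i}$, we have $\itemv_j \notin N_{\agentv_i}$. But $N_{\agentv_i}$ is defined as the set of $\itemv \in \itemsv \setminus \itemsv_{1/2}$ with $w(\itemv, \agentv_i) \geq \epsilon_i$, so the hypothesis $w(\itemv_j, \agentv_i) = V_i(\{\ite_j\}) \geq \epsilon_i$ forces $\itemv_j \in \itemsv_{1/2}$.

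Second, I will locate the partner of $\itemv_j$ under $M$. From $\itemv_j \notin S_1 \cup U_1 = F_{G_{1/2}}(M, \itemsv_{1/2})$ and $\itemv_j \in \itemsv_{1/2}$ we conclude $\itemv_j \in \itemsv_{1/2} \setminus F_{G_{1/2}}(M, \itemsv_{1/2}) = \partone_2$. Hence $\itemv_j$ is saturated by $M$ with some partner $\agentv_k \in \parttwo_2$, and there is an alternating path in $G_{1/2}$ from some unsaturated $\agentv_l \in \parttwo_1$ to $\agentv_k$ whose final edge is the matched edge $\itemv_j - \agentv_k$.

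The main step, and the principal obstacle, is to establish $\agentv_k \notin V_{\cone}$. Assume for contradiction that $\agentv_k$ has a $G_{1/2}$-neighbor $\itemv^* \in F_{G_{1/2}}(M, \itemsv_{1/2}) = S_1 \cup U_1$. If $\itemv^* \in U_1$, appending the unmatched edge $\agentv_k - \itemv^*$ to the alternating path above yields an augmenting path (both endpoints unsaturated), contradicting the maximality of $M$. If instead $\itemv^* \in S_1$, let $\agentv^{**}$ be its match; then appending $\agentv_k - \itemv^* - \agentv^{**}$ extends the alternating path, placing $\agentv^{**} \in \parttwo_2$ and hence $\itemv^* \in M(\parttwo_2) = \partone_2$, contradicting $\itemv^* \in S_1 \subseteq \itemsv_{1/2} \setminus \partone_2$. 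Therefore $\agentv_k \in \agentsv \setminus V_{\cone} = \agentsv'$. Since the edge $\itemv_j - \agentv_k$ has weight at least $1/2$ and both endpoints lie in $V(G')$, it survives in the filtering $G'_{1/2}$, so $\itemv_j \in \itemsv'_{1/2}$, contradicting our assumption and completing the proof.
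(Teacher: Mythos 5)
Your proof is correct and supplies the detail that the paper omits (the paper dismisses this observation with a one-sentence remark). Your structure — (1) rule out $\itemv_j\notin\itemsv_{1/2}$ via $N_{\agentv_i}\subseteq W_1$, (2) place $\itemv_j$ in $\partone_2$ and locate its match $\agentv_k\in\parttwo_2$, (3) show $\agentv_k\notin V_{\cone}$ — is exactly the argument the paper has in mind. The one place where you work harder than necessary is the main step: the paper already records, in the remarks immediately following Definition~\ref{FG}, that there is no edge between $F_H(M,\partone)$ and $\parttwo_1\cup\parttwo_2$; combined with $V_{\cone}=N(F_{G_{1/2}}(M,\itemsv_{1/2}))$ and $\agentv_k\in\parttwo_2$, this gives $\agentv_k\notin V_{\cone}$ in one line. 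Instead you re-derive that structural fact from scratch by the augmenting/alternating-path case split, which is fine and correct (and usefully makes explicit why the extended path stays simple — $\itemv^\ast$ lies outside $\partone_2$, hence off the path), but it is the same argument the paper has already established.
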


At this point, all the agents of $\satagents$ belong to $\satagents_1^r$. Each one of these agents is satisfied with two items, i.e., for any agent $\agent_j \in \satagents_1^r$, $|\firstset_j| = |\secondset_j| = 1$. In Lemma \ref{gsmallc1r} we give an upper bound on $\valu_i(\secondset_j)$ for every agent $\agent_j \in \satagents_1^r$ and every agent $\agent_i$ in $\ctwo \cup \cthree$.  

\begin{lemma}[value-lemma]
\label{gsmallc1r}
For every agent $\agent_i \in \ctwo \cup \cthree$, we have
$$ \forall \agent_j \in \satagents_1^r \qquad \valu_i(\secondset_j)< 1/2.$$
\end{lemma}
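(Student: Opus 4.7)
The plan is to derive Lemma \ref{gsmallc1r} as an almost immediate consequence of Observation \ref{w1small}, together with the explicit description of how agents in $\satagents_1^r$ acquired their second bundles during the refinement phase of $\cone$.

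First, I would unpack what $\secondset_j$ looks like for an agent $\agent_j \in \satagents_1^r$. By construction of the refinement step, $\agent_j$ was satisfied through the matching $M_1 \subseteq E(G_1)$ guaranteed by Lemma \ref{nicematch}; more precisely, there is a unique edge $(\agentv_j, \itemv_k) \in M_1$, and the algorithm sets $\secondset_j = \{\ite_k\}$. Since $M_1$ saturates all vertices of $W_1$ and matches them into $V_{\cone}$, the item vertex $\itemv_k$ lies in $W_1$.

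Next, I would invoke Observation \ref{w1small}, which states that for every item $\ite_k$ whose vertex lies in $W_1$ and every agent $\agent_i$ whose vertex lies outside $V_{\cone}$, we have $\valu_i(\{\ite_k\}) < 1/2$. Because $\ctwo$ and $\cthree$ are defined on the residual vertex set $\agentsv' = \agentsv \setminus V_{\cone}$, every agent $\agent_i \in \ctwo \cup \cthree$ satisfies $\agentv_i \notin V_{\cone}$. Combining these two facts yields
\[
\valu_i(\secondset_j) \;=\; \valu_i(\{\ite_k\}) \;<\; 1/2,
\]
which is exactly the claim.

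There is no real obstacle here: the content of the lemma is essentially a bookkeeping statement that the refinement phase of $\cone$ only consumes items from $W_1$, and Observation \ref{w1small} has already done the quantitative work. The only subtle point worth double-checking in the write-up is that $W_1$ as used by the refinement is the same set as in Observation \ref{w1small} (i.e., $U_1$ together with the low-value neighborhoods $N_{\agentv_i}$), and that no item outside $W_1$ could have been selected by $M_1$ because $M_1$ is a matching inside $G_1$, whose item-side vertex set is exactly $W_1$.
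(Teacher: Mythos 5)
Your proof is correct and follows exactly the paper's argument: identify the item $\ite_k$ assigned to $\agent_j$ in the refinement of $\cone$, observe that $\itemv_k \in W_1$ because $M_1$ is a matching inside $G_1$ whose item side is $W_1$, and then apply Observation~\ref{w1small} using the fact that agents of $\ctwo \cup \cthree$ have vertices outside $V_{\cone}$. This is precisely the paper's proof, just with the intermediate bookkeeping made slightly more explicit.
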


Lemmas \ref{gsmallc1r} and  \ref{forc2c3}  state that for every agent $\agent_i \in \ctwo \cup \cthree$  and every agent $\agent_j \in \satagents_1^r$, $\valu_i(\firstset_j)$ and $\valu_i(\secondset_j)$ are upper bounded by $1/2$. This, together with the fact that $|\firstset_j| = |\secondset_j|=1$, results in Lemma \ref{forc2}.
\begin{lemma}
\label{forc2}
For all $\agent_i \notin \cone$, we have
\[ \MMS_{\valu_i}^{|\agents \setminus \satagents_1^r|} ( {\items} \setminus \bigcup_{\agentv_j \in \satagents_1^r} \firstset_j \cup \secondset_j) \geq 1.\]
\end{lemma}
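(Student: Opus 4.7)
The plan is to start with an optimal MMS-partition of $\items$ witnessing $\MMS_i \geq 1$ for agent $\agent_i$, and then process the agents of $\satagents_1^r$ one at a time, each time simultaneously removing the two items allocated to that agent and decreasing the number of bundles by one, all while maintaining the invariant that every remaining bundle is worth at least $1$ to $\agent_i$. After processing all $|\satagents_1^r|$ agents, the surviving partition will certify the desired lower bound on $\MMS_{\valu_i}^{|\agents \setminus \satagents_1^r|}$.

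Concretely, since $\MMS_i = 1$, fix a partition $P^{(0)} = \langle P_1, \ldots, P_n\rangle$ of $\items$ with $\valu_i(P_\ell) \geq 1$ for every $\ell$. Order the agents of $\satagents_1^r$ arbitrarily as $\agent_{j_1}, \agent_{j_2}, \ldots, \agent_{j_k}$. Given the partition $P^{(t-1)}$ with $n-(t-1)$ bundles each of value $\geq 1$, I process $\agent_{j_t}$ as follows. By Lemma~\ref{forc2c3} applied to $\agent_i \in \ctwo \cup \cthree$ and the fact that $\agent_{j_t}$ originally belonged to $\cone$, we have $\valu_i(\firstset_{j_t}) < 1/2$, and by Lemma~\ref{gsmallc1r} we have $\valu_i(\secondset_{j_t}) < 1/2$; hence $\valu_i(\firstset_{j_t} \cup \secondset_{j_t}) < 1$. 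Since $|\firstset_{j_t}| = |\secondset_{j_t}| = 1$, the set $\firstset_{j_t} \cup \secondset_{j_t}$ is contained in the union of at most two bundles of $P^{(t-1)}$, say $P_a$ and $P_b$ (possibly with $a=b$). Replace these by the single merged bundle $(P_a \cup P_b) \setminus (\firstset_{j_t} \cup \secondset_{j_t})$, obtaining the new partition $P^{(t)}$ of $\items \setminus \bigcup_{s \leq t}(\firstset_{j_s} \cup \secondset_{j_s})$ into $n - t$ pieces.

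The invariant is preserved: the untouched bundles still satisfy $\valu_i(\cdot) \geq 1$, while the merged bundle satisfies
\[
\valu_i\bigl((P_a \cup P_b) \setminus (\firstset_{j_t} \cup \secondset_{j_t})\bigr) \;=\; \valu_i(P_a) + \valu_i(P_b) - \valu_i(\firstset_{j_t} \cup \secondset_{j_t}) \;>\; 2 - 1 \;=\; 1
\]
when $a \neq b$ (using additivity of $\valu_i$), and $\valu_i(P_a \setminus (\firstset_{j_t} \cup \secondset_{j_t})) > 1 - 1 = 0$ combined with the fact that $P_a \setminus (\firstset_{j_t} \cup \secondset_{j_t})$ would then be merged with any untouched bundle when $a = b$ (formally, I would pick two bundles to merge even in this degenerate case, which is always possible as long as $n-(t-1) \geq 2$). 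After all $k = |\satagents_1^r|$ steps, $P^{(k)}$ is a partition of $\items \setminus \bigcup_{j \in \satagents_1^r}(\firstset_j \cup \secondset_j)$ into $n - k = |\agents \setminus \satagents_1^r|$ bundles each of value at least $1$ to $\agent_i$, which proves the stated inequality. The only subtle point is the degenerate case $a=b$, which is handled by always merging the affected bundle with an additional bundle so that the bundle count drops by exactly one per step; since $k \leq n-1$ (as $\agent_i$ itself is not in $\satagents_1^r$), such an additional bundle is always available.
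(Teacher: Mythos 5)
Your proof is correct and follows essentially the same route as the paper: the paper's argument invokes Inequality~\eqref{saeed1} from the proof of Lemma~\ref{remove2} (which itself rests on exactly your merge-two-bundles observation) and applies it once per agent of $\satagents_1^r$, whereas you unfold that merge argument explicitly, track the invariant across iterations, and handle the $a=b$ degeneracy directly. The substitution of Lemma~\ref{gsmallc1r} for the paper's appeal to Observation~\ref{w1small} is immaterial since the former is a restatement of the latter in this context.
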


\subsubsection{Cluster $\ctwo$}
Recall graph $G' \langle V(G') , E(G') \rangle$ as described in the last part of Section \ref{cluster1:building} and let $G'_{1/2}\langle V_{1/2}(G'), E_{1/2}(G') \rangle$ be a $1/2$-filtering of $G'$. Lemma \ref{rem} states that the size of the maximum matching between $\itemsv'_{1/2}$ and $\agentsv'_{1/2}$ is $|\itemsv'_{1/2}|$. Also, according to Corollary \ref{remcol}, for any maximum matching $M'$ of $G'_{1/2}$, $F_{G'_{1/2}}(M',\itemsv'_{1/2})$ is empty. In what follows, we increase the size of the maximum matching in  $G'_{1/2}$ by merging the vertices of $\itemsv' \setminus \itemsv'_{1/2}$ as described in Definition \ref{merge}.

\begin{figure}[t!]
    \centering
    \includegraphics[scale=1]{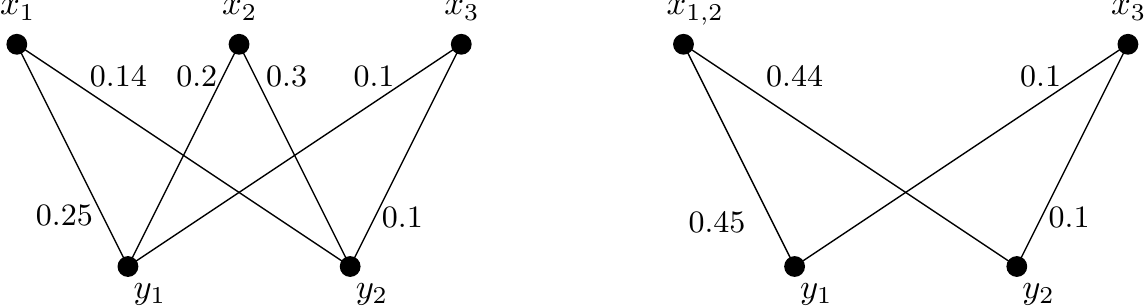}
    \caption{Merging $\itemv_1$ and $\itemv_2$}
    \label{fig:merge}
\end{figure}

\begin{definition}
\label{merge}
For merging vertices $\itemv_i,\itemv_j$ of $G'(\itemsv',\agentsv')$, we create a new vertex labeled with $\itemv_{i,j}$. Next, we add $\itemv_{i,j}$ to $\itemsv'$ and for every vertex $\agentv_k \in \agentsv'$, we add an edge from $\agentv_k$ to $\itemv_{i,j}$ with weight $w(\agentv_k,\itemv_i) + w(\agentv_k,\itemv_j)$. Finally we remove vertices $\itemv_i$ and $\itemv_j$ from $\itemsv$. See Figure ~\ref{fig:merge}.
\end{definition}

In Lemmas \ref{c1small2} and \ref{pairsmall}, we give upper bounds on the value of the pair of items corresponding to a merged vertex. In Lemma \ref{c1small2}, we show that the value of a merged vertex is less than $2\epsilon_i$ to every agent $\agent_i \in \cone$. This fact is a consequence of Observation \ref{fsmallc1}. Also, in Lemma \ref{pairsmall}, we prove that the value of the items corresponding to a merged vertex is less than $3/4$ to any agent. Lemma \ref{pairsmall} is a direct consequence of $3/4$-irreducibility. In fact, we show that if the condition of Lemma \ref{pairsmall} does not hold, then the problem can be reduced. 

\begin{lemma}
\label{c1small2}
For any agent $\agent_k \in \cone$ and any pair of vertices $\itemv_i, \itemv_j \in \itemsv' \setminus \itemsv'_{1/2}$, $\valu_k(\{\ite_i,\ite_j\}) < 2\epsilon_k$ holds. In particular, total value of the items that belong to a merged vertex is less than $2\epsilon_k$ for $\agent_k$.
\end{lemma}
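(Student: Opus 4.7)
The plan is to deduce the bound directly from Observation~\ref{fsmallc1} together with additivity of the valuation functions. Recall that after the refinement of $\cone$, Observation~\ref{fsmallc1} asserts that for every item $\ite_\ell$ whose vertex lies in $\itemsv'$, either $\itemv_\ell \in \itemsv'_{1/2}$, or $\valu_i(\{\ite_\ell\}) < \epsilon_i$ for every agent $\agent_i \in \cone$. Since the hypothesis places $\itemv_i,\itemv_j$ in $\itemsv' \setminus \itemsv'_{1/2}$, the first alternative is ruled out for both items, and the second alternative must apply.

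Fix any agent $\agent_k \in \cone$. Applying Observation~\ref{fsmallc1} separately to $\itemv_i$ and $\itemv_j$, we obtain $\valu_k(\{\ite_i\}) < \epsilon_k$ and $\valu_k(\{\ite_j\}) < \epsilon_k$. Because we are in the additive setting, the valuation of a two-element set is just the sum of the per-item values, so
\begin{equation*}
\valu_k(\{\ite_i,\ite_j\}) \;=\; \valu_k(\{\ite_i\}) + \valu_k(\{\ite_j\}) \;<\; \epsilon_k + \epsilon_k \;=\; 2\epsilon_k,
\end{equation*}
which is exactly the claimed inequality.

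For the ``in particular'' sentence, note that every merged vertex is obtained via Definition~\ref{merge} from exactly two vertices $\itemv_i,\itemv_j$ of $\itemsv' \setminus \itemsv'_{1/2}$, so the bound just proved gives $\valu_k(\{\ite_i,\ite_j\}) < 2\epsilon_k$, i.e., the total value to $\agent_k$ of the two items underlying the merged vertex is strictly below $2\epsilon_k$. There is essentially no obstacle here: the work was already done in proving Observation~\ref{fsmallc1} (which was the actual consequence of the cluster~$\cone$ refinement step), and the present lemma is simply the two-item packaging of that observation that will be convenient when we analyze the $1/2$-filtering of the merged graph in the construction of $\ctwo$.
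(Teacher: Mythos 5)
Your proof is correct and matches the paper's own argument: both apply Observation~\ref{fsmallc1} to each of the two items individually to get the per-item bound $\epsilon_k$, then invoke additivity to conclude $\valu_k(\{\ite_i,\ite_j\}) < 2\epsilon_k$. The ``in particular'' clause about merged vertices is handled the same way.
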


\begin{lemma}
\label{pairsmall}
 For any pair of vertices $\itemv_i , \itemv_j \in \itemsv' \setminus \itemsv'_{1/2}$ and any vertex $\agentv_k \in \agentsv$, we have $V_k(\{\ite_i,\ite_j\}) < {3/4}$.
\end{lemma}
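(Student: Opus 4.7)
The plan is to prove Lemma \ref{pairsmall} by contradiction, leveraging the $3/4$-irreducibility of the instance through Lemma \ref{remove2}. Suppose, contrary to the claim, that some agent $\agent_k \in \agentsv$ and some pair $\itemv_i, \itemv_j \in \itemsv' \setminus \itemsv'_{1/2}$ satisfies $\valu_k(\{\ite_i,\ite_j\}) \geq 3/4$. I would split into two cases according to whether $\agentv_k$ lies in $V_{\cone}$ (i.e., among agents originally assigned to Cluster 1, whether currently in $\cone$ or already moved to $\satagents_1^r$) or in $\agentsv' = \agentsv \setminus V_{\cone}$.

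Case (a), $\agentv_k \in V_{\cone}$: here Lemma \ref{c1small2} applies directly and gives $\valu_k(\{\ite_i,\ite_j\}) < 2\epsilon_k \leq 2 \cdot (1/4) = 1/2 < 3/4$, contradicting the assumed lower bound. So no such witness can sit in Cluster 1.

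Case (b), $\agentv_k \in \agentsv'$, is the main case. Since the instance is $3/4$-irreducible and $\valu_k(\{\ite_i,\ite_j\}) \geq 3/4$, Lemma \ref{remove2} produces another agent $\agent_{k'} \neq \agent_k$ with $\valu_{k'}(\{\ite_i,\ite_j\}) > 1$. I would then rule out both possible locations of $\agent_{k'}$. If $\agentv_{k'} \in V_{\cone}$, Lemma \ref{c1small2} again forces $\valu_{k'}(\{\ite_i,\ite_j\}) < 2\epsilon_{k'} \leq 1/2 < 1$, a contradiction. Otherwise $\agentv_{k'} \in \agentsv'$; but then the edges $(\itemv_i,\agentv_{k'})$ and $(\itemv_j,\agentv_{k'})$ both survive in $G'$, and the condition $\itemv_i,\itemv_j \notin \itemsv'_{1/2}$ means each of these edge weights is strictly less than $1/2$. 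By additivity, $\valu_{k'}(\{\ite_i,\ite_j\}) = \valu_{k'}(\{\ite_i\}) + \valu_{k'}(\{\ite_j\}) < 1$, again contradicting the lower bound supplied by Lemma \ref{remove2}.

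There is no real obstacle; the only thing to double-check is that the witness agent $\agent_{k'}$ from Lemma \ref{remove2} is drawn from the full original agent set $\agentsv$ (which it is, since that lemma is about the original $3/4$-irreducible instance) and that the numerical bounds line up: $2\epsilon_k \leq 1/2$ is tight enough to defeat the requirement $> 1$ in Case (a) of the contradiction, and the definitional bound $< 1/2$ on each singleton value in $\agentsv'$ is tight enough via additivity in Case (b). Both are immediate, so the reduction-based argument goes through cleanly and $3/4$-irreducibility of the input is precisely what is being exploited.
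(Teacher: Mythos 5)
Your proof is correct and follows the same basic route as the paper: assume a violating pair exists, invoke Lemma \ref{remove2} to conjure a second agent $\agent_{k'}$ with $\valu_{k'}(\{\ite_i,\ite_j\}) > 1$, and then derive a contradiction from additivity together with the fact that $\itemv_i,\itemv_j \notin \itemsv'_{1/2}$ forces each singleton value under $1/2$. The paper's proof consists of exactly your Case (b), subcase $\agentv_{k'} \in \agentsv'$; it applies Lemma \ref{remove2} to $\agent_k$ \emph{without} first splitting on where $\agentv_k$ sits, since nothing in Lemma \ref{remove2} cares about that. So your Case (a) is superfluous — you could delete it and run Case (b) unconditionally. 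On the other hand, your subcase analysis on the \emph{witness} $\agent_{k'}$ (could it land in $V_{\cone}$?) is a genuine improvement in explicitness: the paper simply asserts the contradiction ``since both $\itemv_i$ and $\itemv_j$ belong to $\itemsv' \setminus \itemsv'_{1/2}$'' without acknowledging that membership in $\itemsv' \setminus \itemsv'_{1/2}$ only directly controls edge weights toward $\agentsv'$, not toward $V_{\cone}$, so the bound for a witness in $V_{\cone}$ does need a separate (if easy) justification such as the one you give. One small nit there: Lemma \ref{c1small2} is stated for $\agent_k \in \cone$ (the agents \emph{remaining} in the cluster after refinement), whereas you invoke it for all of $V_{\cone}$, which also contains the agents already moved to $\satagents_1^r$; the desired bound still holds for those agents — for instance via the alternating-path structure of $F_{G_{1/2}}(M,\itemsv_{1/2})$, which ensures items in $\itemsv'$ have no weight-$\geq 1/2$ edge to $V_{\cone}$ — but Lemma \ref{c1small2} as stated does not literally cover them.
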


\begin{corollary} [of Lemma \ref{pairsmall}]
\label{forc2small}
For any agent $\agent_i$ with $\agentv_i \in \agentsv$, there is at most one item $\ite_j$, with $\itemv_j \in \itemsv' \setminus \itemsv'_{1/2}$ and $\valu_i(\{\ite_j\}) \geq {3/8}$.
\end{corollary}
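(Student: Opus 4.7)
The plan is to derive this corollary by a direct contradiction argument using Lemma~\ref{pairsmall} together with the additivity of the valuation functions. Since we are working in the additive setting throughout Section~\ref{additive}, valuations distribute over disjoint unions, which lets us convert a lower bound on two singletons into a lower bound on the pair.

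First I would assume, for contradiction, that there exist two distinct items $\ite_j, \ite_k$ with $\itemv_j, \itemv_k \in \itemsv' \setminus \itemsv'_{1/2}$ and some agent $\agent_i$ with $\agentv_i \in \agentsv$ such that $\valu_i(\{\ite_j\}) \geq 3/8$ and $\valu_i(\{\ite_k\}) \geq 3/8$. By additivity,
\[
\valu_i(\{\ite_j, \ite_k\}) \;=\; \valu_i(\{\ite_j\}) + \valu_i(\{\ite_k\}) \;\geq\; \tfrac{3}{8} + \tfrac{3}{8} \;=\; \tfrac{3}{4}.
\]
However, this directly contradicts Lemma~\ref{pairsmall}, which guarantees that for any pair of vertices in $\itemsv' \setminus \itemsv'_{1/2}$ and any agent, the total value of the corresponding items is strictly less than $3/4$. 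Hence no such pair of items can exist, and for every agent $\agent_i$ at most one item whose vertex lies in $\itemsv' \setminus \itemsv'_{1/2}$ can have value at least $3/8$ to him.

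There is essentially no obstacle here; the statement is an immediate pigeonhole-style consequence. The only thing worth double-checking is the strictness: Lemma~\ref{pairsmall} gives a strict inequality $< 3/4$, so the contradiction goes through even when both singleton values are exactly $3/8$. Note also that the argument relies crucially on additivity, which is why this corollary is specific to the additive setting and does not extend to the submodular, XOS, or subadditive cases treated later in the paper.
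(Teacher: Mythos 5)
Your proof is correct and is exactly the intended argument: the paper does not even write out a proof for this corollary because the additivity-plus-pigeonhole deduction you give is the immediate consequence of Lemma~\ref{pairsmall}. Your remark about strictness (Lemma~\ref{pairsmall} gives a strict $<3/4$, so equality at $3/8$ each still yields a contradiction) is a helpful sanity check and is handled correctly.
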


Consider the vertices in $\itemsv' \setminus \itemsv'_{1/2}$. We call a pair $(\itemv_i,\itemv_j)$ of distinct vertices in $\itemsv' \setminus \itemsv'_{1/2}$ \textit{desirable} for $\agentv_k \in \agentsv'$, if $w(\agentv_k,\itemv_i) + w(\agentv_k,\itemv_j) \geq {1/2}$. With this in mind, consider the process described in Algorithm \ref{addvertex}. 

In each step of this process, we find an $\MCMWM$ $M'$ of $G'_{1/2}$. Note that $M'$ changes after each step of the algorithm. Next, we find a pair $(\itemv_i,\itemv_j)$ of the vertices in $\itemsv' \setminus \itemsv'_{1/2}$ that is desirable for at least one agent in $T = \agentsv' \setminus N(F_{G'_{1/2}}(M',\itemsv'_{1/2}))$. If no such pair exists, we terminate the algorithm. Otherwise, we select an arbitrary desirable pair $(\itemv_i,\itemv_j)$ and merge them to obtain a vertex $\itemv_{i,j}$. According to the definition of $T$ in Algorithm \ref{addvertex}, merging a pair  $(\itemv_i,\itemv_j)$ results in an augmenting path in $G'_{1/2}$. Hence, the size of the maximum matching in $G'_{1/2}$ is increased by one. Note that after the termination of Algorithm \ref{addvertex}, either $T = \emptyset$ or  no pair of vertices in $\itemsv' \setminus \itemsv'_{1/2}$ is desirable for any vertex in $T$. 

\begin{lemma} 
\label{sizeeq}
After running Algorithm \ref{addvertex}, we have
$$|F_{G'_{1/2}}(M',\itemsv'_{1/2})| = |N(F_{G'_{1/2}}(M',\itemsv'_{1/2}))|.$$  
\end{lemma}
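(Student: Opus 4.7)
The plan is to maintain the invariant that any maximum matching $M'$ of $G'_{1/2}$ saturates every vertex of $\itemsv'_{1/2}$ throughout the execution of Algorithm \ref{addvertex}, and then derive the desired equality from this invariant by a direct structural argument; throughout, I write $F$ for $F_{G'_{1/2}}(M', \itemsv'_{1/2})$ and take all neighborhoods in $G'_{1/2}$.

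I would prove the invariant by induction on the number of merges performed. The base case is exactly the content of Corollary~\ref{remcol} as applied to $G'$: before any merges, $F = \emptyset$, so every item in $\itemsv'_{1/2}$ lies in $\partone_2$ and is therefore matched by any max matching. For the inductive step, suppose the invariant holds for the current matching $M'$ and the algorithm merges $\itemv_i, \itemv_j \in \itemsv' \setminus \itemsv'_{1/2}$ into a new vertex $\itemv_{i,j}$ because the pair is desirable for some $\agentv_k \in T = \agentsv' \setminus N(F)$. Then $\itemv_{i,j}$ enters $\itemsv'_{1/2}$ unsaturated, and I would exhibit an augmenting path starting at $\itemv_{i,j}$ by a case split on $\agentv_k$: if $\agentv_k$ was outside $\agentsv'_{1/2}$ before the merge, then $\agentv_k$ was unmatched and the single edge $(\itemv_{i,j}, \agentv_k)$ is already augmenting; otherwise $\agentv_k \in \agentsv'_{1/2} \setminus N(F)$, and a short argument---using that every $\agentv \in \parttwo \setminus \parttwo_2$ is saturated with $M'(\agentv) \in F$, hence $\parttwo \setminus \parttwo_2 \subseteq N(F)$---places $\agentv_k$ in $\parttwo_2$, so some alternating path from $\parttwo_1$ reaches $\agentv_k$ with a matched last edge, and prepending $(\itemv_{i,j}, \agentv_k)$ and reversing yields the required augmenting path. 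Augmenting along it grows the matching by exactly one to the new $|\itemsv'_{1/2}|$, preserving the invariant.

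Granted the invariant, the conclusion is quick. Since every item in $F$ is matched, $M'$ restricts to a bijection $F \to M'(F)$, so $|F| = |M'(F)|$ and $M'(F) \subseteq N(F)$ trivially. Conversely, take $\agentv \in N(F)$ with a heavy edge to some $\itemv \in F$. Then $\agentv \notin \parttwo_2$, since otherwise the alternating path from $\parttwo_1$ to $\agentv$ could be extended by $(\agentv, \itemv)$ to reach $\itemv$, placing $\itemv$ in $\partone_2$ and contradicting $\itemv \in F$. Hence $\agentv$ is saturated (as $\parttwo_1 \subseteq \parttwo_2$), and $M'(\agentv) \notin \partone_2 = M'(\parttwo_2)$, which forces $M'(\agentv) \in F$ and thus $\agentv \in M'(F)$. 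Combining, $|N(F)| = |M'(F)| = |F|$, which is the claim.

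The main obstacle I anticipate is tracking the dynamic nature of $G'_{1/2}$: every merge potentially adds new items and agents to the filtered graph and shifts the maximum matching, so the inductive step must justify the augmenting-path construction in the post-merge graph while drawing information (such as the reachability of $\agentv_k$ in $\parttwo_2$) from the pre-merge structure. Verifying that this transfer is legitimate---and in particular that the set $N(F)$ used by the algorithm to define $T$ matches the structural $F$ analyzed here---is the subtlest part of the argument.
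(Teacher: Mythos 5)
Your proof is correct and follows essentially the same route as the paper's: both arguments establish the invariant that every vertex of $\itemsv'_{1/2}$ remains saturated by the maximum matching throughout Algorithm~\ref{addvertex}, and both derive the cardinality equality from that saturation. The only notable difference is that you fill in the detail the paper merely asserts (in the text just before the lemma) — an explicit augmenting path from the merged vertex $\itemv_{i,j}$, built by placing $\agentv_k$ in $\parttwo_2$ when it is not a fresh unmatched agent — and you package the final step as a single bijection $N(F) = M'(F)$ rather than the paper's two separate inequalities; this is a slightly cleaner and slightly stronger phrasing of the same underlying facts.
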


\begin{algorithm}[t!]
 \KwData{$G'(V(G'),E(G'))$}
 \While{True}{
  $M' = \MCMWM \mbox{  of } G'_{1/2}$\; 
  Find $F_{G'_{1/2}}(M',\itemsv'_{1/2})$\;
  $T = \agentsv' \setminus N(F_{G'_{1/2}}(M',\itemsv'_{1/2}))$\;
   $Q = $ Set of all desirable pairs in $\itemsv' \setminus \itemsv'_{1/2}$ for the agents in $T$\;
  \eIf{$ Q = \emptyset$ }{
   STOP\;
   }{
   Select an arbitrary pair $\itemv_i,\itemv_j$ from $Q$\;
   Merge($\itemv_i,\itemv_j$)\;
  }
 }
 \caption{Merging vertices in $G'$}
 \label{addvertex}
\end{algorithm}

We define Cluster $\ctwo$ as the set of agents that correspond to the vertices of $N(F_{G'_{1/2}}(M',\itemsv'_{1/2}))$. Also, denote by $V_{\ctwo}$ the vertices in $N(F_{G'_{1/2}}(M',\itemsv'_{1/2}))$. For each agent $\agent_i \in \ctwo$, we allocate the item corresponding to $M'(\agentv_i)$ (or pair of items in case  $M'(\agentv_i)$ is a merged vertex) to $\agent_i$.

Note that we put the rest of the agents in Cluster $\cthree$. Therefore, Lemma \ref{forc3} holds for all the agents of $\cthree$.

\begin{lemma}[value-lemma]
\label{forc3}
For all $\agent_i \in \cthree$ we have \[ \forall \agent_j \in \ctwo, \valu_i(\firstset_j) < 1/2. \]
\end{lemma}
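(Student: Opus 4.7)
The plan is to prove the lemma by a short contradiction argument driven by the definition of $\ctwo$ via the set $F_{G'_{1/2}}(M',\itemsv'_{1/2})$. By construction, $\agent_i\in\cthree$ means $\agentv_i\in\agentsv'$ and $\agentv_i\notin V_{\ctwo}=N(F_{G'_{1/2}}(M',\itemsv'_{1/2}))$. On the other hand, every $\agent_j\in\ctwo$ receives the item (or merged pair of items) corresponding to $M'(\agentv_j)\in F_{G'_{1/2}}(M',\itemsv'_{1/2})$. My strategy will therefore be to suppose, for contradiction, that $\valu_i(\firstset_j)\ge 1/2$, and to derive that $\agentv_i$ must lie in $N(F_{G'_{1/2}}(M',\itemsv'_{1/2}))$, contradicting $\agent_i\in\cthree$.

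The step that requires a bit of care is to translate $\valu_i(\firstset_j)$ into a genuine edge weight of the graph $G'_{1/2}$. I will check both possible cases for $M'(\agentv_j)$. If $M'(\agentv_j)$ is an original item-vertex $\itemv_k$, then $\firstset_j=\{\ite_k\}$ and $w(\agentv_i,\itemv_k)=\valu_i(\{\ite_k\})$ directly by the definition of the value graph. If instead $M'(\agentv_j)$ is a merged vertex $\itemv_{p,q}$ produced during Algorithm \ref{addvertex}, then by Definition \ref{merge} and additivity of $\valu_i$ we have $w(\agentv_i,\itemv_{p,q})=w(\agentv_i,\itemv_p)+w(\agentv_i,\itemv_q)=\valu_i(\{\ite_p,\ite_q\})=\valu_i(\firstset_j)$. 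So in either case $w(\agentv_i,M'(\agentv_j))=\valu_i(\firstset_j)$.

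With this identification, the contradiction is immediate. Under the assumption $\valu_i(\firstset_j)\ge 1/2$, the edge $(\agentv_i,M'(\agentv_j))$ has weight at least $1/2$ and therefore survives the $1/2$-filtering, i.e., it belongs to $E(G'_{1/2})$. Hence $\agentv_i\in N(M'(\agentv_j))\subseteq N(F_{G'_{1/2}}(M',\itemsv'_{1/2}))=V_{\ctwo}$, contradicting $\agent_i\in\cthree$.

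I do not expect any substantive obstacle here: the lemma is essentially a definitional consequence of how $\ctwo$ was constructed, namely as the set of agents possessing a heavy edge (in $G'_{1/2}$) to some vertex of $F_{G'_{1/2}}(M',\itemsv'_{1/2})$. The only point worth being careful about is the bookkeeping for merged vertices, and additivity of the valuations makes the edge weight on a merged vertex consistent with the value of the corresponding pair of items, so the rest falls out from the definitions.
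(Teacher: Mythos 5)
Your proof is correct and follows essentially the same route as the paper's: both hinge on the observation that $\ctwo$ is, by construction, exactly $N(F_{G'_{1/2}}(M',\itemsv'_{1/2}))$, that each $\firstset_j$ for $\agent_j\in\ctwo$ corresponds to the matched vertex $M'(\agentv_j)\in F_{G'_{1/2}}(M',\itemsv'_{1/2})$, and that the edge weight $w(\agentv_i,M'(\agentv_j))$ equals $\valu_i(\firstset_j)$. The paper phrases this as a direct two-case argument (splitting $\cthree$ into $\agentsv'\setminus\agentsv'_{1/2}$ and $\agentsv'_{1/2}\setminus N(F)$), while you collapse both cases into a single contradiction (a weight-$\ge 1/2$ edge would both put $\agentv_i$ in $\agentsv'_{1/2}$ and make it a neighbor of $F$); your explicit treatment of merged vertices via Definition \ref{merge} and additivity is a detail the paper handles implicitly but is a worthwhile clarification.
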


\subsubsection{Cluster $\ctwo$ Refinement}
The refinement phase of $\ctwo$, is semantically similar to the refinement phase of $\cone$. In the refinement phase of $\ctwo$, we satisfy some of the agents of $\ctwo$ by the items with vertices in $\itemsv' \setminus \itemsv'_{1/2}$. Note that none of the vertices in $\itemsv' \setminus \itemsv'_{1/2}$ is a merged vertex.

The refinement phase of $\ctwo$ is presented in Algorithm \ref{c2ref}. Let $\agent_{i_1}, \agent_{i_2}, \ldots, \agent_{i_k}$ 
be the topological ordering of the agents in $\ctwo$ as described in Section \ref{additive:cef}
. In Algorithm \ref{c2ref}, We start with $\agentv_{i_1}$ and $W_2 = \emptyset$ and check whether there exists a vertex $\itemv_j \in \itemsv' \setminus (\itemsv'_{1/2} \cup W_2)$ such that $V_{i_1}(\{\ite_j\}) \geq \epsilon_{i_1}$. If so, we add $\itemv_j$ to $W_2$ and satisfy $\agent_{i_1}$ by allocating $\ite_j$ to $\agent_{i_1}$. Next, we repeat the same process for $\agentv_{i_2}$ and continue on to $\agentv_{i_k}$. Note that at the end of the process, $W_2$ refers to the vertices whose corresponding items are allocated to the agents that are satisfied during the refinement step of $\ctwo$. For convenience, let $S_2 = F_{G'_{1/2}}(M',\itemsv'_{1/2})$ and define $\itemsv''$ and $\agentsv''$ as follows:
$$\itemsv'' = \itemsv' \setminus (W_2 \cup S_2),$$
$$\agentsv'' = \agentsv' \setminus V_{\ctwo}.$$

Let $G'' \langle V(G''),E(G'') \rangle$ be the induced subgraph of $G'$ on $V(G'') = \itemsv'' \cup \agentsv''$. We use $G''$ to build Cluster $\cthree$.

\begin{algorithm}[t!]
 \KwData{$G'(V(G'),E(G'))$}
 \KwData{$\agent_{i_1},\agent_{i_2},\ldots,\agent_{i_k}$ = Topological ordering of agents in $\ctwo$}
  \For{$l:1\rightarrow k$}{
	\If{$ \exists \itemv_j \in \itemsv' \setminus (\itemsv'_{1/2} \cup W_2)$ s.t. $V_{i_1}(\{\ite_j\}) \geq \epsilon_{i_l}$)}
	{
		$\secondset_{i_l} = \ite_j$ \;
		$W_2 = W_2 \cup \itemv_j$\;
		$\ctwo = \ctwo \setminus \agent_{i_l}$\;
		${\satagents} = {\satagents} \cup \agent_{i_l}$\;
	}
  }
 \caption{Refinement of $\ctwo$}
 \label{c2ref}
\end{algorithm}

\begin{observation} 
\label{fsmallc2}
After running Algorithm \ref{c2ref}, For every item $\ite_j$ with $\itemv_j \in \itemsv'' \setminus \itemsv''_{1/2} $ and every agent $ \agent_i \in \ctwo$, we have $V_i(\{\ite_j\}) < \epsilon_i$. 
\end{observation}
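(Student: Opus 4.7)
The plan is to proceed by contradiction, using the structural properties of $M'$ to reduce the claim to a direct inspection of Algorithm~\ref{c2ref}. The key intermediate fact I would prove first is the inclusion
\begin{equation*}
\itemsv'' \setminus \itemsv''_{1/2} \;\subseteq\; \itemsv' \setminus \itemsv'_{1/2},
\end{equation*}
so that any candidate item in the hypothesis is guaranteed to lie in the regime that Algorithm~\ref{c2ref} actually inspects.

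To establish this inclusion I would unpack the alternating-path definition of $S_2 = F_{G'_{1/2}}(M', \itemsv'_{1/2})$ and $V_{\ctwo} = N(S_2)$. Every vertex $\itemv \in \itemsv'_{1/2}$ is either in $S_2$, in which case it is removed when forming $\itemsv''$, or it lies in $\partone_2 = M'(\parttwo_2)$ and hence is matched by $M'$ to some $\agentv_k \in \parttwo_2$. I would then show that $\parttwo_2 \cap V_{\ctwo} = \emptyset$: any $\agentv_k$ in the intersection would be adjacent to some $\itemv \in S_2$, and whether this edge is a matching edge or a non-matching edge, the alternating-path structure forces $\itemv \in \partone_2$ (in the unsaturated sub-case one instead obtains an augmenting path, contradicting the maximality of $M'$), contradicting $\itemv \in S_2$. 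Consequently $\parttwo_2 \subseteq \agentsv' \setminus V_{\ctwo} = \agentsv''$, so every item in $\partone_2$ carries a matching edge of weight at least $1/2$ to an agent of $\agentsv''$ (recall $M' \subseteq G'_{1/2}$); that edge survives in the induced subgraph $G''$, placing the item in $\itemsv''_{1/2}$. The same reasoning covers merged vertices, which are automatically in $\itemsv'_{1/2}$ by construction of Algorithm~\ref{addvertex}. Hence any $\itemv_j \in \itemsv'' \setminus \itemsv''_{1/2}$ must be an ordinary vertex lying in $\itemsv' \setminus \itemsv'_{1/2}$.

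Given the inclusion, suppose toward contradiction that $V_i(\{\ite_j\}) \geq \epsilon_i$ for some $\agent_i \in \ctwo$ that survived Algorithm~\ref{c2ref}. From $\itemv_j \in \itemsv'' = \itemsv' \setminus (W_2 \cup S_2)$ and the monotone growth of $W_2$ along the algorithm, we have $\itemv_j \notin W_2$ at every iteration. Thus at the iteration that processes $\agent_i$ we have $\itemv_j \in \itemsv' \setminus (\itemsv'_{1/2} \cup W_2)$ together with $V_i(\{\ite_j\}) \geq \epsilon_i$, so the algorithm's condition is met and $\agent_i$ would be moved to $\satagents_2^r$, contradicting $\agent_i \in \ctwo$. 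The main obstacle will be the alternating-path analysis behind $\parttwo_2 \cap V_{\ctwo} = \emptyset$; after that, the argument is a direct inspection of how Algorithm~\ref{c2ref} handles each agent together with the bookkeeping of $W_2$.
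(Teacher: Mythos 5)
Your proof is correct. The paper states this as an unproved observation, and the key step of your argument — the inclusion $\itemsv'' \setminus \itemsv''_{1/2} \subseteq \itemsv' \setminus \itemsv'_{1/2}$ — is also asserted without proof elsewhere in the paper (it appears parenthetically in the proof of Lemma~\ref{lsmall_c3}). You correctly identify that once the inclusion is in hand, the claim reduces to inspecting the exit condition of Algorithm~\ref{c2ref} for each agent that survives in $\ctwo$, together with the monotone growth of $W_2$. Your derivation of the inclusion is sound; it rests on two facts both of which are already available in the paper: (i) $N(S_2)\cap\parttwo_2=\emptyset$, which is exactly the structural statement made in the discussion following Definition~\ref{FG} (``there is no edge between $F_H(M,\partone)$ and $\parttwo_1\cup\parttwo_2$''), and (ii) every vertex of $\itemsv'_{1/2}$ remains saturated by $M'$ after Algorithm~\ref{addvertex}, which is established in the proof of Lemma~\ref{sizeeq}. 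You re-derive (i) from first principles via the alternating-path/augmenting-path case analysis; citing the paper's stated property instead would shorten the argument, but there is nothing wrong with what you wrote. Also note that the merged-vertex case is handled uniformly by the same matching-edge argument (they lie in $\itemsv'_{1/2}$ and, if not in $S_2$, are matched into $\parttwo_2\subseteq\agentsv''$), so your remark that only ordinary vertices survive into $\itemsv''\setminus\itemsv''_{1/2}$ is accurate.
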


In the following two lemmas, we give upper bounds on the value of $\secondset_i$ for every agent $\agent_i \in \satagents_2^r$. First, in Lemma \ref{cr2smallc1}, we show that for every agent $\agent_j \in \cone$, $\valu_j(\secondset_i)$ is upper bounded by $\epsilon_j$. Furthermore, by the fact that the agents that are not selected for Clusters $\cone$ and $\ctwo$ belong to Cluster $\cthree$, we show that $\valu_j(\secondset_i)$ is upper bounded by $1/2$ for every agent $\agent_j \in \cthree$. 
\begin{lemma}[value-lemma]
\label{cr2smallc1}
Let $\agent_i \in \satagents_2^r$ be an agent that is satisfied in the refinement phase of Cluster $\ctwo$ and $\agent_j$ be an  agent in $\cone$. Then, $\valu_j(\secondset_i)<\epsilon_j$.
\end{lemma}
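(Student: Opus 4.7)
The plan is to observe that this lemma is essentially an immediate consequence of how Algorithm \ref{c2ref} picks the item that satisfies $\agent_i$, combined with Observation \ref{fsmallc1}. First I would unpack the construction: during the refinement phase of $\ctwo$, Algorithm \ref{c2ref} assigns $\secondset_i$ to be a singleton $\{\ite_\ell\}$, where the item $\ite_\ell$ is chosen with its vertex $\itemv_\ell$ lying in $\itemsv' \setminus (\itemsv'_{1/2} \cup W_2)$. In particular, $\itemv_\ell \in \itemsv'$ and $\itemv_\ell \notin \itemsv'_{1/2}$.

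Next, I would invoke Observation \ref{fsmallc1}, which states that for every vertex $\itemv_\ell \in \itemsv'$ that is not in $\itemsv'_{1/2}$, every agent $\agent_k$ \emph{remaining} in $\cone$ after its refinement satisfies $\valu_k(\{\ite_\ell\}) < \epsilon_k$. Since the statement of the lemma takes $\agent_j \in \cone$ (that is, an agent still in $\cone$ at the moment when $\ctwo$ is being refined, hence one that survived the refinement of $\cone$), this observation applies directly to $\agent_j$ and the chosen item $\ite_\ell$.

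Finally, putting the two pieces together, I would conclude
\[
\valu_j(\secondset_i) \;=\; \valu_j(\{\ite_\ell\}) \;<\; \epsilon_j,
\]
which is exactly the claimed bound.

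The only subtlety — and arguably the main thing to verify — is the timing of the membership claim: one must make sure that at the moment Algorithm \ref{c2ref} fires, the set $\cone$ referenced in Observation \ref{fsmallc1} is indeed the post-refinement $\cone$ (so that Observation \ref{fsmallc1} applies to every $\agent_j$ quantified by the lemma). Once this bookkeeping is confirmed, the argument is a one-line consequence of the refinement procedure and Observation \ref{fsmallc1}; no additional combinatorial work or case analysis is required.
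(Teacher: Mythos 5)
Your proof is correct and matches the paper's argument essentially line for line: both identify that $\secondset_i$ is a single item whose vertex lies in $\itemsv' \setminus \itemsv'_{1/2}$ (as enforced by Algorithm~\ref{c2ref}), and both then apply Observation~\ref{fsmallc1} to conclude $\valu_j(\secondset_i) < \epsilon_j$ for every $\agent_j$ remaining in $\cone$. Your remark about the timing of membership in $\cone$ is a reasonable sanity check, and the paper's phrasing ("after refinement of $\cone$") confirms it is the post-refinement cluster, as you suspected.
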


\begin{lemma}[value-lemma]
\label{cr2smallc3}
Let $\agent_i \in \satagents_2^r$ be an agent that is satisfied in the refinement phase of Cluster $\ctwo$ and $\agent_j$ be an agent in $\cthree$. Then, $\valu_j(\secondset_i)< 1/2$.
\end{lemma}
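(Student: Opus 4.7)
The plan is to exploit the fact that the single item assigned to $\agent_i$ in Algorithm~\ref{c2ref} comes from $\itemsv' \setminus (\itemsv'_{1/2} \cup W_2)$, and to observe that such an item has small value for every agent that survives into $G'$, which in particular includes every member of $\cthree$.

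First, I would unpack what $\secondset_i$ is. Agents in $\satagents_2^r$ are exactly those satisfied in the refinement phase of $\ctwo$; according to Algorithm~\ref{c2ref}, such an agent receives a single item $\ite$ whose vertex lies in $\itemsv' \setminus (\itemsv'_{1/2} \cup W_2)$. Crucially, as noted right before Algorithm~\ref{c2ref}, no vertex of $\itemsv' \setminus \itemsv'_{1/2}$ is a merged vertex (every pair merged in Algorithm~\ref{addvertex} is desirable and therefore the merged vertex sits in $\itemsv'_{1/2}$). Hence $\secondset_i = \{\ite\}$ is a single original item whose vertex $\itemv$ is not incident in $G'$ to any edge of weight $\geq 1/2$.

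Next, I would locate $\agent_j$ inside the graph $G'$. Since $\agent_j \in \cthree$, by construction $\agentv_j \notin V_{\cone}$ and $\agentv_j \notin V_{\ctwo}$, so in particular $\agentv_j \in \agentsv' = \agentsv \setminus V_{\cone}$. Because $G'$ is the induced subgraph of the value graph $G$ on $\itemsv' \cup \agentsv'$, the edge $(\itemv,\agentv_j)$ is present in $G'$ with weight exactly $\valu_j(\{\ite\})$.

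Finally, combining the two observations, the fact that $\itemv \notin \itemsv'_{1/2}$ forces every edge incident to $\itemv$ in $G'$, and in particular the edge to $\agentv_j$, to have weight strictly less than $1/2$. Therefore $\valu_j(\secondset_i) = w(\itemv,\agentv_j) < 1/2$, as required. I do not anticipate any genuine obstacle here; the only point that needs care is confirming that $\secondset_i$ consists of a single, non-merged item, so that the $1/2$-filtering bound on edges of $G'$ translates directly into the stated valuation bound.
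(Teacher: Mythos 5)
Your proposal is correct and follows essentially the same route as the paper's proof: note that $\secondset_i$ is a single (non-merged) item whose vertex lies in $\itemsv' \setminus \itemsv'_{1/2}$, observe that $\agentv_j \in \agentsv'$ so the relevant edge is present in $G'$, and conclude from the definition of $1/2$-filtering that this edge has weight strictly below $1/2$. The only stylistic difference is that you make explicit the auxiliary fact that merged vertices never land in $\itemsv' \setminus \itemsv'_{1/2}$, which the paper states just before Algorithm~\ref{c2ref} but does not repeat inside the proof.
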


\subsubsection{Cluster $\cthree$.} Finally, Cluster $\cthree$ is defined as the set of agents corresponding to the vertices of $\agentsv''$. Let $M''$ be an $\MCMWM$ of $G''_{1/2}$. Note that by Lemma \ref{rem}, all the vertices in $\itemsv''_{1/2}$ are saturated by $M''$. 
For each vertex $\agentv_i$ that is saturated by $M''$, we allocate the item (or pair of items in a case that $M''(\agentv_i)$ is a merged vertex) corresponding to $M''(\agentv_i)$ to $\agent_i$. Unlike the previous clusters, this allocation is temporary. A semi-satisfied agent $\agent_i$ in $\cthree$ may \emph{lend} his $f_i$ to the other agents of $\cthree$. Therefore, we have three type of agents in $\cthree$: 
\begin{enumerate}
    \item \textbf{The semi-satisfied agents}: we denote the set of semi-satisfied agents in $\cthree$ by $\cthree^s$
    \item \textbf{The borrower agents}: the agents that may borrow from a semi-satisfied agent. An agent $\agent_j$ in $\cthree$ is a borrower, if $\agent_j \notin \cthree^s$ and $\max_{\agent_i \in \cthree^S} V_j(f_i) \geq {1/2}$. We denote the set of borrower agents in $\cthree$ by $\cthree^b$.
    \item \textbf{The free agents}: the remaining agents in $\cthree$. We denote the set of free agents by $\cthree^f$.
\end{enumerate}
So far, the agents corresponding to unsaturated vertices in $\agentsv''_{1/2}$ belong to $\cthree^b$ and the agents corresponding to the vertices in $\agentsv'' \setminus \agentsv''_{1/2}$ are in $\cthree^f$. As we see, during the second phase, agents in $\cthree$ may change their type. For example, an agent in $\cthree^s$ may move to $\cthree^f$ or vice versa.  For convenience, for every agent $\agent_i \in \cthree^b$, we define $\epsilon_i$ as follows: 
\begin{equation}
\label{borrowers}
 3/4 - \max_{\agent_j \in \cthree^s}\valu_i(\firstset_j)
\end{equation} 
Note that by the definition, $\epsilon_i \leq 1/4$ holds for every agent of $\cthree^b$.

\begin{figure}[t!]
\centering
\includegraphics[scale=0.5]{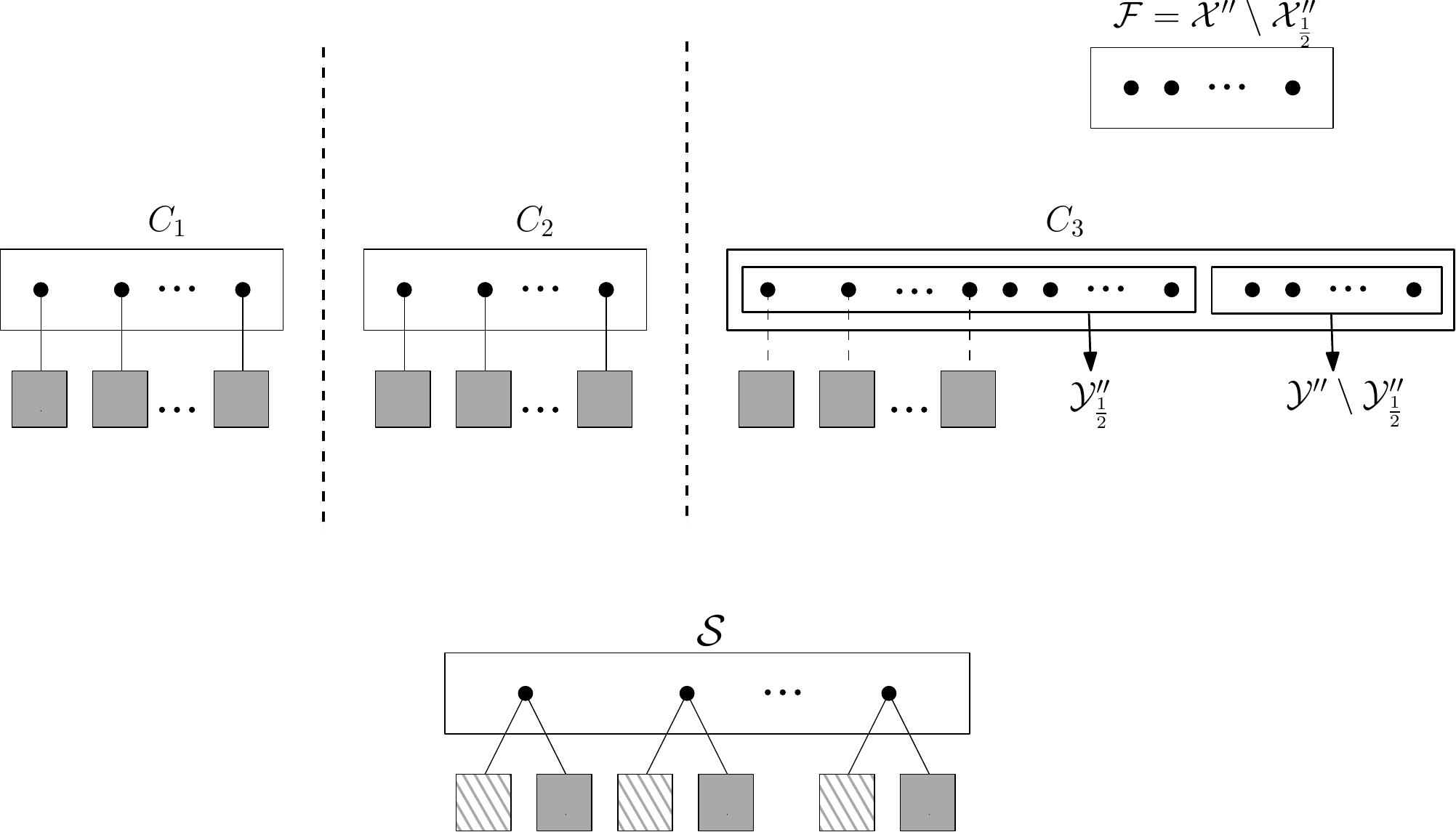}
\caption{Overview on the state of the algorithm}
\label{fig:overview}
\end{figure}

In Lemma \ref{lsmall_c3}, we show that the remaining items are not \emph{heavy} for the agents in $\cthree$. The main reason that Lemma \ref{lsmall_c3} holds, is the fact that no pair of vertices is desirable for any agents in $\cthree$ at the end of Algorithm \ref{addvertex}. 
\begin{lemma}
\label{lsmall_c3}
For all $\agent_i \in \cthree$ and $\itemv_j,\itemv_k \in \itemsv'' \setminus \itemsv''_{1/2}$, we have  $V_i(\{\ite_j,\ite_k\}) < {1/2}$.
\end{lemma}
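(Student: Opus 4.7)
The plan is to apply the termination condition of Algorithm~\ref{addvertex} contrapositively. Fix any $\agent_i \in \cthree$; by construction its vertex $\agentv_i$ lies in $T = \agentsv' \setminus N(F_{G'_{1/2}}(M', \itemsv'_{1/2}))$ at the moment Algorithm~\ref{addvertex} halts. Either $T$ was empty at that point (in which case $\cthree = \emptyset$ and the claim is vacuous) or no pair of vertices in $\itemsv' \setminus \itemsv'_{1/2}$ was desirable for any agent of $T$. In the latter case every pair $\itemv_a, \itemv_b \in \itemsv' \setminus \itemsv'_{1/2}$ satisfies $\valu_i(\{\ite_a,\ite_b\}) = w(\agentv_i,\itemv_a)+w(\agentv_i,\itemv_b) < 1/2$, since $\valu_i$ is additive. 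So it suffices to establish the containment $\itemsv'' \setminus \itemsv''_{1/2} \subseteq \itemsv' \setminus \itemsv'_{1/2}$ at the end of Algorithm~\ref{addvertex}.

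To prove this containment I would instantiate Definition~\ref{FG} on $G'_{1/2}$ with $\partone = \itemsv'_{1/2}$ and $\parttwo = \agentsv'_{1/2}$, writing $\parttwo_1, \parttwo_2, \partone_2, F$ for the sets it produces, so that $S_2 = F = \itemsv'_{1/2} \setminus \partone_2$ and $V_{\ctwo} = N(F)$. The key structural observation is $\parttwo_2 \cap N(F) = \emptyset$, whence $\parttwo_2 \subseteq \agentsv' \setminus N(F) = \agentsv''$, i.e., every agent in $\parttwo_2$ belongs to $\cthree$. This is a short alternating-path argument: an edge from a vertex of $F$ to a vertex of $\parttwo_2$, whether matched or unmatched, could be appended to the alternating path from an unsaturated agent into $\parttwo_2$ to produce an alternating path that lands on the $F$-side, which would force the incident $F$-vertex into $\partone_2$ and contradict the definition of $F$.

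With the disjointness above in hand, I would first rule out merged vertices from $\itemsv'' \setminus \itemsv''_{1/2}$. A merged vertex has edge-weight at least $1/2$ to some agent by the definition of ``desirable'', so it lies in $\itemsv'_{1/2}$ and is saturated by the final $M'$; if it lies in $F = S_2$ it is removed when forming $\itemsv''$, and otherwise it lies in $\partone_2$, its matched agent belongs to $\parttwo_2 \subseteq \cthree$, and the matched edge places the merged vertex in $\itemsv''_{1/2}$. For a non-merged $\itemv_j \in \itemsv''$ with $\itemv_j \in \itemsv'_{1/2}$, Lemma~\ref{rem} guarantees $M'$ saturates $\itemv_j$; since $\itemv_j \notin S_2 = F$ we have $\itemv_j \in \partone_2$, and its match $\agentv_l \in \parttwo_2 \subseteq \agentsv''$ satisfies $w(\agentv_l,\itemv_j) \geq 1/2$, so $\itemv_j \in \itemsv''_{1/2}$. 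Contrapositively, any $\itemv_j \in \itemsv'' \setminus \itemsv''_{1/2}$ is a non-merged vertex belonging to $\itemsv' \setminus \itemsv'_{1/2}$, which is the needed containment.

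I expect the main obstacle to be the bookkeeping around merged vertices---tracking how $\itemsv'$, $\itemsv'_{1/2}$, $M'$, and $F$ evolve through the merging in Algorithm~\ref{addvertex} and the removals that produce $\itemsv''$---and especially verifying that no merged vertex can survive in $\itemsv''$ without simultaneously entering $\itemsv''_{1/2}$. Once that accounting is in place, the lemma follows immediately from the termination condition of Algorithm~\ref{addvertex}.
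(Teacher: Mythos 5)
Your proof takes the same route as the paper's: invoke the termination condition of Algorithm~\ref{addvertex} together with the containment $\itemsv'' \setminus \itemsv''_{1/2} \subseteq \itemsv' \setminus \itemsv'_{1/2}$, which the paper merely asserts parenthetically but you correctly recognize as the step that needs an argument. Your argument for it is sound—the $M'$-partner of any $\itemv_j \in \itemsv'' \cap \itemsv'_{1/2}$ lies in $\parttwo_2$, which is disjoint from $N(F)=V_{\ctwo}$ (the ``no edge between $F$ and $\parttwo_1\cup\parttwo_2$'' observation the paper records after Definition~\ref{FG}), so that partner survives into $\agentsv''$ and certifies $\itemv_j\in\itemsv''_{1/2}$. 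One small misattribution: the saturation of $\itemsv'_{1/2}$ by the final $M'$ does not follow from Lemma~\ref{rem}; it is the invariant maintained across merges and recorded in the proof of Lemma~\ref{sizeeq}, and for $\itemv_j\notin F$ it also drops out directly from $\partone\setminus F=\partone_2=M'(\parttwo_2)$.
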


\begin{corollary}[of Lemma \ref{lsmall_c3}]
\label{small_c3}
For any agent $\agent_i \in \cthree$, there is at most one vertex $\itemv_j \in \itemsv'' \setminus \itemsv''_{1/2}$, such that $V_i(\{\ite_j\}) \geq {1/4}$.
\end{corollary}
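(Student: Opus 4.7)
The plan is to reduce Lemma~\ref{lsmall_c3} to the termination condition of Algorithm~\ref{addvertex}. When that algorithm halts, either $T = \agentsv' \setminus V_{\ctwo}$ is empty, or no pair of vertices in $\itemsv' \setminus \itemsv'_{1/2}$ is desirable for any vertex in $T$, meaning that for every such pair $\itemv_j, \itemv_k$ and every $\agentv_i \in T$ we have $w(\agentv_i, \itemv_j) + w(\agentv_i, \itemv_k) < 1/2$. Since $\cthree$ corresponds exactly to $\agentsv'' = T$, and the valuations are additive, this is precisely the statement we want, provided we can show that any pair from $\itemsv'' \setminus \itemsv''_{1/2}$ actually lies in $\itemsv' \setminus \itemsv'_{1/2}$.

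The main technical step is therefore to prove the containment $\itemsv'' \setminus \itemsv''_{1/2} \subseteq \itemsv' \setminus \itemsv'_{1/2}$. Merged vertices are ruled out immediately: by construction in Algorithm~\ref{addvertex}, each merged vertex carries an incident edge of weight at least $1/2$ to the agent witnessing desirability, so it belongs to $\itemsv'_{1/2}$. The only remaining concern is an unmerged $\itemv_j \in \itemsv'_{1/2}$ that survives to $\itemsv''$, which I will rule out by contradiction. Since $\itemv_j \in \itemsv''$, it is not in $S_2 = F_{G'_{1/2}}(M', \itemsv'_{1/2})$, so by Definition~\ref{FG} it is matched by $M'$ to some agent $\agentv_l \in \parttwo_2 \setminus \parttwo_1$, i.e., a saturated agent reachable from $\parttwo_1$ via an alternating path $P$ in $G'_{1/2}$.

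I then claim $\agentv_l \notin V_{\ctwo}$. Otherwise $\agentv_l$ would be adjacent in $G'_{1/2}$ to some $\itemv_m \in F_{G'_{1/2}}(M', \itemsv'_{1/2})$; appending the unmatched edge $(\agentv_l, \itemv_m)$ to $P$ would exhibit $\itemv_m$ as reachable from $\parttwo_1$. If $\itemv_m$ were unsaturated, the resulting path would be augmenting, contradicting maximality of $M'$; if it were saturated, its matched partner would lie in $\parttwo_2 \setminus \parttwo_1$, placing $\itemv_m$ in $\partone_2$ and contradicting $\itemv_m \in F_{G'_{1/2}}(M', \itemsv'_{1/2})$. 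Hence $\agentv_l$ survives into $\agentsv''$, and the matching edge $(\itemv_j, \agentv_l)$, of weight at least $1/2$, is present in the induced subgraph $G''$; this forces $\itemv_j \in \itemsv''_{1/2}$, contradicting the hypothesis and completing the containment.

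Once the containment is in hand, the lemma is immediate: for any $\agent_i \in \cthree$ and any $\itemv_j, \itemv_k \in \itemsv'' \setminus \itemsv''_{1/2}$, both items are unmerged originals lying in $\itemsv' \setminus \itemsv'_{1/2}$, and $\agentv_i$ lies in $T$, so the termination condition of Algorithm~\ref{addvertex} gives $V_i(\{\ite_j, \ite_k\}) = w(\agentv_i, \itemv_j) + w(\agentv_i, \itemv_k) < 1/2$ by additivity. The main obstacle is the alternating-path argument of the third paragraph, which requires carefully unpacking Definition~\ref{FG}; once this structural containment is established, the rest is a direct consequence of how Algorithm~\ref{addvertex} is designed to halt.
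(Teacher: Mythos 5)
Your proposal re-proves Lemma~\ref{lsmall_c3} from scratch (via the termination condition of Algorithm~\ref{addvertex} and the containment $\itemsv''\setminus\itemsv''_{1/2} \subseteq \itemsv'\setminus\itemsv'_{1/2}$), but never actually derives Corollary~\ref{small_c3} from it. Your final displayed conclusion is the bound $V_i(\{\ite_j,\ite_k\}) < 1/2$ for all pairs $\itemv_j,\itemv_k \in \itemsv''\setminus\itemsv''_{1/2}$, and you even label it ``the lemma is immediate'' --- but the statement you were asked to prove is ``at most one vertex $\itemv_j$ with $V_i(\{\ite_j\}) \geq 1/4$.'' That last logical step is missing. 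The intended proof of the corollary is a one-line pigeonhole argument that takes Lemma~\ref{lsmall_c3} as given: if two distinct vertices $\itemv_j, \itemv_k \in \itemsv''\setminus\itemsv''_{1/2}$ each satisfied $V_i(\{\ite_j\}) \geq 1/4$ and $V_i(\{\ite_k\}) \geq 1/4$, then by additivity $V_i(\{\ite_j,\ite_k\}) = V_i(\{\ite_j\}) + V_i(\{\ite_k\}) \geq 1/2$, contradicting the lemma.

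Two secondary remarks. First, the effort you spent unpacking the termination condition of Algorithm~\ref{addvertex} and the definition of $F_{G'_{1/2}}$ is essentially a re-derivation of Lemma~\ref{lsmall_c3}; since a corollary is entitled to take its lemma as a black box, this work is redundant here (it would belong in the lemma's proof instead). Second, your alternating-path argument establishing the containment $\itemsv''\setminus\itemsv''_{1/2} \subseteq \itemsv'\setminus\itemsv'_{1/2}$ is correct, and it does make explicit something the paper states only parenthetically in the proof of Lemma~\ref{lsmall_c3}; but it restates the general property the paper already records right after Definition~\ref{FG}, namely that no edge runs between $F_H(M,\partone)$ and $\parttwo_1 \cup \parttwo_2$, so the argument could have been shortened by citing that observation directly. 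The handling of merged vertices is also slightly muddled: you say they are ``ruled out immediately'' because they lie in $\itemsv'_{1/2}$, but membership in $\itemsv'_{1/2}$ is not by itself what rules them out of $\itemsv''\setminus\itemsv''_{1/2}$; they are covered by the same alternating-path contradiction as the unmerged case, so the case split you introduce is unnecessary.
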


\subsection{Phase 2: Satisfying the Agents}\label{additive:allocation}
\subsubsection{An Overview on the State of the Algorithm}
Before going through the second phase, we present an overview of the current state of the agents and items. In Figure \ref{fig:overview}, for every agent $\agent_i \in \cone \cup \ctwo \cup \satagents$, $\firstset_i$ is shown by a gray rectangle and for every agent $\agent_i \in \satagents$, $\secondset_i$  is shown by a hatched rectangle. 

Currently, we know that every agent in $\satagents$ belongs to $\satagents_1^r$ or $\satagents_2^r$. These agents are satisfied in the refinement phases of $\cone$ and $\ctwo$. The rest of the agents will be satisfied in the second phase. For brevity, for $i \leq 2$ we use $\satagents_i^s$ to refer to the agents in $\satagents_i$ that are satisfied in the second phase. More formally, $$\mbox{ for }i=1,2 \qquad \satagents_i^s = \satagents_i \setminus \satagents_i^r .$$

Since we didn't refine Cluster $\cthree$, all the agents in the Cluster $\cthree$ are satisfied in the second phase. As mentioned in the previous section, the item allocation to the semi-satisfied agents in $\cthree$ is temporary; That is, we may alter such allocations later. Therefore, in Figure \ref{fig:overview} we illustrate such allocations by dashed lines.

In this section, we denote the set of free items (the items corresponding to the vertices in $\itemsv''\setminus \itemsv''_{1/2}$ at the end of the first phase) by $\fitems$. By Observations \ref{fsmallc1}, \ref{fsmallc2} and Corollary \ref{small_c3}, we know that the items in $\fitems$ have the following properties:
\begin{enumerate}
\item For every agent $\agent_i$ in $\cone$, $\valu_i(\{\ite_j\}) < \epsilon_i$ holds for all $\ite_j \in \fitems$ (Observation \ref{fsmallc1}).
\item For every agent $\agent_i$ in $\ctwo$, $\valu_i(\{\ite_j\}) < \epsilon_i$ holds for all $\ite_j \in \fitems$ (Observation \ref{fsmallc2}).
\item For every agent $\agent_i$ in $\cthree$, there is at most one item $\ite_j \in \fitems$, such that $\valu_i(\{\ite_j\}) \geq 1/4$ (Corollary \ref{small_c3}).
\end{enumerate}

\begin{table}[t]
	\caption{Summary of value lemmas for $f_i$}
	\label{table0} 
	\begin{center}
\begin{tabular}{|c|c|c|c|}
\hline
	& $\forall \agent_i \in \cone$&  $\forall \agent_i \in \ctwo$ & $\forall \agent_i \in \cthree$\\
\hline
$\forall \agent_j \in \cone$&	- & $\valu_i(\firstset_j) < 1/2$ ($\star$)& $\valu_i(\firstset_j) < 1/2$ ($\star$) \\
\hline
$\forall \agent_j \in \ctwo$ & $ \valu_i(\firstset_j) < 3/4 $ ($\ddagger$)  & - & $\valu_i(\firstset_j) < 1/2$ ($\dagger$)\\ 
\hline
$\forall \agent_j \in \cthree^s$ & $ \valu_i(\firstset_j) < 3/4 $($\ddagger$)  & $\valu_i(\firstset_j) <3/4$($\ddagger$)  &  - \\ 
\hline

\end{tabular}
\end{center}
$\hspace{110pt} \star$: Lemma \ref{forc2c3} $\hspace{10pt} \dagger$: Lemma \ref{forc3} $\hspace{10pt} \ddagger$: Lemma \ref{general}\\
\end{table}

\begin{table}[t]
	\caption{Summary of value lemmas for the agents in $\satagents_i^r$}
	\label{table4} 
\begin{center}
	\begin{tabular}{|c|c|c|c|}
\hline
	&  $\forall \agent_i \in \cone $ & $\forall \agent_i \in \ctwo$ & $\forall \agent_i \in \cthree$\\
\hline
$\forall \agent_j \in \satagents_1^r $	 & - & $\valu_i(\secondset_j)<1/2$ ($\star$)& $\valu_i(\secondset_j)<1/2$ ($\star$)\\
\hline
$\forall \agent_j \in \satagents_2^r $	 & $\valu_i(\secondset_j)<\epsilon_i (\dagger)$ & - & $\valu_i(\secondset_j)<1/2$ ($\ddagger$) \\
\hline
\end{tabular}
\end{center}
$\hspace{105pt}$ $\star$: Lemma \ref{gsmallc1r} $\hspace{10pt}$ $\dagger$: Lemma \ref{cr2smallc1} $\hspace{10pt}$ $\ddagger$: Lemma \ref{cr2smallc3}\\

\end{table}

In summary, items of $\fitems$ are small enough, therefore we can run a process similar to the $\bagfilling$ algorithm described earlier to allocate them to the agents. Recall that our clustering and refinement methods preserve the conditions stated in Lemmas \ref{forc2c3}, \ref{gsmallc1r}, \ref{forc3}, \ref{cr2smallc1} and \ref{cr2smallc3}. In addition to this, we state Lemma \ref{general} as follows.

\begin{lemma}[value-lemma]
\label{general}
For every agent $\agent_i \in \cone \cup \ctwo \cup \cthree^s$, we have
$$\forall \agent_j \in \cone \cup \ctwo \cup \cthree \qquad \valu_j(\firstset_i)<3/4.$$
\end{lemma}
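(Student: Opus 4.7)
The plan is to prove Lemma~\ref{general} by a short case analysis on the structure of $\firstset_i$ depending on which cluster $\agent_i$ belongs to, and then invoke the two irreducibility consequences we have already established: Lemma~\ref{remove1} for single items and Lemma~\ref{pairsmall} for pairs of items coming from $\itemsv' \setminus \itemsv'_{1/2}$.

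First I would recall how $\firstset_i$ was constructed in Phase~1. For $\agent_i \in \cone$, by the construction in Section~\ref{cluster1:building}, $\firstset_i$ is a single item corresponding to the match $M(\agentv_i)$ in $G_{1/2}$. For $\agent_i \in \ctwo$, the set $\firstset_i$ is the item (or merged pair of items) corresponding to $M'(\agentv_i)$ in $G'_{1/2}$, where merged vertices arise only from Algorithm~\ref{addvertex} applied to $\itemsv' \setminus \itemsv'_{1/2}$. For $\agent_i \in \cthree^s$, $\firstset_i$ is similarly either a single item or a merged pair of items from $\itemsv'' \setminus \itemsv''_{1/2} \subseteq \itemsv' \setminus \itemsv'_{1/2}$, coming from the $\MCMWM$ $M''$ of $G''_{1/2}$.

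Now the case analysis is straightforward. If $\firstset_i = \{\ite_k\}$ is a single item, then since we are working with a $3/4$-irreducible instance, Lemma~\ref{remove1} (with $\alpha = 3/4$) gives $\valu_j(\firstset_i) = \valu_j(\{\ite_k\}) < 3/4$ for every agent $\agent_j$, which is exactly what we need. If instead $\firstset_i = \{\ite_k, \ite_\ell\}$ is a merged pair, then by construction both vertices $\itemv_k, \itemv_\ell$ lie in $\itemsv' \setminus \itemsv'_{1/2}$, so Lemma~\ref{pairsmall} applies and yields $\valu_j(\firstset_i) = \valu_j(\{\ite_k, \ite_\ell\}) < 3/4$ for every agent $\agent_j$.

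Taking the union of the two cases across all $\agent_i \in \cone \cup \ctwo \cup \cthree^s$ and all $\agent_j \in \cone \cup \ctwo \cup \cthree$ completes the proof. There is no real obstacle here, since all the heavy lifting has already been done in establishing Lemmas~\ref{remove1} and~\ref{pairsmall}; the only thing to be careful about is making sure that the merged-vertex case truly only involves items from $\itemsv' \setminus \itemsv'_{1/2}$ so that Lemma~\ref{pairsmall} is applicable — this is immediate from the definition of Algorithm~\ref{addvertex}, which only ever merges vertices lying in $\itemsv' \setminus \itemsv'_{1/2}$.
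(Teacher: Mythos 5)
Your proof is correct and takes essentially the same route as the paper: a case split on whether $\firstset_i$ is a single item (apply Lemma~\ref{remove1}) or a merged pair from $\itemsv' \setminus \itemsv'_{1/2}$ (apply Lemma~\ref{pairsmall}). The paper's proof additionally cites Lemma~\ref{c1small2} alongside Lemma~\ref{pairsmall} for the merged-pair case, but as you observe, Lemma~\ref{pairsmall} already bounds the value below $3/4$ for every agent in $\agentsv$, so that extra citation is not needed for this particular bound.
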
  
A brief summary of Lemmas \ref{forc2c3}, \ref{gsmallc1r}, \ref{forc3}, \ref{cr2smallc1}, \ref{cr2smallc3} and \ref{general} is illustrated in Tables \ref{table0} and \ref{table4}. Moreover, since sets $\cone,\ctwo$ and $\cthree^s$ are cycle-envy-free, Observation \ref{epsofcluster} holds for these sets.

\subsubsection{Second Phase: $\bagfilling$}
We begin this section with some definitions. In the following, we define the notion of feasible subsets and, based on that, we define $\phi(S)$ for a feasible subset $S$ of items.
\begin{definition}
A subset $S$ of items in $\fitems$ is feasible, if at least one of the following conditions are met:
\begin{minipage}[t]{\linegoal}
\begin{enumerate}[leftmargin=30pt]
    \item There exists an agent $\agent_i \in \cthree^f $ such that  $\valu_i(\{S\}) \geq {1/2}$. 
    \item There exists an agent $\agent_i \in \cone \cup \ctwo \cup \cthree^s \cup \cthree^b$ such that  $\valu_i(\{S\}) \geq \epsilon_i$.
\end{enumerate}
\end{minipage}
\end{definition}

\begin{definition}
For a feasible set $S$, we define $\Phi(S)$ as the set of agents, that set $S$ is feasible for them. 
\end{definition}

Recall the notion of cycle-envy-freeness and the topological ordering of the agents in a cycle-envy-free set of semi-satisfied agents. Based on this, we define a total order $\prec_{pr}$ to prioritize the agents in the $\bagfilling$ algorithm. 

\begin{definition}
\label{priority}
Define a total order $\prec_{pr}$ on the agents of $\cone \cup \ctwo \cup \cthree$ with the following rules: 
\begin{minipage}[t]{\linegoal}
	
\begin{enumerate}[leftmargin=50pt]
    \item $\agent_{i_5} \prec_{pr} \agent_{i_1} \prec_{pr} \agent_{i_2} \prec_{pr} \agent_{i_3}  \prec_{pr} \agent_{i_4} \qquad$  $\forall \agent_{i_1} \in \cone, \agent_{i_2} \in \ctwo, \agent_{i_3} \in \cthree^s, \agent_{i_4} \in \cthree^b, \agent_{i_5} \in \cthree^f$
    \item $\agent_i \prec_{pr} \agent_j \Leftrightarrow \agent_i \prec_o \agent_j \hspace{85pt}$ $\forall \agent_i, \agent_j \in \cone \cup \ctwo \cup \cthree^s,  \agent_i ,\agent_j \mbox{ in the same cluster }$
	\item $\agent_i \prec_{pr} \agent_j \Leftrightarrow i < j \hspace{100pt}$ $\forall \agent_i,\agent_j \in \cthree^b \vee \agent_i,\agent_j \in \cthree^f$
\end{enumerate}
\end{minipage}
\end{definition}    

Recall that $\prec_o$ refers to the topological ordering of a semi-satisfied set of agents. Roughly speaking, for the semi-satisfied agents in the same cluster, $\prec_{pr}$ behaves in the same way as $\prec_{o}$. Furthermore, for the agents in different clusters, agents in $\cthree^f , \cone , \ctwo, \cthree^s , \cthree^b$ have a lower priority, respectively. Finally, the order of the agents in $\cthree^b$ and $\cthree^f$ is determined by their index, i.e., the agent with a lower index has a lower priority.

The second phase consists of several rounds and every round has two steps. Each of these two steps is described below. We continue running this algorithm until $\fitems$ is no longer feasible for any agent.
\begin{itemize}
\item \textbf{Step1}: In the first step, we run a process very similar to the $\bagfilling$ algorithm described in Section \ref{introduction}. That is, we find a feasible subset $S \subseteq \fitems$, such that $|S|$ is minimal. Such a subset can easily be found, using a slight modification of the $\bagfilling$ process (see Section \ref{sphase}).  

\item \textbf{Step2}: In the second step, we choose an agent to allocate set $S$ to him. In contrast to the $\bagfilling$ algorithm, we do not select an arbitrary agent. Instead, we select the agent in $\Phi(S)$ with the lowest priority regarding $\prec_{pr}$, i.e., smallest element in $\Phi(S)$ regarding $\prec_{pr}$. Let $\agent_i$ be the selected agent. We consider three cases separately:

\begin{minipage}[t]{\linegoal}
\begin{enumerate}[leftmargin=50pt]
    \item $\agent_i \in \cthree^f$: temporarily allocate $S$ to $\agent_i$, i.e., set $\firstset_i = S$. 
    \item $\agent_i \in \cthree^b$: let $\agent_j$ be the agent that $\valu_i(\firstset_j) = {3/4} - \epsilon_j$. 
Take back $\firstset_j$ from $\agent_j$ and allocate $\firstset_j \cup S$ to $\agent_i$ i.e. set $\firstset_i = \firstset_j$, $\firstset_j=\emptyset$ and $\secondset_i = S$. Remove $\agent_i$ from $\cthree$ and add it to $\satagents$.
    \item $\agent_i \in \cone \cup \ctwo \cup \cthree^s$: satisfy agent $\agent_i$ by $S$, i.e., set $\secondset_i = S$ and remove $\agent_i$ from its corresponding cluster and add it to $\satagents$. 
\end{enumerate}
\end{minipage}

By the construction of $\cthree^s,\cthree^b$, and $\cthree^f$, the above process may cause agents in $\cthree$ to move in between $\cthree^s,\cthree^b$ and $\cthree^f$. For example, if the first case happens, then $\agent_i$ is moved from $\cthree^f$ to $\cthree^s$. In addition, all other agents in $\cthree^f$ for which $S$ is feasible are moved to $\cthree^b$. For the second case, $\agent_j$ is moved to one of $\cthree^f$ or $\cthree^b$, based on $\valu_j(\firstset_k)$ for every $\agent_k \in \cthree^s$; that is, if there exists an agent $\agent_k \in \cthree^s$ such that $\valu_j(\firstset_k) \geq 1/2$, $\agent_j$ is moved to $\cthree^b$. Otherwise, $\agent_j$ is moved to $\cthree^f$. For both the second and the third cases, some of the agents in $\cthree^b$ may move to $\cthree^f$. 
\end{itemize}
The second phase terminates, when $\fitems$ is no longer feasible for any agent. More details about the second phase can be found in Algorithm \ref{second-phase}.  In Algorithm \ref{second-phase}, we use $Update(\cthree)$ to refer the process of moving agents among $\cthree^s, \cthree^b$ and $\cthree^f$.

\begin{algorithm}[t!]
 \KwData{$\fitems, \cone,\ctwo,\cthree$}
  \While{$\fitems$ is feasible}{
	$S$ = a minimal feasible subset of $\fitems$ \;
	$\agent_i = $ agent in $\Phi(S)$ with lowest order regarding  $\prec_{pr}$\;
	\If{$\agent_i \in C_3^f$}
	{
		$\firstset_i = S$ \;
		$Update(\cthree)$ \;
	}
	\If{$\agent_i \in \cthree^b$}
	{
		Let $\agent_j$ be the agent that $\valu_i(\firstset_j) = 3/4 - \epsilon_i$ \;
		$\firstset_i = \firstset_j$ \;
		$\secondset_i = S$ \;
		$\satagents = \satagents \cup \agent_i$ \;
		$\firstset_j = \emptyset$\;
		$\cthree = \cthree \setminus \agent_i$ \;
		$Update(\cthree)$ \;
	}
	\If {$\agent_i \in \cthree^s$}
	{
		$\secondset_i = S$\;
		$\satagents = \satagents \cup \agent_i$\;
		$\cthree = \cthree \setminus \agent_i$ \;
		$Update(\cthree)$ \;
	}
	\If {$\agent_i \in \cone \cup \ctwo$}
	{
		$\secondset_i = S$\;
		remove $\agent_i$ from its corresponding cluster \;
		$\satagents = \satagents \cup \agent_i$\;

	}
}
 \caption{The Second Phase}
 \label{second-phase}
\end{algorithm}

In each round of the second phase, either an agent is satisfied or an agent in $\cthree^f$ becomes semi-satisfied. In Lemma \ref{c3fsmall}, we show that if an agent $\agent_i \in \cthree^f$ is selected in some round of the second phase, then $\valu_j(\firstset_i)$ is upper bounded by $2\epsilon_j$ for every agent $\agent_j \in \cthree \cup \ctwo \cup \cone^s \cup \cone^b$. As a consequence of Lemma \ref{c3fsmall}, in Lemma \ref{cef} we show that sets $\cone,\ctwo$ and $\cthree$ remain cycle-envy-free during the second phase. For convenience, we use $\mathbb{R}_z$ to refer to the $z$'th round of the second phase. 

\begin{lemma}
\label{c3fsmall}
Let $\mathbb{R}_z$ be a round of the second phase that an agent $\agent_i \in \cthree^f$ is selected. Then, for every agent $\agent_j \in \cthree \cup \ctwo \cup \cone^s \cup \cone^b$, we have $\valu_j(\firstset_i)<2\epsilon_j<3/4$.
\end{lemma}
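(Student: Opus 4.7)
The plan is to combine the cardinality-minimality of $S = \firstset_i$ produced in Step 1 of round $\mathbb{R}_z$ with the size-control properties of the items remaining in $\fitems$ at the end of Phase 1, namely Observations \ref{fsmallc1} and \ref{fsmallc2} together with Corollary \ref{small_c3}. Because $S$ has minimum cardinality among all feasible subsets of $\fitems$, every proper subset of $S$ is infeasible for every remaining agent. In particular, for any item $\ite_t \in S$ and any agent $\agent_j$ with a defined threshold $\epsilon_j$ (covering every semi-satisfied agent in $\cone \cup \ctwo \cup \cthree^s$ and every borrower in $\cthree^b$), the infeasibility of $S \setminus \{\ite_t\}$ forces
\[
\valu_j(S \setminus \{\ite_t\}) < \epsilon_j,
\]
supplying the first $\epsilon_j$ in the target bound.

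For the second $\epsilon_j$, I would bound $\valu_j(\{\ite_t\})$ one cluster at a time. For $\agent_j \in \cone$, Observation \ref{fsmallc1} directly gives $\valu_j(\{\ite_t\}) < \epsilon_j$; for $\agent_j \in \ctwo$, Observation \ref{fsmallc2} gives the same; and for $\agent_j \in \cthree^s \cup \cthree^b$, I would re-apply cardinality-minimality. Provided $|S| \ge 2$, the singleton $\{\ite_t\}$ is a strictly smaller candidate subset than $S$, hence infeasible for $\agent_j$, so $\valu_j(\{\ite_t\}) < \epsilon_j$. Summing the two pieces via additivity of $\valu_j$ yields $\valu_j(\firstset_i) = \valu_j(S) < 2\epsilon_j$, which is the first half of the claim.

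The second half $2\epsilon_j < 3/4$ is then immediate from how the clusters are built. A semi-satisfied agent has $\valu_j(\firstset_j) \ge 1/2$ (by the $1/2$-filtering used to form each cluster in Phase 1), and therefore $\epsilon_j = 3/4 - \valu_j(\firstset_j) \le 1/4$. Every borrower in $\cthree^b$ satisfies $\max_{\agent_k \in \cthree^s}\valu_j(\firstset_k) \ge 1/2$ by its very definition, so Equation (\ref{borrowers}) again gives $\epsilon_j \le 1/4$. In both cases $2\epsilon_j \le 1/2 < 3/4$.

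The main technical obstacle is the corner case $|S| = 1$ for $\agent_j \in \cthree^s \cup \cthree^b$: the cardinality argument no longer forces $\valu_j(\{\ite_t\}) < \epsilon_j$, and Corollary \ref{small_c3} only rules out more than one ``heavy'' item in $\fitems$ per $\cthree$-agent. I would close this case via a consistency check against the $Update(\cthree)$ bookkeeping in Algorithm \ref{second-phase}: if $\valu_j(\{\ite_t\}) \ge 2\epsilon_j$ for a semi-satisfied $\agent_j \in \cthree^s$, then $\valu_j(\firstset_j \cup \{\ite_t\}) \ge (3/4 - \epsilon_j) + 2\epsilon_j = 3/4 + \epsilon_j > 3/4$, so $\agent_j$ would qualify as a borrower with $\{\ite_t\}$ as second bundle and would have been satisfied in an earlier round rather than remaining semi-satisfied at round $\mathbb{R}_z$; a parallel computation using the best semi-satisfied bundle certified by Equation (\ref{borrowers}) handles $\agent_j \in \cthree^b$. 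This contradiction rules out the corner case and completes the proof.
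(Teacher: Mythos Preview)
Your core decomposition (split $S$ into $\{\ite_t\}$ and $S\setminus\{\ite_t\}$, then use cardinality-minimality) matches the paper's proof exactly. The paper, however, dispatches the singleton piece uniformly for all clusters via minimality, and it can do so because it first establishes $|S|\geq 3$. Since $\agent_i\in\cthree^f$, feasibility of $S$ for $\agent_i$ requires $\valu_i(S)\geq 1/2$; but every item of $\fitems$ lies in $\itemsv''\setminus\itemsv''_{1/2}$, so by Lemma~\ref{lsmall_c3} any two such items have total value $<1/2$ to $\agent_i$, forcing $|S|\geq 3$. With $|S|\geq 3$ (in fact $|S|\geq 2$ suffices), both $\{\ite_t\}$ and $S\setminus\{\ite_t\}$ are strictly smaller than $S$ and hence infeasible for every agent, giving $\valu_j(\{\ite_t\})<\epsilon_j$ and $\valu_j(S\setminus\{\ite_t\})<\epsilon_j$ directly.

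Your handling of the corner case $|S|=1$ is where the argument breaks. You claim that if $\valu_j(\{\ite_t\})\geq 2\epsilon_j$ for some $\agent_j\in\cthree^s$, then $\agent_j$ ``would have been satisfied in an earlier round.'' This does not follow from the algorithm: in round $\mathbb{R}_z$ the selected agent is the one of \emph{lowest} priority in $\Phi(S)$, and by Definition~\ref{priority} every agent in $\cthree^f$ has lower priority than every agent in $\cthree^s\cup\cthree^b$. So even if $\agent_j\in\Phi(S)$, the algorithm still picks $\agent_i\in\cthree^f$, and nothing forces $\agent_j$ to have been handled earlier. The parallel argument for $\cthree^b$ has the same flaw. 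The clean fix is simply to note, as the paper does, that the case $|S|=1$ (indeed $|S|\leq 2$) cannot occur when the selected agent is in $\cthree^f$.
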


\begin{lemma}
\label{cef}
During the second phase, the $\cone,\ctwo$ and $\cthree^s$ maintain the property of cycle-envy-freeness. 
\end{lemma}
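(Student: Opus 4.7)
The plan is to prove the lemma by induction on the rounds of Phase~2, showing that $\cone$, $\ctwo$, and $\cthree^s$ remain cycle-envy-free throughout. The base case is settled at the end of Phase~1: by Lemma~\ref{wm}, the $\MCMWM$-based allocations used to construct the three clusters produce orientations in which every non-empty subset has both a winner and a loser, which is precisely cycle-envy-freeness.

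For the inductive step I will enumerate the four ways a single round of Algorithm~\ref{second-phase} can alter the clusters. If the selected agent $\agent_i$ lies in $\cone$, $\ctwo$, or $\cthree^s$, that agent is moved into $\satagents$; in the $\cthree^b$ branch the donor $\agent_j \in \cthree^s$ has its $\firstset_j$ emptied and leaves $\cthree^s$. All of these are vertex removals from an already cycle-envy-free set. Since every non-empty subset of the shrunken cluster is also a non-empty subset of the original, the inductive hypothesis provides a winner and a loser for it, so cycle-envy-freeness is trivially preserved. The only non-trivial case is the $\cthree^f$ branch, where an unsatisfied agent $\agent_i$ receives $\firstset_i = S$, becomes semi-satisfied, and is added to $\cthree^s$.

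The main obstacle is precisely this addition step, and Lemma~\ref{c3fsmall} is tailored for it. Consider any agent $\agent_j$ that was in $\cthree^s$ before the round. Being semi-satisfied, $\valu_j(\firstset_j) \geq 1/2$, so $\epsilon_j = 3/4 - \valu_j(\firstset_j) \leq 1/4$. Applying Lemma~\ref{c3fsmall} to this round gives $\valu_j(\firstset_i) < 2\epsilon_j \leq 3/4 - \epsilon_j = \valu_j(\firstset_j)$, so $\agent_j$ does not envy $\agent_i$. Hence $\agent_i$ is a loser in the updated cluster and in every subset that contains it; for the singleton $\{\agent_i\}$, $\agent_i$ is trivially also a winner.

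It remains to exhibit a winner in every larger subset $T$ of the updated $\cthree^s$ that contains $\agent_i$. I will take a winner $\agent_w$ of $T \setminus \{\agent_i\}$, whose existence is given by the inductive hypothesis applied to the pre-addition cluster. By the same Lemma~\ref{c3fsmall} bound, $\valu_w(\firstset_i) < \valu_w(\firstset_w)$, so $\agent_w$ does not envy $\agent_i$ either, hence $\agent_w$ is a winner in $T$. Subsets of the updated $\cthree^s$ that do not contain $\agent_i$ inherit their winner and loser directly from the inductive hypothesis. This closes the induction, and since $\cone$ and $\ctwo$ only ever shrink, they retain cycle-envy-freeness for free, completing the proof.
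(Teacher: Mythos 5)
Your proof is correct and uses essentially the same strategy as the paper — induction on the rounds of Phase~2, with Lemma~\ref{c3fsmall} supplying the critical bound $\valu_j(\firstset_i) < 2\epsilon_j \leq 3/4 - \epsilon_j = \valu_j(\firstset_j)$ that makes the newly added agent $\agent_i$ a loser in every subset of $\cthree^s$ containing it, and with the removal cases dismissed as trivial. The one small variation is in the winner argument: you carry a winner over from $T \setminus \{\agent_i\}$ via the inductive hypothesis, whereas the paper instead uses the definition of $\cthree^f$ (namely $\valu_i(\firstset_j) < 1/2 \leq \valu_i(\firstset_i)$ for all $\agent_j \in \cthree^s$) to show $\agent_i$ is \emph{itself} also a winner, making $\agent_i$ the single witness for both roles in every subset containing it.
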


Finally, for the rounds that an agents $\agent_i$ is satisfied, Lemmas \ref{prvalue} and \ref{m_1} give upper bounds on the value of $\secondset_i$ for remaining agents in different clusters. 

\begin{lemma}[value-lemma]
\label{prvalue}
Let $\agent_i \in \satagents$ be an agent that is satisfied in the second phase. Then, for every other agent $\agent_j \in \cone \cup \ctwo$ we have:

\begin{minipage}[t]{\linegoal}
\begin{enumerate}[leftmargin=30pt]
\item If $\agent_j \prec_{pr} \agent_i$, then $\valu_j(\secondset_i) < \epsilon_j$.
\item If $\agent_i \prec_{pr} \agent_j$, then $\valu_j(\secondset_i) < 2\epsilon_j$.
\end{enumerate}
\end{minipage}
\end{lemma}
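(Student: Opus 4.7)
My plan is to read off both parts of the lemma directly from the two deterministic rules that the second phase uses in the round $\mathbb{R}_z$ where $\agent_i$ is satisfied: (a) the chosen set $S$ has \emph{minimum} cardinality among all feasible subsets of $\fitems$; (b) among the agents of $\Phi(S)$, the agent with the smallest $\prec_{pr}$-priority is selected. The first observation is that no matter which of the branches of the second-phase pseudocode applies to $\agent_i$ (whether $\agent_i$ was previously in $\cone$, $\ctwo$, $\cthree^s$, $\cthree^b$, or $\cthree^f$), the algorithm in every case sets $\secondset_i = S$. So it suffices to bound $\valu_j(S)$ for $\agent_j \in \cone \cup \ctwo$ at the start of round $\mathbb{R}_z$, where $\agent_j$ is still semi-satisfied and hence its feasibility threshold is $\epsilon_j$.

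For part (i), suppose $\agent_j \prec_{pr} \agent_i$. If $\agent_j$ were in $\Phi(S)$, then $\agent_j$ would precede $\agent_i$ inside $\Phi(S)$ with respect to $\prec_{pr}$, contradicting rule (b) since $\agent_i$ was chosen as the $\prec_{pr}$-minimum element of $\Phi(S)$. Hence $\agent_j \notin \Phi(S)$, meaning $S$ is not feasible for $\agent_j$. Because $\agent_j \in \cone \cup \ctwo$, non-feasibility is precisely $\valu_j(S) < \epsilon_j$, and therefore $\valu_j(\secondset_i) = \valu_j(S) < \epsilon_j$.

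For part (ii), suppose $\agent_i \prec_{pr} \agent_j$ and pick any item $\ite \in S$. By minimality of $|S|$ in rule (a), the smaller set $S \setminus \{\ite\}$ is not feasible, in particular not for $\agent_j$, so $\valu_j(S \setminus \{\ite\}) < \epsilon_j$. On the other hand, $\ite \in \fitems$, and the discussion immediately preceding Section~\ref{additive:allocation} (drawing on Observations \ref{fsmallc1} and \ref{fsmallc2}) gives $\valu_j(\{\ite\}) < \epsilon_j$ for every $\agent_j \in \cone \cup \ctwo$. Additivity of $\valu_j$ then yields
\[
\valu_j(\secondset_i) \;=\; \valu_j(S \setminus \{\ite\}) + \valu_j(\{\ite\}) \;<\; \epsilon_j + \epsilon_j \;=\; 2\epsilon_j,
\]
which is exactly part (ii).

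The only subtle point, and the one I expect to need the most care, is to ensure that at round $\mathbb{R}_z$ the agent $\agent_j$ genuinely still lies in $\cone \cup \ctwo$, so that its feasibility threshold is $\epsilon_j$ and the per-item bounds of Observations \ref{fsmallc1} and \ref{fsmallc2} continue to apply to the items currently in $\fitems$. Once this bookkeeping is fixed (it is implicit in the hypothesis $\agent_j \in \cone \cup \ctwo$), both parts reduce to the priority rule and the minimum-cardinality rule of the $\bagfilling$ step, and no further machinery is required.
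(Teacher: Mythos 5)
Your proof is correct and follows essentially the same two steps as the paper's proof in Appendix~\ref{clustering2appendix}: part (i) uses the priority rule to conclude $S \notin \Phi(\agent_j)$, hence $\valu_j(S)<\epsilon_j$; part (ii) removes one item from $S$, invokes minimality of $S$ and the per-item bounds of Observations~\ref{fsmallc1} and~\ref{fsmallc2}, and adds. One tiny wording note: the algorithm picks a \emph{minimal} feasible subset (no proper subset is feasible), not necessarily one of minimum cardinality, but the inference you need, namely that $S\setminus\{\ite\}$ is infeasible, holds under minimality, so the argument is unaffected.
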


\begin{lemma}[value-lemma]
\label{m_1}
Let $\agent_i$ be an agent in $\satagents_1^s \cup \satagents_2^s$. Then, for every agent $\agent_j \in \cthree$, we have $\valu_j(\secondset_i) < {1/2}$.
\end{lemma}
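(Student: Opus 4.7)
The plan is to analyze the round $\mathbb{R}_z$ of the second phase at which $\secondset_i$ is assigned and to combine the priority rule of Definition~\ref{priority} with the minimality of the feasible set $S := \secondset_i$ produced by the $\bagfilling$ subroutine. Since $\agent_i \in \satagents_1^s \cup \satagents_2^s$, at round $\mathbb{R}_z$ the agent $\agent_i$ still lies in $\cone \cup \ctwo$ and, being selected, is the smallest element of $\Phi(S)$ under $\prec_{pr}$. I will fix an arbitrary $\agent_j \in \cthree$ and split into three cases according to whether $\agent_j$ sits in $\cthree^f$, $\cthree^s$, or $\cthree^b$ at round $\mathbb{R}_z$. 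A preliminary fact I will need is that $|S| \geq 2$: because $\agent_i \in \Phi(S)$ we have $\valu_i(S) \geq \epsilon_i$, while Observations~\ref{fsmallc1} and~\ref{fsmallc2} give $\valu_i(\{\ite\}) < \epsilon_i$ for every $\ite \in \fitems \supseteq S$, so a singleton cannot be feasible for $\agent_i$.

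If $\agent_j \in \cthree^f$ at round $\mathbb{R}_z$, then $\agent_j \prec_{pr} \agent_i$ by Definition~\ref{priority}. Had $\agent_j$ belonged to $\Phi(S)$, the algorithm would have selected $\agent_j$ in preference to $\agent_i$; hence $\Phi(S) \cap \cthree^f = \emptyset$, and by the feasibility rule for $\cthree^f$ agents this immediately yields $\valu_j(S) < 1/2$. In the remaining case $\agent_j \in \cthree^s \cup \cthree^b$ the priority argument fails, because both subclasses out-prioritize $\cone \cup \ctwo$ and may legitimately remain in $\Phi(S)$. I therefore invoke minimality of $|S|$ together with the fact that $\epsilon_j \leq 1/4$ for every $\agent_j \in \cthree^s \cup \cthree^b$. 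Minimality forces $S \setminus \{\ite\}$ to be infeasible for every $\ite \in S$, and in particular $\valu_j(S \setminus \{\ite\}) < \epsilon_j \leq 1/4$. By Corollary~\ref{small_c3}, at most one item of $\fitems$, and hence of $S$, has value at least $1/4$ to $\agent_j$; since $|S| \geq 2$, there exists some $\ite^* \in S$ with $\valu_j(\{\ite^*\}) < 1/4$. Additivity then gives
\[ \valu_j(S) = \valu_j(S \setminus \{\ite^*\}) + \valu_j(\{\ite^*\}) < \tfrac{1}{4} + \tfrac{1}{4} = \tfrac{1}{2}. \]

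The hard part is the $\cthree^s \cup \cthree^b$ case: the priority rule is too weak to exclude these agents from $\Phi(S)$, so one must exploit the structure of $\fitems$ (in particular Corollary~\ref{small_c3}) together with minimality to argue that $S$ cannot be too valuable to $\agent_j$. A mild subtlety is that $\agent_j$'s subclass among $\cthree^f,\cthree^s,\cthree^b$ can fluctuate during the second phase, but this does not affect the argument: $\valu_j(\secondset_i)$ is a time-independent quantity, so it suffices to evaluate $\agent_j$'s subclass at round $\mathbb{R}_z$, and the three cases above are exhaustive.
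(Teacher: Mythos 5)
Your proof is correct and follows essentially the same approach as the paper: fix the round $\mathbb{R}_z$, split on whether $\agent_j$ is in $\cthree^f$ (priority rule) or $\cthree^s \cup \cthree^b$ (minimality of $S$ plus Corollary~\ref{small_c3}). The only difference is cosmetic: the paper further sub-splits the latter case into $|S|=2$ (invoking Lemma~\ref{lsmall_c3}) and $|S|>2$, whereas you unify these by observing that $|S|\geq 2$ already guarantees some $\ite^*\in S$ with $\valu_j(\{\ite^*\})<1/4$, which is a mild streamlining.
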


The results of Lemmas \ref{prvalue} and \ref{m_1} are summarized in Table \ref{table1}.

\begin{table}[t]
	\caption{Summary of value lemmas for $\secondset_i$}
	\label{table1} 
	\begin{center}
\begin{tabular}{ |c|c|c|c|c|}
	\hline
	& $\forall \agent_i \in \cone$ & $\forall \agent_i \in \ctwo$ & $\forall \agent_i \in \cthree$\\ 
	\hline
	$\forall \agent_j \in \satagents_1^s$ & - & $\valu_i(\secondset_j) < 2\epsilon_i (\star)$  & $\valu_i(\secondset_j) < 1/2(\dagger)$     \\  
	\hline
	$\forall \agent_j \in \satagents_2^s$ & $\valu_i(\secondset_j) < \epsilon_i(\star)$ & - & $\valu_i(\secondset_j) < 1/2(\dagger)$  \\
	\hline
	$\forall \agent_j \in \satagents_3$ & $\valu_i(\secondset_j) < \epsilon_i(\star)$  & $\valu_i(\secondset_j) < \epsilon_i(\star)$  & -  \\
	\hline
	
\end{tabular}
\end{center}
$\hspace{150pt}\star$ : Lemma \ref{prvalue} $\hspace{10pt}\dagger$: Lemma \ref{m_1}\\
\end{table}

\subsection{The Algorithm Finds a $3/4$-$\MMS$ Allocation}\label{additiveproofs}
In the rest of this section, we prove that the algorithm finds a $3/4$-$\MMS$ allocation. For the sake of contradiction, suppose that the second phase is terminated, which means $\fitems$ is not feasible anymore, but not all agents are satisfied. Such an unsatisfied agent belongs to one of the Clusters $\cone$ or $\ctwo$, or $\cthree$. In Lemmas \ref{c3null}, \ref{c1null}, and \ref{c2null}, we separately rule out each of these possibilities. This implies that all the agents are satisfied and contradicts the assumption. For brevity the proofs are omitted and included in Appendix \ref{additiveproofappendix}. We begin with Cluster $\cthree$.
\begin{lemma}
	\label{c3null}
	At the end of the algorithm we have $\cthree = \emptyset$.
\end{lemma}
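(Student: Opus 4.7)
The plan is to argue by contradiction. Suppose $\cthree \neq \emptyset$ after the second phase terminates, and fix any agent $\agent_i \in \cthree$. Since the algorithm halted, $\fitems$ is infeasible for $\agent_i$, meaning $\valu_i(\fitems) < 1/2$ if $\agent_i \in \cthree^f$ and $\valu_i(\fitems) < \epsilon_i \leq 1/4$ if $\agent_i \in \cthree^b \cup \cthree^s$. I will bound $\valu_i(\items)$ from above by aggregating per-agent value lemmas, and show the sum is strictly less than $n \cdot \MMS_i = n$, contradicting additivity of $\valu_i$.

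Decompose $\valu_i(\items) = \valu_i(\firstset_i) + \valu_i(\secondset_i) + \sum_{j \neq i}(\valu_i(\firstset_j) + \valu_i(\secondset_j)) + \valu_i(\fitems)$. The contribution of $\agent_i$'s own bundle is either $0$ (when $\agent_i \in \cthree^b \cup \cthree^f$) or $3/4 - \epsilon_i$ (when $\agent_i \in \cthree^s$, since $\secondset_i = \emptyset$ and $\valu_i(\firstset_i) = 3/4 - \epsilon_i$ by definition of $\epsilon_i$).

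The core step is to show $\valu_i(\firstset_j) + \valu_i(\secondset_j) < 1$ for every other agent $\agent_j$. When $\agent_j \in \cone \cup \ctwo$ (whether still semi-satisfied or moved to $\satagents$), Lemmas~\ref{forc2c3}, \ref{forc3}, \ref{gsmallc1r}, \ref{cr2smallc3}, and~\ref{m_1} bound each of $\valu_i(\firstset_j)$ and $\valu_i(\secondset_j)$ by $1/2$, so the sum is strictly below $1$. When $\agent_j \in \cthree^s \cup \cthree^b \cup \cthree^f$ remains unsatisfied, $\secondset_j = \emptyset$ and Lemma~\ref{general} gives $\valu_i(\firstset_j) < 3/4 < 1$; the definitions of $\cthree^b$ and $\cthree^f$ sharpen this when needed. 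When $\agent_j \in \satagents_3$, I identify the round $\mathbb{R}_z$ of the second phase at which $\agent_j$ was satisfied: because $\agent_i$ was not selected at that round, either $\agent_i \notin \Phi(S)$ (yielding $\valu_i(S) < \max(1/2,\epsilon_i)$ by infeasibility of $S = \secondset_j$ for $\agent_i$'s cluster at that time) or $\agent_i \prec_{pr} \agent_j$ and an analogous analysis applies. Combined with Lemma~\ref{general} bounding $\valu_i(\firstset_j) < 3/4$, this yields $\valu_i(\firstset_j) + \valu_i(\secondset_j) < 1$.

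Summing the bounds, in every subcase $\valu_i(\items)$ is strictly less than $n$ (concretely, at most $n-1/4$, $n-1/2$, and $n-3/4$ for $\agent_i \in \cthree^s, \cthree^f, \cthree^b$ respectively), contradicting $\valu_i(\items) \geq n \cdot \MMS_i = n$. The main obstacle is the $\satagents_3$ case, since a bundle $\firstset_j$ there may have been swapped between several agents in $\cthree$ during the second phase via the borrower mechanism; controlling this will rely on Lemma~\ref{cef} (cycle-envy-freeness of $\cthree^s$ is preserved throughout the second phase) together with Lemma~\ref{c3fsmall} (freshly assigned $\firstset$'s in $\cthree^f$ remain small for other $\cthree$ agents), which together ensure that the bound $\valu_i(\firstset_j) + \valu_i(\secondset_j) < 1$ survives every reassignment.
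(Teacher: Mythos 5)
Your high-level plan is the right one — bound $\sum_j \valu_i(\firstset_j \cup \secondset_j)$ from above, subtract from $\valu_i(\items) = n$, and conclude that $\fitems$ remains feasible for $\agent_i$, contradicting termination. That is the paper's strategy. But two concrete gaps remain.

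First, "fix any agent $\agent_i \in \cthree$" is too weak. For the bound $\sum_{\agent_j \in \cthree^s} \valu_i(\firstset_j) \leq (3/4 - \epsilon_i) |\cthree^s|$ you need $\valu_i(\firstset_j) \leq \valu_i(\firstset_i) = 3/4-\epsilon_i$ for every $\agent_j \in \cthree^s$, which requires $\agent_i$ to be a \emph{winner} of $\cthree^s$. An arbitrary $\agent_i \in \cthree^s$ may envy some $\agent_j \in \cthree^s$, in which case $\valu_i(\firstset_j) > 3/4-\epsilon_i$ and your per-agent budget is exceeded. The paper is careful here: it observes that $\cthree^b \neq \emptyset$ forces $\cthree^s \neq \emptyset$, picks a winner of $\cthree^s$ when it exists, and only falls back to an arbitrary agent of $\cthree^f$ when $\cthree^s = \cthree^b = \emptyset$.

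Second, your ``core step'' — showing $\valu_i(\firstset_j) + \valu_i(\secondset_j) < 1$ for \emph{every} $\agent_j \in \satagents_3$ — is simply not true, and the paper's Lemma \ref{c3sat} is precisely about dealing with this. When $\agent_i \in \cthree^s \cup \cthree^b$ at the round $\agent_j$ is satisfied, Lemma \ref{c3bssmall} gives you either $\valu_i(\secondset_j) < \epsilon_i$ or $\valu_i(\firstset_j) \leq 3/4-\epsilon_i$. In the latter case, to bound $\valu_i(\secondset_j)$ you peel off the heaviest item $\ite_l$; by minimality $\valu_i(\secondset_j \setminus \{\ite_l\}) < \epsilon_i$, but $\ite_l$ itself can be worth up to $1/2$ (it may be the unique item of $\fitems$ with $\valu_i \geq 1/4$ guaranteed by Corollary \ref{small_c3}), so $\valu_i(\firstset_j \cup \secondset_j)$ can reach $5/4$, not $1$. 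The paper's resolution is that this can happen for \emph{at most one} $\agent_j$ (because only one such heavy item exists for $\agent_i$), yielding the aggregate bound $\sum_{\agent_j \in \satagents_3} \valu_i(\firstset_j \cup \secondset_j) < |\satagents_3| + 1/4$, and then the final computation is done with this $1/4$ slack. Your argument as written produces no such slack budget, and your claimed margins ($n-1/4$, $n-1/2$, $n-3/4$) don't survive the exceptional agent. You will also need to track that your bound must leave at least $\epsilon_i$ (or $1/2$ for $\cthree^f$) of residual value in $\fitems$, not merely show $\valu_i(\items) < n$; the residual is exactly what certifies $\fitems$ is still feasible.
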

To prove Lemma \ref{c3null} we consider two cases separately. If $\cthree \neq \emptyset$, either there exists an agent $\agent_i \in \cthree^s \cup \cthree^b$ or all the agents of $\cthree$ are in $\cthree^f$. If the former holds, we show $\cthree^s$ is non-empty and assume $\agent_i$ is a winner of $\cthree^s$. We bound the total value of $\agent_i$ for all the items dedicated to other agents and show the value of the remaining items in $\fitems$ is at least $\epsilon_i$ for $\agent_i$. This shows set $\fitems$ is feasible for $\agent_i$ and contradicts the termination of the algorithm. In case all agents of $\cthree$ are in $\cthree^f$, let $\agent_i$ be an arbitrary agent of $\cthree^f$. With a similar argument we show that the value of $\agent_i$ for the remaining unassigned items is at least $3/4$ and conclude that $\fitems$ is feasible for $\agent_i$ which again contradicts the termination of the algorithm.

Next, we prove a similar statement for $\cone$. 
\begin{lemma}
	\label{c1null}
	At the end of the algorithm we have $\cone = \emptyset$.
\end{lemma}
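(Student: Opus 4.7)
The plan is to argue by contradiction, closely paralleling the skeleton used for Lemma~\ref{c3null}. Assume $\cone \neq \emptyset$ at the end of the algorithm. By Lemma~\ref{c3null} we may assume $\cthree = \emptyset$, so every agent not in $\cone \cup \ctwo$ is already in $\satagents$. By Lemma~\ref{cef} the set $\cone$ is cycle-envy-free even at termination, so Lemma~\ref{dag} guarantees a winner $\agent_i$ of $\cone$. Since the second phase has terminated, $\fitems$ is not feasible for $\agent_i$; because the feasibility threshold for any agent in $\cone$ is $\epsilon_i$, this forces $\valu_i(\fitems) < \epsilon_i$. I will derive a contradiction by showing that in fact $\valu_i(\fitems) \geq \epsilon_i$.

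The starting inequality is $\valu_i(\items) \geq n \cdot \MMS_i = n$. Decomposing
\[
\valu_i(\items) \;=\; \valu_i(\firstset_i) \;+\; \sum_{\agent_j \neq \agent_i} \valu_i(\firstset_j \cup \secondset_j) \;+\; \valu_i(\fitems),
\]
and using $\valu_i(\firstset_i) = 3/4 - \epsilon_i$ together with $\valu_i(\fitems) < \epsilon_i$, the desired contradiction reduces to proving
\[
\sum_{\agent_j \neq \agent_i} \valu_i(\firstset_j \cup \secondset_j) \;\leq\; n - \tfrac{3}{4}.
\]
The plan is to partition the $n-1$ remaining agents into the categories $\cone \setminus \{\agent_i\}$, $\ctwo$, $\satagents_1^r$, $\satagents_2^r$, $\satagents_1^s$, $\satagents_2^s$, and $\satagents_3$, and to bound the contribution from each category using Tables~\ref{table0}, \ref{table4}, and \ref{table1} together with the cycle-envy-freeness of $\cone$.

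The cleanest contributions come from categories where the value lemmas give sharp bounds. For every $\agent_j \in \cone \setminus \{\agent_i\}$, Observation~\ref{epsofcluster} applied to the winner $\agent_i$ yields $\valu_i(\firstset_j) \leq 3/4 - \epsilon_i$. For $\agent_j \in \ctwo$, Lemma~\ref{general} gives $\valu_i(\firstset_j) < 3/4$. For $\agent_j \in \satagents_2^r$, Lemma~\ref{general} combined with Lemma~\ref{cr2smallc1} gives $\valu_i(\firstset_j \cup \secondset_j) < 3/4 + \epsilon_i \leq 1$. For $\agent_j \in \satagents_2^s \cup \satagents_3$, Table~\ref{table0} bounds $\valu_i(\firstset_j)$ by $3/4$ and Lemma~\ref{prvalue} bounds $\valu_i(\secondset_j)$ by $\epsilon_i$. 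Summing these per-category bounds, the total is controlled by a linear function of $|\cone|$, $|\ctwo|$, $|\satagents_2^r|$, etc., and — crucially — the savings of $\epsilon_i$ on each of the $|\cone|-1$ winners' neighbors in $\cone$ provide exactly the slack needed to reach $n - 3/4$.

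The main obstacle is bounding $\valu_i(\firstset_j \cup \secondset_j)$ for $\agent_j \in \satagents_1^r \cup \satagents_1^s$, the agents that originated in the same cluster $\cone$: here the tables contain ``-'' entries and no direct per-agent value lemma applies. For $\satagents_1^r$ the two allocated items are singletons produced respectively by the $\MCMWM$ $M$ in $G_{1/2}$ and by the refinement matching $M_1$, and I will exploit the maximum-weight property of $M$ (Lemma~\ref{wm}) and the ``no-envy'' guarantee of Lemma~\ref{nicematch} to show that each such pair contributes at most $3/4$ to $\agent_i$, invoking $3/4$-irreducibility through Lemma~\ref{remove2} whenever $\valu_i$ of the pair would otherwise reach $3/4$. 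For $\satagents_1^s$, the second component $\secondset_j$ came from the bag-filling round in which $\agent_j$ was chosen over $\agent_i$ in the priority order $\prec_{pr}$, and an analogue of Lemma~\ref{prvalue} (applied within the same cluster using the cycle-envy-freeness of $\cone$ at the moment of $\agent_j$'s satisfaction) gives the required $\epsilon_i$-scale bound. Combining all of these contributions, the total $\sum_{j\neq i}\valu_i(\firstset_j \cup \secondset_j)$ falls below $n-3/4$, completing the contradiction.
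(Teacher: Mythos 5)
Your approach is genuinely different from the paper's, and it has a concrete gap. The paper proves Lemma~\ref{c1null} by a global \emph{coloring and double-counting} argument over the winner's optimal $n$-partitioning~$\mathcal P$: items are colored blue or white, the partitions are classified by how many blue items they contain ($B_0$, $B_1$, $B_2$, with $B_1$ further split into $\hat{B_1}$ and $\tilde{B_1}$), and the key auxiliary Lemma~\ref{eps} is applied to the $|\hat{B_1}|$ agents whose \emph{first} item the winner strictly envies, supplying a one-sided savings. That argument never asserts a uniform per-agent bound on $\valu_i(\firstset_j \cup \secondset_j)$; the savings on $\hat{B_1}$-agents compensate for overshoots elsewhere in aggregate only, via the partition counting.

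You instead try a direct per-agent summation, showing $\sum_{j\neq i}\valu_i(\firstset_j\cup\secondset_j)\le n-3/4$ by category. The breakdown is in the $\satagents_1^r$ (and similarly $\satagents_1^s$) category. You claim each such pair contributes at most $3/4$ to $\agent_i$ ``invoking $3/4$-irreducibility through Lemma~\ref{remove2}.'' That lemma does not yield such a bound: $\valu_i(\{\ite_j,\ite_k\}) \ge 3/4$ is not itself contradicted by irreducibility---Lemma~\ref{remove2} only asserts the \emph{existence} of another agent $\agent_{i'}$ with $\valu_{i'}(\{\ite_j,\ite_k\})>1$, and the reduction of Lemma~\ref{remove3} kicks in only when \emph{every} other agent values the pair at most $1$, which you have not shown. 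Worse, the bound is not merely unproven but false in general. Split $\satagents_1^r$ into agents the winner $\agent_i$ envies and those it does not. For the envied ones, Lemma~\ref{nicematch} (equivalently Lemma~\ref{eps}) does give $\valu_i(\secondset_j)<\epsilon_i$, so the pair contributes less than $3/4+\epsilon_i\le 1$. But for $\agent_j\in\satagents_1^r$ that $\agent_i$ does \emph{not} envy, all you get is $\valu_i(\firstset_j)<3/4-\epsilon_i$, while $\valu_i(\secondset_j)$ has no bound stronger than the irreducibility bound $<3/4$ from Lemma~\ref{remove1}; the pair can contribute up to nearly $3/2-\epsilon_i$, blowing the per-agent budget past $1$, let alone $3/4$. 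With those entries, a per-agent summation cannot reach $n-3/4$. This is precisely what forces the paper to move from a per-agent bound to the global coloring argument: it is the count of partitions in $\hat{B_1}$, compared against the number of blue slots, that absorbs the excess. To repair your proof you would have to import the paper's partition-level accounting, at which point your route collapses into the paper's.
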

Proof of Lemma \ref{c1null} follows from a coloring argument. Let $\agent_i$ be a winner of $\cone$. We color all items in either blue or white. Roughly speaking, blue items are in a sense \textit{heavy}, i.e., they may have a high valuation to $\agent_i$ whereas white items are somewhat \textit{lighter} and have a low valuation to $\agent_i$. Next, via a double counting argument, we show that $\agent_i$'s value for the items of $\fitems$ is at least $\epsilon_i$ and thus $\fitems$ is feasible for $\agent_i$. This contradicts $\cone = \emptyset$ and shows at the end of the algorithm all agents of $\cone$ are satisfied.

Finally, we show that all the agents in Cluster $\ctwo$ are satisfied by the algorithm.
\begin{lemma}
	\label{c2null}
	At the end of the algorithm we have $\ctwo = \emptyset$.
\end{lemma}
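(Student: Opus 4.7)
The plan mirrors the proofs of Lemmas \ref{c3null} and \ref{c1null}. Assume towards a contradiction that $\ctwo \neq \emptyset$ at the end of the algorithm, and use Lemma \ref{cef} to pick a winner $\agent_i$ of the (still cycle-envy-free) current $\ctwo$. The goal is to show $\valu_i(\fitems) \geq \epsilon_i$, which would mean $\fitems$ is feasible for $\agent_i$ and contradict the termination condition of Algorithm \ref{second-phase}.

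Since $\valu_i$ is additive and $\MMS_i = 1$, $\valu_i(\items) \geq n$. Using $\cone = \cthree = \emptyset$ (Lemmas \ref{c1null} and \ref{c3null}) decompose
\[
\valu_i(\items) = \valu_i(\firstset_i) + \sum_{\agent_j \in \ctwo\setminus\{\agent_i\}} \valu_i(\firstset_j) + \sum_{\agent_j \in \satagents} \bigl(\valu_i(\firstset_j) + \valu_i(\secondset_j)\bigr) + \valu_i(\fitems);
\]
it suffices to show that the first three groups sum to strictly less than $n - \epsilon_i$. The winner property and Observation \ref{epsofcluster} bound each $\valu_i(\firstset_j)$ for $\agent_j$ still in $\ctwo$ by $3/4 - \epsilon_i$. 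The value-lemmas collected in Tables \ref{table0}, \ref{table4}, \ref{table1} handle the satisfied agents: combining Lemmas \ref{forc2c3} and \ref{gsmallc1r} gives bundle contribution $<1$ for each $\agent_j \in \satagents_1^r$; Lemmas \ref{forc2c3} and \ref{prvalue} yield $<1/2 + 2\epsilon_i$ for $\satagents_1^s$; Lemmas \ref{general} and \ref{prvalue} yield $< 3/4 + \epsilon_i$ for $\satagents_3$ (priority rule~(I) of Definition \ref{priority} gives $\agent_i \prec_{pr} \agent_j$); and for $\satagents_2^s$ a case split on the topological order together with the observation that $\agent_j$ being chosen over $\agent_i$ in the second phase implies either $\agent_j \prec_O \agent_i$ (so Observation \ref{epsofcluster} gives $\valu_i(\firstset_j) \leq 3/4 - \epsilon_i$) or $\valu_i(\secondset_j) < \epsilon_i$ (otherwise $\agent_i \in \Phi(\secondset_j)$ would have had priority) yields bundle contribution $< 3/4 + \epsilon_i$.

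The main obstacle is $\satagents_2^r$. The item $\secondset_j$ lives in $\itemsv' \setminus \itemsv'_{1/2}$, so $\valu_i(\secondset_j) < 1/2$, and for every $\agent_j \in \satagents_2^r$ processed \emph{after} $\agent_i$ in Algorithm \ref{c2ref} a sharper $\valu_i(\secondset_j) < \epsilon_i$ holds (otherwise $\agent_i$ would have been refined out at its own turn). The remaining difficulty is with $\agent_j \in \satagents_2^r$ that lie \emph{below} $\agent_i$ in the refinement order: Observation \ref{epsofcluster} then gives $\valu_i(\firstset_j) \leq 3/4 - \epsilon_i$ but only the loose $\valu_i(\secondset_j) < 1/2$ is immediate, producing a per-agent bound of $5/4 - \epsilon_i$ that is too loose by $1/4$. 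To recover the missing $\epsilon_i$ slack, the plan is to introduce an auxiliary graph $H$ whose vertices are the items currently allocated as some $\firstset_j$ to an agent that has ever been in $\ctwo$, with an edge joining two items whenever they were merged into a single vertex by Algorithm \ref{addvertex}. By construction $H$ is a matching; Lemma \ref{pairsmall} bounds each edge-value by $< 3/4$ and Lemma \ref{remove1} bounds each unmatched vertex by $< 3/4$, while Observation \ref{fsmallc2} ensures any leftover items in $\itemsv''\setminus\itemsv''_{1/2}$ are of $\agent_i$-value strictly below $\epsilon_i$.

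A double-counting on $H$, threaded through the $\MCMWM$ $M'$ of $G'_{1/2}$ and the fixed topological order of Algorithm \ref{c2ref}, will reallocate the per-agent excess between the ``below-$\agent_i$'' agents of $\satagents_2^r$ and the light singletons they release, giving $\sum_{\agent_j \in \satagents_2^r}\bigl(\valu_i(\firstset_j)+\valu_i(\secondset_j)\bigr) < |\satagents_2^r|$ with an $\epsilon_i$-sized slack. Summing all the per-cluster contributions then yields $\valu_i(\items\setminus\fitems) < n - \epsilon_i$, hence $\valu_i(\fitems) > \epsilon_i$, contradicting the termination of Algorithm \ref{second-phase}. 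The hardest step is rigorously executing this double-count: disentangling the interplay of the pair-merging in Algorithm \ref{addvertex}, the $\MCMWM$ $M'$, and the topological refinement of Algorithm \ref{c2ref} is the $\ctwo$-analogue of the coloring argument used in the proof of Lemma \ref{c1null}.
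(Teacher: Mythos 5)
Your high-level framing is on track: assuming $\ctwo\neq\emptyset$, picking a winner $\agent_i$, decomposing $\valu_i(\items)$ over $\firstset_j,\secondset_j$, and invoking Lemmas \ref{c1null}, \ref{c3null} and the value-lemmas to bound each satisfied agent's bundle by $1$ (or nearly so) is exactly how the paper opens the argument, and your per-agent bounds for $\satagents_1^r$, $\satagents_1^s$, $\satagents_2^s$, and $\satagents_3$ are correct. You also correctly isolate the genuine obstruction: agents $\agent_j\in\satagents_2^r$ with $\agent_j\prec_O\agent_i$, for which only $\valu_i(\firstset_j)\leq 3/4-\epsilon_i$ and $\valu_i(\secondset_j)<1/2$ are available.

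However, the resolution of that obstruction has a gap. With Corollary \ref{forc2small} you can sharpen $\valu_i(\secondset_j)<3/8$ for all but one such agent, but the per-agent excess over $1$ is still $1/8-\epsilon_i$, and when $\epsilon_i<1/8$ this excess is strictly positive for each bad agent. Nothing bounds the number $k$ of such agents in terms of $|\ctwo|$, so $k(1/8-\epsilon_i)$ can dwarf the available slack $(|\ctwo|-1)(1/4+\epsilon_i)$. The auxiliary matching $H$ on the $\firstset_j$-items cannot repair this: Observation \ref{epsofcluster} already gives the sharper $\leq 3/4-\epsilon_i$ on $\firstset_j$, so $H$'s $<3/4$ bounds via Lemmas \ref{pairsmall}/\ref{remove1} add nothing, and the excess lives entirely in the $\secondset_j$-items, which $H$ does not touch. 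You say yourself that the double-count is unproven; I do not see a way to thread it through $H$ and $M'$ that recovers $k\cdot\Theta(1)$ slack.

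The paper handles this with two ideas your plan lacks. First, a case split on $\epsilon_i$: when $\epsilon_i\geq 1/8$ the per-agent excess is nonpositive (Lemma \ref{c2rem}) and the easy accounting already closes. Second, for $\epsilon_i<1/8$ it abandons per-agent bounds altogether and moves to a global argument: Lemma \ref{forc2} lets it delete the items $\items^r_1$ given to $\satagents_1^r$ while preserving $\MMS_{\valu_i}^{n'}(\items')\geq 1$, and then it fixes an optimal $n'$-partitioning of $\items'$, colors items with four colors (blue for single-item $\firstset_j$, red pairs for merged $\firstset_j$, green for $\secondset_j$, white otherwise), classifies partitions into $B_{00},B_{10},B_{01}$ by their blue/red content, builds a multigraph $G_{01}$ on the $B_{01}$-partitions (not on the items), and double-counts via Lemmas \ref{lowerwhite1}--\ref{lowerwhite4} and \ref{ebound}--\ref{B00size} to show $\valu_i(\fitems)\geq 1/4$. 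Your plan lacks both the $\epsilon_i$ case split and the partition-based coloring count, so as written it does not establish $\valu_i(\fitems)\geq\epsilon_i$ when $\epsilon_i$ is small and $|\satagents_2^r|$ is large.
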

The proof of Lemma \ref{c2null} is a similar to both proofs of Lemmas \ref{c3null} and \ref{c1null}. Let $\agent_i$ be winner of Cluster $\ctwo$. We consider two cases separately. (i) $\epsilon_i \geq 1/8$ and (ii) $\epsilon_i < 1/8$.
In case $\epsilon_i \geq 1/8$, we use a similar argument to the proof of Lemma \ref{c3null} and show $\fitems$ is feasible for $\agent_i$. If $\epsilon_i < 1/8$ we again use a coloring argument, but this time we color the items with 4 different colors. Again, via a double counting argument we show $\fitems$ is feasible for $\agent_i$ and hence every agent of $\ctwo$ is satisfied when the algorithm terminates. 
\begin{theorem}
	\label{34main}
	All the agents are satisfied before the termination of the algorithm.
\end{theorem}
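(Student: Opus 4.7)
The plan is to combine the three preceding lemmas---Lemmas~\ref{c3null}, \ref{c1null}, and \ref{c2null}---which collectively assert that each of the clusters $\cone$, $\ctwo$, and $\cthree$ is empty at the moment the second phase terminates. Since an agent's satisfaction status is encoded by which part of the current partition of $\agents$ they lie in, emptiness of all three clusters forces $\agents = \satagents$, which is precisely the statement of the theorem.

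First, I would make explicit the invariant that at every point of the algorithm $\agents$ is partitioned as $\cone \sqcup \ctwo \sqcup \cthree \sqcup \satagents$. During Phase~1 each agent is placed into exactly one of the three clusters, and the refinement subroutines for $\cone$ and $\ctwo$ only move agents into $\satagents$. During Phase~2, inspection of Algorithm~\ref{second-phase} shows that an agent either stays in its current cluster, migrates among $\cthree^f, \cthree^s, \cthree^b$ (all subsets of $\cthree$), or is added to $\satagents$ and removed from its cluster. In particular, no agent is ever created, duplicated, or dropped, so the partition invariant is preserved until the loop condition ``$\fitems$ is feasible'' fails.

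Second, I would invoke the three emptiness lemmas in the same order they were proved. Lemma~\ref{c3null} yields $\cthree = \emptyset$ at termination; with this established, Lemma~\ref{c1null} applies to give $\cone = \emptyset$; and finally Lemma~\ref{c2null} gives $\ctwo = \emptyset$. Combining these with the partition invariant of the previous paragraph yields $\agents = \satagents$ at termination.

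Finally, I would close by recalling the definition of $\satagents$: an agent $\agent_i$ joins $\satagents$ only after being allocated sets $\firstset_i, \secondset_i$ with $\valu_i(\firstset_i \cup \secondset_i) \geq 3/4$, which, under the normalization $\MMS_i = 1$ assumed throughout Section~\ref{additive}, is exactly the $3/4$-$\MMS$ guarantee. Hence every agent is satisfied, completing the proof. There is no genuine obstacle left at this stage---all the substantive work (the coloring and double-counting arguments) is buried in Lemmas~\ref{c3null}--\ref{c2null}; the theorem itself is essentially a bookkeeping corollary.
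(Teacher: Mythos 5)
Your proof is correct and follows essentially the same route as the paper: invoke Lemmas~\ref{c3null}, \ref{c1null}, and \ref{c2null} to conclude that all three clusters are empty at termination, hence $\agents = \satagents$. The only difference is that you make the partition invariant $\agents = \cone \sqcup \ctwo \sqcup \cthree \sqcup \satagents$ explicit, which the paper leaves implicit; that is a harmless and slightly more careful piece of bookkeeping.
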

\begin{proof}
	By Lemmas \ref{c3null}, \ref{c1null}, and \ref{c2null}, at the end of the algorithm all agents are satisfied which means each has received a subset of items which is worth at least $3/4$ to him.
\end{proof}

\subsection{Algorithm}\label{additive:algorithm}
In this section, we present a polynomial time algorithm to find a $(3/4-\epsilon)$-$\MMS$ allocation in the additive setting. More precisely, we show that our method for proving the existence of a $3/4$-$\MMS$ allocation can be used to find such an allocation in polynomial time. 
Recall that our algorithm consists of two main phases: The clustering phase and the $\bagfilling$ phase. In Sections \ref{algcluster} and \ref{sphase} we separately explain how to implement each phase of the algorithm in polynomial time. Given this, there are still a few computational issues that need to be resolved. First, in the existential proof, we assume $\MMS_i = 1$ for every agent $\agent_i \in \agents$.  Second, we assume that the problem is $3/4$-irreducible. Both of these assumptions are without loss of generality for the existential proof due to Observation \ref{reducibility} and the fact that one can scale the valuation functions to ensure $\MMS_i =1$ for every agent $\agent_i$. However, the computational aspect of the problem will be affected by these assumptions.  
The first issue can be alleviated by incurring an additional $1+\epsilon$ factor to the approximation guarantee. \epsteinefficient ~\cite{epstein2014efficient} show that for a given additive function $f$, $\MMS_f^n$ can be approximated within a factor $1+\epsilon$ for constant $\epsilon$ in time $\poly(n)$. Thus, we can scale the valuation functions to ensure $\MMS_i = 1$ while losing a factor of at most $1+\epsilon$. Therefore, finding a $(3/4-\epsilon)$-$\MMS$ allocation can be done in polynomial time if the problem is $3/4$-irreducible. Finally, in Section \ref{irre} we show how to reduce the $3/4$-reducible instances and extend the algorithm to all instances of the problem. The algorithm along with the reduction yields Theorem \ref{addpoly}

\begin{theorem}
	\label{addpoly}
	For any $\epsilon > 0$, there exists an algorithm that finds a $(3/4 - \epsilon)$-$\MMS$ allocation in polynomial time. 
\end{theorem}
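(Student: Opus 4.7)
The plan is to convert each step of the existential proof into a polynomial-time procedure, paying a multiplicative $(1+O(\epsilon))$ loss to handle the fact that exact $\MMS_i$ values are NP-hard to compute. Fix $\epsilon>0$. First, I invoke the PTAS of Epstein and Levin~\cite{epstein2014efficient} on each agent $\agent_i$ to obtain an estimate $\widetilde{\MMS}_i$ satisfying $(1-\epsilon')\MMS_i \le \widetilde{\MMS}_i \le \MMS_i$ for a suitably small $\epsilon' = \Theta(\epsilon)$, and rescale each valuation $\valu_i$ by $1/\widetilde{\MMS}_i$ so that the target threshold for every agent becomes $3/4$ (while $\MMS_i$ in the rescaled instance is at most $1/(1-\epsilon')$). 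Everywhere the existential proof tested a condition of the form $\valu_i(S)\ge\alpha\MMS_i$, the algorithm will instead test $\valu_i(S)\ge\alpha$; the resulting guarantee degrades from $3/4$ to $3/4-\epsilon$.

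Next, I implement the reducibility step. By Observation~\ref{reducibility} it is enough to solve $3/4$-irreducible instances, so the algorithm first searches for a reduction certificate. The forms of reducibility that are actually exploited inside the existential proof are: (i) a single item $\ite_j$ with $\valu_i(\{\ite_j\})\ge 3/4$ (Lemma~\ref{remove1}), (ii) a pair of items and two agents witnessing Lemma~\ref{remove2}, and (iii) a matching-style reduction witnessing Lemma~\ref{remove3}, together with the single-item refinement triggers inside the refinement subroutines. Each of these certificates can be found or ruled out in time $\poly(n,m)$: (i) by scanning, (ii) by scanning all $O(nm^2)$ agent/pair triples, and (iii) by the same matching-based test that drives the clustering phase (see below). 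If a certificate is found, we allocate the corresponding set $S$ to $T$, remove them, and recurse; since each reduction strictly decreases $n$, at most $n$ recursive calls occur.

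Once the instance is $3/4$-irreducible, I run the two phases of Section~\ref{additive:clusters}--\ref{additive:allocation} essentially verbatim. The clustering phase is polynomial because each building block is: constructing the value-graph $G$ and its $\beta$-filterings is $O(nm)$; an $\MCMWM$ of a weighted bipartite graph is computable by the Hungarian algorithm in $O((n+m)^3)$; the set $F_H(M,\partone)$ of Definition~\ref{FG} is found by a single alternating-BFS from unsaturated vertices; and the merging loop (Algorithm~\ref{addvertex}) strictly increases $|M'|$ each iteration and hence runs at most $n$ times, each iteration requiring one $\MCMWM$ and a scan of $O(m^2)$ candidate pairs. The refinement routines for $\cone$ and $\ctwo$ (Lemma~\ref{nicematch} and Algorithm~\ref{c2ref}) are straightforward matching/greedy subroutines on $O(n+m)$ vertices. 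The second phase (Algorithm~\ref{second-phase}) terminates in $O(n)$ rounds because each round either permanently satisfies an agent or converts a free agent in $\cthree^f$ into a semi-satisfied one, and the latter can happen at most $|\cthree|$ times. Within a round, a minimal feasible subset is found by the standard $\bagfilling$ scan (add items one at a time until some agent in $\cone\cup\ctwo\cup\cthree^s\cup\cthree^b\cup\cthree^f$ is triggered, then shrink to minimality), which is $O(nm)$; the priority comparison according to $\prec_{pr}$ is $O(n\log n)$.

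The main obstacle, as in all $\MMS$ algorithms, is that the approximation factor must survive the use of estimated $\MMS$ values. I expect this to be handled as follows. In the irreducible instance, every threshold used in the proof (such as $\alpha\MMS_i$ in Lemmas~\ref{remove1}--\ref{remove3} and $\epsilon_i$ in the refinement steps) is replaced by the rescaled value $\alpha$. All value-lemmas of Tables~\ref{table0}--\ref{table1} continue to hold with the constants unchanged, because their proofs only use monotonicity, additivity, and the explicit $3/4$-irreducibility certificates, not the exactness of $\MMS_i$. Consequently, when the second phase terminates, every agent $\agent_i$ receives a bundle with $\valu_i(f_i\cup g_i)\ge 3/4$ in the rescaled instance, which translates back to $\valu_i(f_i\cup g_i)\ge (3/4)\widetilde{\MMS}_i\ge (3/4)(1-\epsilon')\MMS_i\ge (3/4-\epsilon)\MMS_i$ by choosing $\epsilon' = 4\epsilon/3$. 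Combining this with the polynomial bounds on the reduction, clustering, and bag-filling stages yields the claimed $\poly(n,m)$ running time and the $(3/4-\epsilon)$-$\MMS$ guarantee, completing the proof.
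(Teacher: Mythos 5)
Your proposal is correct and follows essentially the same route as the paper's own proof: invoke the Epstein--Levin PTAS to estimate and rescale the $\MMS$ values (absorbing the $1+\epsilon$ loss), observe that one only needs to detect the three concrete consequences of $3/4$-irreducibility actually used in the existential argument (single-item threshold, the pair condition feeding Lemma~\ref{pairsmall}, and the Hall-type condition of Lemma~\ref{v1size}) rather than deciding reducibility in general, and then implement each clustering and bag-filling step with standard matching and flow subroutines. The one slip is the claim that the second phase runs for $O(n)$ rounds because a free agent in $\cthree^f$ can be converted to semi-satisfied at most $|\cthree|$ times: a semi-satisfied agent can be demoted back to $\cthree^f$ when a borrower in $\cthree^b$ reclaims its bundle, so that count is not monotone. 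The conclusion still holds by a different bound, namely that each round permanently removes a nonempty minimal feasible subset from $\fitems$, so the number of rounds is $O(m)$, which is what the paper implicitly relies on.
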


\subsubsection{The Clustering Phase}\label{algcluster}
Recall that in the clustering phase we cluster the agents into three sets $\cone$,$\ctwo$, and $\cthree$. In order to build Cluster $\cone$, we find an $\MCMWM$ of the $1/2$-filtering of the value graph. This can be trivially done in polynomial time since finding an $\MCMWM$ is polynomially tractable~\cite{west2001introduction}. However, the refinement phase of Cluster $\cone$ requires finding $F_G(\itemsv,M)$ for a giving graph $G$ and a matching $M$. In what follows, we show this problem can also be solved in polynomial time.

Notice that finding an $\MCMWM$ of $G$ can be done in polynomial time~\cite{west2001introduction}. Therefore, in order to determine $F_H(M,\partone)$, it only suffices to find the vertices of $\partone$ that are reachable from the unmatched vertices of $\parttwo$ by an alternating path. Let $\hat{X}$ be the set of these vertices. We can find $\hat{X}$ using a depth-first-search from the unmatched vertices of $\parttwo$. By definition, $F_H(M,\partone) = \parttwo \setminus \hat{X}$. Therefore, $F_H(M,\partone)$ can be found in polynomial time.

In addition to $F_G(\itemsv,M)$, we also need to find a matching of the graph which satisfies the conditions of Lemma \ref{nicematch}. We show in the following that this problem also can be solved in polynomial time. First, note that in Lemma \ref{v1size} we prove that $G_1$ has a matching that saturates all the vertices of $W_1$. Now, let $p_{\agent_k}$ be the position of $\agent_k$ in the topological ordering of $\cone$, as described in the proof of Lemma \ref{nicematch}. Furthermore, Let $M_1$ be a matching that minimizes the following expression.
$$ \sum_{(x_j,y_i) \in M_1} p_i.$$ Recall that in the proof Lemma \ref{nicematch}, we show that $M_1$ satisfies the condition described in Lemma \ref{nicematch}. Here, we show that $M_1$ can be found in polynomial time. To this end, we model this with a network design problem. 

Orient every edge $(x_j,y_i) \in G_1$ from $y_i$ to $x_j$ and set the cost of this edge to $p_{a_i}$. Also, add a source node $s$ and add a directed edge from $s$  to every vertex of $V_{\cone}$ with cost $0$. Furthermore, add a sink node $t$ and add directed edges from the vertices of $W_1$ to $t$ with cost $0$. Finally, set the capacity of all edges to $1$. 

One can observe that in a minimum cost maximum flow from $s$ to $t$ in this network, the edges with non-zero flow between $V_{\cone}$ and $W_1$ form a maximum matching $M_1$. In addition to this, since the cost of the flow is minimal, $\sum_{(x_j,y_i) \in M_1} cost(x_j,y_i)$ is minimized. Therefore, in this matching, 
$\sum_{(x_j,y_i) \in M_1} p_i$
is minimized. Thus, the matching with desired properties of Lemma \ref{nicematch} can be found in polynomial time.

The same algorithms can be used to compute Cluster $\ctwo$. Finally, we put the rest of the agents in Cluster $\cthree$.

\subsubsection{The $\bagfilling$ Phase}\label{sphase}
In each round of the second phase, we iteratively find a minimal feasible subset of $\fitems$ and allocate its items to the agent with the lowest priority in $\Phi(S)$.  Note that for a feasible set $S$, one can trivially find the agent with lowest priority in $\Phi(S)$ in polynomial time. Thus, it only remains to show that we can find a minimal feasible subset of $\fitems$ in polynomial time. 

Consider the following algorithm, namely \emph{reverse $\bagfilling$ algorithm}: Start with a bag containing all the items of $\fitems$ and so long as there exists an item $\ite_j$ in the bag such that after removing $\ite_j$, the set of items in the bag is still feasible, remove $\ite_j$ from the bag. After this process, the remaining items in the bag  form  a minimally feasible subset of $\fitems$. Therefore, this phase can be run in polynomial time.

\subsubsection{Reducibility}\label{irre}
The most challenging part of our algorithm is dealing with the $3/4$-irreducibility assumption. The catch is that, in order to run the algorithm, we don't necessarily need the $3/4$-irreducibility assumption. Recall that we leverage the following three consequences of irreducibility to prove the existential theorem.
\begin{itemize}
	\item The value of every item in $\items$ is less that $3/4$ to every agent.
	\item Every pair of items in $\itemsv'' \setminus \itemsv''_{1/2}$ is in total worth less than $3/4$ to any agent.
	\item The condition of Lemma \ref{v1size} holds.
\end{itemize}
 Therefore, the algorithm works so long as the mentioned conditions hold. Note that, although it is not clear whether determining if an instance of the problem is $3/4$-reducible is polynomially tractable, all of the above conditions can be validated in polynomial time. This is trivial for the first two conditions; we iterate over all items or pairs of items and check if the condition holds for these items. The last condition, however, is harder to validate.


The condition of Lemma \ref{v1size} holds if for all $S \subseteq W_1$, $|N(S)| > |S|$. Recall that in the proof of Lemma \ref{v1size} we showed that if this condition does not hold, then $F_{G_1}(M,\itemsv)$ is non-empty. Next, we showed that if $F_{G_1}(M,\itemsv)$ is non-empty, then we can reduce the problem via satisfying every agents of $F_{G_1}(M,\itemsv)$ by his matched item in $M$. Therefore, on the computational side, we only need to find whether $F_{G_1}(M,\itemsv)$ is empty which indeed can be determined in polynomial time. 


Note that every time we reduce the problem, $|\agents|$ is decreased by at least $1$, which implies the number of times we reduce the problem is no more than $n$. Moreover, our reduction takes a polynomial time. Thus, the running time of the algorithm is polynomial. 

\section{Submodular Agents}\label{submodular}
Previous work on the fair allocation problem was limited to the additive agents \cite{amanatidis2015approximation,Procaccia:first}. In real-world, however, valuation functions are usually more complex than additive ones. As an example, imagine an agent is interested in at most $k$ items. More precisely, he is indifferent between receiving $k$ items or more than $k$ items. Such a valuation function is called $k$-demand and cannot be modeled by additive functions. $k$-demand functions are a subclass of submodular set functions which have been extensively studied the literature of different contexts, e.g., optimization, mechanism design, and game theory \cite{buchbinder2012tight,buchbinder2015tight, fujishige2005submodular,gupta2010constrained,kim2011distributed,krause2010sfo,lee2009non,minoux1978accelerated,vondrak2008optimal}. 

 In this section, we study the fair  allocation problem where the valuations of agents are submodular. We begin by presenting an impossibility result; We show in Section \ref{submodular-upperbound} that the best guarantee that we can achieve for submodular agents is upper bounded by $3/4$. Next, we give a proof to the existence of a $1/3$-$\MMS$ allocation in this setting. This is followed by an algorithm that finds such an allocation in polynomial time. This is surprising since even finding the $\MMS$ of a submodular function is NP-hard and cannot be implemented in polynomial time unless P=NP ~\cite{epstein2014efficient}. In our algorithm, we assume we have access to query oracle for the valuation of agents; That is, for any set $S$ and any agent $\agent_i$, $\valu_i(S)$ can be computed via a given query oracle in time $O(1)$. 
 
\subsection{Upper Bound}\label{submodular-upperbound}
We begin by providing an upper bound. In this section, we show for some instances of the problem with submodular agents, no allocation can be better than $3/4$-$\MMS$. Our counter-example is generic; We show this result for any number of agents.
\begin{theorem}\label{subupperbound}
	For any $n \geq 2$, there exists an instance of the fair allocation problem with $n$ submodular agents where no allocation is better than $3/4$-$\MMS$.
\end{theorem}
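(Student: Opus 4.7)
The plan is to exhibit an explicit counter-example rather than argue from first principles. I will take an instance with $n$ agents and $2n$ items, group the items into $n$ pairs $P_k = \{e_{2k-1}, e_{2k}\}$, and define a single submodular valuation $\valu$ shared by agents $1,\ldots,n-1$ together with a second submodular valuation $\valu^*$ used by agent $n$. The valuations will be chosen so that $\MMS_i = M$ is the same value $M$ for every agent (witnessed by some balanced partition of the $2n$ items into pairs), but so that the set of bundles which give value strictly greater than $\tfrac{3}{4}M$ under $\valu$ is highly constrained, while the ``residual'' bundles that are left for agent $n$ after the first $n-1$ agents have taken such constrained bundles all have $\valu^*$-value at most $\tfrac{3}{4}M$.

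The steps, in order, will be the following. (a) Write down $\valu$ and $\valu^*$ as sums of concave caps of the form $\min(|S \cap Q|, c)$ and coverage indicators over the pairs $P_k$; since each such summand is monotone submodular, both $\valu$ and $\valu^*$ are monotone submodular. (b) Exhibit, for each agent, an explicit $n$-part partition of the $2n$ items whose minimum value is $M$, giving $\MMS_i \geq M$; then prove $\MMS_i \leq M$ by a short averaging argument, summing $\valu_i$ over the parts of any partition and using the cap bounds. (c) Characterize the ``good'' bundles under $\valu$, i.e.\ those $T$ with $\valu(T) > \tfrac{3}{4}M$: the design is such that every good bundle must contain items from at least two different pairs, and must have size exactly two (single items are capped below the threshold, and larger bundles are irrelevant by a counting argument). (d) Combine a pigeonhole/counting argument over how the $2n$ items can be split into $n$ disjoint good bundles: if agents $1,\ldots,n-1$ each receive a good bundle, the two items left for agent $n$ necessarily fall into a designated family $\mathcal{F}$ of ``conflict pairs'', and $\valu^*$ is engineered so that $\valu^*(T) \leq \tfrac{3}{4}M$ for every $T \in \mathcal{F}$.

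The main obstacle is threading the needle on the parameters of $\valu$ and $\valu^*$ so that three incompatible-looking requirements hold simultaneously: (i) genuine submodularity (not merely subadditivity) of both functions; (ii) $\MMS_\valu = \MMS_{\valu^*} = M$, witnessed by partitions that are compatible with the $2n$-item structure; and (iii) the $\tfrac{3}{4}M$ gap, which must be strict for agents $1,\ldots,n-1$ so as to rigidify their bundles, yet must be achieved by agent $n$ on exactly the residual configurations. The temptation is to enforce the cliff by a superadditive bonus on a specific configuration, but that destroys submodularity, so the cliff must instead be encoded by composing concave caps with different window sets $Q$, and the windows for $\valu$ and $\valu^*$ must be aligned so that the only $n$-part packing by good bundles of $\valu$ leaves residual pairs that lie inside the ``bad'' region of $\valu^*$. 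Once the parameters are set, the verification of (a)--(d) is routine; the entire weight of the theorem sits in showing that such a parameter choice exists and extends to arbitrarily large $n$, which is straightforward by making the construction symmetric across the first $n-1$ indices.
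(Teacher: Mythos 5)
Your overall skeleton matches the paper's: $n$ agents, $2n$ items split into $n$ pairs, a common valuation for agents $1,\ldots,n-1$ and a shifted variant for agent $n$, a pigeonhole argument forcing every agent to receive exactly two items, and a rigidification step forcing agent $n$'s residual pair to be bad. But the step you yourself flag as ``the entire weight of the theorem'' --- producing the valuations from sums of concave caps $\min(|S\cap Q|,c)$ --- does not go through, and the gap is structural rather than a matter of tuning parameters. If $\valu(S)=g(|S|)+\sum_i h(|S\cap G_i|)$ with $g,h$ concave and $g(0)=h(0)=0$, then matching the target profile (singleton $\mapsto 1$, matched pair $\mapsto 2$, mismatched pair $\mapsto 3/2$) forces $h(2)-2h(1)=1/2>0$, while concavity forces $h(2)\leq 2h(1)$. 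The same obstruction persists for other natural window families: concavity of each cap always pushes ``spreading two items across windows'' to be at least as valuable as ``stacking them in one window,'' which is precisely the opposite of the bonus you need to isolate a single designated perfect matching of size-two bundles. Your step (c) is also too coarse as stated: if the good bundles under $\valu$ are \emph{all} cross-pairs (which is what any partition-matroid-rank-like sum of caps over the $P_k$'s would give), then for $n\geq 3$ the residual left for agent $n$ can be a within-pair $P_j$, and any $\valu^*$ whose $\MMS$ is witnessed by within-pairs must value $P_j$ at $\geq M$, producing a $1$-$\MMS$ allocation and defeating the construction.

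The paper avoids all of this by not attempting a compositional representation at all. It defines $f$ directly as a table: $f(S)=|S|$ for $|S|\leq 1$, $f(S)=2$ for $|S|\geq 3$ and for $S$ equal to one of the $n$ designated pairs $\{b_{2i},b_{2i+1}\}$, and $f(S)=3/2$ for every other size-two $S$. Submodularity is then a three-line case analysis on marginals (a violating pair $S\subsetneq S'$ would need $|S'|\leq 2$ because $f$ plateaus at $2$, hence $|S|=1$; but then the marginal on $S$ is $\geq 1/2$ and on $S'$ is $\leq 1/2$). The designated pairs form a perfect matching whose removal of any $n-1$ members leaves exactly the remaining member, so the residual is pinned. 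Agent $n$'s valuation is $f\circ\inc$ for a cyclic index shift, so agent $n$'s good matching is offset by one and disjoint from the first agents' good matching; the residual therefore sits in agent $n$'s ``bad'' class and is worth $3/2=\tfrac{3}{4}\MMS_n$. I recommend you drop the concave-cap ansatz, define the valuation by the explicit table, and verify submodularity directly; this is both shorter and actually correct.
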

\begin{proof}
	We construct an instance of the problem that does not admit any $3/4+\epsilon$-$\MMS$ allocation. To this end, let $n$ be the number of agents and $\items = \{\ite_1, \ite_2, \ldots,\ite_m\}$ where $m = 2n$. Furthermore, let $f:2^{\items} \rightarrow \mathbb{R}$ be as follows:
	
	$$f(S) =
	\begin{cases}
	0, & \text{if }|S| = \emptyset \\
	1, & \text{if }|S| = 1  \\
	2, & \text{if }|S| > 2 \\
	2, & \text{if }S = \{\ite_{2i}, \ite_{2i+1} \} \text{ for some }i \\
	3/2, & \text{if }|S| = 2 \text{ and }S \neq \{\ite_{2i}, \ite_{2i+1} \} \text{ for any }i. \\
	\end{cases}$$
	
	Notice that $\MMS_f^n = 2$. Moreover, in what follows we show that $f$ is submodular. To this end, suppose for the sake of contradiction that there exist sets $S$ and $S'$ such that $S \subseteq S'$ and for some element $\ite_i$ we have:
	\begin{equation}
	f(S' \cup \{\ite_i\}) - f(S') > f(S \cup \{\ite_i\}) - f(S). \label{gharch}
	\end{equation}
	Since $f$ is monotone and $S' \neq S$, $f(S' \cup \{\ite_i\}) - f(S') > 0$ holds and thus $S'$ cannot have more than two items. Therefore, $S'$ contains at most two items and thus $S$ is either empty or contains a single element. If $S$ is empty, then adding every element to $S$ has the highest increase in the value of $S$ and thus Inequality \eqref{gharch} doesn't hold. Therefore, $S$ contains a single element and $S'$ contains exactly two elements. Thus, $f(S) = 1$ and $f(S') \geq 3/2$. Therefore, f(S $\cup \{\ite_i\}) - f(S) \geq 1/2$ and $f(S' \cup \{\ite_i\}) - f(S') \leq 1/2$ which contradicts Inequality \eqref{gharch}.
	
	 Now, for agents $\agent_1, \agent_2, \ldots, \agent_{n-1}$ we set $\valu_i = f$ and for agent $\agent_n$ we set $\valu_n = f(\inc(S))$ where $\ite_i$ is in $\inc(S)$ if and only if either $i > 1$ and $\ite_{i-1} \in S$ or $i=1$ and $\ite_m \in S$. 
	
	The crux of the argument is that for any allocation of the items to the agents, someone receives a value of at most $3/2$. In case an agent receives fewer than two items, his valuation for his items would be at most $1$. Similarly, if an agent receives more than two items, someone has to receive fewer than $2$ items and the proof is complete. Therefore, the only case to investigate is where everybody receives exactly two items. We show in such cases, $\min \valu_i(A_i) = 3/2$ for all possible allocations. If all agents $\agent_1, \agent_2, \ldots, \agent_{n-1}$ receive two items whose value for them is exactly equal to $2$, then by the construction of $f$, the value of the remaining items is also equal to $2$ to them. Thus, $\agent_n$'s valuation for the items he receives is equal to $3/2$.
\end{proof}
 
Remark that one could replace function $f$ with an XOS function 
	$$g(S) =
	\begin{cases}
	0, & \text{if }|S| = \emptyset \\
	1, & \text{if }|S| = 1  \\
	2, & \text{if }|S| > 2 \\
	2, & \text{if }S = \{\ite_{2i}, \ite_{2i+1} \} \text{ for some }i \\
	1, & \text{if }|S| = 2 \text{ and }S \neq \{\ite_{2i}, \ite_{2i+1} \} \text{ for any }i. \\
	\end{cases}$$
and make the same argument to achieve a $1/2$-$\MMS$ upper bound for XOS and subadditive agents.

\begin{theorem}\label{xosupperbound}
	For any $n > 1$, there exists an instance of the fair allocation problem with $n$ XOS agents where no allocation is better than $1/2$-$\MMS$.
\end{theorem}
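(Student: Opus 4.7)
The plan is to recycle the construction of Theorem \ref{subupperbound} verbatim, only replacing $f$ by the XOS function $g$ from the remark above: $n$ agents, $m=2n$ items $\ite_1,\ldots,\ite_m$, good pairs $\{\ite_{2i},\ite_{2i+1}\}$, valuations $\valu_i = g$ for $i<n$, and $\valu_n(S) = g(\inc(S))$ for the same cyclic shift $\inc$. To make $g$ honestly XOS I would adopt the convention $g(S)=2$ exactly when $S$ contains a good pair and $g(S)=1$ for every other non-empty $S$; this agrees with the stated formula on all subsets of size at most two, which is the only regime the lower-bound argument inspects.

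First I would certify that $g$ is XOS. For each good pair indexed by $i$, let $h_i$ be the additive function placing weight $1$ on each of its two elements and $0$ elsewhere. Then $h_i(S)=2$ iff $S$ contains the $i$-th good pair, and $h_i(S)\in\{0,1\}$ otherwise; since every non-empty $S$ meets at least one good pair, $g(S)=\max_i h_i(S)$ for all $S$. This also makes $\MMS_i = 2$ for every agent: the good-pair partition gives each of the first $n-1$ agents a bundle of value $2$, and its cyclic shift achieves the same for $\agent_n$.

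Then I would re-run the two-item case analysis from Theorem \ref{subupperbound}. If any agent holds at most one item, their value is at most $1 = \MMS_i/2$, so assume every agent holds exactly two items. If $\agent_n$'s pair is not one of his own good pairs then $\valu_n \leq 1$; otherwise $\agent_n$ grabs some $\{\ite_{2j+1},\ite_{2j+2}\}$, which destroys two of the $n$ good pairs and leaves only $n-2$ intact good pairs for the $n-1$ remaining agents, forcing by pigeonhole some $\agent_i$ ($i<n$) to receive a non-good pair with $\valu_i \leq 1$. The main obstacle is the XOS check: the literal formula in the remark assigns $g(S)=2$ to every set of size $\geq 3$, but three elements $\ite_a,\ite_b,\ite_c$ drawn from three distinct good pairs yield pair-constraints $h_i(\{\ite_a,\ite_b\})\leq 1$ etc., which force $h_i(\ite_a)+h_i(\ite_b)+h_i(\ite_c)\leq 3/2$ for any additive $h_i \leq g$, precluding an XOS representation. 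The tightening on size-$\geq 3$ sets described above is precisely what makes $g$ genuinely XOS while leaving the allocation argument completely untouched, so the construction yields an instance with arbitrary $n\geq c$ admitting no allocation better than $1/2$-$\MMS$.
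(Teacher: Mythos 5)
Your proof follows the same construction as the paper's remark (replay the submodular counterexample of Theorem \ref{subupperbound} with a modified valuation $g$), but you correctly notice that the paper's $g$, as literally written, is \emph{not} XOS once $n \geq 3$: for three items $a,b,c$ drawn from three distinct good pairs, any additive $h$ dominated by $g$ must satisfy $h(\{a,b\}), h(\{a,c\}), h(\{b,c\}) \leq 1$, so $2\bigl(h(a)+h(b)+h(c)\bigr) \leq 3$ and hence $h(\{a,b,c\}) \leq 3/2 < 2 = g(\{a,b,c\})$; no additive clause in any XOS representation could then attain $g$ on $\{a,b,c\}$. Your repaired $g$ (value $2$ exactly when $S$ contains a good pair, $1$ on every other nonempty $S$) is genuinely XOS with the explicit representation $g = \max_i h_i$, where $h_i$ is the additive indicator of the $i$-th good pair, and it agrees with the paper's $g$ on all sets of size at most two, which is the only regime the $\MMS$ computation, the cyclic shift for $\agent_n$, and the pigeonhole count over intact good pairs ever touch. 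So your proposal is not a different route but a \emph{corrected} version of the paper's sketch, and the correction is necessary for the theorem to hold for $n \geq 3$.
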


\subsection{Existential Proof}\label{submodularep}
In this section we provide an existential proof to a $1/3$-$\MMS$ allocation. Due to the algorithmic nature of the proof, we show in Section \ref{submodularalgorithm} that such an allocation can be computed in time $\mathsf{poly}(n,m)$. For simplicity, we scale the valuation functions to ensure $\MMS_i = 1$ for every agent $\agent_i$.

We begin by introducing the ceiling functions.
\begin{definition}\label{fxfunction}
	Given a set function $f(.)$, we define $\ceil{f}{x}(.)$ as follows:
	
	$$\ceil{f}{x}(S) =
	\begin{cases}
	f(S), & \text{if }f(S) \leq x \\
	x, & \text{if }f(S) > x.
	\end{cases}$$
\end{definition}
A nice property of the ceiling functions is that they preserve submodularity, fractionally subadditivity, and sub-additivity as we show in Appendix \ref{submodular-appendix}.

\begin{lemma}\label{ceilingfunctions}
	For any real number $x \geq 0$, we have:
	\begin{enumerate}
		\item Given a submodular set function $f(.)$, $f^x(.)$ is submodular.
		\item Given an XOS set function $f(.)$, $f^x(.)$ is XOS.
		\item Given an subadditive set function $f(.)$, $f^x(.)$ is also subadditive.
	\end{enumerate}
\end{lemma}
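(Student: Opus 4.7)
\textbf{Proof Proposal for Lemma~\ref{ceilingfunctions}.}
The plan is to handle each of the three classes separately, in increasing order of difficulty. The unifying observation is that the scalar operator $g(y) = \min(x,y)$ is concave and non-decreasing on $[0,\infty)$, and $f^x = g \circ f$. For the submodular and subadditive cases a direct case analysis will be cleanest; for the XOS case I will need to give an explicit construction of an XOS representation of $f^x$ from one of $f$.

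For submodularity, I would work with the diminishing returns characterization: it suffices to show that for any $A \subseteq B$ and $e \notin B$,
\[
f^x(A \cup \{e\}) - f^x(A) \;\ge\; f^x(B \cup \{e\}) - f^x(B).
\]
Let $\alpha = f(A)$, $\alpha' = f(A \cup \{e\})$, $\beta = f(B)$, $\beta' = f(B \cup \{e\})$. Monotonicity gives $\alpha \le \alpha',\,\beta \le \beta'$ and $\alpha \le \beta$; submodularity of $f$ gives $\alpha'-\alpha \ge \beta'-\beta$. I will case split on the position of $x$ relative to these four numbers. The generic cases where the truncation is not active on either side, or is active on both sides, are immediate (the inequality becomes either $\alpha'-\alpha \ge \beta'-\beta$ or $0 \ge 0$). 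The remaining mixed cases reduce to showing things like $x - \alpha \ge x - \beta$ (using $\alpha \le \beta$) or $\alpha' - \alpha \ge x - \beta$ (using $\alpha' - \alpha \ge \beta'-\beta \ge x - \beta$). Each case is a one-line inequality, so this part is routine.

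For the subadditive case, I want $f^x(S \cup T) \le f^x(S) + f^x(T)$. If $f(S\cup T) \le x$, monotonicity gives $f(S),f(T) \le x$, so $f^x$ and $f$ agree on all three sets and the desired inequality follows from subadditivity of $f$. If $f(S \cup T) > x$, then $f^x(S \cup T) = x$, and I split on whether one of $f(S),f(T)$ already exceeds $x$ (in which case the corresponding summand on the right is $x$, and we are done) or both are below $x$ (in which case $f^x(S) + f^x(T) = f(S) + f(T) \ge f(S\cup T) > x$ by subadditivity).

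The XOS case is the main obstacle, since being XOS is a structural property rather than an inequality. My plan is to start from a representation $f = \max_{j} a_j$ with each $a_j$ additive, and observe that because $\min(x,\cdot)$ is non-decreasing,
\[
f^x(S) \;=\; \min\bigl(x,\max_j a_j(S)\bigr) \;=\; \max_j \min\bigl(x, a_j(S)\bigr).
\]
It therefore suffices to exhibit an XOS representation of $h_j(S) := \min(x, a_j(S))$ for each $j$, and then take the union of all their component additive functions (max of XOS is XOS). To build an XOS representation of $\min(x,a(S))$ for an additive $a$, I will use the family of additive functions $\{c_T\}_{T \subseteq \domp(a)}$ defined by
\[
c_T(e) \;=\; a(e)\cdot \min\!\Bigl(1,\tfrac{x}{a(T)}\Bigr)\cdot \mathbf{1}_{e \in T}.
\]
A short check shows $c_T(S) \le \min(x, a(S))$ for every $T$ (splitting on whether $a(T) \le x$), while the choice $T = S$ attains equality. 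This gives $\min(x,a(S)) = \max_T c_T(S)$, which is the required XOS form, and completes the proof once combined with the max-of-XOS step above.
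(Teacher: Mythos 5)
Your proposal is correct, and the subadditive case is essentially identical to the paper's. The other two parts take noticeably different but valid routes, so a comparison is worthwhile.

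For the \emph{submodular} part, you verify the diminishing-returns (marginal-gain) characterization by case-splitting on where $x$ falls relative to $f(A), f(A\cup\{e\}), f(B), f(B\cup\{e\})$, whereas the paper works directly with the lattice inequality $f^x(A\cup B) + f^x(A\cap B) \le f^x(A) + f^x(B)$ and case-splits on which of $f(A), f(B)$ exceed $x$. The two characterizations are equivalent (for finite ground sets), so both arguments succeed; yours has slightly more cases but each is a one-liner, and the diminishing-returns view dovetails better with your motivating remark that $y \mapsto \min(x,y)$ is concave.

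For the \emph{XOS} part, the difference is more structural. The paper builds, for each subset $S$, a single additive function: it picks the additive component $f_\delta$ that attains $f(S)$, restricts it to $S$, and rescales it by $\min\bigl(1, x/f(S)\bigr)$. You instead make a two-step reduction: first observe that the non-decreasing map $\min(x,\cdot)$ commutes with pointwise max, giving $f^x(S) = \max_j \min(x, a_j(S))$; then exhibit an XOS representation $\{c_T\}_{T}$ of $\min(x, a(\cdot))$ for each individual additive $a$; finally use that a max of XOS functions is XOS. Your representation is larger (indexed by pairs $(j,T)$ rather than by sets $S$), but the two-step view isolates the key phenomenon---truncating a single additive function is already XOS---more transparently than the paper's one-shot construction. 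The verification you sketch (that $c_T(S) \le \min(x,a(S))$ with equality at $T = S$) is correct. One small technical point worth adding: your formula for $c_T$ divides by $a(T)$, so you should explicitly set $c_T \equiv 0$ when $a(T) = 0$ (this is harmless since $a(T)=0$ forces $a(e)=0$ for all $e\in T$), and you should dispatch the trivial case $x=0$ (where $f^x\equiv 0$) separately. With those two lines the argument is complete.
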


The idea behind the existence of a $1/3$-$\MMS$ allocation is simple: Suppose the problem is $1/3$-irreducible and let  $\mathcal{A} = \langle A_1, A_2, \ldots, A_n\rangle$ be an allocation of items to the agents that maximizes the following expression:
\begin{equation}\label{exex}
\sum_{\agent_i \in \agents} \ceil{\valu_i}{2/3}(A_i)
\end{equation}
We refer to Expression \eqref{exex} by $\mathsf{ex}^{(2/3)}(\mathcal{A})$. We prove $\valu_i(A_i) \geq 1/3$ for every agent $\agent_i \in \agents$. By the reducibility principal, it only suffices to show every $1/3$-irreducible instance of the problem admits a $1/3$-$\MMS$ allocation. The main ingredients of the proof are Lemmas \ref{remove1}, \ref{submodularaval} and \ref{submodulardovom}. For brevity we skip the proofs and include them in Appendix \ref{submodular-appendix}.

\begin{lemma}\label{submodularaval}
	Let $S_1, S_2, \ldots, S_k$ be $k$ disjoint sets and $f_1, f_2, \ldots, f_k$ be $k$ submodular functions. We remove an element $e$ from $\bigcup S_i$ uniformly at random to obtain sets $S^*_1 = S_1 \setminus \{e\}, S^*_2 = S_2 \setminus \{e\}, \ldots, S^*_k = S_k \setminus \{e\}$. In this case we have
	$$\mathbb{E}[\sum f_i(S^*_i)] \geq \sum f_i(S_i)\frac{|\bigcup S_i| -1}{|\bigcup S_i|}.$$
\end{lemma}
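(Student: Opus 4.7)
The plan is to expand the expectation directly and then reduce the claim to a standard marginal-sum inequality for a single submodular function, which I will establish by a telescoping argument.

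First I would observe that, because the $S_i$ are disjoint, the random element $e$ lies in exactly one set $S_j$, so $f_i(S^*_i) = f_i(S_i)$ for all $i \neq j$ and only the term indexed by $j$ is affected. Writing $N = |\bigcup_i S_i|$, this gives
$$\mathbb{E}\Big[\sum_i f_i(S^*_i)\Big] \;=\; \sum_i f_i(S_i) \;-\; \frac{1}{N}\sum_{i=1}^{k}\sum_{e \in S_i}\bigl(f_i(S_i) - f_i(S_i \setminus \{e\})\bigr).$$
Comparing this to the desired lower bound $\sum_i f_i(S_i)\cdot\frac{N-1}{N}$, the inequality reduces to showing, for each $i$ separately,
$$\sum_{e \in S_i}\bigl(f_i(S_i) - f_i(S_i \setminus \{e\})\bigr) \;\leq\; f_i(S_i).$$

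The main (and only real) step is this per-function marginal inequality. Fix $i$ and enumerate $S_i = \{e_1, e_2, \ldots, e_\ell\}$ in an arbitrary order, and let $T_j = \{e_1,\ldots,e_j\}$ with $T_0 = \emptyset$. Telescoping gives
$$f_i(S_i) - f_i(\emptyset) \;=\; \sum_{j=1}^{\ell}\bigl(f_i(T_j) - f_i(T_{j-1})\bigr).$$
Since $T_{j-1} \subseteq S_i \setminus \{e_j\}$, submodularity of $f_i$ applied to the sets $T_{j-1}$ and $S_i \setminus \{e_j\}$ (with marginal element $e_j$) yields
$$f_i(T_j) - f_i(T_{j-1}) \;\geq\; f_i(S_i) - f_i(S_i \setminus \{e_j\}).$$
Summing and using $f_i(\emptyset) \geq 0$ (valuations are nonnegative) gives the required bound $\sum_{e \in S_i}(f_i(S_i) - f_i(S_i \setminus \{e\})) \leq f_i(S_i)$. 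Plugging this back into the expression for the expectation finishes the proof.

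The only subtle point, which I do not expect to be a real obstacle, is keeping careful track of the direction of the submodularity inequality (smaller sets have larger marginals) and confirming the nonnegativity of $f_i(\emptyset)$, which follows from the global nonnegativity assumption on valuations stated in Section~\ref{prelim}.
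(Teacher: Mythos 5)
Your proof is correct and follows essentially the same route as the paper: reduce to the per-function bound $\sum_{e\in S_i}\bigl(f_i(S_i)-f_i(S_i\setminus\{e\})\bigr)\le f_i(S_i)$, then establish it by telescoping a chain of nested subsets and invoking decreasing marginals (the paper telescopes via $S_i\setminus T_{j-1}$ and $S_i\setminus T_j$ whereas you telescope via $T_{j-1}$ and $T_j$, but reversing the element order makes these identical). The only presentational difference is that you compute the expectation directly, while the paper pads each sum with $(|\bigcup S_i|-|S_i|)\,f_i(S_i)$ before averaging; the substance is the same.
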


The high-level intuition behind the proof of Lemma \ref{submodularaval} is as follows: For submodular functions, the smaller the size of a set is, the higher the marginal values for adding items to that set will be. Based on that, we show the summation of marginal decreases for removing each element is bounded by the total value of the set and that completes the proof. A complete proof is included in Appendix \ref{submodular-appendix}. 

\begin{lemma}\label{submodulardovom}
	Let $f$ be a submodular function and $S_1, S_2, \ldots, S_k$ be $k$ disjoint sets such that $f(S_i) \geq 1$ for every set $S_i$. Moreover, let $S \subseteq \bigcup S_i$ be a set such that $f(S) < 1/3$. If we pick an element $\{e\}$ of $\bigcup S_i \setminus S$ uniformly at random, we have:
	$$\mathbb{E}[f(S \cup \{e\}) - f(S)] \geq \frac{2k/3}{|\bigcup S_i \setminus S|}.$$ 
\end{lemma}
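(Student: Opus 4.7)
The plan is to convert the expectation into a sum of marginal contributions, decompose that sum along the disjoint parts $S_1,\dots,S_k$, and then apply submodularity together with monotonicity to each part separately. Since $e$ is chosen uniformly from $\bigcup_i S_i \setminus S$, the claim
$$\mathbb{E}\big[f(S \cup \{e\}) - f(S)\big] \geq \frac{2k/3}{|\bigcup_i S_i \setminus S|}$$
is equivalent to showing that the aggregate marginal contribution satisfies
$$\sum_{e \in \bigcup_i S_i \setminus S} \big[f(S \cup \{e\}) - f(S)\big] \;\geq\; \frac{2k}{3}.$$
Because the $S_i$ are pairwise disjoint, the left-hand side splits as $\sum_{i=1}^{k}\sum_{e \in S_i \setminus S} [f(S \cup \{e\}) - f(S)]$, which reduces the problem to proving that each inner sum is at least $2/3$.

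For a fixed index $i$, I would invoke the standard consequence of submodularity that the sum of singleton marginals at a base set upper-bounds the marginal of adding all the singletons at once, i.e.
$$\sum_{e \in S_i \setminus S} \big[f(S \cup \{e\}) - f(S)\big] \;\geq\; f(S \cup S_i) - f(S).$$
This follows from iteratively inserting the elements of $S_i \setminus S$ into $S$ and applying the diminishing-returns inequality to each insertion. Then by monotonicity $f(S \cup S_i) \geq f(S_i) \geq 1$ (by hypothesis), and by hypothesis $f(S) < 1/3$, so
$$f(S \cup S_i) - f(S) \;>\; 1 - \tfrac{1}{3} \;=\; \tfrac{2}{3}.$$
Summing over $i = 1,\dots,k$ yields the desired bound of $2k/3$ on the aggregate marginal contribution, and dividing by $|\bigcup_i S_i \setminus S|$ completes the proof.

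There is no real obstacle here beyond picking the right decomposition: the disjointness of the $S_i$ makes the partition of the sum lossless, submodularity provides a lower bound on each inner sum in terms of the ``block'' marginal $f(S \cup S_i) - f(S)$, and monotonicity together with the lower bound $f(S_i) \geq 1$ and the upper bound $f(S) < 1/3$ supply exactly the $2/3$ slack per block. The only subtlety worth double-checking in the write-up is the elementary submodular inequality $\sum_{e \in T \setminus S} [f(S \cup \{e\}) - f(S)] \geq f(S \cup T) - f(S)$, which requires a one-line induction on $|T \setminus S|$ using the diminishing-returns form applied as each element is added.
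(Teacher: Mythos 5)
Your proof is correct and follows essentially the same route as the paper: both decompose the aggregate marginal sum over the disjoint blocks $S_i \setminus S$, apply the submodular telescoping inequality $\sum_{e \in S_i \setminus S}[f(S\cup\{e\})-f(S)] \geq f(S\cup S_i)-f(S)$, and then use monotonicity together with $f(S_i)\geq 1$ and $f(S)<1/3$ to extract $2/3$ per block before summing and normalizing. No meaningful differences.
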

The proof of Lemma \ref{submodulardovom} is very similar to that of Lemma \ref{submodularaval}. The main point is that in submodular functions, the marginal increase decreases as the sizes of sets grow.

Next, we show the fair allocation problem with submodular agents admits a $1/3$-$\MMS$ allocation\footnote{Almost one year after the first draft of our work, the existense of a $1/10$-$\MMS$ allocation in the submodular case along with an algorithm to find a $1/31$ approximation algorithm for the submodular case is also proved in ~\cite{barman2017approximation}. They also study the problem in the additive setting and present another $2/3$-$\MMS$ algorithm. This work is completely parallel to and independent of our paper. Moreover, their analysis is fundamentally different from our analysis and also their bounds are looser.}.
\begin{theorem}\label{submodulartheorem}
	The fair allocation problem with submodular agents admits a $1/3$-$\MMS$ allocation.
\end{theorem}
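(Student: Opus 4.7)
The plan is to prove the theorem by contradiction via a potential-function argument, exploiting the bounded welfare $\sum_{i} \valu_i^{2/3}(A_i)$ introduced in the excerpt. First, by Observation~\ref{reducibility}, it suffices to consider $1/3$-irreducible instances; after scaling valuations so that $\MMS_i = 1$ for every $\agent_i$, I would fix an allocation $\mathcal{A} = \langle A_1, \ldots, A_n\rangle$ maximizing $\sum_i \valu_i^{2/3}(A_i)$, and argue that this maximizer is automatically $1/3$-$\MMS$.

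Suppose toward contradiction that some agent $\agent_i$ receives $\valu_i(A_i) < 1/3$. Since $\MMS_i = 1$, the agent can partition $\items$ into $n$ bundles $P_1, \ldots, P_n$ with $\valu_i(P_k) \geq 1$ for each $k$. My local-improvement move is: pick an item $\ite_j$ uniformly at random from $\items \setminus A_i$, remove it from whichever $A_k$ currently contains it, and reassign it to $\agent_i$. I would then show that the expected change in bounded welfare under this random move is strictly positive, contradicting the maximality of $\mathcal{A}$.

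For the gain side, I would apply Lemma~\ref{submodulardovom} with $f = \valu_i$, $S = A_i$, and the partition $\{P_k\}$, yielding
\[
\mathbb{E}\bigl[\valu_i(A_i \cup \{\ite_j\}) - \valu_i(A_i)\bigr] \;\geq\; \frac{2n/3}{m - |A_i|}.
\]
Because the instance is $1/3$-irreducible, Lemma~\ref{remove1} gives $\valu_i(\{\ite_j\}) < 1/3$, and subadditivity (implied by submodularity) yields $\valu_i(A_i \cup \{\ite_j\}) < 2/3$; thus the expected increase in $\valu_i^{2/3}(A_i)$ coincides with the expected increase in $\valu_i(A_i)$. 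For the loss side, I would invoke Lemma~\ref{submodularaval} on the submodular cap functions $\valu_k^{2/3}$ (submodularity preserved by Lemma~\ref{ceilingfunctions}) and the disjoint sets $\{A_k\}_{k \neq i}$, which together cover $\items \setminus A_i$; this bounds the expected decrease by $\frac{\sum_{k \neq i} \valu_k^{2/3}(A_k)}{m - |A_i|} \leq \frac{(n-1)\cdot 2/3}{m - |A_i|}$.

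Combining, the expected net change in bounded welfare is at least $\frac{2n/3 - 2(n-1)/3}{m - |A_i|} = \frac{2/3}{m - |A_i|} > 0$, so some specific choice of $\ite_j$ strictly improves the bounded welfare, contradicting maximality. The main subtlety I anticipate is correctly aligning the hypotheses of the two lemmas: Lemma~\ref{submodulardovom} requires $\valu_i(P_k) \geq 1$ and $\valu_i(A_i) < 1/3$ and so must be applied to the uncapped valuation, whereas Lemma~\ref{submodularaval} must be applied to the capped valuations so that the per-agent loss is controlled by the $2/3$ ceiling. Verifying that the cap does not spoil the gain calculation is exactly where $1/3$-irreducibility (via Lemma~\ref{remove1}) together with subadditivity becomes decisive.
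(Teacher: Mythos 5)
Your proposal is correct and follows essentially the same route as the paper's proof: reduce to $1/3$-irreducible instances, maximize $\sum_i \valu_i^{2/3}(A_i)$, randomly reassign a single item to the deprived agent, and combine Lemmas~\ref{submodularaval} and~\ref{submodulardovom} (with Lemma~\ref{remove1} plus subadditivity to justify that the cap does not bind on the gain term) to obtain a strict expected increase. The only superficial difference is bookkeeping at the end -- you bound $\sum_{k\neq i}\valu_k^{2/3}(A_k)$ directly by $(n-1)\cdot 2/3$ rather than bounding the full $\mathsf{ex}^{(2/3)}(\mathcal{A})$ by $2n/3-1/3$ as the paper does -- but both yield the same contradiction.
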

\begin{proof}
	By Lemma \ref{reducibility}, the problem boils down to the case of $1/3$-irreducible instances. Let the problem be $1/3$-irreducible and $\mathcal{A}$ be an allocation that maximizes $\mathsf{ex}^{(2/3)}$. Suppose for the sake of contradiction that $\valu_i(A_i) < 1/3$ for some agent $\agent_i$.
	In this case we select an item $\ite_r$ from $\items \setminus A_i$ uniformly at random to create a new allocation $\mathcal{A}^r$ as follows:
	 
	 $$A^r_j =
	 \begin{cases}
	 A_j \setminus \{\ite_r\}, & \text{if }i \neq j \\
	 A_j \cup \{\ite_r\} & \text{if }i = j.
	 \end{cases}$$
	
	In the rest we show $\mathbb{E}[\mathsf{ex}^{(2/3)}(\mathcal{A}^r)] > \mathsf{ex}^{(2/3)}(\mathcal{A})$ which contradicts the maximality of $\mathcal{A}$. 
	Note that by Lemma \ref{submodularaval} the following inequality holds: 
	\begin{equation}\label{rex1}
	\mathbb{E}[\sum_{j\neq i} \valu_j^{2/3}(A^r_j)] \geq \sum_{j\neq i} \valu_j^{2/3}(A_j) \frac{|\items \setminus A_i| - 1}{|\items \setminus A_i|}.
	\end{equation}
	Moreover, by Lemma \ref{submodulardovom} we have 
	\begin{equation}\label{rex2}
	\mathbb{E}[\valu_i(A^r_i) - \valu_i(A_i)] \geq \frac{2n/3}{|\items \setminus A_i|}.
	\end{equation}
	Inequality \eqref{rex1} along with Inequality \eqref{rex2} shows
	\begin{equation}
	\begin{split}
	\mathbb{E}[\mathsf{ex}^{(2/3)}(\mathcal{A}^r)] &= \mathbb{E}[\sum_{j\neq i} \valu_j^{2/3}(A^r_j)] + \mathbb{E}[\valu_i(A^r_i)]\\
	&\geq \sum_{j\neq i} \valu_j^{2/3}(A_j) \frac{|\items \setminus A_i| - 1}{|\items \setminus A_i|} + \mathbb{E}[\valu_i(A^r_i)]\\
	&\geq \sum_{j\neq i} \valu_j^{2/3}(A_j) \frac{|\items \setminus A_i| - 1}{|\items \setminus A_i|} + \frac{2n/3}{|\items \setminus A_i|} + \valu_i(A_i)\\
	&\geq \sum_{j\neq i} \valu_j^{2/3}(A_j) \frac{|\items \setminus A_i| - 1}{|\items \setminus A_i|} + \frac{2n/3}{|\items \setminus A_i|} + \valu_i^{(2/3)}(A_i)\\
	&\geq \sum_{j\neq i} \valu_j^{2/3}(A_j) \frac{|\items \setminus A_i| - 1}{|\items \setminus A_i|} + \frac{2n/3}{|\items \setminus A_i|} + \valu_i^{(2/3)}(A_i)\frac{|\items \setminus A_i| - 1}{|\items \setminus A_i|}\\
	&=  \mathsf{ex}^{(2/3)}(\mathcal{A}) \frac{|\items \setminus A_i| - 1}{|\items \setminus A_i|} + \frac{2n/3}{|\items \setminus A_i|}.\\
	\end{split}
	\end{equation}
Recall that by Lemma \ref{remove1}, the value of agent $\agent_i$ for any item alone is bounded by $1/3$ and thus $\mathbb{E}[\valu_i(A^r_i) - \valu_i(A_i)] = \mathbb{E}[\valu^{2/3}_i(A^r_i) - \valu^{2/3}_i(A_i)]$. 
Notice that by the definition, $\valu_j^{(2/3)}$ is always bounded by $2/3$ and also $\valu_i(A_i) < 1/3$, therefore, $\mathsf{ex}^{(2/3)}(\mathcal{A}) \leq 2n/3-1/3$ and thus
\begin{equation}
\begin{split}
\mathbb{E}[\mathsf{ex}^{(2/3)}(\mathcal{A}^r)] &\geq  \mathsf{ex}^{(2/3)}(\mathcal{A}) \frac{|\items \setminus A_i| - 1}{|\items \setminus A_i|} + \frac{2n/3}{|\items \setminus A_i|}\\
&\geq \mathsf{ex}^{(2/3)}(\mathcal{A}) + \frac{1/3}{|\items \setminus A_i|}\\
&\geq \mathsf{ex}^{(2/3)}(\mathcal{A}) + 1/3m.
\end{split}
\end{equation}
\end{proof}

\subsection{Algorithm}\label{submodularalgorithm}
In this section we give an algorithm to find a $1/3$-$\MMS$ allocation for submodular agents. We show our algorithm runs in time $\poly(n,m)$.

For simplicity, we assume for every agent $\agent_i$, $\MMS_i$ is given as input to the algorithm. However, computing $\MMS_i$ alone is an NP-hard problem. That said, we show in Section \ref{latter} that such a computational barrier can be lifted by a combinatorial trick. We refer the reader to Section \ref{latter} for a more detailed discussion. The procedure is illustrated in Algorithm \ref{submodularalg}:
\begin{algorithm}
	\KwData{$\agents, \items, \langle \valu_1, \valu_2, \ldots, \valu_n\rangle, \langle \MMS_1, \MMS_2, \ldots, \MMS_n\rangle$}
	For every $\agent_j$, scale $\valu_j$ to  ensure $\MMS_j = 1$\;
	\While{there exist an agent $\agent_i$ and an item $\ite_j$ such that $\valu_i(\{\ite_j\}) \geq 1/3$}{
		Allocate $\{\ite_j\}$ to $\agent_i$\;
		$\items = \items \setminus \ite_j$\;
		$\agents = \agents \setminus \agent_i$\;
	}
	$\mathcal{A} = $ an arbitrary allocation of the items to the agents\;
	\While{$\min \valu^{2/3}_j(A_j) < 1/3$}{
		$i = $ the agent who receives the lowest value in allocation $\mathcal{A}$\; 
		Find an item $\ite_e$ such that:\label{find}
		$\mathsf{ex}(\langle A_1 \setminus \{\ite_e\}, A_2 \setminus \{\ite_e\}, \ldots, A_{i-1} \setminus \{\ite_e\}, A_i \cup \{\ite_e\}, A_{i+1} \setminus \{\ite_e\}, \ldots, A_n \setminus \{\ite_e\}\rangle) \geq \mathsf{ex}(\mathcal{A})+1/3m$\;
		$\mathcal{A} = \langle A_1 \setminus \{\ite_e\}, A_2 \setminus \{\ite_e\}, \ldots, A_{i-1} \setminus \{\ite_e\}, A_i \cup \{\ite_e\}, A_{i+1} \setminus \{\ite_e\}, \ldots, A_n \setminus \{\ite_e\}\rangle$\;
	}
	For every $\agent_i \in \agents$ allocate $A_i$ to $\agent_i$\;
	\caption{Finding a $1/3$-$\MMS$ allocation for submodular agents}
	\label{submodularalg}
\end{algorithm}
Based on Theorem \ref{submodulartheorem}, one can show that in every iteration of the algorithm value of $\mathsf{ex}^{2/3}(\mathcal{A})$ is increased by at least $1/3m$. Moreover, such an element $\ite_e$ can be easily found by iterating over all items in time $O(m)$. Furthermore, the number of iterations of the algorithm is bounded by $2nm$, since $\mathsf{ex}^{2/3}(\mathcal{A})$ is bounded by $2n/3$. Therefore, Algorithm \ref{submodularalg} finds a $1/3$-$\MMS$ allocation in time $\poly(n,m)$.

\begin{theorem}\label{subsubalg}
	Given access to query oracles, one can find a $1/3$-$\MMS$ allocation for submodular agents in polynomial time.
\end{theorem}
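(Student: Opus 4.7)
The plan is to turn the existential argument of Theorem~\ref{submodulartheorem} into Algorithm~\ref{submodularalg} and then resolve the one remaining obstacle: that the values $\MMS_i$ used to scale the valuations are not known and are NP-hard to compute exactly. I will proceed in three stages: (i)~argue that \emph{given} correct $\MMS$ estimates, the local search converges in polynomially many query-oracle calls; (ii)~give a ``guess-and-shrink'' scheme that handles unknown $\MMS$ values without incurring any loss in the approximation factor; (iii)~combine the two to obtain a $\poly(n,m)$ algorithm.

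For (i), assume for the moment that each $\MMS_i$ is known and scale valuations so that $\MMS_i=1$ for all $i$. First apply the reducibility rule: while there exist $\agent_i\in\agents$ and $\ite_j\in\items$ with $\valu_i(\{\ite_j\})\ge 1/3$, allocate $\{\ite_j\}$ to $\agent_i$, remove both from the instance, and continue; by Observation~\ref{reducibility} and Lemma~\ref{remove1} this only helps, and costs at most $n$ iterations. On the residual $1/3$-irreducible instance start with an arbitrary allocation $\mathcal{A}=\langle A_1,\dots,A_n\rangle$ and maintain $\mathsf{ex}^{(2/3)}(\mathcal{A})=\sum_i \valu_i^{2/3}(A_i)$. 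While some agent $\agent_i$ has $\valu_i(A_i)<1/3$, the averaging argument inside the proof of Theorem~\ref{submodulartheorem} (built from Lemmas~\ref{submodularaval} and~\ref{submodulardovom}) shows that there must exist an item $\ite_e\notin A_i$ whose transfer to $\agent_i$ raises the bounded welfare by at least $1/(3m)$. Such an $\ite_e$ is found by a single pass over the at most $m$ candidates, each evaluation costing $O(n)$ oracle queries. Since $\mathsf{ex}^{(2/3)}(\mathcal{A})\le 2n/3$, the number of improvement steps is at most $2nm$, yielding a $\poly(n,m)$ bound overall.

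For (ii), I replace the unknown $\MMS_i$ by an upper estimate $\widehat{\MMS}_i$ and scale valuations so every $\widehat{\MMS}_i=1$. Initialize $\widehat{\MMS}_i$ to the easy upper bound $\valu_i(\items)/n\ge\MMS_i$. Run the algorithm of stage~(i) with these scaled valuations. There are two possible outcomes: either it terminates with every $\valu_i(A_i)\ge 1/3$, in which case the allocation is a $1/3$-$\MMS$ allocation since $\widehat{\MMS}_i\ge\MMS_i$; or, for some agent $\agent_i$, the local search gets stuck while $\valu_i(A_i)<1/3$. In the latter case the analysis of Theorem~\ref{submodulartheorem} (applied with threshold $1$) provides a certificate that the scaled instance is \emph{not} $1$-feasible in the sense that agent~$\agent_i$ cannot partition $\items$ into $n$ bundles each of value at least $1$; equivalently, the true $\MMS_i$ is strictly less than $\widehat{\MMS}_i$, so we may shrink $\widehat{\MMS}_i \leftarrow \widehat{\MMS}_i/(1+\eta)$ for a small fixed $\eta>0$, rescale, and restart. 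Because we only ever decrease estimates, the invariant $\widehat{\MMS}_i\ge\MMS_i$ is preserved, so no approximation loss is incurred; because the value of the worst bundle in an $n$-partition is a rational with polynomially-bounded bit-complexity (valuations come from a query oracle, so we may treat them as bounded rationals), the number of times any particular $\widehat{\MMS}_i$ can be halved before we detect a correct scale is $O(\log(n\cdot V_{\max}))$, which is polynomial in the input size.

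Combining the two stages gives the claimed algorithm: the outer loop performs at most $O(n\log(nV_{\max}))$ rescalings, and each inner run costs $\poly(n,m)$ query-oracle calls, so the total running time is $\poly(n,m)$. The hardest part, and the only place where the argument departs from the existential proof, is the rescaling step of~(ii): one must verify that being ``stuck'' in the local search \emph{necessarily} certifies an over-estimate of some $\MMS_i$ and that the multiplicative shrink preserves the approximation guarantee. Both follow because the proof of Theorem~\ref{submodulartheorem} only uses $\MMS_i\le\widehat{\MMS}_i$ to obtain an item $\ite_e$ improving the bounded welfare; the absence of such an $\ite_e$ at an unsatisfied agent therefore contradicts $\MMS_i\ge\widehat{\MMS}_i$, giving the desired certificate without ever needing the exact $\MMS_i$.
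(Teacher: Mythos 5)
Your proposal reproduces the paper's own route: Algorithm~\ref{submodularalg} (reduce, then local search on $\mathsf{ex}^{(2/3)}$ with $\Omega(1/m)$ improvement per step, terminating in $O(nm)$ moves) combined with the guess-and-shrink emulation of a maxmin oracle from Section~\ref{latter}, which the paper explicitly remarks carries over to the submodular case. Two small points warrant attention. First, the initialization $\widehat{\MMS}_i=\valu_i(\items)/n$ is not a valid upper bound for submodular valuations: with $f(S)=\min\{|S|,1\}$ and $m=n$ items one has $\valu_i(\items)/n=1/n$ yet $\MMS_i=1$. The correct (and what the paper uses) initialization is $\widehat{\MMS}_i=\valu_i(\items)$, which follows from monotonicity alone. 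Second, the shrink factor cannot safely be a ``small fixed $\eta>0$'' independent of the instance size; the paper's analysis for the XOS case shows the certificate of over-estimation is only guaranteed when the slack in the improvement bound is accounted for, which forces $\eta=\Theta(1/\poly(n,m))$ (the paper uses $1/10n$ for XOS). With $\eta$ chosen this way the iteration count remains polynomial via the bit-complexity argument you give, so neither issue affects the overall correctness of the approach, but both need to be fixed for the proof to go through as stated.
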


As a corollary of Theorem \ref{subsubalg}, one can show that the problem of finding the maxmin value of a submodular function admits a 3 approximation algorithm.

\begin{corollary}
	For a given submodular function $f$, we can in polynomial time split the elements of ground set into $n$ dijsoint sets $S_1, S_2, \ldots, S_n$ such that 
	$$f(S_i) \geq \MMS_f^n/3$$
	for every $1 \leq i \leq n$.
\end{corollary}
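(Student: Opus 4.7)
The plan is to reduce the single-function question to the $n$-agent fair allocation problem that Theorem \ref{subsubalg} already solves. Specifically, I would instantiate a fair allocation instance in which all $n$ agents share the same valuation function, namely $f$ itself. Then, by construction, for every agent $\agent_i$ we have $\valu_i = f$, so $\MMS_i = \MMS_{\valu_i}^n(\items) = \MMS_f^n$. The query oracle for $f$ that we are given doubles as a query oracle for every $\valu_i$, so the hypotheses of Theorem \ref{subsubalg} are met.

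Next, I would invoke Theorem \ref{subsubalg} on this instance to obtain, in time $\poly(n,m)$, an allocation $\mathcal{A} = \langle A_1, A_2, \ldots, A_n\rangle$ with $\valu_i(A_i) \geq \MMS_i/3$ for every $i$. Setting $S_i := A_i$, the partition $S_1, \ldots, S_n$ is disjoint and covers the ground set (these are the defining properties of an allocation in the preliminaries), and satisfies
\[
f(S_i) = \valu_i(A_i) \;\geq\; \MMS_i/3 \;=\; \MMS_f^n/3
\]
for every $1 \leq i \leq n$, as desired.

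There is essentially no obstacle: the main subtlety worth flagging is that Algorithm \ref{submodularalg} is stated as taking $\langle \MMS_1, \ldots, \MMS_n\rangle$ as input, and we of course do not know $\MMS_f^n$ (computing it exactly is NP-hard). However, this is precisely the computational barrier that Theorem \ref{subsubalg} already resolves through the combinatorial trick promised in Section \ref{latter} (iteratively guessing the $\MMS$ values from an overestimate and shrinking them whenever the inner loop fails). Since we invoke Theorem \ref{subsubalg} as a black box, this issue is absorbed into its proof and nothing further is required here.
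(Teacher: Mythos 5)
Your proposal is correct and matches the paper's approach: the paper proves the analogous XOS corollary (at the end of Section 5) by exactly this device of instantiating the fair allocation problem with $n$ copies of $f$ as the agents' valuations and invoking the algorithmic theorem as a black box, and it explicitly states the submodular corollary "as a corollary of Theorem \ref{subsubalg}" with the same reduction implied. Your flag about the unknown $\MMS$ values being handled inside Theorem \ref{subsubalg} is also the right thing to note and needs no further argument.
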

\section{XOS Agents}\label{xos}
Class of fractionally subadditive (XOS) set functions is a super class of submodular functions. These functions too, have been subject of many studies in recent years \cite{christodoulou2008bayesian, bhawalkar2011welfare, feige2009maximizing,blumrosen2007welfare, syrgkanis2012bayesian,feldman2013simultaneous,fu2012conditional,feldmancombinatorial,milchtaich1996congestion}. Similar to sub-modular functions, in this section we show a $1/5$-$\MMS$ allocation is possible when all agents have XOS valuations. Furthermore, we complement our proof by providing a polynomial algorithm to find a $1/8$-$\MMS$ allocation in Section \ref{xosalgorithm}.


\subsection{Existential Proof}\label{ep}
In this section we show every instance of the fair allocation problem with XOS agents admits a $1/5$-$\MMS$ allocation. 
Without loss of generality, we assume $\MMS_i = 1$ for every agent $\agent_i$. Recall the definition of ceiling functions.
\begin{definition}\label{fxfunction}
Given a set function $f(.)$, we define $\ceil{f}{x}(.)$ as follows:
$$\ceil{f}{x}(S) =
\begin{cases}
	f(S), & \text{if }f(S) \leq x \\
	x, & \text{if }f(S) > x.
\end{cases}$$

\end{definition}

As stated in Lemma \ref{ceilingfunctions}, for every XOS function and every real number $x \geq 0$, $f^x$ is also XOS. The proof of this section is similar to the result of Section \ref{submodularalg}. However, the details are different since XOS functions do not adhere to the nice structure of submodular functions. For every allocation $\mathcal{B}$, we define $\mathsf{ex}^{2/5}(\mathcal{B})$ as follows:
 
$$\mathsf{ex}^{2/5}(\mathcal{B}) = \sum_{\agent_i \in \agents} \ceil{\valu_i}{2/5}(B_i).$$

Now Let $\mathcal{A} = \langle A_1, A_2, \ldots, A_n\rangle$ be an allocation of items to the agents that maximizes $\mathsf{ex}^{2/5}$. Provided that the problem is $1/5$-irreducible, we show $\mathcal{A}$ is a $1/5$-$\MMS$ allocation. Before we proceed to the main proof, we state Lemmas  \ref{xos2lemma}, and \ref{2nsets} as auxiliary observations. 

\begin{lemma}\label{xos2lemma}
Let $f(.)$ be an XOS set function and  $f(S) = \beta$ for a set $S \subseteq \domp(f)$. If we divide $S$ into $k$ (possibly empty)  sets $S_1, S_2, \ldots, S_k$ then 
$$\sum_{i=1}^k \Big(f(S) - f(S\setminus S_i)\Big) \leq f(S).$$
\end{lemma}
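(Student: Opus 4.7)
The plan is to exploit the XOS representation of $f$ to linearize the differences $f(S)-f(S\setminus S_i)$. Since $f$ is XOS, there exists a finite family of additive functions $\{h_1,h_2,\ldots,h_\alpha\}$ with $f(T) = \max_t h_t(T)$ for every $T \subseteq \domp(f)$. First I would pick an index $t^\star$ that realizes the maximum specifically at $S$, so that $h_{t^\star}(S) = f(S) = \beta$. This single additive ``witness'' for $S$ is the only ingredient needed; the remaining argument is a short summation.

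Next, for each partition class $S_i$, I would bound the marginal loss $f(S) - f(S\setminus S_i)$ from above by the $h_{t^\star}$-weight of $S_i$. Concretely, since $f$ dominates every $h_t$, we have $f(S\setminus S_i) \geq h_{t^\star}(S\setminus S_i) = h_{t^\star}(S) - h_{t^\star}(S_i) = f(S) - h_{t^\star}(S_i)$, where the middle equality uses additivity of $h_{t^\star}$. Rearranging yields the key inequality
\[
f(S) - f(S\setminus S_i) \leq h_{t^\star}(S_i).
\]

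Finally I would sum this inequality over $i=1,2,\ldots,k$ and invoke the fact that $S_1,\ldots,S_k$ form a partition of $S$, so that $\sum_{i=1}^k h_{t^\star}(S_i) = h_{t^\star}(S) = f(S)$ by additivity of $h_{t^\star}$. This gives exactly $\sum_{i=1}^k \bigl(f(S) - f(S\setminus S_i)\bigr) \leq f(S)$, as required. The proof is essentially a one-line application of the XOS definition once the witness $h_{t^\star}$ is fixed; I do not anticipate any real obstacle. The only subtlety worth stressing is that the \emph{same} additive function $h_{t^\star}$ must be used for all $k$ inequalities, because the bound $f(S\setminus S_i) \geq h_{t^\star}(S\setminus S_i)$ holds for any $t^\star$, but the telescoping $\sum_i h_{t^\star}(S_i) = h_{t^\star}(S) = f(S)$ requires that $h_{t^\star}$ be tight at $S$; this is precisely where the XOS (as opposed to general subadditive) hypothesis is used.
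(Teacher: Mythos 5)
Your proof is correct and follows essentially the same route as the paper: fix the additive representative of $f$ that is tight at $S$, bound each marginal loss $f(S)-f(S\setminus S_i)$ by that representative's weight on $S_i$, and sum, using additivity over the partition to get $f(S)$ on the right-hand side. The paper's argument is just a terser version of the same steps (with $\alpha_i = g_j(S_i)$ playing the role of $h_{t^\star}(S_i)$).
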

The complete proof of Lemma \ref{xos2lemma} is included in Appendix \ref{xosappendix}. Roughly speaking, the proof follows from the fact that for at least one of the additive set functions in the representation of $f$, we have $g_j(S) = \beta$. The rest of the proof is trivial by the additive properties of $g_j$.

By Lemma \ref{remove1}, we know that in every $1/5$-irreducible instance of the problem, the value of every item for a person is bounded by $1/5$. 
For XOS functions, we again, leverage the reducibility principal to show another important property of the $1/5$-irreducible instances of the problem. 
\begin{lemma}
\label{2nsets}
In a $1/5$-irreducible instance of the problem, for a given agent $\agent_i$ we can divide the items into $2n$ sets $S_1, S_2, \ldots, S_{2n}$ such that
$$\valu_i(S_i) \geq 2/5$$
for every $1 \leq i \leq 2n$.
\end{lemma}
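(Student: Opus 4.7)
The plan is to combine the existence of an $\MMS$ partition for $\agent_i$ with the XOS representation of $\valu_i$, and then split each of the $n$ bundles of that partition into two pieces by a short bag-filling step. Since $\MMS_i = 1$, there is a partition $B_1, B_2, \ldots, B_n$ of $\items$ with $\valu_i(B_j) \geq 1$ for every $j$. For each $j$, I pick an additive clause $g_j$ from the XOS representation of $\valu_i$ that is tight on $B_j$, so that $g_j(B_j) = \valu_i(B_j) \geq 1$ and $g_j(S) \leq \valu_i(S)$ for every set $S$. Applying Lemma~\ref{remove1} to the $1/5$-irreducible instance also gives $\valu_i(\{\ite_k\}) < 1/5$ for every item, hence $g_j(\{\ite_k\}) < 1/5$ for every $\ite_k \in B_j$.

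For each $j$, I will split $B_j$ by the following procedure: list the items of $B_j$ in an arbitrary order and add them one at a time to $S_{2j-1}$, stopping as soon as the cumulative $g_j$-value first reaches $2/5$; the remaining items then form $S_{2j}$. Because every item contributes strictly less than $1/5$ to $g_j$, the jump from the state just before the stop to the state just after is less than $1/5$, so $g_j(S_{2j-1}) \in [2/5,\,3/5)$. Consequently
\[ g_j(S_{2j}) \;=\; g_j(B_j) - g_j(S_{2j-1}) \;>\; 1 - 3/5 \;=\; 2/5. \]
Since $\valu_i \geq g_j$ pointwise, both $\valu_i(S_{2j-1})$ and $\valu_i(S_{2j})$ are at least $2/5$. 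Taking the union of these two pieces over all $j$ yields the required $2n$ sets, which together partition $\items$.

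The only subtlety is a possible degeneracy: if some $B_j$ were so small that the first bag swallowed all of it, then $S_{2j}$ would be empty. This is ruled out by the same irreducibility bound, since $g_j(B_j) \geq 1$ together with $g_j(\{\ite_k\}) < 1/5$ forces $|B_j| \geq 6$, and the bag must close on an item strictly before the last one; otherwise $g_j(S_{2j-1}) = g_j(B_j) \geq 1$ would contradict the bound $g_j(S_{2j-1}) < 3/5$ established above. The main point to justify carefully when writing up the proof is the use of a per-bundle supporting clause $g_j$ rather than a single global additive minorant of $\valu_i$, but this is fine because each $B_j$ is analyzed in isolation and the two pieces produced from $B_j$ only need to be evaluated against $\valu_i$ via the support $g_j$ that is tight on that particular $B_j$.
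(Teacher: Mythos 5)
Your proof is correct and takes essentially the same route as the paper: both select the tight additive clause $g_j$ of the XOS representation for each $\MMS$ bundle $B_j$, invoke Lemma~\ref{remove1} to bound each item's $g_j$-value below $1/5$, find a prefix with $g_j$-value in $[2/5, 3/5)$, and conclude that the complement also has $g_j$-value at least $2/5$. Your additional remark ruling out an empty second piece is a small clarification but does not change the argument.
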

We first apply Lemma \ref{remove1} and show in such instances of the problem the valuation of every agent for every item is bounded by $1/5$. We remark that for every agent $\agent_i$, one can split the items into $n$ partitions such that each partition is worth at least $1$ to $\agent_i$. Combining the two observations, we conclude that such a decomposition is possible for every agent $\agent_i$. The full proof of this lemma is included in Appendix \ref{xosappendix}. Next we prove the main theorem of this section.
\begin{theorem}\label{xosproof}
The fair allocation problem with XOS agents admits a $1/5$-$\MMS$ allocation.
\end{theorem}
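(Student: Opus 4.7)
The plan is to mirror the argument used for submodular agents but replace the probabilistic (marginal-decrease) bound by the structural inequality in Lemma \ref{xos2lemma}. By Observation \ref{reducibility} it suffices to handle $1/5$-irreducible instances, so I would assume irreducibility and fix an allocation $\mathcal{A}=\langle A_1,\dots,A_n\rangle$ that maximizes $\mathsf{ex}^{2/5}(\mathcal{A})=\sum_j \valu_j^{2/5}(A_j)$; such a maximizer exists since the objective takes finitely many values. Suppose for contradiction there is an agent $\agent_i$ with $\valu_i(A_i)<1/5$; the goal is to construct a reallocation of some set $S$ of items to $\agent_i$ that strictly increases $\mathsf{ex}^{2/5}$, contradicting maximality.

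To choose $S$, I would invoke Lemma \ref{2nsets} on agent $\agent_i$ to partition $\items$ into $2n$ sets $S_1,\dots,S_{2n}$ with $\valu_i(S_k)\ge 2/5$ for every $k$. For each $k$ form the candidate allocation $\mathcal{A}^{(k)}$ obtained by setting $A_i^{(k)}=A_i\cup S_k$ and $A_j^{(k)}=A_j\setminus S_k$ for $j\ne i$. I would estimate the gain and the loss separately:
\begin{itemize}
\item \textbf{Gain on $\agent_i$.} By monotonicity $\valu_i(A_i\cup S_k)\ge \valu_i(S_k)\ge 2/5$, so $\valu_i^{2/5}(A_i\cup S_k)=2/5$, while $\valu_i^{2/5}(A_i)=\valu_i(A_i)<1/5$. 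Hence the $\agent_i$-contribution strictly exceeds $1/5$ for every $k$.
\item \textbf{Aggregate loss on the other agents.} Fix $j\ne i$ and set $T_k^{(j)}=S_k\cap A_j$; the sets $T_1^{(j)},\dots,T_{2n}^{(j)}$ partition $A_j$. Since $\valu_j^{2/5}$ is XOS by Lemma \ref{ceilingfunctions}, Lemma \ref{xos2lemma} applied to $S=A_j$ and $f=\valu_j^{2/5}$ yields
\[
\sum_{k=1}^{2n}\bigl(\valu_j^{2/5}(A_j)-\valu_j^{2/5}(A_j\setminus S_k)\bigr)\le \valu_j^{2/5}(A_j)\le 2/5.
\]
Summing over the $n-1$ agents $j\ne i$ and averaging over $k$ gives some index $k^\star$ with
\[
\sum_{j\ne i}\bigl(\valu_j^{2/5}(A_j)-\valu_j^{2/5}(A_j\setminus S_{k^\star})\bigr)\le \tfrac{(n-1)\cdot 2/5}{2n}<\tfrac{1}{5}.
\]
\end{itemize}

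Putting the two bounds together, the net change $\mathsf{ex}^{2/5}(\mathcal{A}^{(k^\star)})-\mathsf{ex}^{2/5}(\mathcal{A})$ is strictly positive, contradicting the maximality of $\mathcal{A}$. Therefore every agent must satisfy $\valu_i(A_i)\ge 1/5=\tfrac15\MMS_i$, and $\mathcal{A}$ is the desired $1/5$-$\MMS$ allocation.

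The main conceptual obstacle is the loss estimate on the other agents: unlike the submodular case, there is no marginal-decrease inequality that lets one bound the loss from a single removal by averaging. The key is to exploit the fact that the $2n$ sets from Lemma \ref{2nsets} form a partition of \emph{all} items, so their intersections with any $A_j$ form a partition of $A_j$; this is exactly the structure on which Lemma \ref{xos2lemma} can be applied to cap the total damage by $\valu_j^{2/5}(A_j)\le 2/5$. Choosing the threshold $2/5$ in the ceiling is calibrated precisely to make the per-set average loss $(n-1)\cdot(2/5)/(2n)$ strictly smaller than the $\agent_i$-gain of $(2/5)-(1/5)=1/5$, which is why $1/5$ is the bound obtained.
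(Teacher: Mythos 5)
Your proposal is correct and follows essentially the same route as the paper: both rely on Observation \ref{reducibility} to restrict to $1/5$-irreducible instances, take a maximizer of $\mathsf{ex}^{2/5}$, invoke Lemma \ref{2nsets} to produce the $2n$ sets $S_1,\dots,S_{2n}$, and apply Lemma \ref{xos2lemma} (via Lemma \ref{ceilingfunctions}) to cap the total damage incurred across the $2n$ candidate reallocations. Your presentation as an explicit gain-versus-average-loss comparison is simply a cleaner bookkeeping of the paper's inequality $\sum_j \mathsf{ex}^{2/5}(\mathcal{A}^j) > 2n\cdot\mathsf{ex}^{2/5}(\mathcal{A})$, and the identity $A_j\setminus S_k = A_j\setminus(S_k\cap A_j)$ that you implicitly use to invoke Lemma \ref{xos2lemma} on the partition $\{S_k\cap A_j\}_k$ of $A_j$ is exactly the observation underlying the paper's step as well.
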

\begin{proof}
Similar to what we did in Section \ref{additive}, we only prove this for $1/5$-irreducible instances of the problem. By Observation \ref{reducibility}, we can extend this result to all instances of the problem.

Consider an allocation $\mathcal{A} = \langle A_1, A_2, \ldots, A_n\rangle$ of items to the agents that maximizes $\mathsf{ex}^{2/5}$.
We show that such an allocation is $1/5$-$\MMS$. Suppose for the sake of contradiction that there exists an agent $\agent_i$ who receives a set of items which are together of worth less than $1/5$ to him. More precisely,
$$\valu^{2/5}_i(A_i) = \valu_i(A_i) < 1/5.$$ 
Since the problem is $1/5$-irreducible, by Lemma \ref{2nsets}, we can divide the items into $2n$ sets $S_1, S_2, \ldots, S_{2n}$ such that $\valu_i(S_j) \geq 2/5$ for every $1 \leq j \leq 2n$. Note that in this case, $\valu^{2/5}_i(S_j) = 2/5$ follows from the definition. Moreover by monotonicity, $\valu^{2/5}_i(S_j \cup A_i) = 2/5$ holds for every $j$.

Now consider $2n$ allocations $\mathcal{A}^1, \mathcal{A}^2, \ldots, \mathcal{A}^{2n}$ such that 
$$\mathcal{A}^{j} = \langle A^j_1, A^j_2 \ldots, A^j_n\rangle$$ for every $1 \leq j \leq 2n$ where 

$$A^j_k =
\begin{cases}
A_k \cup S_j, & \text{if }k = i \\
A_k \setminus S_j, & \text{if }k \neq i.
\end{cases}$$
We show at least one of these allocations has a higher for $\mathsf{ex}^{2/5}$ than $\mathcal{A}$.
Since $\valu^{2/5}_i$ is XOS, by Lemma \ref{xos2lemma} we have
\begin{equation*}
\sum_{j=1}^{2n}\Big(\valu^{2/5}_k(A_k) - \valu^{2/5}_k(A_k \setminus S_j)\Big) \leq \valu^{2/5}_k(A_j)
\end{equation*}
for every $\agent_k \neq \agent_i$ and thus
\begin{equation}\label{gulu1}
\begin{split}
\sum_{j=1}^{2n} \valu^{2/5}_k(A^j_k) &= \sum_{j=1}^{2n} \valu^{2/5}_k(A_j \setminus S_j)\\
 &\geq 2n \valu^{2/5}_k(A_k) - \valu^{2/5}_k(A_k)\\
&= (2n-1)\valu^{2/5}_k(A_k)
\end{split}
\end{equation}
Moreover, since $\valu^{2/5}_i(A_i) < 1/5$, we have
\begin{equation}\label{gulu2}
\begin{split}
\sum_{\agent_j \neq \agent_i} \ceil{\valu_j}{2/5}(A_j) & > \sum_{\agent_j \in \agents} \ceil{\valu_j}{2/5}(A_j) - 1/5\\
&=\mathsf{ex}^{2/5}(\mathcal{A})-1/5.
\end{split}
\end{equation}
Furthermore, since $\valu_i^{2/5}(S_j \cup A_i) = 2/5$ for every $1 \leq j \leq 2n$, we have
\begin{equation}\label{gulu3}
\begin{split}
\sum_{\agent_k \neq \agent_i} \valu^{2/5}_k(A^j_k) &= \sum_{\agent_k \in \agents} \valu^{2/5}_k(A^j_k) - 2/5\\
&= \mathsf{ex}^{2/5}(\mathcal{A}^j) - 2/5
\end{split}
\end{equation}
Finally, by combining Inequalities \eqref{gulu1}, \eqref{gulu2}, and \eqref{gulu3} we have
\begin{equation*}
	\begin{split}
	\sum_{j=1}^{2n} \mathsf{ex}^{2/5}(\mathcal{A}^j) & = \sum_{j=1}^{2n}(2/5 +\sum_{\agent_k \neq \agent_i} \valu^{2/5}_k(A^j_k)) \\
	& = 4n/5 + \sum_{j=1}^{2n}\sum_{\agent_k \neq \agent_i} \valu^{2/5}_k(A^j_k)\\  
	&\geq 4n/5 + \sum_{\agent_k \neq \agent_i} (2n-1) \valu^{2/5}_k(A_k)\\
	&\geq 4n/5 + (2n-1)(\mathsf{ex}^{2/5}(\mathcal{A}) - 1/5)\\
	&\geq 2n \cdot \mathsf{ex}^{2/5}(\mathcal{A}) + (4n-2n+1)/5 - \mathsf{ex}^{2/5}(\mathcal{A})\\
	&\geq 2n \cdot \mathsf{ex}^{2/5}(\mathcal{A}) + (2n+1)/5 - \mathsf{ex}^{2/5}(\mathcal{A})
	\end{split} 
\end{equation*}
Now notice that since $\valu^{2/5}_k(A_k) \leq 2/5$, we have
\begin{equation*}
	\begin{split}
	\mathsf{ex}^{2/5}(\mathcal{A}) & = \sum_{k=1}^n \valu^{2/5}_k(A_k)\\
	& \leq \sum_{k=1}^n 2/5\\
	& \leq 2n/5. 
	\end{split}
\end{equation*}
and thus 
\begin{equation*}
	\begin{split}
	\sum_{j=1}^{2n} \mathsf{ex}^{2/5}(\mathcal{A}^j) & \geq 2n \cdot \mathsf{ex}^{2/5}(\mathcal{A}) + (2n+1)/5 - \mathsf{ex}^{2/5}(\mathcal{A})\\
	&\geq 2n \cdot \mathsf{ex}^{2/5}(\mathcal{A}) + (2n+1)/5 - 2n/5\\
	&\geq 2n \cdot \mathsf{ex}^{2/5}(\mathcal{A}) + 1/5.
	\end{split}
\end{equation*}
Therefore, $\mathsf{ex}^{2/5}(\mathcal{A}^j) > \mathsf{ex}^{2/5}(\mathcal{A})+1/10n$ holds for at least one $\mathcal{A}^j$ which contradicts the maximality of $\mathcal{A}$.
\end{proof}

\subsection{Algorithm}\label{xosalgorithm}
In this section we provide a polynomial time algorithm for finding a $1/8$-$\MMS$ allocation for the fair allocation problem with XOS agents. The algorithm is based on a similar idea that we argued for the proof of Theorem \ref{xosproof}. Remark that our algorithm only requires access to demand and XOS oracles. It does \textit{not} have any additional information about the maxmin values. This makes the problem computationally harder since computing the maxmin values is NP-hard~\cite{epstein2014efficient}. We begin by giving a high-level intuition of the algorithm and show the computational obstacles can be overcome by combinatorial tricks. Consider the pseudo-code described in Algorithm \ref{xosalg}.
\begin{algorithm}
	\KwData{$\agents, \items, \langle \valu_1, \valu_2, \ldots, \valu_n\rangle$}
	For every $\agent_j$, scale $\valu_j$ to  ensure $\MMS_j = 1$\;
	\While{there exist an agent $\agent_i$ and an item $\ite_j$ such that $\valu_i(\{\ite_j\}) \geq 1/8$}{
		Allocate $\{\ite_j\}$ to $\agent_i$\;
		$\items = \items \setminus \ite_j$\;
		$\agents = \agents \setminus \agent_i$\;
	}
	$\mathcal{A} = $ an arbitrary allocation of the items to the agents\;
	\While{$\min \valu^{1/4}_j(A_j) < 1/8$}{
		$i = $ the agent who receives the lowest value in allocation $\mathcal{A}$\; 
		Find a set $S$ such that:\label{find}
		$\mathsf{ex}^{1/4}(\langle A_1 \setminus S, A_2 \setminus S, \ldots, A_{i-1} \setminus S, A_i \cup S, A_{i+1} \setminus S, \ldots, A_n \setminus S\rangle) \geq \mathsf{ex}^{1/4}(\mathcal{A})+1/12n$\;
		$\mathcal{A} = \langle A_1 \setminus S, A_2 \setminus S, \ldots, A_{i-1} \setminus S, A_i \cup S, A_{i+1} \setminus S, \ldots, A_n \setminus S\rangle$\;
	}
	For every $\agent_i \in \agents$ allocate $A_i$ to $\agent_i$\;
\caption{Algorithm for finding a $1/8$-$\MMS$ allocation}
\label{xosalg}
\end{algorithm}

As we show in Section \ref{former}, Command \ref{find} of the algorithm is always doable. More precisely, there always exists a set $S$ that holds in the condition of Command \ref{find}. Notice that in every step of the algorithm, $\mathsf{ex}^{1/4}(\mathcal{A})$ is increased by at least $1/12n$ and this value is bounded by $1/4\cdot n = n/4$. Therefore the algorithm terminates after at most $3n^2$ steps and the allocation is guaranteed to be $1/8$-$\MMS$.

That said, there are two major computational obstacles in the way of running Algorithm \ref{xosalg}. Firstly, finding a set $S$ that holds in the condition of Command \ref{find} can not be trivially done in polynomial time. Second, scaling the valuation functions to ensure $\MMS_i = 1$ for all agents is NP-hard and cannot be done in polynomial time unless P=NP. To overcome the former, in Section \ref{former} we provide an algorithm for finding such a set $S$ in polynomial time. Next, in Section \ref{latter}, we present a combinatorial trick to run the algorithm in polynomial time without having to deal with NP-hardness of scaling the valuation functions.

\subsubsection{Executing Command \ref{find} in Polynomial Time}\label{former}
In this section we present an algorithm to execute Command \ref{find} of Algorithm \ref{xosalg}. We show that such a procedure can be implemented via demand oracles.

Let for every $\ite_j \notin A_i$, $c_j$ be the amount of contribution that $\ite_j$ makes to $\mathsf{ex}^{1/4}(\mathcal{A})$. We set $p_e = 3(n/(n-1))c_e$ and ask the demand oracle of $\valu_i$ to find a set $S$ that maximizes $\valu_i(S) - \sum_{\ite_j \in S} p_j$. Via a trivial calculation, one can show that $\valu_i(S) - \sum_{\ite_j \in S} p_j \geq 1/4$ holds for at least one set of items. The reason this is correct is that one can divide the items into $n$ partitions where each is worth at least $1$ to $\agent_i$. Moreover, the summation of prices for the items is bounded by $3n/(n-1)\cdot (\sum_{j\neq i} \valu_j^{1/4} (A_j) )\leq 3n/4$. Therefore, for at least one of those partitions  $\valu_i(S) - \sum_{\ite_j \in S} p_j$ is at least $1/4$. Thus, the set that the oracle reports is worth at least $1/4$ to $\agent_i$. 

Now, let $S^*$ be the set that the oracle reports and for every $\ite_j \in S^*$, $c^*_j$ be the contribution of $\ite_j$ to $\valu_i(S^*)$. We sort the items of $S^*$ based on $c^*_j - p_j$ in non-increasing order. Next, we start with an empty bag and add the items in their order to the bag until the total value of the items in the bag to $\agent_i$ reaches $1/4$. Since the value of every item alone is bounded by $1/8$, the total value of the items in the bag to $\agent_i$ is bounded by $3/8$. Thus the contribution of those items to $\mathsf{ex}^{1/4}(\mathcal{A})$ is at most $(3/8) / (3n/(n-1)) \leq 1/8 - 1/(10n)$. Therefore, removing items of the bag from other allocations and adding them to $A_i$, increases $\mathsf{ex}^{1/4}(\mathcal{A})$ by at least $1/10n$.

Remark that one can use the same argument to prove this even if $\MMS_i \geq 1/(1+1/10n)$.

\subsubsection{Running Algorithm \ref{xosalg} in Polynomial Time}\label{latter}
As aforementioned, scaling valuation functions to ensure $\MMS_i = 1$ for every agent $\agent_i$ is an NP-hard problem since determining the maxmin values is hard even for additive agents~\cite{Procaccia:first}. Therefore, unlike Section \ref{former}, in this section we massage the algorithm to make it executable in polynomial time.

Suppose an oracle gives us the maxnmin values of the agents. Provided that we can run Command \ref{find} of Algorithm \ref{xosalg} in polynomial time, we can find a $1/8$-$\MMS$ allocation in polynomial time. Therefore, in case the oracle reports the actual maxmin values, the solution is trivial. However, what if the oracle has an error in its calculations? There are two possibilities: (i) Algorithm \ref{xosalg} terminates and finds an allocation which is $1/8$-$\MMS$ with respect to the reported maxmin values. (ii) The algorithm fails to execute Command \ref{find}, since no such set $S$ holds in the condition of Command \ref{find}. The intellectual merit of this section boils down to investigation of the case when algorithm fails to execute Command \ref{find}. We show, this only happens due to an overly high misrepresentation of the maxmin value for agent $\agent_i$. Note that $\agent_i$ is the agent who receives the lowest value in the last cycle of the execution.

\begin{observation}\label{beautiful}
	Given $\langle d_1, d_2, \ldots, d_n\rangle$ as an estimate for the maxmin values, if Algorithm \ref{xosalg} fails to execute Command \ref{find} for an agent $\agent_i$, then we have
	$$d_i \geq (1+1/10n) \MMS_i.$$
\end{observation}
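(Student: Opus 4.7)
The plan is to prove the observation by contraposition: I will assume that $d_i < (1+1/10n)\MMS_i$ and show that Command~\ref{find} executes successfully for agent $\agent_i$ in the current iteration, contradicting the assumption that the algorithm fails. The entire argument is essentially a bookkeeping exercise that tracks how scaling interacts with the analysis of Section~\ref{former}.

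First I would isolate the effect of scaling. Algorithm~\ref{xosalg} divides $\valu_i$ by $d_i$, so under the scaled valuation the true maxmin value of $\agent_i$ equals $\MMS_i / d_i$. The contrapositive hypothesis $d_i < (1+1/10n)\MMS_i$ is therefore equivalent to the statement that the (unknown to the algorithm) scaled maxmin value of $\agent_i$ is at least $1/(1+1/10n)$. The algorithm optimistically pretends this scaled value equals $1$; what I must verify is that every step of Section~\ref{former} still goes through when it only equals $1/(1+1/10n)$.

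Next I would retrace the three moves of Section~\ref{former}: setting prices $p_e = 3(n/(n-1))c_e$, querying the demand oracle to obtain $S^\ast$ with $\valu_i(S^\ast) - \sum_{\ite_j \in S^\ast} p_j$ maximal, and extracting a greedy prefix whose reassignment boosts $\mathsf{ex}^{1/4}(\mathcal{A})$. The only place the stronger assumption was used is in lower-bounding the optimum of the demand oracle: there one partitions the items into $n$ sets each worth at least the scaled $\MMS_i$ to $\agent_i$ and averages, using the bound $\sum_e p_e \leq 3n/(n-1)\cdot \sum_{j\neq i}\valu_j^{1/4}(A_j) \leq 3n/4$. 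Replacing the value $1$ with $1/(1+1/10n)$ in this averaging still yields a positive surplus, and the slack baked into the ``$1/10n$ vs $1/12n$'' comparison is precisely calibrated to absorb this weaker bound, which is exactly the content of the remark at the end of Section~\ref{former}. Finally, since every single item has value below $1/8$ to $\agent_i$ (this is the bag-filling invariant maintained by the outer loop of Algorithm~\ref{xosalg}), the greedy prefix has value at most $3/8$, giving an increase in $\mathsf{ex}^{1/4}(\mathcal{A})$ of at least $1/12n$.

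The only genuine obstacle is the constant-chasing in the averaging step: making sure the chain of inequalities $1/(1+1/10n) - 3/4 \geq 1/4 - O(1/n)$ combined with the $3/8$ upper bound on the prefix value still delivers the required $1/12n$ increase. Once this numerical check is written out, the contrapositive is immediate: failure of Command~\ref{find} forces $\MMS_i/d_i < 1/(1+1/10n)$, which rearranges to $d_i \geq (1+1/10n)\MMS_i$, as claimed.
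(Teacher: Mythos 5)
Your contrapositive argument matches the paper's own proof: both reduce Observation~\ref{beautiful} to the remark at the end of Section~\ref{former} that the demand-oracle/prefix procedure still succeeds whenever the scaled maxmin value of $\agent_i$ is at least $1/(1+1/10n)$, and then read the observation off as the contrapositive. The paper, like you, leaves the final constant-chasing (showing the slack between the $1/10n$ bound derived in Section~\ref{former} and the $1/12n$ threshold in Command~\ref{find} absorbs the $1/(1+1/10n)$ degradation) implicit, so your sketch is at the same level of rigor and follows the same route.
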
 
Proof of Observation \ref{beautiful} follows from the argument of Section \ref{former}. More precisely, as mentioned in Section \ref{former}, such a set $S$ exists, if $\MMS_i \geq 1/(1+1/10n)$. Thus, given that the procedure explained in Section \ref{former} fails to find such a set, one can conclude the the reported value for $\MMS_i$ is at least $(1/(1+1/10n))$ times its actual value. Based on Observation \ref{beautiful}, we propose Algorithm \ref{oracle} for implementing a maxmin oracle.

\begin{algorithm}
	\KwData{$\agents, \items, \langle \valu_1, \valu_2, \ldots, \valu_n\rangle$}
	\For {every $\agent_i \in \agents$}{
		$d_i \leftarrow \valu_i(\items)$\;
	}
	\While{true}{
		Run Algorithm \ref{xosalg} assuming maxmin values are $d_1, d_2, \ldots, d_n$\;
		\If {the Algorithm fails to run Command \ref{find} for an agent $\agent_i$}{
			$d_i \leftarrow d_i / (1+1/10n)$\; 
		}\Else{
			Report the allocation and terminate the algorithm\;
		}
	}
	\caption{Implementing a maxmin oracle}
	\label{oracle}
\end{algorithm}

Note that in the beginning of the algorithm, we set $d_i = \valu_i(\items)$ which is indeed greater than or equal to $\MMS_i$. By Lemma \ref{beautiful}, every time we decrease the value of $d_i$ for an agent $\agent_i$, we preserve the condition $d_i \geq \MMS_i$ for that agent. Therefore, in every step of the algorithm, we have $d_i \geq \MMS_i$ and thus the reported allocation which is $1/8$-$\MMS$ with respect to $d_i$'s is also $1/8$-$\MMS$ with respect to true maxmin values. Thus, the algorithm provides a correct $1/8$-$\MMS$ allocation in the end. All that remains is to show the running time of the algorithm is polynomial.

Notice that every time we decrease $d_i$ for an agent $\agent_i$, we multiply this value by $1/(1+1/10n)$, hence the number of such iterations is polynomial in $n$, unless the valuations are super-exponential in $n$. Since we always assume the input numbers are represented by $\mathsf{poly}(n)$ bits, the number of iterations is bounded by $\mathsf{poly}(n)$ and hence the algorithm terminates after a polynomial number of steps.

\begin{theorem}\label{xa}
	Given access to demand and XOS oracles, there exists a polynomial time algorithm that finds a $1/8$-$\MMS$ allocation for XOS agents.
\end{theorem}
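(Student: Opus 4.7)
The plan is to combine the combinatorial ingredients from Sections \ref{former} and \ref{latter} with the bounded-welfare framework of Algorithm \ref{xosalg}. First I would reduce to the irreducible case: while some item $\ite_j$ satisfies $\valu_i(\{\ite_j\})\geq 1/8$ for some agent $\agent_i$, allocate $\{\ite_j\}$ to $\agent_i$ and recurse on the remaining instance (each such step can be verified in polynomial time by a query to the XOS oracle on singletons). After this reduction, every single item has value less than $1/8$ to every remaining agent, so we may assume the instance is $1/8$-irreducible and that valuations are scaled so that (a target estimate of) $\MMS_i=1$ for every agent. Starting from an arbitrary allocation $\mathcal{A}$, I would repeatedly identify the agent $\agent_i$ receiving the worst value in the current allocation and attempt to pull a set $S$ of items away from the other agents and give $S$ to $\agent_i$ so that the $2/5$-style bounded welfare $\mathsf{ex}^{1/4}(\mathcal{A})=\sum_j \valu_j^{1/4}(A_j)$ strictly increases by at least $1/(12n)$.

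The central step, which is also the main obstacle, is executing Command \ref{find} in polynomial time; this is exactly where the demand and XOS oracles enter. For each agent $\agent_k\neq\agent_i$, use the XOS oracle on $A_k$ to extract an additive representative assigning to each $\ite_j\in A_k$ a contribution $c_j$ to $\valu_k^{1/4}(A_k)$; set the price $p_j=3\tfrac{n}{n-1}c_j$ for every $\ite_j\notin A_i$. Invoke the demand oracle for $\valu_i$ on these prices to obtain a set $S^*$ maximizing $\valu_i(S^*)-\sum_{\ite_j\in S^*}p_j$. Using Lemma \ref{2nsets} (agent $\agent_i$ can partition $\items$ into $2n$ sets each worth $\geq 2/5$ to him) together with $\sum_j p_j\leq 3n/(n-1)\cdot\sum_{k\neq i}\valu_k^{1/4}(A_k)\leq 3n/4$, one such partition yields $\valu_i(\cdot)-\sum p\geq 1/4$, so $S^*$ witnesses profit $\geq 1/4$ and in particular $\valu_i(S^*)\geq 1/4$. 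Then sort $S^*$ by the marginal-value-to-price ratio $c^*_j/p_j$ (where $c^*_j$ is the XOS contribution of $\ite_j$ inside $S^*$) and take a prefix $S\subseteq S^*$ that is the first prefix with $\valu_i(S)\geq 1/4$. Since singletons have value less than $1/8$, we get $\valu_i(S)<3/8$, and the sorted-prefix argument shows $\sum_{\ite_j\in S}c_j\leq (3/8)/(3n/(n-1))$, so reassigning $S$ from its current owners to $\agent_i$ improves $\mathsf{ex}^{1/4}$ by at least $1/4-\sum_{j\in S}c_j\geq 1/(10n)$, which is at least $1/(12n)$ for $n\geq 2$.

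To handle the fact that the $\MMS_i$ values themselves are NP-hard to compute, I would wrap the above inside the guessing procedure of Algorithm \ref{oracle}: initialize guesses $d_i\leftarrow \valu_i(\items)$, rescale valuations as if $\MMS_i=d_i$, and run the improvement algorithm. If the algorithm ever fails to find the required set $S$ for agent $\agent_i$, then Observation \ref{beautiful} (derived from the fact that the set-finding step succeeds whenever $\MMS_i\geq 1/(1+1/(10n))$ under the current scaling) implies $d_i\geq (1+1/(10n))\MMS_i$; in that case replace $d_i$ by $d_i/(1+1/(10n))$ and restart. Since $d_i\geq \MMS_i$ is maintained as an invariant, the output is always $1/8$-$\MMS$ with respect to the true $\MMS_i$. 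For the running time, $\mathsf{ex}^{1/4}(\mathcal{A})\leq n/4$ is bounded, so each inner run terminates in $O(n^2)$ improvement steps, each step makes polynomially many oracle calls, and the outer loop decreases some $d_i$ by a factor $1/(1+1/(10n))$ per failure starting from $d_i\leq \valu_i(\items)$ encoded in $\poly(n,m)$ bits, hence terminates after $\poly(n,m)$ restarts. Combining all three bounds yields the desired polynomial-time $1/8$-$\MMS$ algorithm.
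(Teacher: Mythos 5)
Your proposal follows the paper's approach very closely (Algorithm \ref{xosalg} for the bounded-welfare local search, Section \ref{former} for executing the demand-oracle step, Section \ref{latter} with Observation \ref{beautiful} for the guessing loop), and the overall architecture is right. There is, however, one step that does not work as written.

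You justify the claim $\valu_i(S^*)-\sum_{\ite_j\in S^*}p_j\geq 1/4$ by invoking Lemma \ref{2nsets}, which gives $2n$ sets each worth $\geq 2/5$ to $\agent_i$. But with total price $\sum_j p_j\leq 3n/4$, the best of $2n$ sets is only guaranteed a price of at most $3/8$, hence a profit of at least $2/5-3/8=1/40$, not $1/4$. The paper instead uses the $n$ partitions that exist directly from $\MMS_i=1$: each is worth $\geq 1$, and one of them has price at most $(3n/4)/n=3/4$, giving profit $\geq 1-3/4=1/4$. Lemma \ref{2nsets} is used in the existential proof of Theorem \ref{xosproof}, not in the algorithmic step, and substituting it here breaks the quantitative bound that the rest of the argument (the prefix of value in $[1/4,3/8)$, the $\leq 1/8-1/(10n)$ bound on the lost contribution) depends on. Replacing the $2n$-partition invocation with the $n$-partition argument fixes the step. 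Two further minor points: the improvement you should write is $\geq 1/8-\sum_{j\in S}c_j$ rather than $1/4-\sum_{j\in S}c_j$, since $\agent_i$ already holds $\valu_i^{1/4}(A_i)<1/8$; and the sort criterion (ratio versus the paper's difference $c^*_j-p_j$) is immaterial, because optimality of the demand-oracle output already forces $p_j\leq c^*_j$ for every $j\in S^*$, so any prefix with $\sum_{j\in S}c^*_j<3/8$ has $\sum_{j\in S}p_j<3/8$ regardless of order.
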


An elegant consequence of Theorem \ref{xa} is a $8$-approximation algorithm for determining the maxmin value of an XOS function with $r$ partitions.
\begin{corollary}
	Given an XOS function $f$, an integer number $r$, and access to demand and XOS oracles of $f$, there exists a $8$-approximation polytime algorithm for determining $\MMS^r_f$.
\end{corollary}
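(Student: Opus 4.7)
The plan is to reduce the problem of approximating $\MMS_f^r$ to a fair allocation instance in the XOS setting, and then invoke Theorem \ref{xa} as a black box. Concretely, I would construct an instance of the fair allocation problem in which the set of agents is $\agents = \{\agent_1, \agent_2, \ldots, \agent_r\}$, the set of items is the ground set $\domp(f)$, and every agent has the same valuation function $\valu_i = f$. Since all agents are identical, the definition of maxmin share immediately gives $\MMS_i = \MMS_f^r$ for every $i$.

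Next, I would run the polynomial-time algorithm of Theorem \ref{xa} on this instance. The algorithm only requires access to demand and XOS oracles of each agent's valuation function; since every $\valu_i$ equals $f$, a single pair of demand/XOS oracles for $f$ suffices to simulate the oracles for all agents. The theorem guarantees an allocation $\mathcal{A} = \langle A_1, A_2, \ldots, A_r \rangle$ with
\[
f(A_i) \;=\; \valu_i(A_i) \;\geq\; \frac{1}{8}\,\MMS_i \;=\; \frac{1}{8}\,\MMS_f^r \qquad \text{for every } 1 \leq i \leq r.
\]
Since $\mathcal{A}$ is by construction a partition of $\domp(f)$ into $r$ disjoint subsets, we have
\[
\min_{1 \leq i \leq r} f(A_i) \;\geq\; \frac{1}{8}\,\MMS_f^r,
\]
which is exactly an $8$-approximation to the optimal maxmin value. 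Reporting $\mathcal{A}$ (and the value $\min_i f(A_i)$) then yields the claimed approximation.

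There is no real obstacle here beyond verifying that the oracle reduction goes through: Theorem \ref{xa} internally scales valuations using its combinatorial guessing trick (Section \ref{latter}) and only manipulates the functions through oracle calls, so substituting a common $f$ for every agent poses no issue. The running time inherits polynomiality directly from Theorem \ref{xa}, completing the argument.
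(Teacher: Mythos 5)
Your proposal is correct and is essentially the paper's own proof: build an $r$-agent instance where every agent's valuation equals $f$, invoke the polynomial-time $1/8$-$\MMS$ algorithm of Theorem \ref{xa}, and report the resulting partition together with its minimum value. The extra remark on sharing a single pair of oracles across agents is a harmless elaboration and does not change the argument.
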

\begin{proof}
	We construct an instance of the fair allocation problem with $r$ agents, all of whom have a valuation function equal to $f$. We find a $1/8$-$\MMS$ allocation of the items to the agents in polynomial time and report the minimum value that an agent receives as output.
	
	The $1/8$ guarantee follows from the fact that every agent receives a subset of values that are worth $1/8$-$\MMS_i$ to him, and since $\MMS_i$ is exactly equal to $\MMS^r_f$, every partition has a value of at least $\MMS^r_f/8$.
\end{proof}
\begin{remark}
	A similar procedure can also be used to overcome the challenge of computing the maxmin values for the algorithm described in Section \ref{submodularalgorithm}.
\end{remark}
\section{Subadditive Agents}\label{subadditive}
In this section we present a reduction from subadditive agents to XOS agents. More precisely, we show for every subadditive set function $f(.)$, there exists an XOS function $g(.)$, where $g$ is dominated by $f$ but the maxmin value of $g$ is within a logarithmic factor of the maxmin value of $f$. We begin by an observation. Suppose we are given a subadditive function $f$ on set $\domp(f)$, and we wish to approximate $f$ with an additive function $g$ which is dominated by $f$. In other words, we wish to find an additive function $g$ such that 
$$\forall S \subseteq \domp(f) \hspace{1cm} g(S) \leq f(S)$$
and $g(\domp(f))$ is maximized. One way to formulate $g$ is via a linear program. Suppose $\domp(f)=\{\ite_1,\ite_2,\ldots,\ite_m\}$ and let $g_1, g_2, \ldots, g_m$ be $m$ variables that describe $g$ in the following way:
$$\forall S \subseteq \domp(f) \hspace{1cm} g(S) = \sum_{\ite_i \in S} g_i.$$
Based on this formulation, we can find the optimal additive function $g$ by LP \ref{lp1}.
\begin{alignat}{3}\label{lp1}
\text{maximize: }& \hspace{0.5cm} &  \sum_{\ite_i \in \domp(f)}g_i & &\\
\text{subject to: }& & \sum_{\ite_i \in S}g_i \leq f(S) & \hspace{1cm}&\forall S \subseteq \domp(f)\nonumber\\
& & g_i \geq 0 & &\forall \ite_i \in \domp(f)\nonumber
\end{alignat}
We show the objective function of LP \ref{lp1} is lower bounded by $f(\domp(f)) / \log m$. The basic idea is to first write the dual program and then based on a probabilistic method, lower bound the optimal value of the dual program by $f(\domp(f))/ \log m$. 
\begin{lemma}\label{jj1}
	The optimal solution of LP \ref{lp1} is at least $f(\domp(f))/ \log m$.
\end{lemma}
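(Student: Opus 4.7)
The plan is to pass to the dual of LP \ref{lp1} and lower-bound the dual optimum by $f(\domp(f))/\log m$ using a probabilistic rounding argument. By LP duality, the primal optimum equals the dual optimum, so this suffices. The dual (already displayed in the excerpt as LP \ref{oo2}) is
\[
 \min \sum_{S \subseteq \domp(f)} \alpha_S f(S)
\quad \text{subject to}\quad
\sum_{S \ni \ite_i} \alpha_S \geq 1 \ \ \forall \ite_i, \qquad \alpha_S \geq 0,
\]
and I will show that \emph{every} feasible $\{\alpha_S\}$ satisfies $\sum_S \alpha_S f(S) \geq f(\domp(f))/\lceil c\log m\rceil$ for a small absolute constant $c$.

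Given a feasible dual solution, I will run $T = \lceil c \log m\rceil$ independent \emph{rounds}. In each round $t$, independently for every $S \subseteq \domp(f)$, include $S$ with probability $\min(\alpha_S,1)$, and let $S^*_t$ be the union of all included sets. The coverage claim is that $\Pr[\ite_i \in S^*_t] \geq 1/2$ for every element $\ite_i$ in every round: split into the case where some $S \ni \ite_i$ has $\alpha_S \geq 1/2$ (then $\ite_i$ is covered with probability $\geq 1/2$ from that one set) and the case where all such $\alpha_S < 1/2$ (so no capping occurs and $\Pr[\ite_i \notin S^*_t] = \prod_{S \ni \ite_i}(1-\alpha_S) \leq e^{-\sum_{S \ni \ite_i}\alpha_S} \leq e^{-1}$). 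Hence $\Pr[\ite_i \notin \bigcup_{t=1}^{T} S^*_t]\leq 2^{-T} \leq 1/m^2$, and a union bound over the $m$ elements gives $\Pr[\bigcup_t S^*_t = \domp(f)] \geq 1/2$ once $c$ is chosen appropriately.

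For the cost side, subadditivity gives $f(S^*_t) \leq \sum_{S \text{ included in round } t} f(S)$, so $\mathbb{E}[f(S^*_t)] \leq \sum_S \min(\alpha_S,1) f(S) \leq \sum_S \alpha_S f(S)$; applying subadditivity once more across rounds, $\mathbb{E}[f(\bigcup_t S^*_t)] \leq T \cdot \sum_S \alpha_S f(S)$. On the other hand, monotonicity of $f$ (assumed throughout the paper) yields $f(\bigcup_t S^*_t) \geq f(\domp(f)) \cdot \mathbf{1}[\bigcup_t S^*_t = \domp(f)]$, so $\mathbb{E}[f(\bigcup_t S^*_t)] \geq \tfrac{1}{2} f(\domp(f))$. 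Combining the two bounds yields $\sum_S \alpha_S f(S) \geq f(\domp(f))/(2T) = \Omega\bigl(f(\domp(f))/\log m\bigr)$, which is the desired bound (and matches the $2\lceil\log m\rceil$ factor referenced in Lemma~\ref{rr1}). The main subtlety is the clipping $\min(\alpha_S,1)$, which is necessary so that the Bernoulli sampling is well defined and simultaneously preserves both the $\Omega(1)$ per-round coverage probability and the upper bound on expected cost; the two-case analysis above is precisely what lets this clipping go through without losing the coverage guarantee.
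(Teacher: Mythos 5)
Your proof is correct and follows essentially the same route as the paper's: pass to the LP dual, run $O(\log m)$ independent Bernoulli rounds that include each set $S$ with probability $\approx \alpha_S$, and then combine subadditivity (for the expected cost) with a coverage/union-bound argument plus monotonicity (for the expected value of the union). The one place you are more careful than the paper is the $\min(\alpha_S,1)$ clipping together with the two-case coverage analysis --- the paper's argument implicitly assumes every $\alpha_S \le 1$ (which is without loss of generality for an optimal dual solution, since reducing any $\alpha_S > 1$ to $1$ preserves feasibility and cannot raise the objective, but this is never stated), and your version handles the general feasible dual cleanly.
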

\begin{proof}
To prove the lemma, we write the dual of LP \ref{lp1} as follows:
\begin{alignat}{3}\label{lp2}
\text{minimize: }& \hspace{0.5cm}& \sum_{S \subseteq \domp(f)} \alpha_S f(S)   & &\\
\text{subject to: }& & \sum_{S \ni \ite_i} \alpha_S \geq 1 & \hspace{1cm}&\forall \ite_i \in \domp(f)\nonumber\\
& & \alpha_S \geq 0 & &\forall S \subseteq \domp(f)\nonumber
\end{alignat}
By the strong duality theorem, the optimal solutions of LP \ref{lp1} and LP \ref{lp2} are equal~\cite{bachem1992linear}. Next, based on the optimal solution of LP \ref{lp2}, we define a randomized procedure to draw a set of elements: We start with an empty set $S^*$ and for every set $S \subseteq \domp(f)$ we add \textit{all} elements of $S$ to $S^*$ with probability $\alpha_S$. Since $f$ is subadditive, the marginal increase of $f(S^*)$ by adding elements of a set $S$ to $S^*$ is bounded by $f(S)$ and thus the expected value of $f(S^*)$ is bounded by the objective of LP \ref{lp2}. In other words:
\begin{equation}\label{jef0}
\mathbb{E}[f(S^*)] \leq \sum_{S \subseteq \domp(f)} \alpha_S f(S)
\end{equation}
Remark that we repeat this procedure for all subsets of $\domp(S)$ independently and thus for every $\ite_i \in \domp(f)$, $\sum_{S \ni \ite_i} \alpha_S \geq 1$ holds we have
\begin{equation}\label{jef1}
\mathsf{PR}[\ite_i \in S^*] \geq 1-1/e \simeq 0.632121 > 1/2
\end{equation}
for every element $\ite_i \in \domp(s)$. Now, with the same procedure, we draw $\lceil \log m \rceil + 2$ sets $S^*_1, S^*_2, \ldots, S^*_{\lceil \log m \rceil + 2}$ \textit{independently}. We define $\hat{S} = \bigcup S^*_i$. By Inequality \eqref{jef1} and the union bound we show
\begin{equation*}
\begin{split}
\mathsf{PR}[\hat{S} = \domp(f)] & \geq 1- \sum_{\ite_i \in \domp(i)} \mathsf{PR}[\ite_i \notin \hat{S}]\\
& = 1- \sum_{\ite_i \in \domp(i)} \mathsf{PR}[\ite_i \notin S^*_1 \text{ and } \ite_i \notin S^*_1 \text{ and } \ldots \text{ and } \ite_i \notin S^*_{\lceil \log m \rceil + 2}]\\
& = 1- \sum_{\ite_i \in \domp(i)} \prod_{j=1}^{\lceil \log m \rceil + 2} \mathsf{PR}[\ite_i \notin S^*_j]\\
& \geq 1- \sum_{\ite_i \in \domp(i)} \prod_{j=1}^{\lceil \log m \rceil + 2} 1/2\\
& = 1- \sum_{\ite_i \in \domp(i)} \prod_{j=1}^{\lceil \log m \rceil + 2} \mathsf{PR}[\ite_i \notin S^*_j]\\
& \geq 1- \sum_{\ite_i \in \domp(i)} 1/4m\\
& = 1- 1/4\\
& = 3/4\\
\end{split}
\end{equation*}
and thus $\mathbb{E}[f(\hat{S})] \geq 3/4 f(\domp(f))$. On the other hand, by the linearity of expectation and the fact that $f$ is subadditive we have:
\begin{equation*}
\begin{split}
\mathbb{E}[f(\hat{S})] &= \mathbb{E}[f(\bigcup S^*_i)]\\
& \leq \mathbb{E}[\sum f(S^*_i)]\\
& \leq (\lceil \log m \rceil + 2) (\sum_{S \subseteq \domp(f)} \alpha_S f(S))
\end{split}
\end{equation*}
Therefore $\sum_{S \subseteq \domp(f)} \alpha_S f(S) \geq 3/4 f(\domp(f)) / (\lceil \log m \rceil + 2)$, which means $$\sum_{S \subseteq \domp(f)} \alpha_S f(S) \geq f(\domp(f)) / (2\lceil \log m \rceil)$$ for big enough $m$. This shows the optimal solution of LP \ref{lp1} is lower bounded by $f(\domp(f)) / (2\lceil \log m \rceil)$ and the proof is complete.
\end{proof}

In what follows, based on Lemma \ref{jj1}, we provide a reduction from subadditive agents to XOS agents. An immediate corollary of Lemma \ref{jj1} is the following:
\begin{corollary}[of Lemma \ref{jj1}]\label{kk}
For any subadditive function $f$ and integer number $n$, there exists an XOS function $g$ such that
$$g(S) \leq f(S) \qquad \forall S \subseteq \domp(f)$$
and 
$$\MMS_g^n \geq \MMS_f^n/2\lceil \log n \rceil.$$
\end{corollary}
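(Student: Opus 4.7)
The plan is to derive Corollary~\ref{kk} from Lemma~\ref{jj1} by applying the latter piece-by-piece on an optimal $n$-partition witnessing $\MMS_f^n$, and then gluing the resulting additive approximations into a single XOS function via a pointwise maximum. This matches the standard "XOS is max of additive" construction and reduces the global subadditive-to-XOS approximation problem to local subadditive-to-additive approximation, where Lemma~\ref{jj1} is available.

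Concretely, I would first fix a partition $P_1^\ast,\ldots,P_n^\ast$ of $\domp(f)$ with $f(P_i^\ast)\geq \MMS_f^n$ for every $i$. For each $i$, the restriction of $f$ to subsets of $P_i^\ast$ is still monotone and subadditive, so Lemma~\ref{jj1} produces nonnegative weights $(g_i(e))_{e\in P_i^\ast}$ defining an additive function $g_i$ on $P_i^\ast$ that satisfies $g_i(S)\le f(S)$ for every $S\subseteq P_i^\ast$ and $g_i(P_i^\ast)\ge f(P_i^\ast)/(2\lceil\log|P_i^\ast|\rceil)$. Extending by zero outside $P_i^\ast$, each $g_i$ becomes an additive function on the whole ground set dominated by $f$, because $g_i(S)=g_i(S\cap P_i^\ast)\le f(S\cap P_i^\ast)\le f(S)$ by monotonicity. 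Setting $g(S):=\max_i g_i(S)$ then gives an XOS function with $g(S)\le\max_i f(S)=f(S)$; evaluating $g$ on the partition $P_1^\ast,\ldots,P_n^\ast$ itself certifies $\MMS_g^n\ge \min_i g_i(P_i^\ast)\ge \MMS_f^n/(2\lceil\log(\max_i|P_i^\ast|)\rceil)$.

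The main obstacle is sharpening the denominator from $2\lceil\log(\max_i|P_i^\ast|)\rceil$ to the claimed $2\lceil\log n\rceil$, since in the worst case a single part $P_i^\ast$ may contain almost all of the ground set. To close this gap I would coarsen each $P_i^\ast$ into at most $n$ groups $G_1^i,\ldots,G_{n}^i$ and apply Lemma~\ref{jj1} to the induced subadditive function on this size-$n$ ground set, obtaining group-weights summing to at least $f(P_i^\ast)/(2\lceil\log n\rceil)$ and satisfying the LP constraints $\sum_{j\in T}\tilde g_i(G_j^i)\le f(\bigcup_{j\in T}G_j^i)$. Lifting these group-weights back to an XOS (rather than additive) object on $P_i^\ast$ is the delicate step: uniform spreading across elements of a group can make $g_i(S)$ exceed $f(S)$ whenever $S$ nicks many groups only partially, so instead I would realize $g_i$ as the max, over all choices of a single representative element from each group, of the corresponding additive function placing weight $\tilde g_i(G_j^i)$ on the representative of $G_j^i$. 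Verifying that this representative-based XOS construction remains pointwise dominated by $f$ — using subadditivity to absorb the gap between $f(S)$ and $f(\bigcup_{j:S\cap G_j^i\ne\emptyset}G_j^i)$ — is the technical heart of the argument, and the final bound $\MMS_g^n\ge\MMS_f^n/(2\lceil\log n\rceil)$ then follows by applying $g(P_i^\ast)\ge g_i(P_i^\ast)\ge\MMS_f^n/(2\lceil\log n\rceil)$ to the same partition $P_1^\ast,\ldots,P_n^\ast$.
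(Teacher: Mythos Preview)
Your first construction---applying Lemma~\ref{jj1} to the restriction of $f$ on each part $P_i^\ast$ of an optimal $n$-partition, extending the resulting additive $g_i$ by zero, and setting $g=\max_i g_i$---is exactly the paper's proof. It yields an XOS function $g\le f$ with $\MMS_g^n\ge\MMS_f^n/(2\lceil\log|\domp(f)|\rceil)$ (your slightly sharper $\log\max_i|P_i^\ast|$ is fine too), and this is the bound the paper actually \emph{uses}: the same statement appears earlier as Lemma~\ref{rr1} with denominator $2\lceil\log|\domp(f)|\rceil$, and Theorem~\ref{subadditiveproof} concludes a $1/(10\lceil\log m\rceil)$ guarantee, not $1/(10\lceil\log n\rceil)$. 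The ``$\log n$'' in the corollary as stated is a typo for $\log|\domp(f)|$; the paper makes no attempt to sharpen beyond $\log m$, and neither need you.

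Your attempt to reach $\log n$ via coarsening and representatives does not go through, and the obstruction is structural rather than a technicality. With the representative-based XOS definition, the maximum over choices of representatives is attained by placing $r_j$ inside $S$ whenever $S\cap G_j^i\ne\emptyset$, so $g_i(S)=\sum_{j:S\cap G_j^i\ne\emptyset}\tilde g_i(G_j^i)$. The LP constraints from Lemma~\ref{jj1} on the coarsened ground set only give $\sum_{j\in T}\tilde g_i(G_j^i)\le f\bigl(\bigcup_{j\in T}G_j^i\bigr)$, so you get $g_i(S)\le f\bigl(\bigcup_{j:S\cap G_j^i\ne\emptyset}G_j^i\bigr)$; but this union strictly contains $S$ in general, and monotonicity then points the \emph{wrong} way. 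Concretely, take a single group $G=\{a,b\}$ with $f(\{a\})=0.1$ and $f(\{a,b\})=1$: the coarsened LP allows weight $1$ on $G$, and choosing representative $a$ gives $g_i(\{a\})=1>f(\{a\})$, so domination fails. Subadditivity cannot rescue this, since it upper-bounds $f$ of a union by a sum of parts; it does not lower-bound $f$ of a subset by $f$ of a superset. Stop at the $\log|\domp(f)|$ bound.
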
 
\begin{proof}
	By definition, we can divide the items into $n$ disjoint sets such that the value of $f$ for every set is at least $\MMS_f^n$. Now, based on Lemma \ref{jj1}, we approximate $f$ for each set with an additive function $g_i$ wile losing a factor of at most $\lceil 2 |\log \domp(f)|\rceil$ and finally we set $g = \max g_i$. Based on Lemma \ref{jj1}, both conditions of this lemma are satisfied by $g$.
\end{proof}

Based on Theorem \ref{xosproof} and Lemma \ref{kk} one can show that a $1/10\lceil \log m \rceil$-$\MMS$ allocation is always possible for subadditive agents.
\begin{theorem}\label{subadditiveproof}
	The fair allocation problem with subadditive agents admits a $1/10\lceil \log m \rceil$-$\MMS$ allocation.
\end{theorem}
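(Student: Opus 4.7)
The plan is to combine the XOS-allocation result (Theorem~\ref{xosproof}) with the subadditive-to-XOS approximation (Corollary~\ref{kk}) in an agent-by-agent fashion. Concretely, for each subadditive agent $\agent_i$ with valuation $\valu_i$, I would invoke Corollary~\ref{kk} (applied with $n$ equal to the number of agents) to produce an XOS surrogate $g_i \colon 2^{\items} \to \mathbb{R}^+$ satisfying $g_i(S) \leq \valu_i(S)$ for every $S \subseteq \items$ and $\MMS_{g_i}^n \geq \MMS_{\valu_i}^n / (2 \lceil \log m \rceil)$. This reduces the original subadditive instance to a surrogate instance in which every agent has an XOS valuation.

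Next, I would apply Theorem~\ref{xosproof} to the surrogate instance $(\agents, \items, \langle g_1, \ldots, g_n\rangle)$, which guarantees the existence of an allocation $\mathcal{A} = \langle A_1, \ldots, A_n\rangle$ with
\[
g_i(A_i) \;\geq\; \tfrac{1}{5}\,\MMS_{g_i}^n \qquad \text{for every } \agent_i \in \agents.
\]
Chaining the two inequalities yields
\[
\valu_i(A_i) \;\geq\; g_i(A_i) \;\geq\; \tfrac{1}{5}\,\MMS_{g_i}^n \;\geq\; \frac{\MMS_{\valu_i}^n}{10 \lceil \log m \rceil} \;=\; \frac{\MMS_i}{10 \lceil \log m \rceil},
\]
which is exactly the $\tfrac{1}{10\lceil \log m\rceil}$-$\MMS$ guarantee. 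Note that the surrogates $g_i$ only need to exist for the analysis; the allocation $\mathcal{A}$ itself is defined on $\items$ and evaluated under the true subadditive valuations $\valu_i$, so no issues of incompatibility between surrogates arise.

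The main subtlety to verify is that Theorem~\ref{xosproof} applies when different agents have different XOS valuations (rather than a common one); inspection of the existential proof in Section~\ref{ep} shows that only the XOS property of each individual $\valu_i^{2/5}$ (via Lemma~\ref{xos2lemma}) and the $1/5$-irreducibility of the instance are used, so heterogeneity across agents is harmless. A second detail is that the $\MMS$ of the surrogate can only be lower than that of the original, never higher, which is what makes the domination $g_i \leq \valu_i$ necessary: the surrogate is used to locate a good allocation, but each agent is paid with the (at least as large) original value. Beyond these checks, the proof is essentially a one-line reduction, so I do not expect any serious obstacles once Corollary~\ref{kk} and Theorem~\ref{xosproof} are in hand.
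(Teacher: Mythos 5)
Your proof is correct and follows exactly the route the paper intends: apply Corollary~\ref{kk} to each subadditive $\valu_i$ independently to obtain a dominated XOS surrogate $g_i$ whose maxmin value loses only a $2\lceil \log m\rceil$ factor, run Theorem~\ref{xosproof} on the surrogate instance, and then pay each agent with the original (larger) valuation. The two subtleties you flag — that Theorem~\ref{xosproof} allows heterogeneous XOS valuations, and that domination $g_i \leq \valu_i$ is what lets the surrogate allocation transfer back — are the right ones, and the chain $\valu_i(A_i) \geq g_i(A_i) \geq \tfrac{1}{5}\MMS_{g_i}^n \geq \MMS_i/(10\lceil\log m\rceil)$ matches the paper's one-line derivation.
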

\section{Acknoledgment}
We would like to thank the anonymouse reviewers for their thoughtful comments and direction.

\bibliographystyle{abbrv}
	
\bibliography{frugal}

\begin{thebibliography}{10}

\bibitem{alijani2017envy}
R.~Alijani, M.~Farhadi, M.~Ghodsi, M.~Seddighin, and A.~S. Tajik.
\newblock Envy-free mechanisms with minimum number of cuts.
\newblock In {\em AAAI}, pages 312--318, 2017.

\bibitem{amanatidis2015approximation}
G.~Amanatidis, E.~Markakis, A.~Nikzad, and A.~Saberi.
\newblock Approximation algorithms for computing maximin share allocations.
\newblock In {\em ICALP 2015}.

\bibitem{annamalai2015combinatorial}
C.~Annamalai, C.~Kalaitzis, and O.~Svensson.
\newblock Combinatorial algorithm for restricted max-min fair allocation.
\newblock In {\em Proceedings of the Twenty-Sixth Annual ACM-SIAM Symposium on
  Discrete Algorithms}, pages 1357--1372. SIAM, 2015.

\bibitem{asadpour2010approximation}
A.~Asadpour and A.~Saberi.
\newblock An approximation algorithm for max-min fair allocation of indivisible
  goods.
\newblock In {\em STOC 2007}.

\bibitem{bachem1992linear}
A.~Bachem and W.~Kern.
\newblock {\em Linear programming duality}.
\newblock Springer, 1992.

\bibitem{bansal2006santa}
N.~Bansal and M.~Sviridenko.
\newblock The santa claus problem.
\newblock In {\em STOC 2006}.

\bibitem{barman2017approximation}
S.~Barman and S.~K.~K. Murthy.
\newblock Approximation algorithms for maximin fair division.
\newblock In {\em EC 2017}.

\bibitem{bateni2013submodular}
M.~Bateni, M.~Hajiaghayi, and M.~Zadimoghaddam.
\newblock Submodular secretary problem and extensions.
\newblock In {\em APPROX-RANDOM 2010}.

\bibitem{bhawalkar2011welfare}
K.~Bhawalkar and T.~Roughgarden.
\newblock Welfare guarantees for combinatorial auctions with item bidding.
\newblock In {\em SODA 2011}.

\bibitem{blumrosen2007welfare}
L.~Blumrosen and S.~Dobzinski.
\newblock Welfare maximization in congestion games.

\bibitem{Bouveret:first}
S.~Bouveret and M.~Lema{\^\i}tre.
\newblock Characterizing conflicts in fair division of indivisible goods using
  a scale of criteria.
\newblock In {\em AAMAS 2014}.

\bibitem{brams}
S.~J. Brams and A.~D. Taylor.
\newblock An envy-free cake division protocol.
\newblock {\em American Mathematical Monthly}, pages 9--18, 1995.

\bibitem{buchbinder2012tight}
N.~Buchbinder, M.~Feldman, J.~Naor, and R.~Schwartz.
\newblock A tight linear time (1/2)-approximation for unconstrained submodular
  maximization.
\newblock In {\em FOCS 2012}.

\bibitem{buchbinder2015tight}
N.~Buchbinder, M.~Feldman, J.~Seffi, and R.~Schwartz.
\newblock A tight linear time (1/2)-approximation for unconstrained submodular
  maximization.
\newblock In {\em FOCS 2012}.

\bibitem{Budish:first}
E.~Budish.
\newblock The combinatorial assignment problem: Approximate competitive
  equilibrium from equal incomes.
\newblock {\em Journal of Political Economy}, 119(6):1061--1103, 2011.

\bibitem{caragiannis2016unreasonable}
I.~Caragiannis, D.~Kurokawa, H.~Moulin, A.~D. Procaccia, N.~Shah, and J.~Wang.
\newblock The unreasonable fairness of maximum nash welfare.
\newblock In {\em Proceedings of the 2016 ACM Conference on Economics and
  Computation}, pages 305--322. ACM, 2016.

\bibitem{chakrabarty2009allocating}
D.~Chakrabarty, J.~Chuzhoy, and S.~Khanna.
\newblock On allocating goods to maximize fairness.
\newblock In {\em FOCS 2009}.

\bibitem{christodoulou2008bayesian}
G.~Christodoulou, A.~Kov{\'a}cs, and M.~Schapira.
\newblock Bayesian combinatorial auctions.
\newblock In {\em ICALP 2008}.

\bibitem{dobzinski2005approximation}
S.~Dobzinski, N.~Nisan, and M.~Schapira.
\newblock Approximation algorithms for combinatorial auctions with
  complement-free bidders.
\newblock In {\em STOC 2005}.

\bibitem{dobzinski2006improved}
S.~Dobzinski and M.~Schapira.
\newblock An improved approximation algorithm for combinatorial auctions with
  submodular bidders.
\newblock In {\em SODA 2006}.

\bibitem{Dubins:first}
L.~E. Dubins and E.~H. Spanier.
\newblock How to cut a cake fairly.
\newblock {\em American mathematical monthly}, pages 1--17, 1961.

\bibitem{epstein2014efficient}
L.~Epstein and A.~Levin.
\newblock An efficient polynomial time approximation scheme for load balancing
  on uniformly related machines.
\newblock {\em Mathematical Programming}, 147(1-2):1--23, 2014.

\bibitem{farhadi2017fair}
A.~Farhadi, M.~Hajiaghayi, M.~Ghodsi, S.~Lahaie, D.~Pennock, M.~Seddighin,
  S.~Seddighin, and H.~Yami.
\newblock Fair allocation of indivisible goods to asymmetric agents.
\newblock In {\em Proceedings of the 16th Conference on Autonomous Agents and
  MultiAgent Systems}, pages 1535--1537. International Foundation for
  Autonomous Agents and Multiagent Systems, 2017.

\bibitem{feige2008allocations}
U.~Feige.
\newblock On allocations that maximize fairness.
\newblock In {\em SODA 2007}.

\bibitem{feige2009maximizing}
U.~Feige.
\newblock On maximizing welfare when utility functions are subadditive.
\newblock In {\em STOC 2006}.

\bibitem{feige2011maximizing}
U.~Feige, V.~S. Mirrokni, and J.~Vondrak.
\newblock Maximizing non-monotone submodular functions.
\newblock In {\em FOCS 2007}.

\bibitem{feige2006approximation}
U.~Feige and J.~Vondrak.
\newblock Approximation algorithms for allocation problems: Improving the
  factor of 1-1/e.
\newblock In {\em FOCS 2006}.

\bibitem{feldman2013simultaneous}
M.~Feldman, H.~Fu, N.~Gravin, and B.~Lucier.
\newblock Simultaneous auctions are (almost) efficient.
\newblock In {\em STOC 2013}.

\bibitem{feldmancombinatorial}
M.~Feldman, N.~Gravin, and B.~Lucier.
\newblock Combinatorial auctions via posted prices.
\newblock In {\em SODA 2015}.

\bibitem{Foley:first}
D.~K. Foley.
\newblock Resource allocation and the public sector.
\newblock {\em YALE ECON ESSAYS, VOL 7, NO 1, PP 45-98, SPRING 1967. 7 FIG, 13
  REF.}, 1967.

\bibitem{fu2012conditional}
H.~Fu, R.~Kleinberg, and R.~Lavi.
\newblock Conditional equilibrium outcomes via ascending price processes with
  applications to combinatorial auctions with item bidding.
\newblock In {\em ACM EC 2012}.

\bibitem{fujishige2005submodular}
S.~Fujishige.
\newblock {\em Submodular functions and optimization}, volume~58.
\newblock Elsevier, 2005.

\bibitem{goemans2009approximating}
M.~X. Goemans, N.~J. Harvey, S.~Iwata, and V.~Mirrokni.
\newblock Approximating submodular functions everywhere.
\newblock In {\em SODA 2009}.

\bibitem{golovin2005max}
D.~Golovin.
\newblock Max-min fair allocation of indivisible goods.
\newblock 2005.

\bibitem{gupta2010constrained}
A.~Gupta, A.~Roth, G.~Schoenebeck, and K.~Talwar.
\newblock Constrained non-monotone submodular maximization: Offline and
  secretary algorithms.
\newblock In {\em WINE 2010}.

\bibitem{hugo2000miserables}
V.~Hugo.
\newblock {\em Les mis{\'e}rables}.
\newblock Modern library, 1862.

\bibitem{kim2011distributed}
G.~Kim, E.~P. Xing, L.~Fei-Fei, and T.~Kanade.
\newblock Distributed cosegmentation via submodular optimization on anisotropic
  diffusion.
\newblock In {\em ICCV 2011}.

\bibitem{krause2010sfo}
A.~Krause.
\newblock Sfo: A toolbox for submodular function optimization.
\newblock {\em The Journal of Machine Learning Research}, 11:1141--1144, 2010.

\bibitem{kurokawa2015can}
D.~Kurokawa, A.~D. Procaccia, and J.~Wang.
\newblock When can the maximin share guarantee be guaranteed?
\newblock In {\em AAAI 2015}.

\bibitem{lee2009non}
J.~Lee, V.~S. Mirrokni, V.~Nagarajan, and M.~Sviridenko.
\newblock Non-monotone submodular maximization under matroid and knapsack
  constraints.
\newblock In {\em STOC 2009}.

\bibitem{lehmann2001combinatorial}
B.~Lehmann, D.~Lehmann, and N.~Nisan.
\newblock Combinatorial auctions with decreasing marginal utilities.
\newblock In {\em ACM EC 2001}.

\bibitem{leme2014gross}
R.~P. Leme.
\newblock Gross substitutability: An algorithmic survey.
\newblock {\em preprint}, 2014.

\bibitem{Saberi:first}
R.~J. Lipton, E.~Markakis, E.~Mossel, and A.~Saberi.
\newblock On approximately fair allocations of indivisible goods.
\newblock In {\em ACM EC 2004}.

\bibitem{milchtaich1996congestion}
I.~Milchtaich.
\newblock Congestion games with player-specific payoff functions.
\newblock In {\em MFCS 2007}.

\bibitem{minoux1978accelerated}
M.~Minoux.
\newblock Accelerated greedy algorithms for maximizing submodular set
  functions.
\newblock In {\em Optimization Techniques}, pages 234--243. Springer, 1978.

\bibitem{Procaccia:first}
A.~D. Procaccia and J.~Wang.
\newblock Fair enough: Guaranteeing approximate maximin shares.
\newblock In {\em ACM EC 2014}.

\bibitem{Steinhaus:first}
H.~Steinhaus.
\newblock The problem of fair division.
\newblock {\em Econometrica}, 16(1), 1948.

\bibitem{suksompong2017approximate}
W.~Suksompong.
\newblock Approximate maximin shares for groups of agents.
\newblock {\em arXiv preprint arXiv:1706.09869}, 2017.

\bibitem{syrgkanis2012bayesian}
V.~Syrgkanis.
\newblock Bayesian games and the smoothness framework.
\newblock {\em arXiv preprint arXiv:1203.5155}, 2012.

\bibitem{vondrak2008optimal}
J.~Vondr{\'a}k.
\newblock Optimal approximation for the submodular welfare problem in the value
  oracle model.
\newblock In {\em STOC 2008}.

\bibitem{west2001introduction}
D.~B. West et~al.
\newblock {\em Introduction to graph theory}, volume~2.
\newblock Prentice hall Upper Saddle River, 2001.

\end{thebibliography}
\newpage

\appendix
\section{A $4/5$-$\MMS$ Allocation for Four Agents}\label{45}
In this section we propose an algorithm to find a $4/5$-$\MMS$ for $n=4$ in the additive setting. Since the number of agents is exactly 4, we assume $\agents = \{\agent_1, \agent_2, \agent_3, \agent_4\}$. Again, for simplicity we assume $\MMS_i = 1$ for every $\agent_i \in \agents$. In general, our algorithm is consisted of  three main steps: first, $\agent_1$ optimally partitions the items into $4$ bundles with values at least 1 to him. Next, $\agent_2$ selects three of the bundles and repartitions them. Finally, we satisfy one of $\agent_1$ or $\agent_2$ with a bundle and solve the problem for remaining agents and items via Lemma ~\ref{recurse}. Without loss of generality, we assume the valuation of every agent $\agent_i$ for every bundle in his optimal $n$-partitioning is exactly equal to 1. Therefore, from here on, we assume that the summation of the values of the items within the same bundle for every agent is at most $1$. In addition, we suppose that the problem is $4/5$-irreducible, since Lemma \ref{reducibility} narrows down the problem into such instances. Thus, by Lemma \ref{remove1}, the value of every item is less than $4/5$ to any agent. 
We begin this section by stating a number of definitions and observations. In this section, we use the term \emph{bundle} to refer to a set of items.
\begin{definition}
\label{perfect}
A set $S$ of bundles is perfect for a set $T$ of agents, if $(i)$ $|S| = |T|$ and $(ii)$ there exists an allocation of the bundles in $S$ to the agents of $T$ such that all the agents in $T$ are satisfied by their allocated bundle.
\end{definition}

\begin{observation}\label{firstobs}
\label{sum}
Let $\agent_i$ be an agent and $S$ be a set of items where $\valu_i(\{\ite_j\}) \leq v$ for every item $\ite_j \in S$. If $V_i(S) > v$, then there exists a subset $S' \subseteq S$ of items such that $ v \leq \valu_i(S') < 2v$.
\end{observation}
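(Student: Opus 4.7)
The plan is to use a simple greedy / threshold argument, exploiting additivity of $V_i$ together with the per-item upper bound $V_i(\{\ite_j\}) \leq v$. First I would order the items of $S$ arbitrarily as $\ite_{j_1}, \ite_{j_2}, \ldots, \ite_{j_{|S|}}$ and define the prefix sets $S_k = \{\ite_{j_1}, \ldots, \ite_{j_k}\}$, with $S_0 = \emptyset$. The sequence $V_i(S_0), V_i(S_1), \ldots, V_i(S_{|S|})$ starts at $0$ and ends at $V_i(S) > v$, so there is a smallest index $k^*$ with $V_i(S_{k^*}) \geq v$.

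The key observation is that, because valuations are additive, consecutive prefix values differ by exactly the single-item value: $V_i(S_{k^*}) - V_i(S_{k^*-1}) = V_i(\{\ite_{j_{k^*}}\}) \leq v$. By minimality of $k^*$ we have $V_i(S_{k^*-1}) < v$, and adding a single item contributes at most $v$, so $V_i(S_{k^*}) < v + v = 2v$. Combining with $V_i(S_{k^*}) \geq v$ gives the desired subset $S' = S_{k^*}$ with $v \leq V_i(S') < 2v$.

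There is no real obstacle here; the statement is a direct consequence of additivity plus the single-item bound. The only thing to be slightly careful about is the endpoint convention (using a strict inequality $V_i(S_{k^*-1}) < v$ from minimality, to keep the upper bound strict), but this follows immediately from how $k^*$ is chosen. The argument is essentially the discrete analogue of the intermediate value theorem and will likely occupy only a few lines in the paper.
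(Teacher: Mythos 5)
Your proposal is correct and takes essentially the same approach as the paper: a greedy prefix argument, adding items one by one until the running total first reaches $v$, and using additivity plus the single-item bound to cap the overshoot at less than $v$. In fact your bookkeeping at the threshold is slightly cleaner than the paper's (you derive the strict inequality $V_i(S_{k^*-1}) < v$ from minimality of $k^*$, whereas the paper informally mixes $\leq v$ and $< v$), but it is the same argument.
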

\begin{proof}
We begin with an empty set $S'$ and add the items of $S$ to $S'$ one by one, until $\valu_i(S')$ exceeds $v$. Before adding the last element to $S'$, the valuation of $\agent_i$ for $S'$ was no more than $v$ and every item alone is of value less than $v$ to $\agent_i$. Therefore, after adding the last item to $S’$, its value is less than $2v$ to $\agent_i$.
\end{proof}

\begin{definition}
\label{core}
For a bundle $B$ of items that satisfies $\agent_i$, the core of $B$ with respect to agent $\agent_i$, denoted by $C_i(B)$, is defined as follows: let $m_1,m_2,..,m_k$ be the items of $B$ in the increasing order of their values to $\agent_i$. Then $C_i(B) = \{m_j,m_{j+1},...,m_{k}\}$ , where $j$ is the highest index, such that set of items $\{m_j,m_{j+1},...,m_k\}$ satisfies $\agent_i$.
\end{definition}

Note that for every subset $B$ with $\valu_i(B) \geq 4/5$, $C_i(B)$ is a subset of $B$ with the minimum size that satisfies $\agent_i$. Since the items in $C_i(B)$ satisfy $\agent_i$, we have $\valu_i(C_i(B))\geq 4/5$. On the other hand, by the fact that $|C_i(B)|$ is minimal, removing any item from $C_i(B)$ results in a subset that no longer satisfies $\agent_i$. Thus, Observation \ref{eps2} holds.

\begin{observation}\label{eps2} 
If $\valu_i(C_i(B))  = 4/5 + \beta$, then the value of every item in $C_i(B)$ is more than $\beta$ for $\agent_i$.
\end{observation}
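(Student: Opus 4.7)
The plan is to prove the contrapositive via the minimality built into the definition of $C_i(B)$. Recall that $C_i(B) = \{m_j, m_{j+1}, \ldots, m_k\}$ is formed by taking a suffix of the items of $B$ sorted in increasing order of value to $\agent_i$, where $j$ is the \emph{largest} index for which this suffix still satisfies $\agent_i$. In particular, $m_j$ is the item of smallest value to $\agent_i$ among those in $C_i(B)$, so if I can show $\valu_i(\{m_j\}) > \beta$, then by the sorted ordering every other item of $C_i(B)$ will also have value strictly greater than $\beta$ to $\agent_i$, and the observation follows.

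To obtain $\valu_i(\{m_j\}) > \beta$, I would invoke the maximality of $j$. Removing $m_j$ from the core yields the suffix $\{m_{j+1}, \ldots, m_k\}$, which corresponds to the strictly larger starting index $j+1$. Since $j$ was chosen to be the largest index with the satisfaction property, this shorter suffix \emph{must fail} to satisfy $\agent_i$; that is,
\[
\valu_i(C_i(B) \setminus \{m_j\}) < 4/5.
\]
By additivity, $\valu_i(C_i(B) \setminus \{m_j\}) = \valu_i(C_i(B)) - \valu_i(\{m_j\}) = 4/5 + \beta - \valu_i(\{m_j\})$, so the inequality above rearranges to $\valu_i(\{m_j\}) > \beta$, as desired.

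No substantial obstacle is expected; the argument is essentially a one-line application of the definition of the core together with the monotone sorting. The only subtle point worth stating carefully is that $m_j$ exists in $C_i(B)$ (which is guaranteed since $C_i(B)$ is non-empty because $\valu_i(C_i(B)) \geq 4/5 > 0$) and that the sorted ordering is strict enough to transfer the bound from $m_j$ to every other element of $C_i(B)$. With this, the observation follows immediately and will be used later to control small items inside satisfying bundles during the $4/5$-$\MMS$ argument for four agents.
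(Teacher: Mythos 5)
Your proof is correct and takes essentially the same route as the paper's: both arguments reduce to observing that dropping an item from $C_i(B)$ yields a set that no longer satisfies $\agent_i$ (value $< 4/5$), and then using additivity to conclude the dropped item had value $> \beta$. The only cosmetic difference is that the paper frames this via the observation that $C_i(B)$ has minimum size among all satisfying subsets of $B$, whereas you argue directly from the maximality of $j$ and then use the sorted ordering to transfer the bound from $m_j$ to every other element; both are one-line applications of the same definition.
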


By the fact the value of every item in $B$ is less than $4/5$ we have Observation \ref{cre}.

\begin{observation}\label{cre}
For every agent $\agent_i$ and any subset $B$ of items, $\valu_i(C_i(B))<8/5$.
\end{observation}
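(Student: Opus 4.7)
The plan is to combine the minimality built into the definition of $C_i(B)$ with the $4/5$-irreducibility assumption stated just before the observation. Write $C_i(B) = \{m_j, m_{j+1}, \ldots, m_k\}$ using the notation of Definition \ref{core}, where $m_j$ is the lowest-valued item inside the core. Since $j$ is chosen to be the \emph{highest} index for which $\{m_j,\ldots,m_k\}$ still satisfies $\agent_i$, dropping $m_j$ must break the satisfaction condition; that is, $\valu_i(\{m_{j+1},\ldots,m_k\}) < 4/5$. By additivity, this gives
\[
\valu_i(C_i(B)) = \valu_i(\{m_j\}) + \valu_i(\{m_{j+1},\ldots,m_k\}) < \valu_i(\{m_j\}) + 4/5.
\]

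The second step is to bound $\valu_i(\{m_j\})$. Recall that we are working under the standing assumption of this section that the instance is $4/5$-irreducible (so that Observation \ref{reducibility} applies). Lemma \ref{remove1}, instantiated at $\alpha = 4/5$, then yields $\valu_i(\{m_j\}) < 4/5$ for every single item, including $m_j$. Plugging this into the previous display gives $\valu_i(C_i(B)) < 4/5 + 4/5 = 8/5$, which is the desired inequality.

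I do not anticipate any real obstacle: the argument is a two-line calculation that only uses the definition of the core and the single-item bound provided by irreducibility. The only thing to be careful about is making explicit the edge case $j = 1$, where $\{m_{j+1},\ldots,m_k\}$ should be interpreted as the empty set; in that case $\valu_i(\{m_{j+1},\ldots,m_k\}) = 0 < 4/5$, so the inequality still holds and the rest of the argument goes through unchanged.
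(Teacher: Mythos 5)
Your proof is correct and matches the paper's implicit argument: the paper justifies Observation \ref{cre} by remarking that every item of $B$ has value less than $4/5$ (via the $4/5$-irreducibility and Lemma \ref{remove1}), which together with the minimality of the core is precisely your two-step decomposition $\valu_i(C_i(B)) = \valu_i(\{m_j\}) + \valu_i(C_i(B)\setminus\{m_j\}) < 4/5 + 4/5$. One trivial slip: the edge case in which $\{m_{j+1},\ldots,m_k\}$ is empty occurs when $j=k$ (the core is a singleton), not when $j=1$, but the conclusion there is unchanged.
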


\begin{lemma}
\label{3p}
Suppose that $S = \{X,Y,Z\}$ is a 3-partitioning of a set of items with the following properties for an agent $\agent_i$: 

\begin{minipage}[t]{\linegoal}
\begin{enumerate}[leftmargin=*]
\item $\valu_i(X) < 4/5$ and $\valu_i(Y)<4/5$. 

\item $V_i(X\cup Y \cup Z) >16/5$. 
\end{enumerate}
\end{minipage}
\\

Then we can move some items from $Z$ to an arbitrary bundle of $\{X, Y\}$, such that, both $Z$ and the corresponding bundle will be worth at least $4/5$ to $\agent_i$. 

\end{lemma}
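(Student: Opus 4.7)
The plan is to fix a choice of $W \in \{X,Y\}$ and construct a subset $T \subseteq Z$ such that $\valu_i(W \cup T) \geq 4/5$ and $\valu_i(Z \setminus T) \geq 4/5$; by symmetry (both $X$ and $Y$ satisfy the same bound $\valu_i(\cdot) < 4/5$), I only need to handle the case $W = X$. The first step is to extract two numerical consequences from the hypotheses. Since $\valu_i(X) + \valu_i(Y) < 8/5$, condition (ii) gives $\valu_i(Z) > 8/5$. Using condition (ii) together with $\valu_i(Y) < 4/5$, I also obtain
\[
\valu_i(X) + \valu_i(Z) > 16/5 - 4/5 = 12/5.
\]

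Next, I would build $T$ by a bag-filling style procedure: order the items of $Z$ arbitrarily and add them to an initially empty set $T$ one at a time, stopping the first moment the running total satisfies $\valu_i(T) \geq 4/5 - \valu_i(X)$. Such a moment must exist because $\valu_i(Z) > 8/5 > 4/5 - \valu_i(X)$, so all of $Z$ certainly exceeds the threshold. By construction $\valu_i(X \cup T) = \valu_i(X) + \valu_i(T) \geq 4/5$, which handles one of the two inequalities.

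The main (and essentially only) obstacle is to argue that this procedure does not overshoot, i.e., that $\valu_i(Z \setminus T) \geq 4/5$. Here I would invoke the $4/5$-irreducibility of the instance via Lemma \ref{remove1}, which guarantees that every single item has value strictly less than $4/5$ for $\agent_i$. Just before the last item was added, the partial total was strictly less than $4/5 - \valu_i(X)$, and the last added item contributes less than $4/5$; hence
\[
\valu_i(T) < \bigl(4/5 - \valu_i(X)\bigr) + 4/5 \;=\; 8/5 - \valu_i(X).
\]
Combining this with $\valu_i(X) + \valu_i(Z) > 12/5$ yields
\[
\valu_i(Z) - \valu_i(T) \;>\; \valu_i(Z) - \bigl(8/5 - \valu_i(X)\bigr) \;=\; \valu_i(X) + \valu_i(Z) - 8/5 \;>\; 12/5 - 8/5 \;=\; 4/5,
\]
so $\valu_i(Z \setminus T) > 4/5$, as desired. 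The symmetric argument works verbatim if we pick $W = Y$ instead. The only delicate point is keeping track of strict versus weak inequalities; the slack of $4/5 - \valu_i(Y) > 0$ provided by hypothesis (i) is precisely what creates the room needed to absorb one item of size up to $4/5$ without overshooting.
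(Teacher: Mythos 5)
Your proof is correct, and it takes a different (in fact, roughly dual) decomposition from the paper's. The paper uses the ``core'' machinery set up in Definition~\ref{core} and Observations~\ref{eps2}--\ref{cre}: it keeps the minimal satisfying suffix $C_i(Z)$ inside $Z$ (noting $\valu_i(C_i(Z)) \geq 4/5$ by definition and $\valu_i(C_i(Z)) < 8/5$ by the per-item bound), moves the complement $Z' = Z \setminus C_i(Z)$ to $X$ (or $Y$), and checks $\valu_i(X \cup Z') = \valu_i(X\cup Y\cup Z) - \valu_i(C_i(Z)) - \valu_i(Y) > 16/5 - 8/5 - 4/5 = 4/5$. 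You instead build the \emph{moved} set $T$ greedily and minimally, stopping as soon as $\valu_i(T) \geq 4/5 - \valu_i(X)$, then use the per-item bound to cap the overshoot of $T$ at less than $4/5$ and conclude $\valu_i(Z\setminus T) > 4/5$ from $\valu_i(X)+\valu_i(Z) > 12/5$. Both proofs hinge on exactly the same two ingredients --- the numerical slack from hypotheses (i)--(ii), and the fact (via $4/5$-irreducibility and Lemma~\ref{remove1}) that no single item exceeds $4/5$ --- but the paper minimizes what stays in $Z$, while you minimize what leaves $Z$. Your version is self-contained (it does not need the core definition or Observation~\ref{cre}), at the cost of introducing a bag-filling subprocedure; the paper's version is shorter once the core lemmas are available, and reuses notation needed elsewhere in Appendix~\ref{45}. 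Either route is sound; do state explicitly that you are invoking the standing $4/5$-irreducibility assumption of the section, since Lemma~\ref{remove1} as written requires it.
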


\begin{proof}
Since $V_i(X \cup Y \cup Z ) > 16/5$, $V_i(Z) > 8/5$ holds. Moreover, by Observation \ref{cre}, we have $\valu_i(C_i(Z)) \leq 8/5$. Considering $Z' = Z \setminus C_i(Z)$, we have   
$$\valu_i(X \cup Y \cup Z' ) \geq 8/5.$$
According to the fact that $\valu_i(X)<4/8$ and $\valu_i(Y)<4/5$, we have 
$$\valu_i(X \cup Z' ) \geq 4/5$$
and 
$$\valu_i(Y \cup Z' ) \geq 4/5.$$

\end{proof}

\begin{lemma}
\label{recurse}
Let $S= \{X,Y,Z\}$ be a set of three bundles of items, such that 

\begin{minipage}[t]{\linegoal}
\begin{enumerate}[leftmargin=*]
 \item $V_1(X) \ge 4/5 , V_1(Y) \ge 4/5, V_1(Z) \ge 4/5 $. 
\item $V_2(X \cup Y \cup Z) > 16/5$.
\item $V_3(X \cup Y \cup Z) \ge 3 $.
 
\end{enumerate}
\end{minipage}
\\

Then a $4/5$-$\MMS$ allocation of $X \cup Y \cup Z$ to the agents $\agent_1,\agent_2,\agent_3$ is possible.
\end{lemma}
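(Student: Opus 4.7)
My plan is to split on how many of the three bundles agent $\agent_2$ already considers good (value $\geq 4/5$), using the observation that agent $\agent_1$ is indifferent since $V_1(X), V_1(Y), V_1(Z) \geq 4/5$. The only real work is to simultaneously hand agents $\agent_2$ and $\agent_3$ bundles worth $\geq 4/5$ to them.

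In the easy case where at least two of $V_2(X), V_2(Y), V_2(Z)$ are $\geq 4/5$, agent $\agent_3$'s favourite bundle $B^*$ has $V_3(B^*) \geq 1$ by pigeonhole on $V_3(X) + V_3(Y) + V_3(Z) \geq 3$, so I would hand $B^*$ to $\agent_3$. If $B^*$ is itself one of the $\agent_2$-good bundles, at least one other $\agent_2$-good bundle remains for $\agent_2$; otherwise both $\agent_2$-good bundles are still available. In either situation the last bundle can go to $\agent_1$, who accepts anything.

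In the hard case, at least two of $V_2(X), V_2(Y), V_2(Z)$ are strictly below $4/5$; relabel so that $V_2(X), V_2(Y) < 4/5$. Because $V_2(X \cup Y \cup Z) > 16/5$, Lemma \ref{3p} applies to $\agent_2$ and lets me move a subset $T \subseteq Z$ into $X$, producing $X^* = X \cup T$, $Y$, $Z^* = Z \setminus T$ with $V_2(X^*) \geq 4/5$ and $V_2(Z^*) \geq 4/5$. Since $X \subseteq X^*$, additivity gives $V_1(X^*) \geq V_1(X) \geq 4/5$, and $V_1(Y)$ is untouched; the only bundle whose value to $\agent_1$ I lose control of is $Z^*$. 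I would then branch on $V_3(Y)$. If $V_3(Y) \geq 4/5$, the assignment $Y \to \agent_3$, $X^* \to \agent_1$, $Z^* \to \agent_2$ works. If $V_3(Y) < 4/5$, additivity of $V_3$ yields $V_3(X^*) + V_3(Z^*) = V_3(X) + V_3(Z) > 11/5$, so $\max\bigl(V_3(X^*), V_3(Z^*)\bigr) > 11/10 > 4/5$; I would give that larger bundle to $\agent_3$, the other of $\{X^*, Z^*\}$ to $\agent_2$, and $Y$ to $\agent_1$.

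The step I expect to require the most care is ensuring that the Lemma \ref{3p} reshuffle does not leave $\agent_1$ stranded: $Z^*$ may have become too small for $\agent_1$, so I must never be forced to route $Z^*$ to him. The saving grace is that $\agent_1$ always retains the two safe options $Y$ and $X^*$, and in the two sub-cases above one can verify that $Z^*$ is always absorbed by $\agent_2$ or $\agent_3$. The only other point worth double-checking is the bound $11/10 > 4/5$, which is immediate.
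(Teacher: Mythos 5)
Your proof is correct and follows essentially the same strategy as the paper: split on whether $\agent_2$ is already satisfied by two bundles, and in the remaining case apply Lemma~\ref{3p} to $\agent_2$ to produce two $\agent_2$-good bundles while preserving two $\agent_1$-good ones, then route $\agent_3$ to a bundle the agent values $\geq 4/5$. Your explicit branch on $V_3(Y)$ is just a spelled-out version of the paper's ``it is easy to observe'' step, and moving items from $Z$ into $X$ rather than into $Y$ is only a relabeling.
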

\begin{proof}
If $\agent_2$ can be satisfied with two different bundles, then trivially $S$ is perfect. Otherwise, $\agent_2$ is satisfied with only one bundle, say $Z$. By Lemma ~\ref{3p}, $\agent_2$ can transfer some items from $Z$ to $Y$, such that both bundles satisfy him.  After moving the items, both $Y$ and $Z$ satisfy $\agent_2$, and bundle $X$ and $Y$, satisfy $\agent_1$. One the other hand,   since $V_3(X \cup Y \cup Z) \ge 3 $, $\agent_3$ is satisfied with at least one bundle. Its easy to observe that for any valuation of bundles $X,Y,Z$ for $\agent_3$, the set of bundles is perfect.
\end{proof}

\begin{lemma}
\label{core}
Let $S=\{X,Y,Z,T\}$ be a 4-partitioning of $\cal M$ and $\agent_i$ be an arbitrary agent. Then $\agent_i$ can select $3$ bundles and re-partition them into three new bundles in such a way that each bundle will be worth at least $4/5$ to $\agent_i$.
\end{lemma}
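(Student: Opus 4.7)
The plan is to exploit agent $\agent_i$'s own optimal $4$-partition $P=\{P_1,P_2,P_3,P_4\}$ with $V_i(P_j)=1$ as a skeleton. For any candidate discarded bundle $X_d\in S$ with $V_i(X_d)<4/5$, setting $p_j=V_i(P_j\setminus X_d)$ gives the identity $\sum_j(1-p_j)=V_i(X_d)<4/5$. From this I obtain two useful facts: (a) the count $k=|\{j:p_j<4/5\}|$ is at most $3$, and (b) for any pair of indices, $p_j+p_{j'}>2-V_i(X_d)>6/5$.

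I would first dispose of the trivial case in which every $X\in S$ has $V_i(X)\ge 4/5$, by taking any three bundles unchanged. Otherwise, choose a small bundle $X_d\in S$ and split by $k$. When $k\le 1$, merging the at-most-one small $P'_j := P_j\setminus X_d$ into any other yields the three desired bundles. When $k=2$, the pairwise bound (b) forces the two small $P'_j$'s merged together to exceed $6/5>4/5$, while the other two already exceed $4/5$. For the $k=3$ subcase, $(1-p_1)+(1-p_2)+(1-p_3)<4/5$ yields $p_1+p_2+p_3>11/5$ (so any pair among the three small ones sums to more than $7/5$) and $p_4\in(4/5,1]$. I would merge two of the small $P'_j$'s (producing a bundle of value $>7/5$) and try to shift items from the merged bundle into the third small $P'_j$, lifting it above $4/5$ while keeping the merged bundle above $4/5$. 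The total slack for this transfer is $\sum_{j\le 3}p_j-8/5>3/5$.

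The hard part will be the $k=3$ case when items are ``chunky'', i.e., every item in the merged bundle is larger than the permissible transfer amount (as in an instance where each $P'_j$ consists of a single item just below $4/5$). For such instances I plan to argue that the agent has the freedom to discard a different bundle $X_d'\in S$ instead: by a pigeonhole argument on the intersection matrix $(V_i(X\cap P_j))_{X\in S,\,j}$ (whose rows sum to $V_i(X)$ and whose columns sum to $1$), whenever every small bundle of $S$ is spread uniformly across three $P_j$'s, some other bundle of $S$ must be concentrated in one or two $P_j$'s. Discarding such a concentrated bundle leaves either $k\le 2$ (already handled) or three nearly intact $P_j$'s of value $\ge 4/5$ that serve directly as the three new bundles (absorbing the residual small $P'_j$ into one of them). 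Establishing this fallback via a clean counting argument over the intersection matrix, together with the item-level transfer analysis for $k=3$, is where I expect most of the technical work to go.
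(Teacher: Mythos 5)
Your strategy is genuinely different from the paper's: you work directly from the agent's own optimal $4$-partition $P=\{P_1,\dots,P_4\}$ and classify the discard by the number $k$ of parts $P_j\setminus X_d$ falling below $4/5$, whereas the paper works from the given bundles $S$ via the cores $C_i(Z)$, $C_i(T)$, builds $X'=X\cup (Z\setminus C_i(Z))\cup(T\setminus C_i(T))$, and only invokes $P$ at the very end. Your facts (a) and (b) are correct, and the cases $k\le 2$ do go through by merging exactly as you say. The problem is the $k=3$ case, which you yourself flag: there the transfer set must have $V_i$ in the window $[\,4/5-p_3,\ p_1+p_2-4/5\,]$, and although the window has length $p_1+p_2+p_3-8/5>3/5$, a single item of value close to $4/5$ sitting in $P'_1\cup P'_2$ can jump over the whole window (e.g.\ each $P'_j$, $j\le3$, a single item of value just under $4/5$). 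That is a real obstruction, not a technicality, so the proof is incomplete as written.

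The proposed repair --- a pigeonhole argument over the $4\times4$ matrix $\bigl(V_i(X\cap P_j)\bigr)_{X\in S,\ j}$ asserting that if every small $X\in S$ is ``spread uniformly'' then some other bundle is ``concentrated'' --- is neither proved nor precisely formulated (what does ``concentrated'' quantitatively mean, and why does discarding it force $k\le 2$?), and it is not clear it is true; you would at minimum have to rule out the possibility that every eligible $X_d$ yields $k=3$ with chunky residues. The paper sidesteps this entirely by a different chain of deductions: after forming $X'$, it shows $C_i(T)$ is exactly two items of sizes $>2/5+\epsilon_4$ and $\ge 3/5+\epsilon_4/2$, then shows $X'$ consists of one large item $\ite_3$ of value $>3/5-\epsilon_1+\epsilon_4/2$ plus debris $B$ with $V_i(B)<1/5-\epsilon_4/2$, and finally deletes from $P$ the single block containing $\ite_3$ together with the debris of $X$; the three remaining blocks (all inside $Y\cup Z\cup T$) each lose less than $1/5-\epsilon_4/2$ and hence retain value $\ge 4/5$. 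If you want to salvage your route, you would need an analogue of this structural step (the offending residue is one big item plus light debris) rather than a counting statement about the intersection matrix.
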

\begin{proof}
Consider bundles $X,Y,Z,T$. If more than two of them satisfy $\agent_i$, then the selection is trivial. Furthermore, if only one bundle satisfies $\agent_i$, then by Lemma ~\ref{3p}, we can move some items from the satisfying bundle to another bundle, such that both bundles satisfy $\agent_i$. Thus, without loss of generality, we assume that bundles $Z$ and $T$ satisfy $\agent_i$.

Let $Z' = Z \setminus C_i(Z)$ and $T'= T \setminus C_i(T)$. Without loss of generality, we assume $V_i(X) \geq V_i(Y)$ and let $X' = X \cup Z' \cup T'$. If $V_i(X') \ge 4/5$, then the proof is trivial. Thus, suppose that $V_i(X') < 4/5$.

Consider the value of bundles as $$V_i(X’)=4/5 - \epsilon_1,$$ $$V_i(Y)=4/5 - \epsilon_2,$$ $$V_i(C_i(Z))=6/5 + \epsilon_3,$$ and $$V_i(C_i(T))=6/5 + \epsilon_4$$ where $\epsilon_1 \leq \epsilon_2$ and $\epsilon_3 \leq \epsilon 4$. Note that $\epsilon_3$ can be negative. By the fact that total value of the items equals $4$ for all of the agents, we assume that $$\epsilon_1 + \epsilon_2 = \epsilon_3 + \epsilon_4$$ and hence $$\epsilon_1 \leq \epsilon_4.$$

Now, we explore the properties of the items in $C_i(T)$. Regarding Observation \ref{eps2}, every item in $C_i(T)$ is worth more than $2/5+\epsilon_4$ to $\agent_i$. Hence, $C_i(T)$ cannot contain more than 2 items, since value of every pair of items in $C_i(T)$ is more than $4/5$. Moreover, $C_i(T)$ cannot contain one item and hence, $C_i(T)$ contains exactly two items. Let $\ite_1$ and $\ite_2$ be these two items. Since $\valu_i(T) = 6/5 + \epsilon_4$, at least one of these two items, say $\ite_2$ is worth at least $3/5+\epsilon_4/2$ to $\agent_i$. Thus, in summary, $C_i(T)$ contains two items $\ite_1$ and $\ite_2$ with

$$\valu_i(\{\ite_1\}) >2/5 + \epsilon_4,$$
$$\valu_i(\{\ite_2\}) \geq 3/5+\epsilon_4/2.$$

Next, we characterise the items in $X'$. For Bundle $X'$, let $B$ be the set of items with a value less than $1/5 - \epsilon_4/2$. If $V_i(B)\geq 1/5 - \epsilon_4/2$, then Observation ~\ref{sum} states that there exists a subset $B'$ of $B$, such that:
$$1/5 - \epsilon_4/2 \leq V_i(B') < 2/5 - \epsilon_4.$$

Therefore, Bundles $B' \cup \{b_2\}$ and $(X' \setminus B') \cup \{b_1\}$ satisfy $\agent_i$. These two bundles together with  $C_i(Z)$ result in three bundles that satisfy $\agent_i$. Thus, $V_i(B) < 1/5 - \epsilon_4/2$. 

Finally, regarding the fact that $V_i(B) < 1/5 - \epsilon_4/2$, we have 
$$\valu_i(X' \setminus B)> 3/5  -\epsilon_1 + \epsilon_4/2 $$

For this case, we show that $X'\setminus B$ contains exactly one item. Otherwise, at least one of these items, say $\ite_3$, is worth less than $3/10 - \epsilon_1/2 + \epsilon_4/4$ and therefore, for the bundles $\{\ite_3\} \cup \{\ite_2\}$ and $(X' \setminus \{\ite_3\}) \cup \{\ite_1\}$  we have:
$$\valu_i(\{\ite_3\} \cup \{\ite_2\}) \geq 1/5 - \epsilon_4/2 + 3/5+\epsilon_4/2 \geq 4/5$$
and 
$$\valu_i((X' \setminus \{\ite_3\}) \cup \{\ite_1\}) \geq 4/5 - \epsilon_1 - 3/10 + \epsilon_1/2- \epsilon_4/4+ 2/5 + \epsilon_4 \geq 4/5$$
respectively. These two bundles along with $C_i(Z)$ form $3$ bundles that satisfy $\agent_i$. Therefore, we conclude that $X' \setminus B$ contains an item $\ite_3$ with a value more than $3/5-\epsilon_1  + \epsilon_4/2$ to $\agent_i$. The rest of the items in $X'$ belong to $B$ that are in total worth less than $1/5-\epsilon_4/2$.

Note that $X \subseteq X'$. Therefore, consider the $4-$partitioning of $\agent_i$ and remove the bundle containing $\ite_3$. Also, remove the items with value less than $1/5 - \epsilon_4/2$ to $\agent_i$ in $X$, from their corresponding bundles. Three bundles with value of each to $\agent_i$ more than $$1 - 1/5 + \epsilon_4/2 \geq 4/5,$$ with all of their items from $Y,Z$ and $T$ remain. Thus, $\agent_i$ can make three satisfying bundles with items in $Y,Z,T$. 
\end{proof}
Based on what we showed so far, we prove Theorem \ref{45main}.
\begin{theorem}
\label{45main}
A $4/5$-$\MMS$ allocation for $n=4$ is possible in the additive setting.
\end{theorem}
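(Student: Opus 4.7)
The plan is to scale so $\MMS_i = 1$ for each agent (WLOG) and, by Observation \ref{reducibility}, reduce to a $4/5$-irreducible instance; Lemma \ref{remove1} then guarantees that no item has value $\geq 4/5$ for any agent. First I let $\agent_1$ output an optimal $4$-partition $\{B_1, B_2, B_3, B_4\}$ with $\valu_1(B_k) \geq 1$ for each $k$. The first attempt is a direct bipartite matching: place an edge between $\agent_i$ and $B_k$ whenever $\valu_i(B_k) \geq 4/5$; since $\agent_1$ is incident to all four bundles, whenever this graph has a perfect matching we immediately obtain a $4/5$-$\MMS$ allocation. Otherwise Hall's condition fails on some $T \subseteq \{\agent_2, \agent_3, \agent_4\}$, which is exactly the regime where the additional machinery is needed.

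When matching fails, I invoke Lemma \ref{core} with $\agent_2$ to replace three of $\agent_1$'s bundles by new bundles $X, Y, Z$ each worth $\geq 4/5$ to $\agent_2$, while one original bundle $B^{*}$ remains untouched (and hence still worth $\geq 1$ to $\agent_1$). The intended completion has two branches, already previewed in the appendix's introduction: branch~A gives $B^{*}$ to $\agent_1$ and applies Lemma \ref{recurse} to $\{X,Y,Z\}$ for $\agent_2, \agent_3, \agent_4$, with $\agent_2$ in the role of $\agent_1$ of the lemma; branch~B gives one of $X,Y,Z$ to $\agent_2$ (who values it $\geq 4/5$) and applies Lemma \ref{recurse} to the remaining three bundles for $\agent_1, \agent_3, \agent_4$, with $\agent_1$ in the role of $\agent_1$ of the lemma (valid because $\valu_1(B^{*})\geq 1$ and Lemma \ref{core} does not discard items).

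The hypotheses of Lemma \ref{recurse} demand that for the remaining two agents, the combined value of the three bundles exceeds $16/5$ for one of them and is at least $3$ for the other; since $\valu_3(\items),\valu_4(\items)\geq 4$, this reduces to upper bounds on $\valu_3(B^{*})$ and $\valu_4(B^{*})$ in branch~A, or on the analogous values of the handed-off bundle in branch~B. The main obstacle is showing that at least one branch succeeds in every matching-deficient configuration: the freedom I rely on is (i) which bundle Lemma \ref{core} leaves untouched—its proof allows any bundle that $\agent_2$ does not already satisfy, so several choices of $B^{*}$ are in fact available; (ii) which of $X,Y,Z$ is handed to $\agent_2$ in branch~B; and (iii) when needed, a rebalancing step via Lemma \ref{3p} that moves items between $B^{*}$ and the new bundles to restore the required bundle values. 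The heart of the argument is a case analysis driven by the Hall-deficient set $T$: the forced concentration of $\agent_3$'s and $\agent_4$'s valuations on a small number of bundles always leaves one of branches~A or~B consistent with the hypotheses of Lemma \ref{recurse}, which then finishes the allocation and yields a $4/5$-$\MMS$ allocation.
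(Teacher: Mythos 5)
You take essentially the same route as the paper --- $\agent_1$'s optimal $4$-partition, $\agent_2$'s re-partitioning via Lemma \ref{core}, then a dichotomy closed by Lemma \ref{recurse} --- but the sketch has two genuine gaps. The more serious one is in branch~B. You invoke Lemma \ref{recurse} with $\agent_1$ in the first-agent role and justify condition~(i) by ``$\valu_1(B^{*}) \geq 1$ and Lemma \ref{core} does not discard items.'' But condition~(i) of Lemma \ref{recurse} requires the designated first agent to value \emph{all three} bundles at least $4/5$, and you have established this for $B^{*}$ only. Even with the pigeonhole observation the paper uses (and which your proposal never states) that at least one of $\agent_2$'s new bundles is also worth $\geq 1$ to $\agent_1$, you control only two of the three remaining bundles. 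The paper's case~1 fixes this by having $\agent_1$ re-partition the three bundles into bundles that are each worth $\geq 4/5$ to him \emph{before} applying Lemma \ref{recurse}. You mention such a rebalancing (via Lemma \ref{3p}) among discretionary ``freedoms'' while your parenthetical simultaneously asserts condition~(i) already holds; those statements contradict each other, and you show neither that the rebalancing is always feasible (Lemma \ref{3p} as stated treats two low bundles beside one high one, not one low bundle beside two high ones) nor that it preserves conditions~(ii)--(iii) for $\agent_3,\agent_4$.

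The second gap is that the sentence ``the forced concentration of $\agent_3$'s and $\agent_4$'s valuations on a small number of bundles always leaves one of branches~A or~B consistent with the hypotheses of Lemma \ref{recurse}'' is precisely what has to be proved, and you leave it open. The paper realizes it as a concrete two-case split on the set $\phi$ of post-re-partition bundles that satisfy $\agent_3$ or $\agent_4$: either $\phi$ is contained in the two bundles $\agent_1$ likes, or $\phi$ is a single bundle outside them, with every other configuration already admitting a perfect matching of bundles to agents. Your Hall-deficient set encodes the same information, but you do not carry out the classification, verify the numerical hypotheses of Lemma \ref{recurse} in each subcase, or rule out degenerate assignments, so as written the argument is a promissory note rather than a proof.
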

\begin{proof}
Consider the optimal $4$-partitioning of $\cal M$ with respect to $\agent_1$. Now,  ask $\agent_2$ to select $3$ bundles and re-partition them, such that he can be satisfied with all the three bundles. Based on Lemma \ref{core}, such a repartitioning is always possible. Due to the Pigeonhole principle, at least one of these three bundles still satisfies $\agent_1$. Let $S = \{X,Y,Z,T\}$ be the resulting bundles and without loss of generality, suppose that bundles $X,Y$ satisfy $\agent_1$ and bundles $Y,Z,T$ satisfy $\agent_2$.

Now, consider agents $\agent_3$ and $\agent_4$ and let $\phi$ be the set of bundles that satisfy $\agent_3$ or $\agent_4$. There are only two cases, in which $S$ is not perfect 
(recall definition \ref{perfect}):

\begin{enumerate}
\item $\phi \subseteq \{X,Y\}$ : $\agent_1$, selects three bundles $X,Y$ and one of $Z$ or $T$, say $Z$ and re-partitions them to three satisfying bundles. Now, give bundle $T$ to $\agent_2$. According to Lemma  ~\ref{recurse}, items of $X,Y,Z$ can satisfy the remaining three agents.

\item $|\phi|=1, \phi \notin \{X,Y\}$ : give $X$ to $\agent_1$ and allocate the items of $Y \cup Z \cup T$ to $\agent_2,\agent_3,\agent_4$, using Lemma ~\ref{recurse}.
\end{enumerate}
\end{proof}
\color{black}
\section{Omitted Proofs of Section \ref{additive:observations}}\label{additiveobservationsproof}

\begin{proof}[of Lemma \ref{remove1}]
The key idea is that given $\MMS_i \geq 1$ for an agent $\agent_i$, then for every item $\ite_j \in \items$ we have 
$\MMS_i^{n-1}(\items\setminus\ite_j) \geq 1$. This holds since removing an item from $\items$ will diminish the value of at most one partition in the optimal $n$ partitioning of the items. Therefore, at least $n-1$ partitions have a value of $1$ or more to $\agent_i$ and thus $\MMS_i^{n-1}(\items\setminus\ite_j) \geq 1$.
The rest of the proof follows from the definition of $\alpha$-irreducibility. If the valuation of an item $\ite_j$ to an agent $\agent_i$ is at least $\alpha$, then the problem is $\alpha$-reducible since if we allocate $\ite_j$ to $\agent_i$, we have 
$$\MMS_{\valu_k}^{n-1}(\items \setminus \{\ite_j\}) \geq 1$$
for every agent $\agent_k \neq \agent_i$. This contradicts with the $\alpha$-irreducibility assumption.
\end{proof}

\begin{proof}[of Lemma \ref{remove2}]
Suppose for the sake of contradiction that for every agent $\agent_{i'} \neq \agent_i$ we have $\valu_{i'}(\{\ite_j,\ite_k\}) \leq 1$. By this assumption, we show 
\begin{equation}
 \MMS_{i'}^{n-1}(\items\setminus \{\ite_j,\ite_k\}) \geq 1 \label{saeed1}
\end{equation} holds. This is true since removing two items $\ite_j$ and $\ite_k$ from $\items$ decreases the value of at most two partitions of the optimal partitioning of $\items$ for $\MMS_{i'}$. If $n-1$ partitions remain intact, then Inequality \eqref{saeed1} trivially holds. If not, merging the two partitions that initially contained $\ite_j$ and $\ite_k$ results in a partition with value at least $1$ to $\agent_i$. This partition together with the $n-2$ remaining partitions result in a desirable partitioning of $\items$ into $n-1$ partitions. Therefore, Inequality \eqref{saeed1} holds for any agent $\agent_{i'}$, and this implies that by allocating $S = \{\ite_j, \ite_k\}$ to $\agent_i$, not only does $\valu_i(S) \geq 3/4$ hold, but also for every $\agent_{i'} \neq \agent_i$ we have 
$$\MMS_{i'}^{n-1}(\items\setminus \{\ite_j,\ite_k\}) \geq 1$$
which means the problem is $3/4$-reducible, and it contradicts our assumption.
\end{proof}

\begin{proof}[of Lemma \ref{remove3}]
The proof for this lemma is obtained by applying Lemma \ref{remove2}, $|T|$ times. Consider an agent $\agent_i \notin T$. According to the argument in Lemma \ref{remove2}, if we assign $\ite_{{j_1}}$ and $\ite_{{j_2}}$ to $\agent_{i_1}$, $\agent_i$ can partition the items in $\items \setminus \{ \ite_{{j_1}}$ $\ite_{{j_2}}\}$ into $n-1$ partitions with value at least 1 to $\agent_i$, i.e.
$$\MMS_{i}^{n-1}(\items\setminus \{\ite_{j_1},\ite_{j_2}\}) \geq 1.$$
By the same deduction, after assigning $\ite_{{j_3}}$ and $\ite_{{j_4}}$ to $\agent_{i_2}$, we have
$$\MMS_{i}^{n-2}(\items\setminus \{\ite_{j_1},\ite_{j_2},\ite_{j_3},\ite_{j_4}\}) \geq 1.$$
By repeating above argument $|T|$ times, we have:
$$\MMS_{i}^{n-|T|}(\items\setminus S) \geq 1.$$

On the other hand, by condition $(II)$, every agent $\agent_{i_k}$ satisfies with items $\ite_{j_{2k-1}}$ and $\ite_{j_{2k}}$. This means that we can reduce the instance by satisfying the agents in $T$ by the items in $S$, which is a contradiction by the irreducibility assumption.

\end{proof}

\begin{proof}[of Lemma \ref{rem}]
We define $\parttwo_1$ as the set of vertices in $\parttwo$ that are not saturated by $M$, and $\parttwo_2$ as the set of vertices in $\parttwo$ that are connected to $\parttwo_1$ by an alternating path. Moreover, let $\partone_2 = M(\parttwo_2)$. By definition, $F_{H}(M,\partone) = \partone \setminus \partone_2$ (See Figure \ref{fig:FG}). As discussed before, all the vertices in $\partone_2$ are saturated by $M$. Consequently, all the vertices of $T$ are saturated by $M$ and $|N(T)| \geq |T|$. 

Let $M(T)$ be the set of vertices which are matched to the vertices of $T$ in $M$. We know that every vertex of $T$ is present in at least one of the alternating paths which connect $\parttwo_1$ to $\parttwo_2$. Let $$P = \langle \hat{y}_0, \hat{x}_1, \hat{y}_1, \hat{x}_2, \hat{y}_2, \ldots, \hat{x}_k, \hat{y}_k \rangle$$ be one of these paths that includes at least one of the vertices of $T$. Since $P$ is an alternating path which connects $\parttwo_1$ to $\parttwo_2$, $\hat{y}_0 \in \parttwo_1$ (see Figure \ref{fig:FG4}). In addition, according to the definition of alternating path, every edge $(\hat{x}_j,\hat{y}_j)$ of $P$ belongs to $M$ and every edge $(\hat{x}_j,\hat{y}_{j-1})$ does not belong to $M$. 

Let $\hat{x}_i$ be the first vertex of $T$ that appears in $P$. We know that the edge $(\hat{x}_i,\hat{y}_{i-1})$ does not belong to $M$. On the other hand, since $\hat{x}_i$ is the first vertex of $T$ in $M$, $\hat{x}_{i-1} \notin T$. Note that $\hat{y}_{i-1}$ does not belong to $M(T)$, since every vertex of $M(T)$ is matched with a vertex of $T$ in $M$ and $(\hat{x}_{i-1},\hat{y}_{i-1})$ is in $M$.  The fact that $\hat{y}_{i-1} \notin M(T)$ means $N(T)$ contains at least one vertex that is not in $M(T)$. Since all the vertices in $M(T)$ are in $N(T)$, $|N(T)|>|M(T)|$ and hence, $|N(T)|>|T|$.
\begin{figure}
\centering
\includegraphics[scale=0.6]{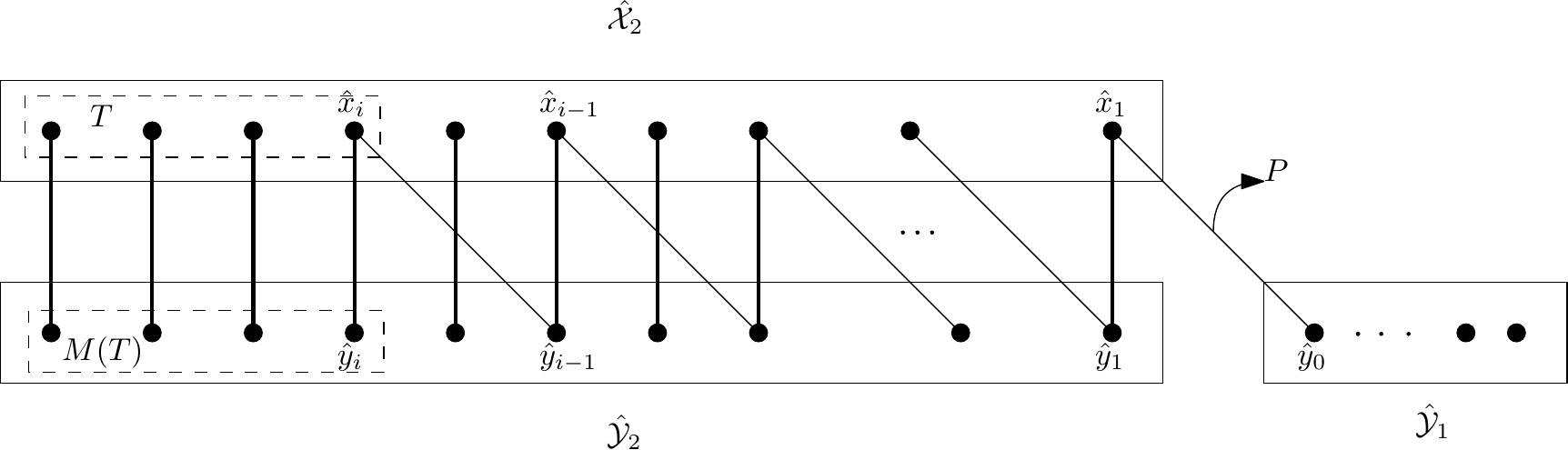}
\caption{Alternating path $P$ connects ${\hat{\cal Y}}_1$ to ${\hat{\cal Y}}_2$ and intersects $T$}
\label{fig:FG4}
\end{figure}
\end{proof}

\begin{proof}[of Lemma \ref{iff}]
If $F_H(M, \partone) = \emptyset$, according to Lemma \ref{rem}, $$\forall T \subseteq \partone \qquad |N(T)| > |T|.$$

On the other hand, suppose that for all $T \subseteq \partone$ we have $|N(T)| > |T|$. We show that $F_H(M, \partone) = \emptyset$. For the sake of contradiction, assume that $F_H(M, \partone) \neq \emptyset$ and let $T = F_H(M, \partone)$. Since there exists a matching from $T$ to $N(T)$ that saturates all the vertices of $N(T)$, we have $|T| \geq |N(T)|$, which is a contradiction. Hence, $F_H(M, \partone) = \emptyset$. 
\end{proof}

\begin{proof}[of Lemma \ref{dag}]
Consider a cycle $L$ in $G_C$. For each vertex $v_j \in L$, there is at least one vertex $v_i \in L$ such that $\agent_i$ envies $\agent_j$. Therefore, Considering $S$ as the set of agents with vertices in $L$, none of the agents of $S$ is a loser. By the same deduction, none of the agents of $S$ is a winner. But this contradicts the fact that the set $C$ is cycle-envy-free.
\end{proof}

\begin{proof}[of Lemma \ref{wm}] We describe the proof for the first condition in more details. The proof for the second condition is almost the same as the first condition. 

\textbf{The first condition}: Suppose that there exists no such vertex. Our goal is to find a new matching of $H$ with the same cardinality, but with more weight. To this end, we construct a directed graph $H'$ from $H$ as follows: for each $\vtwo_j \in T$ we consider a vertex $v_j$ in $V(H')$. Furthermore, there is a directed edge from $v_j$ to $v_i$ in $H'$, if and only if $w(\vone_j,\vtwo_{j}) < w(\vone_i,\vtwo_j)$ in $H$. 

If there exists a vertex $v_j$ with out-degree zero in $H'$, then $\vtwo_j$ is the desired winner in $T$, since
$$ \forall \vtwo_i \in H, w(\vone_j,\vtwo_{j}) \geq w(\vone_i,\vtwo_j).$$
 Otherwise, the out-degree of every vertex in $T$ is non-zero. Therefore, $H'$ has at least one cycle $L = \langle v_{l_1}, v_{l_2}, \ldots, v_{l_{|L|}}\rangle$. Now, if we change matching $M$ by removing the set of edges $$ \{(\vtwo_{l_1},\vone_{l_1}), (\vtwo_{l_2},\vone_{l_2}), \ldots, (\vtwo_{l_{|L|}},\vone_{l_{|L|}})\} $$
from $M$ and adding 
$$\{(\vtwo_{l_1},\vone_{l_2}), (\vtwo_{l_2},\vone_{l_3}),\ldots,(\vtwo_{l_{|L|}},\vone_{l_1})\}$$ 
to $M$, the weight of our matching will be increased. Note that by the definition of an edge in $H'$, we have $$w(\vone_{l_2}, \vtwo_{l_1}) > w(\vone_{l_1}, \vtwo_{l_1}),w(\vone_{l_3}, \vtwo_{l_2}) > w(\vone_{l_2}, \vtwo_{l_2}),\ldots,w(\vone_{l_1}, \vtwo_{l_{|L|}}) > w(\vone_{l_{|L|}}, \vtwo_{l_{|L|}}).$$ But this contradicts the fact that  $M$ was $\MCMWM$ of $H$.

\textbf{The second condition}:
Similar to the proof of the first condition, we construct a new directed graph $H'$ from $H$ where we have a vertex $v_j$ in $H'$ for each vertex $\vtwo_j$ in $T$. For every pair $\vtwo_i$ and $\vtwo_j$ which are members of $T$ we connect $v_i$ to $v_j$ with a directed edge in $H'$ if 
$$w(\vone_j,\vtwo_{i}) > w(\vone_i,\vtwo_i)$$ in $H$ and $(\vone_j,\vtwo_i) \in E(H)$. Note that if $H'$ contains a vertex $v_i$ with in-degree equal to zero, then $\vtwo_i$ is the desired loser in $T$. Thus, suppose that no vertex in $H'$ has in-degree zero and hence, $H'$ has a directed cycle.  Let $L = \langle \vtwo_{l_1}, \vtwo_{l_2}, \ldots, \vtwo_{l_{|L|}}\rangle$ be a directed cycle in $H'$. Similar to the proof of the previous condition, we leverage $L$ to alter $M$ to a new matching with more weight, which is a contradiction by the maximality of $M$.  

\textbf{The third condition}: If $w(\vone_i,\vtwo_i) < w(\vone_j,\vtwo_i)$, we can replace the edge between $\vone_i$ and $\vtwo_i$ by $(\vone_j, \vtwo_i)$ in $M$ which yields a matching with a greater weight. This contradicts the maximality of $M$.
\end{proof}

\section{Omitted Proofs of Section \ref{additive:clusters}}\label{clusteringappendix}

\begin{proof}[of Lemma \ref{forc2c3}]
By definition, there is no edge between the vertices of $F_{G_{1/2}}(M,\itemsv_{1/2})$ and $\agentsv_{1/2} \setminus N(F_{G_{1/2}}(M,\itemsv_{1/2}))$ in $G_{1/2}$. Furthermore, all the items are in worth less than $1/2$ for the agents corresponding to the vertices in $\agentsv \setminus \agentsv_{1/2}$. Thus, for every agent $\agent_i$ and every item $\ite_j$ with $\agentv_i \in \agentsv \setminus N(F_{G_{1/2}}(M,\itemsv_{1/2}))$ and $\itemv_j \in F_{G_{1/2}}(M,\itemsv_{1/2})$, we have $\valu_i(\ite_j)<1/2$. According to the fact that the agents that are not selected in the clustering of $\cone$ either belong to $\ctwo$ or $\cthree$, we have:
\[ \forall \agent_j \in \cone \qquad \valu_i(\firstset_j) < 1/2. \]
\end{proof}

\begin{proof}[of Lemma \ref{nicematch}]
First, we prove Lemma \ref{v1size}. This lemma ensures that there exists a matching in $G_1$ that saturates all the vertices in $W_1$. Lemma \ref{v1size} is a consequence of irreducibility. In fact, we show that if the condition in Lemma \ref{v1size} does not hold, the instance is reducible.
\begin{lemma}
\label{v1size}
For graph $G_1$, we have $$ \forall R \subseteq  W_1, \qquad |N(R)| > |R|.$$ 
\end{lemma}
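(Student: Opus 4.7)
The plan is to prove the statement by contradiction, using the $3/4$-irreducibility of the instance as the source of the contradiction. The starting observation is Lemma~\ref{iff}: applied to the bipartite graph $G_1$ with $\partone := W_1$ and $\parttwo := V_{\cone}$, the assumed existence of some $R \subseteq W_1$ with $|N_{G_1}(R)| \leq |R|$ forces $F_{G_1}(M_1, W_1) \neq \emptyset$ for every maximum matching $M_1$ of $G_1$. Write $F^* := F_{G_1}(M_1, W_1) \subseteq W_1$ and $A^* := N_{G_1}(F^*) \subseteq V_{\cone}$; as recalled in Section~\ref{additive:observations}, the defining property of $F$ supplies a matching in $G_1$ between $A^*$ and $F^*$ that saturates $A^*$.

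The next step is to promote this matching into a bona fide second allocation, producing the pairs $\{f_i, g_i\}$ needed to invoke Lemma~\ref{remove3}. For each $\agent_i \in A^*$, let $\itemv_{k(i)} \in F^*$ be the $G_1$-vertex matched to $\agentv_i$ by the saturating matching, and set $g_i := \{\ite_{k(i)}\}$. Because $\agent_i \in V_{\cone}$, the first-phase step already assigns her $f_i \in S_1$ with $V_i(f_i) \geq 1/2$, so $\epsilon_i \leq 1/4$; the $G_1$-edge incident on $\itemv_{k(i)}$ and $\agentv_i$ guarantees $V_i(g_i) \geq \epsilon_i$. Hence $V_i(\{f_i, g_i\}) \geq 3/4$, which settles conditions (i) and (ii) of Lemma~\ref{remove3} with the agent set $A^*$ and the disjoint pairs $\{f_i, g_i\}_{i \in A^*}$.

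What remains is to verify condition (iii): that each such pair has total value at most $1$ to every $\agent_k \notin A^*$. I will split by membership in $V_{\cone}$. If $\agent_k \notin V_{\cone}$, Lemma~\ref{forc2c3} supplies $V_k(f_i) < 1/2$ and Observation~\ref{w1small} gives $V_k(g_i) < 1/2$ since $\itemv_{k(i)} \in W_1$. If instead $\agent_k \in V_{\cone} \setminus A^*$, Lemma~\ref{remove1} with $\alpha = 3/4$ yields $V_k(f_i) < 3/4$, and because $\agentv_k \notin N_{G_1}(F^*)$ there is no $G_1$-edge from $\agentv_k$ to $\itemv_{k(i)} \in F^*$, so the defining condition of $G_1$ forces $V_k(g_i) < \epsilon_k \leq 1/4$. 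In both subcases the pair sums to strictly less than $1$, and Lemma~\ref{remove3} then produces a $3/4$-reduction of the original instance, contradicting the standing irreducibility assumption.

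The main obstacle will be the second subcase of (iii), when $\agent_k \in V_{\cone} \setminus A^*$: none of the cluster-based value lemmas applies directly, and we must combine the irreducibility bound on $V_k(f_i)$ with a careful reading of the $G_1$-edge condition to bound $V_k(g_i)$. Everything else should fall out of already-established properties of maximum matchings and the definition of $F$.
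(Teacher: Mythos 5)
Your proposal reproduces the paper's proof almost exactly: you reduce via Lemma~\ref{iff} to showing $F_{G_1}(M_1,W_1)=\emptyset$, assume the contrary, extract the saturating matching from $N(F_{G_1}(M_1,W_1))$ to $F_{G_1}(M_1,W_1)$, and then verify the three conditions of Lemma~\ref{remove3} with the identical case split on whether $\agent_k$ is inside or outside $V_{\cone}$, using Lemma~\ref{forc2c3}, Observation~\ref{w1small}, Lemma~\ref{remove1}, and the absence of a $G_1$-edge into $F_{G_1}(M_1,W_1)$. The argument is correct and is the same route the paper takes.
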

\begin{proof}
Let $M_1$ a matching with the maximum number of edges in $G_1$ . Regarding Lemma \ref{iff}, it only suffices to show that $F_{G_1}(M_1,W_1)$ is empty. For the sake of contradiction, suppose that $F_{G_1}(M_1,W_1)$ is not empty. As mentioned before, there exists a matching between $F_{G_1}(M_1,W_1)$ and $N(F_{G_1}(M_1,W_1))$ that saturates all the vertices in $N(F_{G_1}(M_1,W_1))$. Let 
$$M_S = \{(\itemv_{j_1},\agentv_{i_1}),(\itemv_{j_2},\agentv_{i_2}),\ldots,(\itemv_{j_k},\agentv_{i_k})\}$$
be this matching. We show that the set of agents $$T= \{\agent_{i_1}, \agent_{i_2}, \ldots, \agent_{i_k}\}$$ and the set of items $$S = \{\firstset_{i_1}, \ite_{j_1},\firstset_{i_2},\ite_{j_2},\ldots, \firstset_{i_k},\ite_{j_k} \}.$$
have all three conditions in Lemma \ref{remove3} (Note that $\firstset_{i_l}$ contains exactly one item).  The first condition is trivial: $|S| = 2|T|$. Regarding the definition of an edge in $G_1$, we know that $\firstset_{i_l} \cup \{\ite_{j_l}\}$ satisfy $\agent_{i_l}$ and hence, the second condition is held as well.  
For the third condition, we should prove that for every agent $\agent_{i_l}$ in $T$, 
$$\valu_{i'}(\firstset_{i_l} \cup \{\ite_{j_l}\}) < 1 \qquad  \forall \agent_{i'} \notin T .$$
To show this, we consider two cases separately. First, if $\agent_{i'} \notin C_1$, by Lemma \ref{forc2c3}, $\valu_{i'}(\firstset_{i_l})<1/2$ and by Observation \ref{w1small}, $\valu_{i'}(\{\ite_{j_l}\})<1/2$, which means $\valu_{i'}(\firstset_{i_l} \cup \{\ite_{j_l}\}) < 1$.

Moreover, consider the case that $\agent_{i'} \in C_1$. Note that since $\agent_{i'} \notin T$, it's corresponding vertex $\agentv_{i'}$ is not in $N(F_{G_1}(M_1,W_1))$, which means:
$$\agentv_{i'} \in V_{C_1} \setminus N(F_{G_1}(M_1,W_1)).$$
By the definition of $N(F_{G_1}(M_1,W_1)$, there is no edge between $\agentv_{i'}$ and $\itemv_{j_l}$ and hence, $\valu_{i'}(\{\ite_{j_l}\})< \epsilon_{i'} \leq 1/4$. On the other hand, by the irreducibility assumption and the fact that $\firstset_{i_l}$ contains exactly one item,  $\valu_{i'}(\firstset_{i_l}) < 3/4$. Thus, $\valu_{i'}(\firstset_{i_l} \cup \{\ite_{j_l}\}) < 1$. 

As a result, $\valu_{i'}(\firstset_{i_l} \cup \{\ite_{j_l}\}) < 1$ for every agent $\agent_{i'} \notin T$ which means the third condition of Lemma \ref{remove3} is held as well. Thus, regarding Lemma \ref{remove3}, the instance is reducible. But this contradicts the irreducibility assumption.
\end{proof}

The rest of the proof of Lemma \ref{nicematch} is as follows. Since we used $\MCMWM$ to build cluster $\cone$, regarding Lemma \ref{wm}, $\cone$ is cycle-envy-free. Consider the topological ordering of $\cone$ and let $p_{a_i}$ be the position of $\agent_i$ in this ordering. More precisely, $p_{\agent_i} = k$ if $\agent_i$ is the $k$-th agent in the topological ordering of $\cone$.

According to Lemma \ref{v1size}, the condition of Hall’s Theorem holds for graph $G_1$ and as a result there exists a matching in $G_1$ that saturates all the vertices in $W_1$. Among all possible maximum matchings of $G_1$, let $M_1$ be a maximum matching that minimizes $$p_{M_1} = \sum_{\agentv_i \in M_1} p_{a_i}.$$ We claim that $M_1$ is the desired matching described in Lemma \ref{nicematch}. To prove our claim, we must show that for any edge $(\itemv_i, \agentv_j) \in M_1$ and any unsaturated vertex $\agentv_k \in N(\itemv_i)$, $\agent_j$ is a loser for the set $\{\agent_j, \agent_k\}$, which means $\agent_k$ does not envy $\agent_j$. Note that if $\agent_k$ envies $\agent_j$, $\agent_k$ appears before $\agent_j$ in the topological ordering of $\cone$ which means $p_{\agent_k} < p_{\agent_j}$. Therefore, if we replace $(\itemv_i, \agentv_j)$ by $(\itemv_i, \agentv_k)$ in $M_1$, $p_{M_1}$ will be decreased that contradicts the minimality of $p_{M_1}$.
\end{proof}

\begin{proof}[of Lemma \ref{gsmallc1r}]
Let $\ite_k$ be the item assigned to $\agent_j$ in the refinement of $\cone$. Since $\itemv_k \in W_1$, according to Observation \ref{w1small}, $\valu_i(\secondset_j)< 1/2$. 
\end{proof}

\begin{proof}[of Lemma \ref{forc2}]
Let $\agent_j$ be an agent in $\satagents_1^r$. First, note that $|\firstset_j| = |\secondset_j| = 1$. Lemma \ref{forc2c3} together with Observation \ref{w1small} state that $\valu_i(\firstset_j \cup \secondset_j)<1$. According to Inequality (\ref{saeed1}), we have  
\begin{equation}
\label{eq100}
\MMS_{\valu_i}^{|\agents \setminus \agent_j|} ( {\items} \setminus \firstset_j \cup \secondset_j) \geq 1.
\end{equation}
 Note that Equation (\ref{eq100}) holds for every agent in $\satagents_1^r$. Applying Equation (\ref{eq100}) to all the agents of $\satagents_1^r$ yields
 \[ \MMS_{\valu_i}^{|\agents \setminus \satagents_1^r|} ( {\items} \setminus \bigcup_{\agentv_i \in \satagents_1^r} \firstset_i \cup \secondset_i) \geq 1.\]

\end{proof}

\begin{proof}[of Lemma \ref{c1small2}]
According to Observation \ref{fsmallc1}, for any agent $a_k \in \cone$ and for every $\itemv_j \in \itemsv' \setminus \itemsv'_{1/2}$ we have $V_k(\{b_j\}) < \epsilon_k$. By additivity assumption, for any $\agent_k \in \cone$  we have 
$$ \forall {x_i, x_j \in \itemsv' \setminus \itemsv'_{1/2}} \qquad V_k(\{b_i, b_j\}) < 2\epsilon_k.$$
\end{proof}

\begin{proof}[of Lemma \ref{pairsmall}]
Suppose for the sake of contradiction that the problem is $3/4$-irreducible, and there exists a vertex $y_k \in \agentsv$ such that $V_k(\{b_i, b_j\}) \geq 3/4$. According to Lemma \ref{remove2} there exists an agent $a_{k'} \neq a_k$ such that $$V_{k'}(\{b_i, b_j\}) \geq 1.$$
Since the valuations are additive, we know that one of the inequalities $V_{k'}(\{b_i\}) \geq 1/2$ or $V_{k'}(\{b_j\}) \geq 1/2$ are held, which is contradiction, since we know both $\itemv_i$ and $\itemv_j$ belong to $\itemsv' \setminus \itemsv'_{1/2}$.
\end{proof}

\begin{proof}[of Lemma \ref{sizeeq}]
We prove Lemma \ref{sizeeq} in two steps. Firstly, we show that
\begin{equation}
\label{ineqhadi2}
|F_{G'_{1/2}}(M',\itemsv'_{1/2})| \leq |N(F_{G'_{1/2}}(M',\itemsv'_{1/2}))|.
\end{equation}
Furthermore, we prove 
\begin{equation}
\label{ineqhadi1}
|F_{G'_{1/2}}(M',\itemsv'_{1/2})| \geq |N(F_{G'_{1/2}}(M',\itemsv'_{1/2}))|.
\end{equation}
Inequalities \eqref{ineqhadi1} and \eqref{ineqhadi1} yields
\begin{equation}
\label{ineqhadi3}
|F_{G'_{1/2}}(M',\itemsv'_{1/2})| = |N(F_{G'_{1/2}}(M',\itemsv'_{1/2}))|.
\end{equation}

\textbf{To show Inequality \eqref{ineqhadi2},} argue that before Algorithm \ref{addvertex} starts, we have $$F_{G'_{1/2}}(M',\itemsv'_{1/2}) = \emptyset$$ and  $$N(F_{G'_{1/2}}(M',\itemsv'_{1/2})) = \emptyset$$ and all the vertices in $\itemsv'_{1/2}$ are saturated by $M'$. In each step of Algorithm \ref{addvertex}, we add a new vertex to $\itemsv'_{1/2}$, and the size of the maximum matching $M'$ is increased by one. Therefore, after each step of Algorithm \ref{addvertex}, all of the vertices in $\itemsv'_{1/2}$ remain saturated by $M'$. Since $F_{G'_{1/2}}(M',\itemsv'_{1/2}) \subseteq \itemsv'_{1/2}$, all the vertices of $F_{G'_{1/2}}(M',\itemsv'_{1/2})$ are also saturated by $M'$, which means
$$|F_{G'_{1/2}}(M',\itemsv'_{1/2})| \leq |N(F_{G'_{1/2}}(M',\itemsv'_{1/2}))|.$$

\textbf{To prove Inequality \eqref{ineqhadi1}}, note that by definition, $F_{G'_{1/2}}(M',\itemsv'_{1/2})$ has a property that there exists a matching from $F_{G'_{1/2}}(M',\itemsv'_{1/2})$ to $N(F_{G'_{1/2}}(M',\itemsv'_{1/2}))$ that saturates all the vertices of $N(F_{G'_{1/2}}(M',\itemsv'_{1/2}))$. Therefore, we have
$$
|F_{G'_{1/2}}(M',\itemsv'_{1/2})| \geq |N(F_{G'_{1/2}}(M',\itemsv'_{1/2}))|.
$$
This completes the proof.
\end{proof}

\begin{proof}[of Lemma \ref{forc3}] 
Firstly, we clarify what agents are in $\cthree$. Roughly speaking, the agents that are not selected for Clusters $\cone$ and $\ctwo$ are in $\cthree$. Thus, the agents in $\cthree$ correspond to the vertices in  
$$\agentsv' \setminus N(F_{G'_{1/2}}(M',\itemsv'_{1/2}))$$
$$=\big(\agentsv' \setminus \agentsv'_{1/2}\big) \cup  \big(\agentsv'_{1/2} \setminus N(F_{G'_{1/2}}(M',\itemsv'_{1/2}))\big).$$

\textbf{The term $ \agentsv' \setminus \agentsv'_{1/2} $ } refers to the vertices that are filtered in $G'_{1/2}$ which means no edge with weight at least $1/2$ is incident to any of these vertices.  
On the other hand, for every agent $\agent_j \in \ctwo$, $\firstset_j$ corresponds to a vertex in $F_{G'_{1/2}}(M',\itemsv'_{1/2})$.  Hence, for every agent $\agent_j \in \ctwo$  and every agent $\agent_i$ with corresponding vertex in $\agentsv' \setminus \agentsv'_{1/2}$ we have $\valu_i(f_j) <1/2$

\textbf{Next, consider the term $\agentsv'_{1/2} \setminus N(F_{G'_{1/2}}(M',\itemsv'_{1/2}))$.} By definition, the vertices of $F_{G'_{1/2}}(M',\itemsv'_{1/2})$ are only incident to the vertices of $N(F_{G'_{1/2}}(M',\itemsv'_{1/2}))$ in $G'_{1/2}$. Regarding the definition of an edge in $G'_{1/2}$, for every agent $\agent_j \in \ctwo$ and agent $\agent_i$ with $\agentv_i \in \agentsv'_{1/2} \setminus N(F_{G'_{1/2}}(M',\itemsv'_{1/2}))$ we have $\valu_i(f_j) <1/2$.

Therefore, for all $\agent_i \in \cthree$ we have: 
 $$\forall \agent_j \in \ctwo \qquad \valu_i(\firstset_j) < 1/2.$$
\end{proof}

\begin{proof}[of Lemma \ref{cr2smallc1}]
Regarding Observation \ref{fsmallc1}, after refinement of $\cone$, all the items with vertex in $\itemsv' \setminus \itemsv'_{1/2}$ are in worth less than $\epsilon_j$ for every agent $\agent_j \in \cone$. Furthermore, note that for every agent $\agent_i \in \satagents_2^r$, $\secondset_i$ is a single item with vertex in $\itemsv' \setminus \itemsv'_{1/2}$. Thus, $\valu_j(\secondset_i)< \epsilon_j$ for every agent $\agent_j \in \cone$.
\end{proof}

\begin{proof}[of Lemma \ref{cr2smallc3}]
According to Algorithm \ref{c2ref}, for any agent $\agent_i \in \satagents_2^r$, the corresponding vertex of the only member of $g_i$ is in $\itemsv' \setminus \itemsv'_{1/2}$. Therefore, for any agent $a_j \notin \cone \cup \ctwo$ we have $V_j(g_i) < 1/2$. Finally, note that the remaining agents that are not in $\cone$ and $\ctwo$ belong to $\cthree$.
\end{proof}

\begin{proof}[of Lemma \ref{lsmall_c3}] The algorithm \ref{addvertex} terminates when there is no desirable pair for the agents in $T = \agentsv' \setminus N(F_{G'_{1/2}}(M',\itemsv'_{1/2})).$ Furthermore, by definition, for every agent  $\agent_i \in \cthree$ we have  $$\agentv_i \in \agentsv' \setminus N(F_{G'_{1/2}}(M',\itemsv'_{1/2})).$$ But at the end of Algorithm \ref{addvertex}, no pair of vertices is desirable for $\agent_i$ which means for every $\itemv_j,\itemv_k \in \itemsv'' \setminus \itemsv''_{1/2}$, we have  $V_i(\{\ite_j,\ite_k\}) < {1/2}$ (note that $\itemsv'' \setminus \itemsv''_{1/2} \subseteq \itemsv' \setminus \itemsv'_{1/2}$).
\end{proof}

\section{Omitted Proofs of Section \ref{additive:allocation}}\label{clustering2appendix}
\begin{proof}[of Lemma \ref{general}]
At this point, for every agent $\agent_i \in \cone \cup \ctwo \cup \cthree^s$, $|\firstset_j| \leq 2$. If $|\firstset_i| = 1$ holds, then according to Lemma \ref{remove1}, value of the item in $\firstset_i$ is less than $3/4$ to all other agents. Moreover, if $|\firstset_i| = 2$, then $\firstset_i$ corresponds to a merged vertex. In this case, by Lemmas \ref{c1small2} and \ref{pairsmall}, value of $\firstset_i$ is less than $3/4$ to all other agents. 
\end{proof}

\begin{proof}[of Lemma \ref{c3fsmall}]
According to Lemma \ref{lsmall_c3}, value of every pair of items in $\fitems$ is less than $1/2$ to $\agent_i$. Therefore, $\firstset_i$ contains at least three items. Let $\ite_k$ be an arbitrary item in $\firstset_i$. Since $|\firstset_i| \geq 3$, $\firstset_i \setminus \{\ite_k\}$ is non-empty. On the other hand, $S$ is minimal and hence, none of the sets $\firstset_i \setminus \ite_k$ and $\{\ite_k\}$ is feasible for any agent. According to the definition of feasibility for the agents of $\cone \cup \ctwo \cup \cthree^s \cup \cthree^b$, we have
$$ \forall \agent_j \in \cone \cup \ctwo \cup \cthree^s \cup \cthree^b \qquad \valu_j(\firstset_i \setminus \{\ite_k\})< \epsilon_j $$ 
and 
$$\forall \agent_j \in \cone \cup \ctwo \cup \cthree^s \cup \cthree^b \qquad \valu_j(\{\ite_k\})< \epsilon_j$$
which means
$$ \forall \agent_j \in \cone \cup \ctwo \cup \cthree^s \cup \cthree^b \qquad \valu_j(\firstset_i)< 2\epsilon_j. $$ 

\end{proof}

\begin{proof}[of Lemma \ref{cef}]
The Lemma trivially holds for $\cone$ and $\ctwo$, since removing an agent from a cycle-envy-free set preserves this property. For $\cthree^s$, there may be multiple rounds that an agent is added to $\cthree^s$. We show that adding an agent to $\cthree^s$ preserves cycle-envy-freeness as well.

For the sake of contradiction, let $\mathbb{R}_z$ be the first round in which adding an agent $\agent_i$ to $\cthree^s$ results in a set, that is no longer cycle-envy-free. Since $\cthree^s \setminus \{\agent_i\}$ is cycle-envy-free, every subset of $\cthree^s \setminus \{\agent_i\}$ contains at least one winner and one loser. Moreover, by Lemma \ref{c3fsmall} we have:
\begin{equation}
\label{inec1}
\forall \agent_j \in \cthree^s, j \neq i, \qquad \valu_j( \firstset_i )< 2\epsilon_j.
\end{equation}

Note that $\agent_i$ previously belonged to $\cthree^f$.  By definition of $\cthree^f$  
\begin{equation}
\label{inec2}
\forall \agent_j \in \cthree^s , j \neq i, \qquad \valu_i(\firstset_j)< 1/2 .
\end{equation}

Inequalities (\ref{inec1}) and (\ref{inec2}) together imply that $\agent_i$ is both a winner and a loser for every subset of $\cthree^s$ that contains $\agent_i$. This means that every subset of $\cthree^s$ contains at least one winner and one loser, which is a contradiction.

\end{proof}

\begin{proof}[of Lemma \ref{prvalue}]

If $\agent_j \prec_{pr} \agent_i$, then $\secondset_i$ is not feasible for $\agent_j$, since the agent with the lowest 
priority is satisfied in each round of the second phase. Thus, $\valu_j(\secondset_i) < \epsilon_j$. For the case where $\agent_i \prec_{pr} \agent_j$, let $\ite_k$ be an arbitrary item of $\secondset_i$. According to the fact that $\secondset_i$ is minimal, $\secondset_i \setminus \{\ite_k\}$ is not feasible for any agent. Hence, $\valu_j(\secondset_i \setminus \{\ite_k\})< \epsilon_j$. On the other hand, by Observations \ref{fsmallc1} and \ref{fsmallc2}, $\valu_j(\{\ite_k\})<\epsilon_j $. Therefore, $\valu_j(\secondset_i)<2\epsilon_j$. 
\end{proof}

\begin{proof}[of Lemma \ref{m_1}]
Let $\mathbb{R}_z$ be the round, in which $\agent_i$ is satisfied. At that point, if $\agent_j \in \cthree^f$ then $\valu_j(\secondset_i) < {1/2}$ trivially holds. Since in round $\mathbb{R}_z$, $\agent_j \prec_{pr} \agent_i$ holds, $\secondset_i$ was not feasible for $\agent_j$ in the first place. Recall that in each round, the agent with lowest order in $\Phi(S)$ is selected. 

Furthermore, if in round $\mathbb{R}_z$, $\agent_j$ was in $\cthree^s \cup \cthree^b$, according to Observations \ref{fsmallc1} and \ref{fsmallc2}, $|S| \geq 2$, since no item alone can satisfy $\agent_i$. If $|S|=2$, then by Observation \ref{lsmall_c3}, $\valu_j(\secondset_i)<1/2$. For the case of $|S|>2$, let $\ite_k$ be the item in $S$ with the minimum value to $\agent_j$. According to Corollary \ref{small_c3}, $\valu_j(\{\ite_k\})<{1/4}$. Also, since $S$ is minimal, $S \setminus \{\ite_k\}$ is not feasible for any agent and hence, $\valu_j(S \setminus \{\ite_k\}) < \epsilon_j \leq {1/4}$. Thus, $\valu_j(S) < {1/2}$.
\end{proof}

\color{black}	 
\section{Omitted Proofs of Section \ref{additiveproofs}}\label{additiveproofappendix}
Before proceeding to the proof of Lemma \ref{c3null}, we show Lemmas (\ref{m_2}, \ref{c3bssmall} and \ref{c3sat}). 

\begin{lemma}
\label{m_2}
Let $\agent_i$ be an agent in $\satagents_3$ and let ${\mathbb R}_z$ be the round of the second phase in which $\agent_i$ is satisfied. Then, for any other agent $\agent_j$ that is  in $\cthree^f$ in ${\mathbb R}_z$, $\valu_j(\secondset_i) < 1/2$ holds.
\end{lemma}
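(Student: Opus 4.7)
The plan is to exploit the priority ordering $\prec_{pr}$ used in the second phase of the algorithm. The key observation is that when $\agent_i$ is added to $\satagents_3$ in round $\mathbb{R}_z$, the set $\secondset_i$ coincides with the minimal feasible set $S$ selected in that round. This holds regardless of whether $\agent_i$ was in $\cthree^s$ (in which case it is directly satisfied by setting $\secondset_i = S$) or in $\cthree^b$ (in which case it receives some $\firstset_k$ from a semi-satisfied agent along with $\secondset_i = S$). Either way, $\secondset_i = S$.

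Next I would argue that $\agent_i$ cannot have been in $\cthree^f$ at the moment of satisfaction, because the algorithm's handling of a chosen agent in $\cthree^f$ only moves it into $\cthree^s$ (making it semi-satisfied) and does not add it to $\satagents$. Hence, in round $\mathbb{R}_z$, $\agent_i$ belongs to $\cone \cup \ctwo \cup \cthree^s \cup \cthree^b$. By the first rule in Definition \ref{priority}, every agent in $\cthree^f$ has strictly lower $\prec_{pr}$-order than every agent in these four clusters, so in particular $\agent_j \prec_{pr} \agent_i$.

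Since the algorithm selects the agent of lowest $\prec_{pr}$-order in $\Phi(S)$, and $\agent_i$ was selected over $\agent_j$, it must be that $\agent_j \notin \Phi(S)$. But $\agent_j \in \cthree^f$, and the feasibility condition for agents in $\cthree^f$ is exactly $\valu_j(S) \geq 1/2$; its failure gives $\valu_j(\secondset_i) = \valu_j(S) < 1/2$, which is what we want.

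The proof is essentially a direct consequence of the algorithmic selection rule together with the feasibility definition, so there is no serious obstacle; the main point is just to verify that $\agent_i$ is truly not in $\cthree^f$ during $\mathbb{R}_z$, which follows from the bookkeeping in Algorithm \ref{second-phase}. This mirrors the argument for Lemma \ref{m_1}, restricted to the subpopulation $\cthree^f$ using the stronger feasibility threshold $1/2$ rather than $\epsilon_j$.
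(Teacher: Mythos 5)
Your proof is correct and follows essentially the same route as the paper's: observe that an agent satisfied in round $\mathbb{R}_z$ must be in $\cthree^s\cup\cthree^b$ (not $\cthree^f$), so by the priority order $\agent_j\prec_{pr}\agent_i$, whence $\agent_j\notin\Phi(S)$ and the $\cthree^f$ feasibility threshold gives $\valu_j(\secondset_i)<1/2$. Your version just spells out the bookkeeping step (why $\agent_i\notin\cthree^f$) a bit more explicitly than the paper does.
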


\begin{proof}
In ${\mathbb R}_z$, $\agent_i$ either belongs to $\cthree^s$ or $\cthree^b$. Thus, $\agent_j \prec_{pr} \agent_i$, and thus $\secondset_i$ is not feasible for $\agent_j$ in that round. Therefore, $\valu_j(\secondset_i)< 1/2$.
\end{proof}

\begin{lemma}
\label{c3bssmall}
Let $\agent_i \in \satagents_3$ be a satisfied agent and let ${\mathbb R}_z$ be the round in which $\agent_i$ is satisfied. Then, for every other agent $\agent_j$ that belongs to $\cthree^s \cup \cthree^b$ in that round, either $\valu_j(\secondset_i) < \epsilon_j$ or $\valu_j(\firstset_i) \leq 3/4-\epsilon_j$.

\end{lemma}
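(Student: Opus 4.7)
The plan is to do a case analysis on which sub-cluster of $\cthree$ agent $\agent_i$ belonged to at the start of round $\mathbb{R}_z$, and which sub-cluster $\agent_j$ belongs to during that round. Since $\agent_i$ becomes satisfied in $\mathbb{R}_z$ (as opposed to merely semi-satisfied), the case $\agent_i \in \cthree^f$ is ruled out by Algorithm \ref{second-phase}: an agent from $\cthree^f$ only becomes semi-satisfied and moves to $\cthree^s$. Hence $\agent_i$ was in $\cthree^s$ (so $\firstset_i$ is the bundle $\agent_i$ already held and $\secondset_i = S$) or in $\cthree^b$ (so $\firstset_i = \firstset_k$ was borrowed from some $\agent_k \in \cthree^s$ with $\valu_i(\firstset_k) = 3/4 - \epsilon_i$, and again $\secondset_i = S$). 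In either case, let $S$ be the minimal feasible set selected in $\mathbb{R}_z$.

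The easy half of the argument uses the selection rule of the second phase. If $\agent_j \notin \Phi(S)$, then $S$ is by definition not feasible for $\agent_j$, so $\valu_j(\secondset_i) = \valu_j(S) < \epsilon_j$ (since $\agent_j \in \cthree^s \cup \cthree^b$, the threshold for feasibility is $\epsilon_j$, not $1/2$). This immediately gives the first disjunct. So I only need to handle the situation $\agent_j \in \Phi(S)$, where, by the choice of $\agent_i$ as the smallest element of $\Phi(S)$ under $\prec_{pr}$, we automatically have $\agent_i \prec_{pr} \agent_j$.

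Now I split the $\agent_j \in \Phi(S)$ situation by cluster. If $\agent_i \in \cthree^b$ and $\agent_j \in \cthree^s$, then the priority order of Definition~\ref{priority} gives $\agent_j \prec_{pr} \agent_i$, contradicting $\agent_i \prec_{pr} \agent_j$; so this sub-case cannot occur (i.e.\ such $\agent_j$ must lie in the ``easy half'' above). If $\agent_i \in \cthree^s$ and $\agent_j \in \cthree^s$, both lie in a common cycle-envy-free cluster, and $\agent_i \prec_{pr} \agent_j$ specialises to $\agent_i \prec_o \agent_j$; Observation~\ref{epsofcluster} then yields $\valu_j(\firstset_i) \leq 3/4 - \epsilon_j$, the second disjunct. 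If $\agent_j \in \cthree^b$, then by the defining equation~\eqref{borrowers} for $\epsilon_j$ we have $\epsilon_j = 3/4 - \max_{\agent_l \in \cthree^s}\valu_j(\firstset_l)$. When $\agent_i \in \cthree^s$ this gives $\valu_j(\firstset_i) \leq 3/4 - \epsilon_j$ directly; when $\agent_i \in \cthree^b$, the borrowed bundle satisfies $\firstset_i = \firstset_k$ for some $\agent_k \in \cthree^s$ at the start of $\mathbb{R}_z$, so $\valu_j(\firstset_i) = \valu_j(\firstset_k) \leq \max_{\agent_l \in \cthree^s}\valu_j(\firstset_l) = 3/4 - \epsilon_j$, again the second disjunct.

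The only subtle point I anticipate is bookkeeping around the moment a borrower is processed: when $\agent_i \in \cthree^b$ is selected, $\agent_k$ is still in $\cthree^s$ just before the swap, so the bound $\valu_j(\firstset_k) \leq 3/4 - \epsilon_j$ for $\agent_j \in \cthree^b$ is justified by the value of $\epsilon_j$ as defined at the start of $\mathbb{R}_z$, not after $\agent_k$ is ejected. A secondary point is ensuring that the feasibility threshold for $\agent_j \in \cthree^s \cup \cthree^b$ is $\epsilon_j$ (case~(ii) of the feasibility definition), which is what makes the ``$\agent_j \notin \Phi(S)$'' implication $\valu_j(S) < \epsilon_j$ clean.
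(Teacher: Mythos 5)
Your proof is correct and takes essentially the same approach as the paper's: dispose of the first disjunct when $S$ is not feasible for $\agent_j$, handle $\agent_j \in \cthree^b$ via the defining equation for $\epsilon_j$ on borrowers, and handle $\agent_j \in \cthree^s$ by using $\agent_i \prec_{pr} \agent_j$ to force $\agent_i \in \cthree^s$ and then invoking Observation~\ref{epsofcluster}. Your version is somewhat more explicit in spelling out the case where $\agent_i$ was itself a borrower and in flagging the bookkeeping around the lending agent $\agent_k$, both of which the paper leaves implicit in its phrase ``by the definition, the statement is correct for the agents of $\cthree^b$.''
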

\begin{proof}
If $\secondset_i$ is not feasible for $\agent_j$, then the condition trivially holds. Moreover, by the definition, the statement is correct for the agents of $\cthree^b$. Therefore, it only suffices  to  consider the case that $\agent_j \in \cthree^s$ and $\secondset_i$ is feasible for $\agent_j$. Due to the priority rules for satisfying the agents in the second phase, $\agent_i \prec_{pr} \agent_j$ and hence, $\agent_i$ cannot be in  $\cthree^b$. Thus, $\agent_i \in \cthree^s$. According to Observation \ref{epsofcluster} and the fact that $\prec_{pr}$ is equivalent to $\prec_{o}$ for the agents in $\cthree^s$, we have $\valu_j(\firstset_i) \leq 3/4 - \epsilon_j$.
\end{proof}

\begin{lemma}
\label{c3sat}
During the second phase, for any agent  $\agent_i$ in  $\cthree$, we have: $$\sum_{\agent_j \in \satagents_3} \valu_i(\firstset_j \cup \secondset_j)< |\satagents_3| + 1/4.$$ 
\end{lemma}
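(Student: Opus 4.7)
The plan is to fix an arbitrary unsatisfied agent $\agent_i \in \cthree$, list the agents of $\satagents_3$ as $\agent_{j_1},\ldots,\agent_{j_s}$ in the order of their satisfaction rounds $r_1 < \cdots < r_s$, and bound each term $\valu_i(\firstset_{j_l}\cup \secondset_{j_l})$ individually by $1$, except for at most one ``exceptional'' index where it is bounded only by $5/4$. Two structural facts drive the analysis. First, every item $\ite \in \fitems$ corresponds to a vertex in $\itemsv''\setminus \itemsv''_{1/2}$, and by the very definition of the $1/2$-filtering of $G''$, this means $\valu_i(\{\ite\}) < 1/2$ for every $\agent_i \in \cthree \subseteq \agentsv''$. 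Second, by Corollary \ref{small_c3}, there is at most one ``heavy'' item $\ite^* \in \fitems$ with $\valu_i(\{\ite^*\}) \geq 1/4$.

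For each $l$, I plan to split on the state of $\agent_i$ at round $r_l$. If $\agent_i \in \cthree^f$ at that moment, then by the defining property of $\cthree^f$ we have $\valu_i(\firstset_k)<1/2$ for every $\cthree^s$ agent $\agent_k$, and since $\firstset_{j_l}$ is either $\agent_{j_l}$'s own (when $\agent_{j_l}\in\cthree^s$) or the set borrowed from some $\cthree^s$ lender (when $\agent_{j_l}\in\cthree^b$), we get $\valu_i(\firstset_{j_l})<1/2$; combined with Lemma \ref{m_2} giving $\valu_i(\secondset_{j_l})<1/2$, the contribution is $<1$. If instead $\agent_i\in\cthree^s\cup\cthree^b$ at round $r_l$, I will apply Lemma \ref{c3bssmall}: either (B1) $\valu_i(\secondset_{j_l})<\epsilon_i$, in which case Lemma \ref{general} (together with Lemma \ref{c3fsmall} when $\firstset_{j_l}$ came from bag-filling of $\cthree^f$) yields $\valu_i(\firstset_{j_l})<3/4$ and the contribution is $<3/4+\epsilon_i \leq 1$; or (B2) $\valu_i(\firstset_{j_l})\leq 3/4-\epsilon_i$, where the work is in bounding $\valu_i(\secondset_{j_l})$.

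To bound $\valu_i(\secondset_{j_l})$ in case (B2), I will distinguish two sub-cases. If $|\secondset_{j_l}|\geq 2$, pick the item $\ite_k\in\secondset_{j_l}$ with minimum $\valu_i$-value; by the at-most-one heavy item observation and the fact that $|\secondset_{j_l}|\geq 2$, the minimum satisfies $\valu_i(\{\ite_k\})<1/4$, and by minimality of $\secondset_{j_l}$, $\valu_i(\secondset_{j_l}\setminus\{\ite_k\})<\epsilon_i$, so $\valu_i(\secondset_{j_l})<1/4+\epsilon_i$ and the total is $<3/4-\epsilon_i+1/4+\epsilon_i=1$. Moreover, $\ite^*$ cannot lie in such a $\secondset_{j_l}$, because otherwise removing the minimum would leave $\ite^*$, forcing $\valu_i(\secondset_{j_l}\setminus\{\ite_k\})\geq 1/4\geq\epsilon_i$, contradicting minimality. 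If instead $|\secondset_{j_l}|=1$, the singleton item $\ite$ satisfies $\valu_i(\{\ite\})<1/2$ by the filtering observation; if $\ite\neq\ite^*$ then $\valu_i(\{\ite\})<1/4$ and the contribution is $<1$; only when $\ite=\ite^*$ does the contribution reach $<3/4-\epsilon_i+1/2=5/4-\epsilon_i\leq 5/4$.

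Combining these cases, every term is strictly less than $1$ except possibly the unique index $l^*$ where $\secondset_{j_{l^*}}=\{\ite^*\}$; since the $\secondset$'s are disjoint and $\ite^*$ is unique, this exception occurs for at most one $l$, and summing yields $\sum_{\agent_j\in\satagents_3}\valu_i(\firstset_j\cup\secondset_j)<(s-1)\cdot 1+5/4=|\satagents_3|+1/4$. The main delicate point I expect to have to argue carefully is the singleton exception: the naive bound on a lone item would be $<3/4$ (from irreducibility alone) and would only give an excess of $1/2$, so the essential trick is to use the filtering structure of $G''_{1/2}$ to sharpen this to $<1/2$ and thereby squeeze the excess down to exactly $1/4$.
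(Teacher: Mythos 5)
Your proof is correct and follows the same skeleton as the paper's: the same case split on $\agent_i$'s cluster state at round $r_l$, the same invocation of Lemmas \ref{m_2} and \ref{c3bssmall}, and the same ``all terms $< 1$ except one $< 5/4$'' accounting. The one place you deviate is inside case (B2): the paper removes the \emph{maximum}-value item $\ite_l$ from $\secondset_j$ and then cases on whether $\ite_l$ equals the heavy item $\ite_k$, allowing $\ite_k$ to sit in a multi-element $\secondset_j$ in the $<5/4$ branch; you instead remove the \emph{minimum} and observe that a multi-element $\secondset_{j_l}$ containing $\ite^*$ would fail minimality outright (since the residual still contains $\ite^*$ and hence has value $\geq 1/4 \geq \epsilon_i$, making it feasible for $\agent_i$). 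This pins the exceptional contribution to the single configuration $\secondset_{j_{l^*}}=\{\ite^*\}$, which makes the ``at most one round'' claim immediate from disjointness, whereas the paper asserts it slightly more informally. Both routes yield the same $|\satagents_3|+1/4$ bound; yours is a modest cleanup of the same argument rather than a genuinely different proof.
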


\begin{proof}
To show Lemma \ref{c3sat}, we show that for all the agents $\agent_j \in \satagents_3$ except at most one agent, $\valu_i(\firstset_j \cup \secondset_j)<1$ holds. To show this, let ${\mathbb R}_z$ be an arbitrary round of the second phase, in which an agent $\agent_j \in \cthree$ is satisfied. First, note that in ${\mathbb R}_z$, $\agent_j$ belongs to $\cthree^s \cup \cthree^b$. Also, in round ${\mathbb R}_z$, $\agent_i$ belongs to one of $\cthree^s, \cthree^b$, or $\cthree^f$.  
 
If $\agent_i \in \cthree^f$, then by Lemma \ref{m_2}, $\valu_i(\secondset_j)<1/2$ holds. On the other hand, by definition, $\valu_i(\firstset_j)<1/2$ and hence, $\valu_i(\firstset_j \cup \secondset_j)<1$. 

Now, consider the case, where $\agent_i \in \cthree^b \cup \cthree^s$. Note that by Lemma \ref{c3bssmall}, either $\valu_i(\firstset_j) \leq 3/4-\epsilon_i$ or $\valu_i(\secondset_j) < \epsilon_i$. If $\valu_i(\secondset_j) < \epsilon_i$, then by Lemmas \ref{general} and \ref{c3fsmall}, we know $\valu_i(\firstset_j) < 3/4$ and hence, $\valu_i(\firstset_j \cup \secondset_j)<3/4 + \epsilon_i < 1$. 

For the case where $\valu_i(\firstset_j) \leq 3/4-\epsilon_i$, let $\ite_l$ be the item in $\secondset_j$ with the maximum value to $\agent_i$. By minimality of $\secondset_j$, $\secondset_j \setminus \{\ite_l\}$ is not feasible for any agent, including  $\agent_i$ and thus, $\valu_i(\secondset_j\setminus \{\ite_l\}) < \epsilon_i$. Recall that by Corollary \ref{small_c3}, there is at most one item $\ite_k$ in $\fitems$, such that $\valu_i(\ite _k) \geq 1/4$. In addition to this, $\valu_i(\ite_k) < 1/2$ trivially holds, since $\ite_k$ is not assigned to any agent during the clustering phase. If $\ite_l \neq \ite_k$, $\valu_i(\secondset_j)< 1/4 + \epsilon_i$ holds and hence, $$\valu_i(\firstset_j \cup \secondset_j) < 3/4-\epsilon_i + 1/4 + \epsilon_i<1.$$ Moreover, If $\ite_l = \ite_k$, $\valu_i(\secondset_j)< 1/2 + \epsilon_i$ holds and thus, $\valu_i(\firstset_j \cup \secondset_j) < 3/4-\epsilon_i + 1/2 + \epsilon_i<5/4$. But, this can happen at most one round. Therefore, for all the agents $\agent_j \in \satagents_3$ except at most one, $\valu_i(\firstset_j \cup \secondset_j)<1$. Also, for at most one agent $\agent_j \in \satagents_3$, $\valu_i(\firstset_j \cup \secondset_j)<5/4$. Thus, 
$$\sum_{\agent_j \in \satagents_3} \valu_i(\firstset_j \cup \secondset_j)< |\satagents_3| + 1/4.$$   
\end{proof}

\begin{proof}[of Lemma \ref{c3null}]
Suppose for the sake of contradiction that $\cthree \neq \emptyset$.  Note that, by the definition of $\cthree^b$, if $\cthree^s = \emptyset$ holds, then consequently $\cthree^b = \emptyset$. Therefore, since we have $\cthree = \cthree^s \cup \cthree^b \cup \cthree^f$, if $\cthree$ is non-empty, at least either of the two sets $\cthree^s$ or $\cthree^f$ is non-empty. In case $\cthree^s$ is non-empty, let $\agent_i$ be a winner of $\cthree^s$, otherwise let $\agent_i$ be an arbitrary agent of $\cthree^f$.

According to Lemma \ref{m_1}, for every agent $\agent_j \in \satagents_1^s \cup \satagents_2^s$, $\valu_i(\secondset_j) < {1/2}$ holds. Also, by Lemmas \ref{gsmallc1r} and \ref{cr2smallc3}, for every agent  $\agent_j \in \satagents_1^r \cup \satagents_2^r$, we have $\valu_i(\secondset_j) < {1/2}$. Therefore, 
$$\forall \agent_j \in \satagents_1 \cup \satagents_2 \qquad \valu_i(\secondset_j) < {1/2}.$$

Also, by Lemmas \ref{forc2c3} and \ref{forc3} we know that $\valu_i(\firstset_j)<{1/2}$ for every $\agent_j \in \satagents_1 \cup \satagents_2$. Thus, for every satisfied agent $\agent_j \in \satagents_1 \cup \satagents_2$, $\valu_i(\firstset_j \cup \secondset_j) <1$ holds, and hence 
\begin{equation}\label{eq1}
\sum_{\agent_j \in \satagents_1 \cup \satagents_2} \valu_i (\firstset_j \cup \secondset_j) < |\satagents_1 \cup \satagents_2|.
\end{equation}

Moreover, by Lemma \ref{c3sat}, the total value of items assigned to the agents in $\satagents_3$ to $\agent_i$ is less than $|\satagents_3| + 1/4$. More precisely,
\begin{equation}\label{eq2}
\sum_{\agent_j \in \satagents_3} \valu_i(\firstset_j \cup \secondset_j) \leq |\satagents_3| + 1/4.
\end{equation}
Inequality \eqref{eq1} along with Inequality \eqref{eq2} implies: 
\begin{equation}
\begin{split}
\sum_{\agent_j \in \satagents} \valu_i (\firstset_j \cup \secondset_j) & = \sum_{\agent_j \in \satagents_1 \cup \satagents_2} \valu_i (\firstset_j \cup \secondset_j) + \sum_{\agent_j \in \satagents_3} \valu_i (\firstset_j \cup \secondset_j)\\
& < |\satagents_1 \cup \satagents_2| + |\satagents_3| + 1/4\\
 & = |\satagents |+1/4
\end{split}
\end{equation}

Recall that the total sum of the item values for $\agent_i$ is equal to $n$. In addition to this, since every agent belongs to either of the Clusters $\cone$, $\ctwo$, $\cthree$, or $\satagents$ we have $$|\satagents| + |\cone| + |\ctwo| + |\cthree| = n.$$ Furthermore, every item $\ite_j \in \items$ either belongs to $\fitems$ or one of the sets $\firstset_{j'}$ and $\secondset_{j'}$ for an agent $\agent_{j'}$. More precisely,
$$\fitems = \items \setminus \Big[\bigcup_{\agent_j \in \satagents \cup \cone \cup \ctwo \cup \cthree^s} \firstset_j \cup \bigcup_{\agent_j \in \satagents} \secondset_j\Big].$$ 
 Therefore
\begin{equation}\label{eq5}
\begin{split}
\sum_{\agent_j \in \cone} \valu_i(\firstset_j) + \sum_{\agent_j \in \ctwo} \valu_i(\firstset_j) + \sum_{\agent_j \in \cthree^s} \valu_i(\firstset_j) + \valu_i({\fitems}) & = \valu_i(\items) - \sum_{\agent_j \in \satagents} \valu_i(\firstset_j \cup \secondset_j)\\
&= n - \sum_{\agent_j \in \satagents} \valu_i(\firstset_j \cup \secondset_j)\\
&\geq n - (|\satagents| + 1/4)\\
&= |\cone| + |\ctwo| + |\cthree|-1/4
\end{split}
\end{equation}

According to Lemmas \ref{forc2c3} and \ref{forc2},  
\begin{equation}\label{eq5.1}
\sum_{\agent_j \in \cone} \valu_i(\firstset_j) < {1/2}|\cone|
\end{equation}
 and 
\begin{equation}\label{eq5.2}
\sum_{\agent_j \in \ctwo} \valu_i(\firstset_j)< {1/2} |\ctwo|
\end{equation}
hold. Inequalities \eqref{eq5}, \eqref{eq5.1}, and \eqref{eq5.2} together prove
\begin{equation}\label{eq6}
\begin{split}
\valu_i({\fitems}) &\geq |\cone| + |\ctwo| + |\cthree|-1/4 - \big[\sum_{\agent_j \in \cone} \valu_i(\firstset_j) + \sum_{\agent_j \in \ctwo} \valu_i(\firstset_j) + \sum_{\agent_j \in \cthree^s} \valu_i(\firstset_j)\big]\\
&\geq |\cone| + |\ctwo| + |\cthree|-1/4 - \big[1/2|\cone| + 1/2|\ctwo| + \sum_{\agent_j \in \cthree^s} \valu_i(\firstset_j)\big]\\
&\geq 1/2 |\cone| + 1/2 |\ctwo| + |\cthree| -1/4 - \sum_{\agent_j \in \cthree^s} \valu_i(\firstset_j).
\end{split}
\end{equation}
Now, we consider two cases separately: (i) $\agent_i \in \cthree^s$ and  (ii) $\agent_i \in \cthree^f$.

\textbf{In case $\agent_i \in \cthree^s$}, since $\agent_i$ is a winner of $\cthree^s$, we have 
\begin{equation}
\begin{split}
\sum_{\agent_j \in \cthree^s} \valu_i(\firstset_j) & \leq \sum_{\agent_j \in \cthree^s}  \valu_i(\firstset_i)\\
& = \sum_{\agent_j \in \cthree^s} 3/4 - \epsilon_i\\
& = ({3/4}-\epsilon_i) |\cthree^s|.
\end{split}
\end{equation}
This combined with Inequality \eqref{eq6} concludes
\begin{equation*}
\begin{split}
 \valu(\fitems) &\geq  1/2 |\cone| + 1/2 |\ctwo| + |\cthree| -1/4 - \sum_{\agent_j \in \cthree^s} \valu_i(\firstset_j)\\
 & \geq 1/2 |\cone| + 1/2 |\ctwo| + |\cthree| - 1/4 - ({3/4}-\epsilon_i) |\cthree^s|\\
 & \geq 1/2 |\cone| + 1/2 |\ctwo| + (1/4 + \epsilon) |\cthree| - 1/4.
\end{split}
\end{equation*}
On the other hand, since $\agent_i \in \cthree^s$, $|\cthree| \geq 1$ and hence, $\valu_i({\fitems}) \geq {1/4} + \epsilon_j - {1/4} = \epsilon_j$. This means that $\fitems$ is feasible for $\agent_i$, which contradicts the termination of the algorithm. 

\textbf{In case $\agent_i \in \cthree^f$}, by the definition of $\cthree^f$ we know that $\sum_{\agent_j \in \cthree^s} \valu_i(\firstset_j) < {1/2} |\cthree^s|$, which by Inequality \eqref{eq6} implies:

$$\valu_i({\fitems}) > {1/2}|\cthree^s| + |\cthree^b| + |\cthree^f| + {1/2}|\ctwo| + {1/2}|\cone|-1/4.$$

Since $\agent_i \in \cthree^f$, we have $|\cthree^f| \geq 1$ and hence, $\valu_i({\fitems}) > 3/4$. Again, this contradicts the termination of the algorithm since $\fitems$ is feasible for $\agent_i$.  
\end{proof}

\begin{proof}[of Lemma \ref{c1null}]
By Lemma \ref{c3null}, we already know $\cthree = \emptyset$. Now, let $\agent_i$ be a winner of the remaining agents in $\cone$. For convenience, we color the items in either blue or white. Intuitively, blue items may have a high value for $\agent_i$ whereas white items are always of lower value to $\agent_i$. Initially, all items are colored in white. For each $\agent_j  \in \agents$, if $|\firstset_j|=1$, then we color the item in $\firstset_j$ in blue. Moreover, for every $\agent_j \in \satagents$, if $|\secondset_j|=1$ and $\valu_i(\secondset_j) \geq \epsilon_i$, then we color the item in $\secondset_j$ in blue.

Now, let $\cal P$ $= \langle P_1, P_2, \ldots, P_n \rangle$ be the optimal $n$-partitioning of the items in $\items$ for $\agent_i$, that is, the value of every partition $P_k$ to $\agent_i$ is at least $1$. Based on the coloring procedure, we have three types of partitions in $\cal P$:
\begin{itemize}
    \item $B_2$: the set of partitions with at least two blue items
    \item $B_1$: the set of partitions with exactly one blue item
    \item $B_0$: the set of partitions without any blue items
\end{itemize}
Note that every partition in $\cal P$ belongs to one of $B_0,B_1$ or $B_2$. Hence,
\begin{equation}
\label{sumba}
|B_0| + |B_1| + |B_2| = n
\end{equation}
As declared, all the items in the partitions of $B_0$ are white. The total value of these items to $\agent_i$ is at least $|B_0| \geq 4 \epsilon_i |B_0|$, which is
\begin{equation}\label{pq1}
\sum_{P_k \in B_0}\sum_{\ite_j \in P_k} \valu_i(\ite_j) \geq 4 \epsilon_i |B_0|.
\end{equation}
 Also, each partition in $B_2$ has at least two blue items, each of which is singly assigned to another agent. We decompose the partitions of $B_1$ into two disjoint sets, namely $\hat{B_1}$ and $\tilde{B_1}$. More precisely, let $\hat{B_1}$ be the partitions in $B_1$, in which the blue item is worth more than $\valu_i(\firstset_i)$ to $\agent_i$ and $\tilde{B_1} = B_1 \setminus \hat{B_1}$. As such, for each partition $P_k \in \tilde{B_1}$, 
the white items in $P_k$ are worth at least 
\begin{equation*}
\begin{split}
1- \valu_i(\firstset_i) &= 1-(3/4-\epsilon_i)\\
&= {1/4}+ \epsilon_i\\
&\geq 2\epsilon_i
\end{split}
\end{equation*}
to $\agent_i$. Therefore,
\begin{equation}\label{pq2}
\sum_{P_k \in \tilde{B_1}}  \valu_i(\wcal(P_k)) \geq 2 |\tilde{B_1}|\epsilon_i
\end{equation}
where $\wcal(S)$ stands for the set of white items in a set $S$ of items.
 On the other hand, since the problem is $3/4$-irreducible, by Lemma \ref{remove1}, no item alone is of worth $3/4$ to $\agent_i$ and thus for each partition $P_k \in \hat{B_1}$, the white items in $P_k$ have a value of at least ${1/4} \geq \epsilon_i$ to $\agent_i$. This implies that
\begin{equation}\label{pq4}
\sum_{P_k \in \hat{B_1}} \valu_i(\wcal(P_k)) \geq |\hat{B_1}|  \epsilon_i.
\end{equation}
By Inequalities \eqref{sumba},\eqref{pq1}, \eqref{pq2}, and \eqref{pq4} we have
\begin{equation}\label{enough}
\begin{split}
\valu_i(\wcal(\items)) &= \sum_{P_j \in B_0} \valu_i(\wcal(P_j)) + \sum_{P_j \in B_1} \valu_i(\wcal(P_j)) + \sum_{P_j \in B_2} \valu_i(\wcal(P_j)) \\
&\geq \sum_{P_j \in B_0} \valu_i(\wcal(P_j)) + \sum_{P_j \in B_1} \valu_i(\wcal(P_j)) \\
&\geq \sum_{P_j \in B_0} \valu_i(\wcal(P_j)) + \sum_{P_j \in \hat{B_1}} \valu_i(\wcal(P_j)) + \sum_{P_j \in \tilde{B_1}} \valu_i(\wcal(P_j)) \\
&\geq |B_0|  4\epsilon_i + |\hat{B_1}|  \epsilon_i + |\tilde{B_1}|  2\epsilon_i \\
&\geq |B_0|  4\epsilon_i + |B_1| 2\epsilon_i - |\hat{B_1}|  \epsilon_i  \\ 
&\geq |B_0|  4\epsilon_i + |B_1| 4\epsilon_i + |B_2|4\epsilon_i - |B_1| 2\epsilon_i - |B_2|4\epsilon_i- |\hat{B_1}|  \epsilon_i \\
&= (2n-2|B_2| - |B_1| -  |\hat{B_1}|)2\epsilon_i + (|\hat{B_1}|)\epsilon_i
\end{split}
\end{equation}
Note that the total value of white items that are assigned to the agents during the algorithm is equal to $\valu_i(\wcal(\items \setminus \fitems))$. The rest of the white items are still in $\fitems$. Thus, we have 
\begin{equation}
\label{fsum}
\valu_i(\wcal(\items)) = \valu_i(\wcal(\items \setminus \fitems)) + \valu_i(\fitems)
\end{equation}
Now, we provide an upper bound on the value of $\valu_i(\wcal(\items \setminus \fitems))$. As a warm up, one can trivially prove an upper bound of $2\epsilon_i (2n - 1 - |B_1| - 2|B_2|)$ on $\valu_i(\wcal(\items \setminus \fitems))$. This follows from the fact that two sets of items are assigned to any agent and hence we have a total of $2n$ disjoint sets. Among these $2n$ sets, at least one of them is empty (since $\secondset_i = \emptyset$) and at least $|B_1| + 2|B_2|$ of the sets contain a single blue item. On the other hand, by Lemmas \ref{c1small2}, \ref{cr2smallc1}, \ref{c3fsmall} and \ref{prvalue} every set with white items is of worth at most $2\epsilon_i$ to $\agent_i$. Therefore, the total value of the white items in $\items \setminus \fitems$ to $\agent_i$ is less than $2\epsilon_i (2n - 1 - |B_1| - 2|B_2|)$ and thus $$\valu_i(\wcal(\items \setminus \fitems)) \leq 2\epsilon_i (2n - 1 - |B_1| - 2|B_2|).$$ 

However, in order to complete the proof, we need a stronger upper bound on $\valu_i(\wcal(\items \setminus \fitems))$. To this end, we provide the following auxiliary lemma.
 
\begin{lemma}
\label{eps}
Let $\agent_j$ be an agent such that $|\firstset_j| = 1$ and $\valu_i(\firstset_j) > \valu_i(\firstset_i)$. Then, $\valu_i(\secondset_j) < \epsilon_i$.
\end{lemma}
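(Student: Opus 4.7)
\noindent\textbf{Proof Proposal for Lemma \ref{eps}.}

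The plan is to do a case analysis based on when (and how) $\agent_j$ received $\secondset_j$, since the hypothesis $|\firstset_j|=1$ together with the intent to bound $\valu_i(\secondset_j)$ forces $\agent_j$ to be satisfied (otherwise $\secondset_j = \emptyset$ and the claim is trivial). Note first that $\agent_j$ cannot still be in $\cone$: if it were, then $\agent_j$ would be among the ``remaining agents in $\cone$'' of which $\agent_i$ is a winner, so we would have $\valu_i(\firstset_j) \le \valu_i(\firstset_i)$, contradicting the hypothesis $\valu_i(\firstset_j) > \valu_i(\firstset_i)$. Hence $\agent_j \in \satagents = \satagents_1^r \cup \satagents_2^r \cup \satagents_1^s \cup \satagents_2^s \cup \satagents_3$, and I handle each case in turn.

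For $\agent_j \in \satagents_2^r$, the conclusion is immediate from Lemma \ref{cr2smallc1}, which bounds $\valu_i(\secondset_j) < \epsilon_i$ for every $\agent_i \in \cone$. For $\agent_j \in \satagents_1^r$, both $\firstset_j$ and $\secondset_j$ are single items, with $\itemv_{\secondset_j} \in W_1$ obtained from the matching $M_1$ of Lemma \ref{nicematch}. Since $\agent_i$ is still in $\cone$ (so $\agent_i$ is unsaturated by $M_1$) and the hypothesis says $\agent_i$ envies $\agent_j$ (with respect to $\firstset_\cdot$), Lemma \ref{nicematch} forces $\agentv_i \notin N(\itemv_{\secondset_j})$ in $G_1$; by the construction of the edge set of $G_1$ this means $\valu_i(\{\secondset_j\}) < \epsilon_i$, which is what we need.

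The remaining cases $\agent_j \in \satagents_1^s \cup \satagents_2^s \cup \satagents_3$ are reduced to Lemma \ref{prvalue} by verifying $\agent_i \prec_{pr} \agent_j$. If $\agent_j \in \satagents_2^s$ or $\agent_j \in \satagents_3$, the cross-cluster rule of Definition \ref{priority} gives $\agent_i \prec_{pr} \agent_j$ directly, since $\cone$ has strictly lower priority than $\ctwo$, $\cthree^s$, and $\cthree^b$ (the only places an agent may be when satisfied in the second phase). If $\agent_j \in \satagents_1^s$, then $\agent_j$ was in $\cone$ when satisfied, and within $\cone$ the priority order $\prec_{pr}$ coincides with the topological order $\prec_o$ of the cycle-envy-free set $\cone$ (Lemma \ref{cef}). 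The strict envy $\valu_i(\firstset_j) > \valu_i(\firstset_i)$ together with Definition \ref{winloose} places $\agent_i$ strictly before $\agent_j$ in $\prec_o$, hence $\agent_i \prec_{pr} \agent_j$. In all three subcases, Lemma \ref{prvalue} then yields $\valu_i(\secondset_j) < \epsilon_i$.

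The main obstacle is the $\satagents_1^r$ case, because there the bound on $\valu_i(\secondset_j)$ does not come from the second-phase priority machinery but from a careful choice of the refinement matching $M_1$; one must translate the ``no unsaturated neighbor envies'' property of Lemma \ref{nicematch} into the absence of an edge in $G_1$ and recall that the edge threshold in $G_1$ is exactly $\epsilon_i$. Everything else is a bookkeeping check of the priority rules in Definition \ref{priority} against the cluster membership of $\agent_j$ at the moment of its satisfaction.
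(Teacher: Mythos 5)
Your proposal is correct and mirrors the paper's own case analysis essentially exactly: the trivial case when $\secondset_j = \emptyset$, the use of Lemma \ref{cr2smallc1} for $\satagents_2^r$, the application of the special matching property from Lemma \ref{nicematch} for $\satagents_1^r$, and the reduction to Lemma \ref{prvalue} via the priority rules of Definition \ref{priority} for the second-phase cases $\satagents_1^s, \satagents_2^s, \satagents_3$. The only cosmetic difference is in the $\satagents_1^r$ case, where the paper states a disjunction (``either $\agent_j \prec_{pr} \agent_i$ or no edge'') and then observes the first disjunct conflicts with the hypothesis via Observation \ref{epsofcluster}, whereas you argue directly by contrapositive that the envy hypothesis forces $\agentv_i \notin N(M_1(\agentv_j))$; these are the same argument phrased differently.
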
 
\begin{proof}
First, note that if $\agent_j$ is not satisfied yet, then $\secondset_j = \emptyset$ and therefore $\valu_i(\secondset_j) < \epsilon_i$. Otherwise, we argue that agent $\agent_j$ is either satisfied in the second phase, or in the refinement phases of $\cone$ or $\ctwo$.

Consider the case that $\agent_j$ is satisfied in the second phase. If $\agent_j \in \satagents_2^s \cup \satagents_3$, then by Lemma \ref{prvalue}, $\valu_i(\secondset_j) < \epsilon$ holds. Also, if $\agent_j \in \satagents_1^s$, considering the fact that $\agent_i$ envies $\agent_j$, $\agent_i \prec_{pr} \agent_j$. Thus, by Lemma \ref{prvalue}, we have $\valu_i(\secondset_j)< \epsilon_i$.

Next, consider the case that $\agent_j$ is in $\satagents_1^r \cup \satagents_2^r$. Note that the matching of the refinement phase of $\cone$ preserves the property described in Lemma \ref{nicematch}. Hence, if $\agent_j$ belongs to $\satagents_1^r$, then either $\agent_j \prec_{pr} \agent_i$ or there is no edge between $\agentv_i$ and $M_1(\agentv_j)$ in $G_1$, where $M_1(\agentv_j)$ is the vertex matched with $\agentv_j$ in $M_1$. If $\agent_j \prec_{pr} \agent_i$, according to Observation \ref{epsofcluster}, $\valu_i(\firstset_j) \leq 3/4-\epsilon_i$ holds. On the other hand, by the definition, if no edge exists between $\agentv_i$ and  $M_1(\agentv_j)$ in $G_1$, $\valu_i(\secondset_j)<\epsilon_i$. 
In addition to this, if $\agent_j$ belongs to $\satagents_2^r$, according to Lemma \ref{cr2smallc1}, $\valu_i(\secondset_j)< \epsilon_i$ holds. 
Therefore, Lemma \ref{eps} holds for the agents in $\satagents_1^r \cup \satagents_2^r$.

\end{proof}

Note that since matching $M$ of $G_{1/2}$ for building Cluster $\cone$ is $\MCMWM$ according to condition (iii) of Lemma \ref{wm}, there exists no agent $\agent_k$, such that $|\secondset_k| = 1$ and $\valu_i(\secondset_k) > 3/4 - \epsilon_i$. Otherwise, by assigning the item in $\secondset_j$ to $\agent_i$ instead of the item in $\firstset_i$, we can increase the total weight of the matching, that contradicts the maximality of $M$.

According to Lemma \ref{eps}, for all the agents $\agent_j$ with the property that $f_j$ is a blue item that belongs to a partition in $\hat{B_1}$, $\valu_i(\secondset_j)< \epsilon_i$ holds. 
The number of such agents is at least $|\hat{B_1}|$. Therefore, the total value of $\valu_i(\wcal(\items \setminus \fitems))$ is less than $2\epsilon_i \cdot (2n - 1 - |B_1| - 2|B_2|- |\hat{B_1}|) + \epsilon_i \cdot |\hat{B_1}|$. Combining the bounds obtained in Observation \ref{enough} and Lemma \ref{eps} by Inequality \eqref{fsum}, we have:
$$ \valu_i({\fitems}) \geq 2\epsilon_i \cdot(2n -2|B_2| - |B_1| -  |\hat{B_1}|) + \epsilon_i \cdot(|\hat{B_1}|) -  2\epsilon_i \cdot (2n - 1 - |B_1| - 2|B_2|- |\hat{B_1}|) - \epsilon_i \cdot |\hat{B_1}|$$
That is:
$$ \valu_i({\fitems}) \geq 2\epsilon_i$$

This contradicts the fact that the set $\fitems$ is not feasible for $\agent_i$.

\end{proof}

\begin{proof}[of Lemma \ref{c2null}]
Lemmas \ref{c3null} and \ref{c1null} state that at the end of the algorithm, $\cone = \cthree = \emptyset$. Now, let $\agent_i$ be a winner of $\ctwo$. We consider two cases separately: $\epsilon_i \geq {1/8}$ and $\epsilon_i < {1/8}$.

If $\epsilon_i \geq{1/8}$, the proof follows from a similar argument we used to prove Lemma \ref{c1null}. 

\begin{lemma}
\label{c2rem}
If $\epsilon_i \geq 1/8$, then the following inequality holds:
$$ \sum_{\agent_j \in \satagents} \valu_i( \firstset_j \cup \secondset_j) \leq |\satagents| + 1/8.$$
\end{lemma}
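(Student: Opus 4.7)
The plan is a case-by-case upper bound on $\valu_i(\firstset_j \cup \secondset_j)$ across the strata $\satagents = \satagents_1^r \cup \satagents_1^s \cup \satagents_2^r \cup \satagents_2^s \cup \satagents_3$, showing each summand is $< 1$ with at most one possible exception contributing $< 9/8$. The main tools I will need are the value lemmas already collected in Tables \ref{table0}, \ref{table4}, and \ref{table1}, the priority rule of Definition \ref{priority}, cycle-envy-freeness of $\ctwo$, the hypothesis $\epsilon_i \geq 1/8$, and (crucially) Corollary \ref{forc2small}.

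I would first dispatch the easy strata. For $\agent_j \in \satagents_1^r$, Lemmas \ref{forc2c3} and \ref{gsmallc1r} give $\valu_i(\firstset_j \cup \secondset_j) < 1/2 + 1/2 = 1$. For $\agent_j \in \satagents_1^s$, Lemma \ref{forc2c3} together with Lemma \ref{prvalue} (case 2, since $\cone \prec_{pr} \ctwo$ forces $\agent_j \prec_{pr} \agent_i$) yields $< 1/2 + 2\epsilon_i < 1$ using $\epsilon_i < 1/4$. For $\agent_j \in \satagents_3$, we have $\agent_i \prec_{pr} \agent_j$ (since $\ctwo$ precedes $\cthree^s$ and $\cthree^b$ in $\prec_{pr}$), so Algorithm \ref{second-phase} could only have chosen $\agent_j$ with $\agent_i \notin \Phi(\secondset_j)$; this forces $\valu_i(\secondset_j) < \epsilon_i$, and combined with Lemma \ref{general} gives $< 3/4 + \epsilon_i < 1$.

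For $\agent_j \in \satagents_2^s$ and for $\agent_j \in \satagents_2^r$ with $\agent_i \prec_o \agent_j$ I would split by topological position. When $\agent_j \prec_o \agent_i$, cycle-envy-freeness forbids the edge $\agent_i \to \agent_j$ in the envy DAG, so $\valu_i(\firstset_j) < \valu_i(\firstset_i) = 3/4 - \epsilon_i$; together with Lemma \ref{prvalue} (case 2) this gives $< 3/4 + \epsilon_i < 1$. When $\agent_i \prec_o \agent_j$, Lemma \ref{general} gives $\valu_i(\firstset_j) < 3/4$, and either Lemma \ref{prvalue} (case 1) or the fact that $\agent_i$ was considered in Algorithm \ref{c2ref} before $\agent_j$ and found no item of its own gives $\valu_i(\secondset_j) < \epsilon_i$, for a total $< 3/4 + \epsilon_i < 1$.

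The hard case, and the crux of the proof, is $\agent_j \in \satagents_2^r$ with $\agent_j \prec_o \agent_i$. Cycle-envy-freeness still forces $\valu_i(\firstset_j) < 3/4 - \epsilon_i$, but $\secondset_j$ is a single item of $\itemsv' \setminus \itemsv'_{1/2}$, so only the filtering bound $\valu_i(\{\secondset_j\}) < 1/2$ is immediate, giving only the weak per-agent bound $5/4 - \epsilon_i \leq 9/8$. I would resolve this with Corollary \ref{forc2small}: for $\agent_i$, there is at most one item $\ite \in \itemsv' \setminus \itemsv'_{1/2}$ with $\valu_i(\{\ite\}) \geq 3/8$. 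Since the sets $\secondset_j$ in this case are distinct single items drawn from $\itemsv' \setminus \itemsv'_{1/2}$, at most one such $\agent_j$ can have $\valu_i(\secondset_j) \geq 3/8$; every other such $\agent_j$ satisfies $\valu_i(\firstset_j \cup \secondset_j) < (3/4 - \epsilon_i) + 3/8 = 9/8 - \epsilon_i \leq 1$ by the hypothesis $\epsilon_i \geq 1/8$. The single possible exceptional agent contributes at most $9/8$, an excess of exactly $1/8$, and summing over all strata yields $\sum_{\agent_j \in \satagents} \valu_i(\firstset_j \cup \secondset_j) \leq |\satagents| + 1/8$, as claimed.
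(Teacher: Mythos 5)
Your proof follows essentially the same route as the paper's: the same stratification of $\satagents$ into $\satagents_1^r, \satagents_1^s, \satagents_2^r, \satagents_2^s, \satagents_3$, the same per-stratum bounds from the value lemmas and Observation~\ref{epsofcluster}, and the identical resolution of the hard case via Corollary~\ref{forc2small} combined with the hypothesis $\epsilon_i \geq 1/8$, which caps $\satagents_2^r$ at one exceptional summand of at most $9/8$. One small citation gap: for $\agent_j \in \satagents_3$ the bound $\valu_i(\firstset_j) < 3/4$ requires Lemma~\ref{c3fsmall} in addition to Lemma~\ref{general}, since $\firstset_j$ may have been assigned during the second phase when $\agent_j$ moved out of $\cthree^f$, a situation Lemma~\ref{general} (which covers the post-clustering state) does not address on its own.
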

\begin{proof}
We know $\satagents = \satagents_1 \cup \satagents_2 \cup \satagents_3$. For every agent $\agent_j$ in $\satagents_3$, by Lemmas \ref{general} and \ref{c3fsmall}, we know $\valu_i(\firstset_j)<3/4$. Also, according to Lemma \ref{prvalue}, $\valu_i(\secondset_j)< \epsilon_i \leq 1/4$. Therefore,
\begin{equation}
\label{bow}
\sum_{\agent_j \in \satagents_3} \valu_i(\firstset_j \cup \secondset_j) \leq \sum_{\agent_j \in \satagents_3} (3/4 + 1/4) = |\satagents_3|.
\end{equation} 
Now, consider an agent $\agent_j \in \satagents_1$. Note that by Lemma \ref{forc2c3}, $\valu_i(\firstset_j)< 1/2$. Also, remark that either $\agent_j \in \satagents_1^r$  or $\agent_j \in \satagents_1^s$ . If $\agent_j \in \satagents_1^r$ then according to Lemma \ref{gsmallc1r}, $\valu_i(\secondset_j)< 1/2$ holds and hence $\valu_i(\firstset_j \cup \secondset_j)< 1$. Also, If $\agent_j \in \satagents_1^s$, then according to Lemma \ref{prvalue}, $\valu_i(\secondset_j)<2\epsilon_i<1/2$. 
Thus, in both cases, $\valu_i(\firstset_j \cup \secondset_j)< 1$ and hence:
\begin{equation}
\label{hogh}
\sum_{\agent_j \in \satagents_1} \valu_i(\firstset_j \cup \secondset_j) \leq \sum_{\agent_j \in \satagents_1} 1 = |\satagents_1|.
\end{equation} 

Finally consider a satisfied agent $\agent_j \in \satagents_2$. Again, remark that either $\agent_j \in \satagents_2^r$  or $\agent_j \in \satagents_2^s$ holds. 

Consider the case that $\agent_j \in \satagents_2^s$. If $\agent_j \prec_{pr} \agent_i$, then by Observation \ref{epsofcluster}, $\valu_i(\firstset_j) \leq 3/4 - \epsilon_i$ and by Lemma \ref{prvalue}, $\valu_i(\secondset_j) < 2\epsilon_i \leq 1/4 + \epsilon_i$ which means $\valu_i(\firstset_j \cup \secondset_j) < 1$. Moreover, if $\agent_i \prec_{pr} \agent_j$, according to Lemmas \ref{general} and \ref{prvalue}, $\valu_i(\firstset_j \cup \secondset_j) < 3/4 + \epsilon_i \leq 1 $. Thus, we have:

\begin{equation}
\label{f}
\sum_{\agent_j \in \satagents_2^s} \valu_i(\firstset_j \cup \secondset_j) \leq \sum_{\agent_j \in \satagents_2^s} 1 = |\satagents_1|
\end{equation}
It only remains to investigate the case where $\agent_j \in \satagents_2^r$. Note that since $\agent_i$ is not satisfied in the refinement phase of $\ctwo$, if $\agent_i \prec_{pr} \agent_j$, then $\valu_i(\secondset_j)< \epsilon_i \leq 1/4$. Otherwise, we could assign the item in $\secondset_j$ to $\agent_i$ in the refinement phase of $\ctwo$. Also, by Lemma \ref{general}, $\valu_i(\firstset_j)<3/4$ holds which yields $\valu_i(\firstset_j \cup \secondset_j)<1$.  

Finally, if $\agent_j \prec_{pr} \agent_i$, by Observation \ref{epsofcluster} $\valu_i(\firstset_j) \leq 3/4 - \epsilon_i$ holds. Corollary \ref{forc2small} states that there is at most one item $\ite_k$ with $\items_k \in \itemsv' \setminus \itemsv'_{1/2}$ and $\valu_i(\ite_k) \geq 3/8$. Also, note that since $\ite_k$ belongs to  $\itemsv' \setminus \itemsv'_{1/2}$, $\valu_i(\{\ite_k\})<1/2$ holds. For agent $\agent_j$, let $\ite_l$ be the item that is assigned to $\agent_j$ in the refinement of $\ctwo$, i.e., $\secondset_j = \{\ite_l\}$. We have $$\valu_i(\firstset_j \cup \secondset_j) \leq 3/4 - \epsilon_i + \valu_i(\{\ite_l\}).$$  
If $\ite_l \neq \ite_k$, $\valu_i(\firstset_j \cup \secondset_j) \leq 3/4 - \epsilon_i + 3/8$ holds which by the fact that $\epsilon_i \geq{1/8}$, implies $\valu_i(\firstset_j \cup \secondset_j) \leq 3/4 -1/8 + 3/8 \leq 1$. In addition to this, If $\ite_l = \ite_k$, 
$\valu_i(\firstset_j \cup \secondset_j) \leq 3/4 - 1/8 + 4/8 \leq 1+1/8$. But this can happen for at most one agent. Thus, for every agent $\agent_j$ in $\satagents_2^r$, $\valu_i(\firstset_j \cup \secondset_j) \leq 1$ holds and for at most one agent $\agent_j \in \satagents_2^r$, $\valu_i(\firstset_j \cup \secondset_j) \leq 1+1/8$. Thus, we have

\begin{equation}
\label{u}
\sum_{\agent_j \in \satagents_2^r} \valu_i(\firstset_j \cup \secondset_j) \leq |\satagents_2^r| + 1/8.
\end{equation}
Inequality (\ref{u}) together with Inequality (\ref{f}) yields
\begin{equation}
\label{oh}
\sum_{\agent_j \in \satagents_2} \valu_i(\firstset_j \cup \secondset_j) \leq |\satagents_2| + 1/8.
\end{equation}
Furthermore, by Inequalities (\ref{bow}), (\ref{hogh}) and (\ref{oh}) we have
\begin{equation}
\begin{split}
\sum_{\agent_j \in \satagents} \valu_i( \firstset_j \cup \secondset_j)  & = \sum_{\agent_j \in \satagents_1} \valu_i( \firstset_j \cup \secondset_j) + \sum_{\agent_j \in \satagents_2} \valu_i( \firstset_j \cup \secondset_j) + \sum_{\agent_j \in \satagents_3} \valu_i( \firstset_j \cup \secondset_j)\\
& \leq |\satagents_1| + |\satagents_2|+ 1/8 + |\satagents_3| \\
& \leq |\satagents|+1/8. \\
\end{split}
\end{equation}
 
\end{proof}

By Lemma \ref{c2rem}, value of agent $\agent_i$ for the items assigned to the satisfied agents is less than $|\satagents| + 1/8$. Recall that $\ctwo = \cthree = \emptyset$ and hence $|\satagents| = n - |\ctwo|$. Therefore,
\begin{equation}
\sum_{\agent_j \in \satagents} \valu_i( \firstset_j \cup \secondset_j) \leq n - |\ctwo| + 1/8.
\end{equation}
Since $\agent_i$ is a winner of $\ctwo$, for all $\agent_j \in \ctwo$, we have  $\valu_i(\firstset_j)\leq \valu_i(\firstset_i)$. On the other hand, since the total value of all items for $\agent_i$ is equal to $n$ we have

\begin{equation}\label{yy1}
\begin{split}
\valu_i({\fitems}) & = \valu_i(\items) - \sum_{\agent_j \in \ctwo} {\valu_i(\firstset_j)} - \sum_{\agent_j \in \satagents}\valu_i(\firstset_j \cup \secondset_j)\\
& = n - \sum_{\agent_j \in \ctwo} {\valu_i(\firstset_j)} - \sum_{\agent_j \in \satagents}\valu_i(\firstset_j \cup \secondset_j)\\
& \geq n - \sum_{\agent_j \in \ctwo} {\valu_i(\firstset_j)} - \big[ n - |\ctwo| + 1/8 \big] \\
& = |\ctwo| - 1/8 - \sum_{\agent_j \in \ctwo} {\valu_i(\firstset_j)}.
\end{split}
\end{equation}

Also, $\valu_i(\firstset_i) = {3/4} - \epsilon_i$ holds and $\valu_i(\firstset_j) \leq \valu_i(\firstset_i)$ for any $\agent_j \in \ctwo$ follows from the fact that $\agent_i$ is a winner of $\ctwo$. Therefore by Inequality (\ref{yy1}) we have
\begin{equation*}
\begin{split}
\valu_i({\fitems}) & \geq |\ctwo| - 1/8 - \sum_{\agent_j \in \ctwo} {\valu_i(\firstset_j)}\\
& \geq |\ctwo| - 1/8 - \sum_{\agent_j \in \ctwo} {\valu_i(\firstset_i)}\\
& = |\ctwo| - 1/8 - |\ctwo| {\valu_i(\firstset_i)}\\
& = |\ctwo| - 1/8 - |\ctwo| ({3/4} - \epsilon_i)\\
& = |\ctwo| ({1/4} + \epsilon_i) - 1/8.
\end{split}
\end{equation*}
Recall that by the assumption $\epsilon_i \geq 1/8$ holds. Moreover, $\epsilon_i\leq 1/4$, and thus
\begin{equation*}
\begin{split} 
\valu_i(\fitems) & \geq |\ctwo| ({1/4} + \epsilon_i) - 1/8\\
& \geq |\ctwo| 2\epsilon_i - 1/8\\
& \geq |\ctwo|  2\epsilon_i - \epsilon_i
\end{split}
\end{equation*}
and since $|\ctwo| \geq 1$, 
\begin{equation*}
\begin{split}
\valu_i(\fitems) &\geq |\ctwo|  2\epsilon_i - \epsilon_i\\
& \geq 2\epsilon_i - \epsilon_i\\
& \geq \epsilon_i
\end{split}
\end{equation*}
 and thus $\fitems$ is feasible for $\agent_i$. This contradicts the termination of the algorithm. 

Next, we investigate the case where $\epsilon < {1/8}$. Our proof for this case is similar to the one for $\cone$. Let $\satagents^r_{1}$ be the agents in $\satagents_1$ that are satisfied in the refinement phase and let $$\items^r_1 = \bigcup_{\agent_j \in \satagents^r_1} \firstset_j \cup \secondset_j.$$ Lemma \ref{forc2} states that the maxmin value of the agents in $\ctwo \cup \cthree$ for the items in $\items' = \items \setminus \items^r_1$ is at least $1$. More precisely for every $\agent_j \in \ctwo$:
\begin{equation}
\label{c1refine}
 \MMS_{j}^{n- |\satagents^r_{1}|} ( {\items} \setminus \items^r_{1}) \geq 1 
 \end{equation}  

We color the items of $\items'$ in one of four colors blue, red, green, or white. Initially, all the items are colored in white. For each agent $\agent_j \in \agents \setminus \satagents^r_1$, if $|\firstset_j|=1$, then we color the item in $\firstset_j$ in blue. Also, if $|\firstset_j|=2$ (which means $\firstset_j$ is corresponding to a merged vertex), color both the elements of $\firstset_j$ in red. In addition to this, if $|\secondset_j|=1$ then color the item in $\secondset_j$ in green. For any set $S \subseteq \items$, we denote the subset of blue, red, green, and white items in $S$ by $\mathcal{B}(S)$,$\mathcal{R}(S)$, $\mathcal{G}(S)$, and $\mathcal{W}(S)$, respectively. Recall that by Lemma \ref{pairsmall}, every pair of items in red or green are worth less that $3/4$ in total to $\agent_i$. In other words,
\begin{equation*}
\valu_i(\{\ite_j, \ite_k\}) \leq 3/4.
\end{equation*}
for any two different items $\ite_j, \ite_k \in \mathcal{B}(\items) \cup \mathcal{G}(\items)$.
  Also, according to Lemmas \ref{prvalue} and \ref{c3fsmall}, every set including white items is worth less than $2\epsilon_i < 1/4$ to $\agent_i$. 

Now, let $n' = n - |\satagents^r_1|$. Let $\cal P$ $= \langle P_1, P_2, \ldots, P_{n'} \rangle$ be the optimal $n'-$partitioning of $\items' $ for $\agent_i$. Recall that by Inequality \eqref{c1refine} the value of every partition in $\cal P$ is at least $1$ for $\agent_i$. Based on the number of blue and red items in every partition, we define three sets of partitions:
\begin{itemize}
    \item $B_{00}:$ Partitions with no red or blue items.
    \item $B_{10}:$ Partitions with blue items, but without any red items.
    \item $B_{01}:$ Partitions that contain at least one red item.
\end{itemize}

Next we prove Lemmas \ref{lowerwhite1} and \ref{lowerwhite2} to be used later in the proof. 
\begin{lemma}
\label{lowerwhite1}
Let $|{\mathcal{G}}(B_{00})|$ be the number of green items in the partitions of $B_{00}$. Then, $$\valu_i(\wcal(B_{00})) \geq (3|B_{00}| - |{\mathcal{G}}(B_{00})|)\cdot 1/4.$$
\end{lemma}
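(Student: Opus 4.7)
The plan is to establish a per-partition lower bound $\valu_i(\wcal(P)) \geq (3 - g_P)/4$ for every $P \in B_{00}$, where $g_P := |\mathcal{G}(P)|$, and then sum over all partitions in $B_{00}$. Since $P \in B_{00}$ contains only green and white items and $\valu_i(P) \geq 1$ by the definition of $\cal P$ together with Inequality \eqref{c1refine}, the additivity of $\valu_i$ gives $\valu_i(\wcal(P)) = \valu_i(P) - \valu_i(\mathcal{G}(P)) \geq 1 - \valu_i(\mathcal{G}(P))$. So the entire task reduces to upper-bounding the total green value inside each $P$ in terms of $g_P$.

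The structural observation I would use is that every green item lies in $\itemsv' \setminus \itemsv'_{1/2}$. By construction, a green item is a singleton $\secondset_j$ for some $\agent_j \notin \satagents_1^r$, and such items arise either during the refinement of $\ctwo$ (so they sit in $W_2 \subseteq \itemsv' \setminus \itemsv'_{1/2}$ by the selection rule in Algorithm \ref{c2ref}) or during the second phase (so they sit in $\fitems \subseteq \itemsv'' \setminus \itemsv''_{1/2} \subseteq \itemsv' \setminus \itemsv'_{1/2}$). Since $\agent_i \in \ctwo \subseteq \agentsv'$, the definition of the $1/2$-filtering guarantees that any single green item has value strictly less than $1/2$ to $\agent_i$, and Lemma \ref{pairsmall} guarantees that any pair of green items has total value strictly less than $3/4$ to $\agent_i$.

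With these bounds in hand I would do a short case analysis on $g_P$. If $g_P = 0$ then $P$ is entirely white and $\valu_i(\wcal(P)) \geq 1 \geq 3/4 = (3-g_P)/4$. If $g_P = 1$ the single green item contributes less than $1/2$, yielding $\valu_i(\wcal(P)) > 1/2 = (3-g_P)/4$. If $g_P = 2$ the pair contributes less than $3/4$ by Lemma \ref{pairsmall}, yielding $\valu_i(\wcal(P)) > 1/4 = (3-g_P)/4$. Finally, if $g_P \geq 3$ the target $(3-g_P)/4$ is non-positive, so the bound is vacuous from non-negativity of $\valu_i$.

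Summing the per-partition inequalities gives
\[
\valu_i(\wcal(B_{00})) \;\geq\; \sum_{P \in B_{00}} \frac{3 - g_P}{4} \;=\; \frac{3|B_{00}| - |\mathcal{G}(B_{00})|}{4},
\]
as claimed. The only nontrivial step is the second paragraph: one must carefully trace every origin of a green item through the refinement of $\ctwo$ and the bag-filling second phase in order to certify that it sits in $\itemsv' \setminus \itemsv'_{1/2}$, since this membership is exactly what unlocks both the per-item $< 1/2$ bound and the pair bound of Lemma \ref{pairsmall} that make the three non-trivial cases work.
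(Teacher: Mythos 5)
Your proof is correct and takes essentially the same approach as the paper: you stratify the partitions of $B_{00}$ by the number of green items and invoke the same two structural facts (a single green item is worth less than $1/2$, and a pair is worth less than $3/4$ by Lemma \ref{pairsmall}, both because green items sit in $\itemsv' \setminus \itemsv'_{1/2}$). The only cosmetic difference is that you derive a per-partition bound $\valu_i(\wcal(P)) \geq (3 - g_P)/4$ and sum, whereas the paper encodes the same case analysis through the counting sets $B_{00}^j$ and a chain of inequalities; your organization is slightly tidier and incidentally avoids a small typo in the paper's statement of the intermediate bound.
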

\begin{proof}
Let $B_{00}^j$ be the set of partitions in $B_{00}$ that contain exactly $j$ green items. We have:
\begin{equation}
\label{gbound}
|\mathcal{G}(B_{00})| = \sum_{1 \leq j < \infty} j  |B_{00}^j| \geq |B_{00}^1| + 2|B_{00}^2| + \sum_{3 \leq j < \infty} 3  |B_{00}^j| 
\end{equation}
Also, we have:
\begin{equation}
\label{bbound}
3 |B_{00}| = \sum_{0 \leq j < \infty} 3  |B_{00}^j| = 3  |B_{00}^0| + 3  |B_{00}^1| + 3  |B_{00}^2| + \sum_{3 \leq j < \infty} 3  |B_{00}^j|
\end{equation}
Finally, we argue that the value of white items in $B_{00}$ is at least $|B_{00}^0| + |B_{00}^1|\cdot 1/2 + |B_{00}^3|\cdot 1/4$. This follows from the fact that every green item in $P_k \in B_{00}^1$ has a value less than $1/2$ and by Lemma \ref{pairsmall}, every pair of green items in $P_k \in B_{00}^2$ are worth less than $3/4$ to $\agent_j$. According to the fact that the value of every partition $P_k$ is at least 1, we have:

\begin{equation}
\label{wbound}
\valu_i({\cal{W}}(B_{00})) \geq |B_{00}^0| + |B_{00}^1|\cdot 1/2 + |B_{00}^3|\cdot 1/4 = \big( 4|B_{00}^0| + 2|B_{00}^1| + |B_{00}^2| \big)\cdot 1/4
\end{equation}  

According to Equations \eqref{gbound} and \eqref{bbound}, we have:
\begin{equation}
\label{tbound}
3 |B_{00}| - |\mathcal{G}(B_{00})| \leq 3|B_{00}^0| + 2|B_{00}^1| + |B_{00}^2| \leq 4|B_{00}^0| + 2|B_{00}^1| + |B_{00}^2|
\end{equation}

Next we combine Equations \eqref{wbound} and \eqref{tbound} to obtain: 
\begin{equation}
\valu_i({\cal{W}}(B_{00})) \geq \big( 3 |B_{00}| - |\mathcal{G}(B_{00})| \big)\cdot 1/4
\end{equation}

\end{proof}

\begin{lemma}
\label{lowerwhite2}
$\valu_i(\wcal(B_{10})) \geq (2|B_{10}| - |{\mathcal{B}}(B_{10}) | - |{\mathcal{G}}(B_{10})| )\cdot 1/4$
\end{lemma}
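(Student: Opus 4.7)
The plan is to case-split each partition $P_k \in B_{10}$ according to $b_k + g_k$, where $b_k$ and $g_k$ denote the numbers of blue and green items in $P_k$. By the definition of $B_{10}$ every such partition satisfies $b_k \geq 1$ and contains no red items, so $\valu_i(P_k)$ equals the sum of the values of its blue, green, and white items. Moreover, by Inequality \eqref{c1refine} (applied to $\agent_i \in \ctwo$), $\valu_i(P_k) \geq 1$, so $\valu_i(\wcal(P_k)) \geq 1 - \valu_i(\mathcal{B}(P_k)) - \valu_i(\mathcal{G}(P_k))$.

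The only nontrivial case will be $b_k + g_k = 1$, which, combined with $b_k \geq 1$, forces $b_k = 1$ and $g_k = 0$. Letting $\ite_j$ be the unique blue item in $P_k$, I would invoke Lemma \ref{remove1} (applicable since the instance is $3/4$-irreducible) to conclude $\valu_i(\{\ite_j\}) < 3/4$, which gives
$$\valu_i(\wcal(P_k)) > 1 - 3/4 = 1/4 = (2 - b_k - g_k)/4.$$
For every partition with $b_k + g_k \geq 2$, the target $(2 - b_k - g_k)/4$ is non-positive, so the trivial bound $\valu_i(\wcal(P_k)) \geq 0$ (monotonicity and non-negativity of $\valu_i$) already suffices.

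Summing the per-partition bounds then telescopes into
$$\valu_i(\wcal(B_{10})) \;\geq\; \sum_{P_k \in B_{10}} \frac{2 - b_k - g_k}{4} \;=\; \frac{2|B_{10}| - |\mathcal{B}(B_{10})| - |\mathcal{G}(B_{10})|}{4},$$
which is the claim. I do not expect any real obstacle here: the coloring of Lemma \ref{c2null} was set up precisely so that a single blue item leaves a $1/4$ white residue via irreducibility, while two or more colored items make the desired inequality trivial on the nose.
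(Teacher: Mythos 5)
Your proof is correct and takes essentially the same approach as the paper: both use the $3/4$-irreducibility bound on a lone blue item to extract a $1/4$ white residue from each partition of $B_{10}$ with exactly one blue and no green item, and both dispose of partitions carrying two or more blue/green items by noting the target per-partition contribution $(2-b_k-g_k)/4$ is then nonpositive. The paper phrases this as a dichotomy $B_{10} = B_{10}^w \cup B_{10}^{\bar{w}}$ and a short counting manipulation, while you sum a per-partition bound directly; these are the same argument in two presentations.
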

\begin{proof}
First, note that every partition in $B_{10}$ contains at least one blue item. Let $B_{10}^{w}$ be the partitions in $B_{10}$ that contains exactly one blue item and no green item. The other items in each partition of $B_{10}^{w}$, are white. Since the problem is $3/4$-irreducible, the value of every blue item to $\agent_i$ is less than $3/4$ and therefore we have:
$$\valu_i(\wcal(B_{10})) \geq |B_{10}^{w}|\cdot 1/4$$
or 
\begin{equation}
\label{wbound2}
4\valu_i(\wcal(B_{10})) \geq |B_{10}^{w}|.
\end{equation}
Moreover, let $B_{10}^{\bar{w}} = B_{10} \setminus B_{10}^w$. Since every partition in $B_{10}$ contains at least one blue item, every partition in $B_{10}^{\bar{w}}$ contains at least two items with colors blue or green. Thus, we have:
\begin{equation}
\label{gbound2}
 |{\mathcal{G}}(B_{10}^{\bar{w}})| + |{\mathcal{B}}(B_{10}^{\bar{w}})| \geq 2|B_{10}^{\bar{w}}|
\end{equation}
Summing up Equations \eqref{wbound2} and \eqref{gbound2} results in 
$$
4\valu_i(\wcal(B_{10})) +  |{\mathcal{G}}(B_{10}^{\bar{w}})| + |{\mathcal{B}}(B_{10}^{\bar{w}})| \geq 2|B_{10}^{\bar{w}}| + |B_{10}^{w}|
$$
which means:
\begin{equation}
\label{fbound}
4\valu_i(\wcal(B_{10})) \geq 2|B_{10}^{\bar{w}}| -  |{\mathcal{G}}(B_{10}^{\bar{w}})| - |{\mathcal{B}}(B_{10}^{\bar{w}})|  + |B_{10}^{w}|.
\end{equation}

Morover, we have $|{\mathcal{B}}(B_{10})| = |{\mathcal{B}}(B_{10}^{w})|+|{\mathcal{B}}(B_{10}^{\bar{w}})|$. According to the fact that every partition in $B_{10}^w$ contains exactly one blue item, $|{\mathcal{B}}(B_{10}^{w})| = |B_{10}^w|$ and hence, $|{\mathcal{B}}(B_{10})| = |B_{10}^w|+|{\mathcal{B}}(B_{10}^{\bar{w}})|$. By Equation \eqref{fbound}, we have:
$$ 4\valu_i(\wcal(B_{10})) \geq 2|B_{10}^{\bar{w}}| -  |{\mathcal{G}}(B_{10}^{\bar{w}})| -|{\mathcal{B}}(B_{10})|+ |B_{10}^w|  + |B_{10}^{w}|. $$
Finally by the fact that $2|B_{10}^{w}| + 2|B_{10}^{\bar{w}}| = 2|B_{10}|$, we have:
$$ 4\valu_i(\wcal(B_{10})) \geq 2|B_{10}| -  |{\mathcal{G}}(B_{10}^{\bar{w}})| -|{\mathcal{B}}(B_{10})| $$
which is:
$$ \valu_i(\wcal(B_{10})) \geq \big(2|B_{10}| -|{\mathcal{B}}(B_{10})|-  |{\mathcal{G}}(B_{10}^{\bar{w}})| \big) \cdot 1/4$$
\end{proof}

For the partitions in $B_{01}$, we construct a graph $G_{01} \langle V_{01},E_{01} \rangle$, where every vertex $v_j \in V_{01}$ corresponds to a partition $P_j \in B_{01}$. Consider an agent $\agent_j$ such that $\firstset_j$ consists of a pair of red items $\ite_k,\ite_{k'}$ and let $\ite_k \in P_l$ and $\ite_{k'} \in P_{l'}$. We add an edge $(v_l,v_{l'})$ to $E_{01}$. By the definition of $B_{01}$, $P_l,P_{l'} \in B_{01}$ holds. Note that $\ite_k$ and $\ite_{k'}$ might belong to the same partition, i.e., $P_l = P_{l'}$. In this case, we add a loop to $G_{01}$. Furthermore, for every item $\ite_k \in {\mathcal B}(B_{01})$, we add a loop to the vertex $v_l$, where $\ite_k \in P_l$. 

Next, define $R_j$ as the set of partitions in $B_{01}$, such that the degree of their corresponding vertices in $V_{01}$ are equal to $j$. In other words:
$$P_k \in R_j \iff d(v_k)=j$$

Next we prove Lemma \ref{lowerwhite3}.

\begin{lemma}
\label{lowerwhite3}
For $R_1$, we have: $$\valu_i(\wcal(R_1)) \geq  (2|R_1| - |{\mathcal{G}}(R_1)|)\cdot 1/4 $$
\end{lemma}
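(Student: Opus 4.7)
The plan is to mimic the pattern of the preceding two lemmas (\ref{lowerwhite1} and \ref{lowerwhite2}): exploit the degree-one condition in $G_{01}$ to pin down the shape of every $P_k \in R_1$, split $R_1$ according to its green content, bound the white mass inside each piece, and then add up.

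First I would unpack what $P_k \in R_1$ actually looks like. The vertex $v_k$ is incident to exactly one (non-loop) edge of $G_{01}$, and by the construction of $G_{01}$ a loop at $v_k$ arises from a blue item of $P_k$ or from a merged pair whose two red items both lie in $P_k$. Hence $P_k$ contains exactly one red item $\ite_r$ (the other half of the associated merged pair sits in another partition of $B_{01}$), no blue items, and no merged pair entirely inside $P_k$. Since $\ite_r$ comes from a vertex of $\itemsv' \setminus \itemsv'_{1/2}$, we have $\valu_i(\{\ite_r\}) < 1/2$, and combining this with $\valu_i(P_k) \geq 1$ and $\valu_i(\mathcal{B}(P_k)) = 0$ gives the basic estimate
\[
\valu_i(\mathcal{G}(P_k)) + \valu_i(\wcal(P_k)) \;>\; 1/2.
\]

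Next I would split $R_1$ by green content. Let $R_1^j = \{ P_k \in R_1 : |\mathcal{G}(P_k)| = j\}$. For $R_1^0$, the basic estimate with $\valu_i(\mathcal{G}(P_k)) = 0$ gives $\valu_i(\wcal(P_k)) > 1/2$, hence $4\valu_i(\wcal(R_1^0)) > 2|R_1^0|$. For $R_1^{\geq 2}$ the quantity $|\mathcal{G}(P_k)| \geq 2$ already pays for the required $2\cdot 1/4$ per partition, so $4\valu_i(\wcal(P_k)) + |\mathcal{G}(P_k)| \geq 2$ is automatic for every $P_k \in R_1^{\geq 2}$.

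The delicate case is $R_1^1$, where the single green item $\ite_g$ could a priori carry almost $1/2$ of the weight and the basic estimate would not immediately guarantee $\valu_i(\wcal(P_k)) \geq 1/4$. Here the plan is to combine the value-bounds on singleton $\secondset_j$'s for the winner $\agent_i \in \ctwo$: Lemma \ref{prvalue} forces $\valu_i(\secondset_j) < 2\epsilon_i < 1/4$ whenever $\agent_j \in \satagents_1^s \cup \satagents_2^s$, Lemma \ref{cr2smallc1} handles $\satagents_2^r$, Lemma \ref{gsmallc1r} together with the fact that the corresponding item lies in $W_1$ handles $\satagents_1^r$, and Corollary \ref{forc2small} ensures that at most one item in $\itemsv' \setminus \itemsv'_{1/2}$ has $\valu_i$-value at least $3/8$. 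Together these show that for every $P_k \in R_1^1$ the single green item has value strictly less than $1/4$ except for at most one partition, on which the slack in the strict inequality of the basic estimate absorbs the defect. Adding the three bucket inequalities then yields $4\valu_i(\wcal(R_1)) + |\mathcal{G}(R_1)| \geq 2|R_1|$, equivalent to the stated bound.

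The main obstacle is the $R_1^1$ analysis: tracing through every provenance of a singleton $\secondset_j$ to certify that its value to the chosen winner $\agent_i$ of $\ctwo$ is below $1/4$ (up to a single bounded exception), which requires using almost every value-lemma developed earlier in the paper simultaneously.
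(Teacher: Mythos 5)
Your shape analysis of $R_1$ and the treatment of the $R_1^0$ and $R_1^{\geq 2}$ buckets are correct and match the paper's per-partition inequality $|\mathcal{G}(P_j)| + 4\valu_i(\wcal(P_j)) \geq 2$. But your $R_1^1$ argument diverges from the paper and has a genuine gap.

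You try to certify that the single green item in each $P_k \in R_1^1$ has $\valu_i$-value below $1/4$ by tracing its provenance through the value-lemmas, with an ``at most one exception absorbed by slack'' clause. Several things go wrong here. Corollary \ref{forc2small} only guarantees at most one item in $\itemsv' \setminus \itemsv'_{1/2}$ of value at least $3/8$ to $\agent_i$; it says nothing about items in $[1/4, 3/8)$, so you cannot conclude that at most one green item exceeds $1/4$. The citation of Lemma \ref{cr2smallc1} to handle green items from $\satagents_2^r$ also does not apply as stated: that lemma bounds $\valu_j(\secondset_i)$ only for $\agent_j \in \cone$, whereas the winner $\agent_i$ you care about lies in $\ctwo$. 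And even if there were a single exceptional partition, ``the slack in the strict inequality absorbs the defect'' is not a proof: the basic estimate $\valu_i(\mathcal{G}(P_k)) + \valu_i(\wcal(P_k)) > 1/2$ has no quantified slack, and a green item of value just below $1/2$ leaves $\valu_i(\wcal(P_k))$ only strictly positive, far short of the needed $1/4$.

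The point you miss is that for the $|\mathcal{G}(P_k)|=1$ case there is a one-line argument using Lemma \ref{pairsmall}. Every green item is a singleton $\secondset_j$ coming either from the refinement of $\ctwo$ (an item of $\itemsv' \setminus \itemsv'_{1/2}$) or from the second phase (an item of $\fitems \subseteq \itemsv'' \setminus \itemsv''_{1/2} \subseteq \itemsv' \setminus \itemsv'_{1/2}$); the $\satagents_1^r$ case is excluded from the coloring. The unique red item in $P_k$ likewise corresponds to a vertex of $\itemsv' \setminus \itemsv'_{1/2}$ because it is half of a merged pair. Hence the red--green pair falls squarely under Lemma \ref{pairsmall}, giving $\valu_i(\{\ite_r,\ite_g\}) < 3/4$ and therefore $\valu_i(\wcal(P_k)) > 1/4$ directly, with no case analysis on green-item provenance and no exceptional partition to handle. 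Replacing your $R_1^1$ subargument with this application of Lemma \ref{pairsmall} closes the gap and reproduces the paper's proof.
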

\begin{proof}
Consider a partition $P_j \in R_1$. Since $d(v_j)=1$, $P_j$ contains exactly one red item and no blue item. Thus, other items in $P_j$ are either green or white. We show that 
\begin{equation}
\label{newway}
|{\mathcal G}(P_j)| + 4.\valu_i(\wcal(P_j)) \geq 2.
\end{equation}
First, argue that if $|{\mathcal G}(P_j)| \geq 2$, then Inequality \eqref{newway} holds. Also, if $|{\mathcal G}(P_j)|=0$, then $\valu_i(\wcal(P_j)) \geq 1/2$, because the value of the red item in $P_j$ is less than $1/2$ (recall that all the red items correspond to the vertices in $\itemsv' \setminus \itemsv'_{1/2}$). This immediately implies the fact that $4.\valu_i(\wcal(P_j)) \geq 2$. Finally, if $|{\mathcal G}(P_j)|=1$, then by Lemma \ref{pairsmall}, the total value of the green and red items in $P_j$ is less than $3/4$ and hence, $\valu_i(\wcal(P_j)) \geq 1/4$ which means $|{\mathcal G}(P_j)| + 4.\valu_i(\wcal(P_j)) \geq 2$.

Since Inequality \eqref{newway} holds for every partition $P_j \in R_1$, we have:

$$\sum_{P_j \in R_1} \big(|{\mathcal G}(P_j)| + 4.\valu_i(\wcal(P_j))\big) \geq 2 |R_1|$$
Therefore,
 $$|{\mathcal G}(R_1)| + 4.\valu_i(\wcal(R_1)) \geq 2|R_1|$$ and hence, $$\valu_i(\wcal(R_1)) \geq  (2|R_1| - |{\mathcal{G}}(R_1)|)\cdot 1/4 $$
\end{proof}

\begin{lemma}
\label{lowerwhite4}
For $R_2$, we have: $$ \valu_i(\wcal(R_2)) \geq (|R_2| - |{\mathcal{G}}(R_2)|)\cdot 1/4 $$
\end{lemma}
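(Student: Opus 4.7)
The plan is to prove the per-partition inequality $|{\mathcal G}(P_j)| + 4\valu_i(\wcal(P_j)) \geq 1$ for every $P_j \in R_2$, then sum over $R_2$ to obtain $|{\mathcal G}(R_2)| + 4\valu_i(\wcal(R_2)) \geq |R_2|$, which rearranges to the claimed bound. This mirrors the strategy of Lemma~\ref{lowerwhite3}, but with threshold $1$ instead of $2$ because partitions in $R_2$ contain more ``heavy'' red/blue incidences.

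First I would pin down the structure of partitions in $R_2$. From the construction of $G_{01}$ and the proof of Lemma~\ref{lowerwhite3} (where $d(v_j)=1$ forces exactly one red item and no blue item in $P_j$), the degree decomposes as $d(v_j) = |{\mathcal R}(P_j)| + 2|{\mathcal B}(P_j)|$: every red item contributes one endpoint to a red-pair edge (possibly a loop that contributes $2$, matching the two red items), while every blue item contributes a loop worth $2$. Combined with $P_j \in B_{01}$ (at least one red item), the equation $d(v_j)=2$ forces $|{\mathcal R}(P_j)|=2$ and $|{\mathcal B}(P_j)|=0$ for every $P_j \in R_2$.

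Next I would bound the total value of the red items in $P_j$. Since all red items correspond to vertices in $\itemsv' \setminus \itemsv'_{1/2}$, Lemma~\ref{pairsmall} applies to the pair of red items in $P_j$ (regardless of whether they form a loop or are each part of an external red pair), giving $\valu_i({\mathcal R}(P_j)) < 3/4$. Since $\valu_i(P_j) \geq 1$ by Inequality~\eqref{c1refine}, the green and white items of $P_j$ together contribute more than $1/4$ to $\valu_i(P_j)$. A case split on $|{\mathcal G}(P_j)|$ then closes the per-partition bound: if $|{\mathcal G}(P_j)| \geq 1$ the inequality $|{\mathcal G}(P_j)| + 4\valu_i(\wcal(P_j)) \geq 1$ is trivial, and if $|{\mathcal G}(P_j)| = 0$ then $\valu_i(\wcal(P_j)) > 1/4$, so $4\valu_i(\wcal(P_j)) > 1$.

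The main obstacle is not really a mathematical one but a bookkeeping one, namely confirming the degree convention in $G_{01}$ and ruling out unwanted configurations (e.g., a single red plus one blue would also give degree $2$ under a naive count, but is excluded once we adopt the convention that makes Lemma~\ref{lowerwhite3} consistent). Once this is settled, the argument is essentially a weaker, simpler replay of the proof of Lemma~\ref{lowerwhite3}, with Lemma~\ref{pairsmall} doing the heavy lifting in place of the per-item bound $< 1/2$.
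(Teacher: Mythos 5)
Your proposal is correct and follows essentially the same route as the paper's proof: establish the per-partition bound $|{\mathcal G}(P_j)| + 4\valu_i(\wcal(P_j)) \geq 1$ for $P_j \in R_2$ by using $d(v_j)=2$ to force exactly two red items and no blue items, invoking Lemma~\ref{pairsmall} to bound the red pair's value by $3/4$, and splitting on whether $|{\mathcal G}(P_j)|$ is zero, then summing over $R_2$. The paper reaches the ``two red, no blue'' conclusion by noting separately that any blue item forces degree at least $3$, whereas you phrase it via the explicit decomposition $d(v_j) = |{\mathcal R}(P_j)| + 2|{\mathcal B}(P_j)|$; these are the same observation, and your version is if anything slightly more self-contained.
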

\begin{proof}
Let $P_j$ be a partition in $R_2$. First, we show the following inequality holds:
\begin{equation}
\label{gwbound2}
4\valu_i(\wcal(P_j)) + |{\mathcal G}(P_j)| \geq 1
\end{equation}

By the definition of $R_2$, degree of $v_j$ is $2$. Therefore, $P_j$ contains two red items. Note that the degree of the partitions in $B_{01}$ that contain blue items is at least $3$. Thus, $P_j$ contains no blue items. By Lemma \ref{pairsmall}, the total value of the red items in $P_j$ is less than $3/4$. The rest of the items in $P_j$ are either green or white. If $P_j$ contains a green item, then Inequality \eqref{gwbound2} holds. On the other hand, if $P_j$ contains no green items, then $\valu_i(\wcal(P_j)) \geq 1/4$ and hence, $4\valu_i(\wcal(P_j)) \geq 1$. Therefore, Inequality \eqref{gwbound2} holds in both cases. 

Summing up Inequality \eqref{gwbound2} for all the partitions in $R_2$, we have:

$$
\sum_{P_j \in R_2} 4\valu_i(\wcal(P_j)) + |{\mathcal G}(P_j)| \geq |R_2|
$$
which means:
$$
4\valu_i(\wcal(R_2)) + |{\mathcal G}(R_2)| \geq |R_2|
$$
That is:
$$
\valu_i(\wcal(R_2)) \geq \big(|R_2| - |{\mathcal G}(R_2)|\big)\cdot 1/4  
$$
\end{proof}

Putting together Lemmas \ref{lowerwhite1},\ref{lowerwhite2},\ref{lowerwhite3}, and \ref{lowerwhite4} we obtain the following lower bound on the valuation of $\agent_i$ for all white items:
\begin{equation}\label{w1}
\begin{split}
\valu_i(\wcal(\items')) & = \valu_i(\wcal(B_{00})) + \valu_i(\wcal(B_{01})) + \valu_i(\wcal(B_{10}))\\
& \geq \bigg(3|B_{00}| - |{\mathcal{G}}(B_{00})|\bigg)\cdot 1/4 + \bigg(2|B_{10}| - |{\mathcal{B}}(B_{10})| - |{\mathcal{G}}(B_{10})|\bigg)\cdot 1/4 \\ & \hspace{10pt}+ \bigg(2|R_1| - |{\mathcal{G}}(R_1)|\bigg)\cdot 1/4 + \bigg(|R_2| - |{\mathcal{G}}(R_2)|\bigg)\cdot 1/4\\
&= \bigg((3|B_{00}| + 2|B_{10}| + 2|R_1| + |R_2| - |{\mathcal{B}}(B_{10})|) - \big(|{\mathcal{G}}(B_{00})| + |{\mathcal{G}}(B_{10})| + |{\mathcal{G}}(R_1)| + |{\mathcal{G}}(R_2)|\big) \bigg)\cdot 1/4\\
& \geq \bigg((3|B_{00}| + 2|B_{10}| + 2|R_1| + |R_2| )- |{\mathcal{B}}(B_{10})| -  |{\mathcal{G}}(\items')| \bigg)\cdot 1/4
\end{split}
\end{equation}
where $|{\mathcal{G}}(\items')|$ is the total number of green items.

The items in $\wcal(\items')$ are either allocated to an agent during the second phase, or are still in $\fitems$. Let $\wcal_2$ be the white items that are allocated to an agent during the second phase. We have: 
\begin{equation}
\valu_i(\wcal(\items')) = \valu_i(\wcal_2) + V_i(\fitems)
\end{equation}
Now, we present an upper bound on the value of $\valu_i(\wcal_2)$. First, note that the number of agents in $\satagents \setminus \satagents^r_1$ is $n'$. Each of these $n'$ agents has two sets $\firstset_j$ and $\secondset_j$, that leaves us $2n'$ sets. Since $\secondset_i = \emptyset$ we know that at least one of these sets is empty. Moreover, of all these $|{\mathcal{G}}(\items')|$ sets contain a single green item and $|{\mathcal{B}}(B_{10})| + |E_{01}|$ of the sets contain either a single blue item, or a pair of red items (recall that each edge of $G_{01}$ refers to a blue item or a pair of red items). Therefore, the number of the sets that contain only white items is at most:
$$2n' - 1 - |{\mathcal{G}}(\items')| - |{\mathcal{B}}(B_{10})| - |E_{01}|$$

By Lemmas \ref{prvalue} and \ref{c3fsmall}, the value of every set with white items to $\agent_i$ is less than $2\epsilon_i<1/4$ and hence:
\begin{equation}
\label{w_2}
\valu_i(\wcal_2) \leq (2n' - 1 - |{\mathcal{G}}(\items')| - |{\mathcal{B}}(B_{10})| - |E_{01}|)\cdot 1/4
\end{equation}
Subtracting the lower bound obtained for $\valu_i(\wcal(\items'))$ in \eqref{w1} from the upper bound for $\valu_i(\wcal_2)$ in $\eqref{w_2}$ gives us a lower bound on the value of $\fitems$:
\begin{equation}\label{bachekhoshgel}
\begin{split}
\valu_i(\fitems) &= \valu_i(\wcal(\items')) - \valu_i(\wcal_2)\\
 &\geq \bigg((3|B_{00}| + 2|B_{10}| -|{\mathcal{B}}(B_{10})| + 2|R_1| + |R_2|) -  |{\mathcal{G}}(\items')| \bigg)\cdot 1/4 - \valu_i(\wcal_2)\\
 &\geq \bigg((3|B_{00}| + 2|B_{10}| - |{\mathcal{B}}(B_{10})| + 2|R_1| + |R_2|) -  |{\mathcal{G}}(\items')| \bigg)\cdot 1/4\\
 & \qquad - \bigg(2n' - 1 - |{\mathcal{G}}(\items')| - |{\mathcal{B}}(B_{10})| - |E_{01}|\bigg)\cdot 1/4 \\
&= \bigg(3|B_{00}|+2|B_{10}| + 2|R_1| + |R_2| -2n' + 1 + |E_{01}| \bigg)\cdot 1/4 \\
&= \bigg(2|B_{00}|+2|B_{10}| + |B_{00}| + |E_{01}| + 2|R_1| + |R_2| - 2n' + 1\bigg) \cdot 1/4 
\end{split}
\end{equation} 
Next we provide Lemmas \ref{ebound}, \ref{B00size}, and \ref{Esize} to complete the proof.
\begin{lemma}
\label{ebound}
$|B_{00}| \geq |E_{01}| - |B_{01}|$ 
\end{lemma}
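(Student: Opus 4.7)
The plan is to reduce the inequality $|B_{00}| \geq |E_{01}| - |B_{01}|$ to two elementary counting identities plus the trivial bound that every partition in $B_{10}$ contains at least one blue item. Since the proof of Lemma \ref{c2null} is carried out after Lemmas \ref{c3null} and \ref{c1null} are already established, I will use $\cone = \cthree = \emptyset$ freely, which is the key input that lets the counting go through.

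First I will set up notation. Let $k_1$ be the number of agents $\agent_j \in \agents \setminus \satagents_1^r$ with $|\firstset_j|=1$ and $k_2$ the number with $|\firstset_j|=2$. Because $\cone$ and $\cthree$ are empty at the end of the algorithm, every agent in $\agents \setminus \satagents_1^r$ either lies in $\ctwo$ or in $\satagents \setminus \satagents_1^r$, and in both cases has a non-empty $\firstset_j$ of size $1$ or $2$. Hence $k_1 + k_2 = n'$. Moreover, by the coloring rule, $k_1 = |\mathcal{B}(\items')|$ and $2k_2 = |\mathcal{R}(\items')|$. Since $B_{00}$ and $B_{10}$ contain no red items, $|\mathcal{R}(B_{01})| = 2k_2$; and since $B_{00}$ has no blue items, $|\mathcal{B}(\items')| = |\mathcal{B}(B_{10})| + |\mathcal{B}(B_{01})|$.

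Second I will compute $|E_{01}|$. Each loop in $G_{01}$ is contributed by a blue item in a $B_{01}$-partition, yielding $|\mathcal{B}(B_{01})|$ loops. Each non-loop contribution comes from an agent $\agent_j$ with $|\firstset_j|=2$ (a merged pair of red items): since all red items lie in $B_{01}$-partitions, both endpoints of such an edge automatically belong to $V_{01}$, so we get $k_2$ further edges. Thus $|E_{01}| = k_2 + |\mathcal{B}(B_{01})|$. Combined with $|B_{00}| = n' - |B_{01}| - |B_{10}| = k_1 + k_2 - |B_{01}| - |B_{10}|$, the target inequality collapses to
$$k_1 - |B_{10}| \;\geq\; |\mathcal{B}(B_{01})|,$$
which after using $k_1 = |\mathcal{B}(B_{10})| + |\mathcal{B}(B_{01})|$ becomes exactly $|\mathcal{B}(B_{10})| \geq |B_{10}|$.

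This last bound is immediate from the definition of $B_{10}$: every partition in $B_{10}$ contains at least one blue item, so the number of blue items inside these partitions is at least $|B_{10}|$. There is really no obstacle here once the bookkeeping is set up correctly; the only subtle point is ensuring $k_1+k_2 = n'$ with no unaccounted empty-$\firstset_j$ agents, which is exactly why the prior lemmas giving $\cone = \cthree = \emptyset$ are invoked.
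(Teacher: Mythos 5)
Your approach is essentially the same as the paper's: both count $\firstset_j$-sets against partitions and reduce the lemma to $|\mathcal{B}(B_{10})| \geq |B_{10}|$. Your bookkeeping with $k_1,k_2$ and the identities $|E_{01}| = k_2 + |\mathcal{B}(B_{01})|$ and $k_1 = |\mathcal{B}(B_{10})| + |\mathcal{B}(B_{01})|$ are correct. However, the intermediate claim $k_1 + k_2 = n'$ is wrong, and the justification you give for it is the point that needs care.

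You argue that since $\cone = \cthree = \emptyset$, every agent in $\agents \setminus \satagents_1^r$ has $|\firstset_j| \in \{1,2\}$. That is false for agents in $\satagents_3$ who became semi-satisfied during the second phase by the $\bagfilling$ step for $\cthree^f$: by Lemma \ref{lsmall_c3}, a feasible bag for such an agent needs value at least $1/2$ while every pair of free items is worth less than $1/2$ to them, so their $\firstset_j$ has at least three items; those items are white, and the agent contributes to neither $k_1$ nor $k_2$. (You flagged the danger as ``empty-$\firstset_j$ agents,'' but the actual hazard is agents with $|\firstset_j|\geq 3$.) The correct statement is $k_1 + k_2 \leq n'$, which is exactly how the paper phrases it, as $|\mathcal{B}(B_{10})| + |E_{01}| \leq n'$. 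Fortunately your chain only uses this claim to lower-bound $|B_{00}| = n' - |B_{01}| - |B_{10}|$ by $(k_1+k_2) - |B_{01}| - |B_{10}|$, so replacing your equality by the inequality $n' \geq k_1 + k_2$ makes every step go through and the conclusion is unaffected. You should repair the false equality and its justification, but the overall argument and the final inequality are sound.
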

\begin{proof}
First, note that $|B_{00}| + |B_{10}| + |B_{01}|=n'$. Moreover we have $|{\mathcal B}(B_{10})| + |E_{01}| \leq n'$. To show this Lemma, note that each edge in $G_{01}$ corresponds to the first set of an agent in $\satagents \setminus \satagents_1^r$. Also, every blue item in $B_{10}$ corresponds to the first set of an agent in $\satagents \setminus \satagents_1^r$. Therefore, the total number of the agents must be more than this number. By the definition of $B_{10}$, we know that $|{\mathcal B}(B_{10})| \geq |B_{10}|$. Therefore, we have: 

\begin{equation}
\begin{split}
|B_{00}| + |B_{10}| + |B_{01}| &\geq  |{\mathcal B}(B_{10})| + |E_{01}|\\
 &\geq |(B_{10})| + |E_{01}|\\
\end{split}
\end{equation} 
This means:
$$ |B_{00}| \geq |E_{01}| - |B_{01}| $$
\end{proof}

\begin{lemma}
\label{Esize}
$|E_{01}| \geq 3/2 \sum_{j \geq 3}|R_j| + |R_2| + |R_1|/2$ 
\end{lemma}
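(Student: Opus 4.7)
\textbf{Proof proposal for Lemma \ref{Esize}.} The plan is to apply a standard handshake argument to the graph $G_{01}$, counting each partition's contribution to the total degree according to the class $R_j$ it belongs to.

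First I would recall the construction of $G_{01}$: every edge in $E_{01}$ either connects two distinct vertices (corresponding to a pair of red items in $\firstset_j$ lying in different partitions) or is a loop at a single vertex (corresponding either to a pair of red items placed in the same partition, or to a blue item of $B_{01}$). By the usual convention that a loop at $v$ contributes $2$ to the degree of $v$, the handshake identity gives
\[
\sum_{v_j \in V_{01}} d(v_j) \;=\; 2|E_{01}|.
\]
The sets $R_1, R_2, R_3, \dots$ partition $V_{01}$ by degree, so the left-hand side rewrites as $\sum_{j \geq 1} j\, |R_j|$.

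The main (and essentially only) step is then a one-line estimate: drop the $j\geq 3$ coefficients down to $3$ to obtain
\[
2|E_{01}| \;=\; |R_1| + 2|R_2| + \sum_{j\geq 3} j\,|R_j| \;\geq\; |R_1| + 2|R_2| + 3\sum_{j\geq 3}|R_j|,
\]
and dividing by $2$ yields the claimed inequality $|E_{01}| \geq \tfrac{3}{2}\sum_{j\geq 3}|R_j| + |R_2| + |R_1|/2$.

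The only subtle point is the bookkeeping for loops: one must be careful that the definition of $d(v_j)$ used to define $R_j$ is consistent with the convention that loops count twice in the degree sum (otherwise the factor on $|R_2|$ would be off). Assuming this convention, which is the one implicit in the rest of the argument (where ``degree $j$'' of $v_l$ matches the number of red/blue items in $P_l$ paired across edges of $G_{01}$), the proof is a direct application of handshaking with no further obstacles.
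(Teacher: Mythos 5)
Your proof is correct and matches the paper's own argument exactly: both apply the handshake identity $\sum_j d(v_j) = 2|E_{01}|$, rewrite the degree sum as $\sum_j j|R_j|$, and then lower the coefficients for $j\ge 3$ down to $3$. The paper's version is just a compressed one-liner of the same computation, with the loop-counting convention you flag left implicit.
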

\begin{proof}
$|E_{01}| = \frac{\sum_{v_j \in V_{01}} d(v_j)}{2} = \frac{\sum_j j|R_j|}{2} \geq 3/2 \sum_{j \geq 3}|R_j| + |R_2| + |R_1|/2.$
\end{proof}

\begin{lemma}
\label{B00size}
$|B_{00}| \geq \frac{\sum_{j \geq 3}|R_j| - |R_1|}{2}$
\end{lemma}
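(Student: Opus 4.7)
The plan is to derive this bound by chaining together the two preceding lemmas (\ref{ebound} and \ref{Esize}) with the elementary observation that $B_{01}$ decomposes according to the degrees of its vertices in $G_{01}$. Concretely, since every partition in $B_{01}$ corresponds to a vertex of $V_{01}$ whose degree is some positive integer $j$, the sets $R_1, R_2, R_3, \ldots$ form a partition of $B_{01}$, and therefore
\[
|B_{01}| = |R_1| + |R_2| + \sum_{j \geq 3} |R_j|.
\]

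Next I would apply Lemma \ref{ebound} to get $|B_{00}| \geq |E_{01}| - |B_{01}|$, and substitute the bound from Lemma \ref{Esize} to obtain
\[
|B_{00}| \;\geq\; \tfrac{3}{2}\sum_{j\geq 3}|R_j| + |R_2| + \tfrac{1}{2}|R_1| \;-\; \Bigl(|R_1| + |R_2| + \sum_{j\geq 3}|R_j|\Bigr).
\]
Simplifying, the $|R_2|$ terms cancel, one copy of $\sum_{j\geq 3}|R_j|$ is absorbed leaving a factor of $\tfrac{1}{2}$, and the $|R_1|$ terms combine to $-\tfrac{1}{2}|R_1|$, yielding exactly $|B_{00}| \geq \tfrac{1}{2}\bigl(\sum_{j\geq 3}|R_j| - |R_1|\bigr)$.

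There is no real obstacle here: the lemma is a short arithmetic consequence of the two bounds already established. The only thing worth double-checking is the decomposition $|B_{01}| = \sum_j |R_j|$, which is immediate from the construction of $G_{01}$ since every partition in $B_{01}$ has a unique corresponding vertex whose degree places it in exactly one $R_j$ (in particular no partition of $B_{01}$ has degree zero, because a partition enters $B_{01}$ only by containing a red item, which contributes an incident edge or loop to its vertex). This combinatorial bookkeeping is the only subtlety, and once it is in place the inequality follows by direct substitution.
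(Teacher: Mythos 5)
Your proof is correct and follows essentially the same route as the paper: apply Lemma \ref{ebound}, substitute Lemma \ref{Esize}, and use $|B_{01}| = |V_{01}| = \sum_j |R_j|$ to simplify. You even make explicit a small point the paper leaves implicit, namely that $R_0 = \emptyset$ (every partition in $B_{01}$ contains a red item and hence its vertex has degree at least one), which is what makes the substitution exact rather than lossy.
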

\begin{proof}
By Lemma \ref{ebound}, $|B_{00}| \geq |E_{01}| - |B_{01}| $. Furthermore, by Lemma \ref{Esize}, $$|E_{01}| \geq 3/2 \sum_{j \geq 3}|R_j| + |R_2| + |R_1|/2.$$ By these two inequalities, we have:
\begin{equation}
\label{Ebd}
|B_{00}| \geq 3/2 \sum_{j \geq 3}|R_j| + |R_2| + |R_1|/2 - |B_{01}| 
\end{equation}
Also, since there is a one-to-one correspondence between $B_{01}$ and $V_{01}$, $|B_{01}| = |V_{01}|$ holds. By the definition of $R_j$, we have: 
\begin{equation}
\label{Vbd}
|V_{01}| = \sum_j |R_j| 
\end{equation}

By replacing the value obtained for $B_{01}$ from \eqref{Vbd} into Inequality \eqref{Ebd}, we have:
\begin{equation}
\begin{split}
|B_{00}| &\geq 1/2 \sum_{j \geq 3}|R_j| - |R_1|/2 \\
& = \frac{\sum_{j \geq 3}|R_j| - |R_1|}{2}.
\end{split} 
\end{equation}
  
\end{proof}

By applying Lemmas \ref{B00size} and \ref{Esize} to Inequality \eqref{bachekhoshgel} we have:
\begin{equation*}
\begin{split}
\valu_i(\fitems) & = \bigg(2|B_{00}|+2|B_{10}| + |B_{00}| + |E_{01}| + 2|R_1| + |R_2| - 2n' + 1\bigg) \cdot 1/4\\
&\geq \bigg(2|B_{00}|+2|B_{10}| + \frac{\sum_{j \geq 3}|R_j| - |R_1|}{2} + 3/2 \sum_{j \geq 3}|R_j| + |R_2| + |R_1|/2 + 2|R_1| + |R_2| - 2n' + 1\bigg)\cdot 1/4 \\
&= \bigg(2|B_{00}|+2|B_{10}| + \sum_{j \geq 3} 2|R_j| + 2|R_2| + 2|R_1| - 2n' + 1 \bigg)\cdot 1/4
\end{split}
\end{equation*} 
Finally, note that $\sum_{j \geq 3} 2|R_j| + 2|R_2| + 2|R_1| = 2|V_{01}| = 2|B_{01}|$. This, together with the fact that $|B_{00}| + |B_{01}| + |B_{10}| = n'$, yields $\valu_i(\fitems) \geq (2n' - 2n' + 1)\cdot 1/4$. This means $\valu_i(\fitems) \geq 1/4$ which is a contradiction since $\fitems$ is feasible for $\agent_i$.
\end{proof}

\section{Omitted Proofs of Section \ref{submodular}}\label{submodular-appendix}

\begin{observation}\label{obs_E1}
$f^x(S) \leq x$ for every given $S$.
\end{observation}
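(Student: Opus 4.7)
The plan is to prove this directly from Definition \ref{fxfunction} by a two-case analysis on the value $f(S)$. There is no real content here beyond unpacking the piecewise definition of the ceiling function, so the ``proof'' is essentially a one-liner.

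First I would fix an arbitrary $S \subseteq \domp(f)$ and split into cases according to whether $f(S) \leq x$ or $f(S) > x$. In the first case, $f^x(S) = f(S) \leq x$ by the hypothesis of the case. In the second case, $f^x(S) = x$ by definition, and trivially $x \leq x$. Either way, $f^x(S) \leq x$, which is what we wanted. Since both cases are exhaustive and mutually exclusive, the bound holds for every $S$.

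There is no obstacle to speak of: the statement is an immediate consequence of the fact that the ceiling function caps values at $x$, and the only ``work'' is to observe that both branches of the piecewise definition return something bounded by $x$. I would not bother invoking monotonicity, submodularity, or any of the structural lemmas (such as Lemma \ref{ceilingfunctions}), since none of those properties are needed to establish this elementary upper bound.
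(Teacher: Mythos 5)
Your proposal is correct and is the only sensible way to prove this: a direct two-case unpacking of Definition \ref{fxfunction}. The paper treats Observation \ref{obs_E1} as immediate from the definition and gives no explicit argument, which is consistent with your remark that there is no real content beyond observing both branches of the piecewise definition are bounded by $x$.
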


\begin{observation}\label{obs_E2}
$f^x(S) \leq f(S)$ for every given $S$.
\end{observation}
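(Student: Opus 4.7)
The plan is to prove this observation by a direct two-case argument from the definition of $f^x$ given in Definition \ref{fxfunction}, since the claim is essentially immediate. There is no real obstacle here; the statement is a routine consequence of how the ceiling function was constructed.

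Concretely, I would fix an arbitrary set $S \subseteq \domp(f)$ and split on whether $f(S) \leq x$ or $f(S) > x$. In the first case, the definition gives $f^x(S) = f(S)$, so the inequality $f^x(S) \leq f(S)$ holds with equality. In the second case, the definition gives $f^x(S) = x$, and since we are in the branch $f(S) > x$, we conclude $f^x(S) = x < f(S)$, which is even a strict inequality. Combining the two cases yields $f^x(S) \leq f(S)$ for every $S$, as claimed.

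More compactly, one can just observe that the definition of $f^x(S)$ as a case expression is equivalent to writing $f^x(S) = \min\{x, f(S)\}$, and hence $f^x(S) \leq f(S)$ is immediate from the fact that the minimum of two quantities is at most either one. The only thing worth emphasizing in the write-up is that this observation, together with Observation \ref{obs_E1}, captures the two basic properties of the ceiling construction that the later arguments (for example, the bounded welfare analysis in the proofs of Theorem \ref{submodulartheorem} and Theorem \ref{xosproof}) will invoke silently.
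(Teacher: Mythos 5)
Your proof is correct, and since the paper states this as an observation without proof, your direct two-case argument from Definition \ref{fxfunction} (equivalently, the $\min\{x, f(S)\}$ reformulation) is exactly the intended, and essentially only, justification.
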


\begin{proof}[Of Lemma \ref{ceilingfunctions}]
\\
\textbf{First Claim:} By definition of submodular functions, for given sets $A$ and $B$ we have:
$$ f(A \cup B) \leq f(A) + f(B) - f(A \cap B)  $$
We prove that $f^x(.)$ is a submodular function in three different cases:\\

First Case: Let both $f(A)$ and $f(B)$ be at least $x$. According to Observation \ref{obs_E1}, $f^x(A \cup B)$ and $f^x(A \cap B)$ are bounded by $x$. Therefore, $f^x(A \cup B) + f^x(A \cap B) \leq 2x$, which yields: 
$$f^x(A \cup B) + f^x(A \cap B) \leq f^x(A) + f^x(B)$$

Second Case: In this case one of $f(A)$ and $f(B)$ is at least $x$. We have $f(A \cup B) \geq x$ and $f(A \cap B)$ is no more than max $\{f(A), f(B)\}$. As a result $f^x(A \cup B)$ and one of $f^x(A)$ or $f^x(B)$ are equal to $x$ which yields:
$$f^x(A \cup B) + f^x(A \cap B) \leq f^x(A) + f^x(B)$$

Third Case: In this case both $f(A)$ and $f(B)$ are less than $x$, and $f(A \cap B)$ is less than $x$ too. Since $f^x(A) = f(A)$, $f^x(B) = f(B)$, $f^x(A \cap B) = f(A \cap B)$, according to Observation \ref{obs_E2}, $f^x(A \cup B) \leq f(A \cup B)$ holds. Since $f(.)$ is a submodular function, we conclude that:
$$f^x(A \cup B) \leq f^x(A) + f^x(B) - f^x(A \cap B).$$\\
\textbf{Second Claim:} Since $f(.)$ is an XOS set function, by definition, there exists a finite set of additive functions  $\{f_1, f_2, \ldots, f_{\alpha}\}$ such that $$f(S) = \max_{i=1}^{\alpha} f_i(S)$$ for any set $S \subseteq \domp(f)$. With that in hand, for a given real number $x$, we define an XOS set function $g(.)$, and show $g(.)$ is equal to $f^{x}(.)$.

We define $g(.)$ on the same domain as $f(.)$. Moreover, based on $\{f_1, f_2, \ldots, f_{\alpha}\}$, we define a finite set of additive functions $\{g_1, g_2, \ldots, g_{\beta}\}$ that describe $g$. More precisely, for each set $S$ in domain of $f(.)$ we define a new additive function like $g_{\gamma}$ in $g(.)$ as follows: Without loss of generality let $f_{\delta}$ be the function which maximizes $f(S)$. For each $b_i \notin S$ let $g_{\gamma}(b_i) = 0$. Furthermore, for each $b_i \in S$ if $f(S) \leq x$ let $g_{\gamma}(b_i) = f_{\delta}(b_i)$, and otherwise let $g_{\gamma}(b_i) = \frac{x}{f(S)} f_{\delta}(b_i)$. 

We claim that $g(.)$ is equivalent to $f^x(.)$, which implies $f^x(.)$ is an XOS function. $g(.)$ and $f^x(.)$ are two functions which have equal domains. First, we prove that $g(S) \leq f(S)$ for any given set $S$. According to construction of $g(.)$, for each additive function in $g(.)$ such $g_{\gamma}$, there is at least one additive function in $f(.)$ such $f_{\delta}$ where $g_{\gamma}(b_i) \leq f_{\delta}(b_i)$ for each $b_i \in \items$. Therefore, for any given set $S$ we have:
 \begin{equation}\label{hineq1} g(S) \leq f(S)  \end{equation} 
Now, according to the construction of $g(.)$, for any given set $S$ where $f(S) \leq x$, we have a function $g_{\gamma}(S) = f(S)$, and where $f(S) > x$, we have a function $g_{\gamma}(S) = x$. Therefore, we can conclude that: 
 \begin{equation}\label{hineq2} g(S) \geq f^x(S)  \end{equation} 
 
For any given set $S$ where $f(S) \leq x$, according to the definition of $f^x(.)$, $f(S) = f^x(S)$, and using Inequalities \eqref{hineq1} and \eqref{hineq2} we argue that $f^x(S) = g(S)$. Moreover, according to the construction of $g(.)$, $g(S) \leq x$ for any given set $S$. Therefore, for any given set $S$ where $f(S) > x$, according to the definition of $f^x(.)$ and Inequality \eqref{hineq2}, $f^x(S) = g(S) = x$. As a result, by considering these two cases we argue that $f^x(.)$ and $g(.)$ are equivalent, which shows $f^x(.)$ is an XOS function.\\
\textbf{Third Claim:} In this claim, we use a similar argument to the first claim. By definition of subadditive functions for any given sets $A$ and $B$, we have:
$$f(A \cup B) \leq f(A) + f(B)$$
We prove that $f^x(.)$ meets the definition of subadditive functions by considering two different cases. In the first case at least one of $f(A)$ and $f(B)$ is at least $x$, and in the second case both $f(A)$ and $f(B)$ is less than $x$.\\

First Case: In this case $f^x(A) + f^x(B)$ is at least $x$, and since $f^x(S) \leq x$ for any given set $S$, $f^x(A \cup B) \leq x$. Therefore, 
$$f^x(A \cup B) \leq f^x(A) + f^x(B)$$

Second Case: Since $f^x(A \cup B) \leq f(A \cup B)$, $f(A \cup B) \leq f(A) + f(B)$, $f(A) = f^x(A)$, and $f(B) = f^x(B)$, we have:
$$f^x(A \cup B) \leq f^x(A) + f^x(B)$$
 
\end{proof}

\begin{proof}[Of Lemma \ref{submodularaval}]
Since $f(.)$ is submodular, according to the definition of submodular functions, for every given sets $X$ and $Y$ in domain of $f(.)$ with $X \subseteq Y$ and every $x \in \items \setminus Y$ we have:
\begin{equation} \label{hineq5} f(X \cup \{x\}) - f(X) \geq f(Y \cup \{x\}) - f(Y) \end{equation}

Let $S_i = \{e_1, e_2, \ldots, e_{\alpha}\}$, $T_0 = \emptyset$, and $T_j = \{e_1, e_2, \ldots, e_j\}$, for every $1 \leq j \leq \alpha$. Since $T_j \subseteq S_i$ for each $0 \leq j \leq \alpha$ and $f_i$ is a submodular function, according to Inequality \eqref{hineq5} we have:
\begin{equation} \label{hineq6} \sum_{1 \leq j \leq \alpha} f_i(S_i \setminus T_{j-1}) - f_i(S_i \setminus T_j) \geq \sum_{1 \leq j \leq \alpha} f_i(S_i) - f_i(S_i - e_j) \end{equation}

Since $f_i(S_i) = \sum_{1 \leq j \leq \alpha} f_i(S_i \setminus T_{j-1}) - f_i(S_i \setminus T_j)$, we can rewrite Inequality \eqref{hineq6} for every $1 \leq i \leq k$ as follows:
\begin{equation} \label{hineq7} f_i(S_i) \geq \sum_{e \in S_i} f_i(S_i) - f_i(S_i - e) \end{equation}

For every $1 \leq i \leq k$ we can rewrite Inequality \eqref{hineq7} as follows:
\begin{equation} \label{hineq8} \sum_{e \in s_i} f_i(S_i-e) \geq (|S_i| - 1) f_i(S_i) \end{equation}

By adding $(|\bigcup S_i| - |S_i|) f_i(S_i)$ to the both sides of Inequality \eqref{hineq8}, we have:
\begin{equation} 
\label{hineq9} 
 \begin{split}
 (|\bigcup S_i| - |S_i|) f_i(S_i) + \sum_{e \in S_i} f_i(S_i - e) &= \sum_{e \in \bigcup S_i} f_i(S_i \setminus \{e\}) \\
 &\geq (|\bigcup S_i| - 1) f_i(S_i)
 \end{split}
\end{equation}

Since Inequality \eqref{hineq9} holds for every $1 \leq i \leq k$, we can sum up both sides of Inequality \eqref{hineq9} as follows:

\begin{equation} \label{hineq10} \sum_{1 \leq i \leq k}\sum_{e \in \bigcup S_i} f_i(S_i - e) \geq \sum_{1 \leq i \leq k} (|\bigcup S_i| - 1) f_i(S_i) \end{equation}

By dividing both sides of Inequality \eqref{hineq10} over $1/|\bigcup S_i|$ we obtain:

\begin{equation}
\label{hineq11} 
\begin{split}
    \frac{1}{|\bigcup S_i|}(\sum_{e \in \bigcup S_i} \sum_{1 \leq i \leq k} f_i(S_i - e)) &= \mathbb{E}[\sum_{1 \leq i \leq k} f_i(S^*_i)]\\
    &\geq \sum_{1 \leq i \leq k} f_i(S_i)\frac{|\bigcup S_i| -1}{|\bigcup S_i|}.
\end{split}
\end{equation}

\end{proof}

\begin{proof}[Of Lemma \ref{submodulardovom}]
Similar to the proof of Lemma \ref{submodularaval}, we use Inequality \eqref{hineq5} as a definition of submodular functions. Let $S'_i = S_i \setminus S = \{e_1, e_2, \ldots, e_{\alpha}\}$, $T_0 = S$, and $T_j = S \cup \{e_1, e_2, \ldots, e_j\}$ for $1 \leq j \leq \alpha$. According to $f(S) < 1/3$, $f(S \cup S'_i) \geq 1$, and Inequality \eqref{hineq5} as a definition of sub-modular functions, we have:

\begin{equation}
\label{hineq12} 
\begin{split}
    2/3 &< f(S \cup S') - f(S)\\
    &= \sum_{1 \leq j \leq \alpha} f(T_{j-1} \cup \{e_j\}) - f(T_{j-1})\\
    &\leq \sum_{e \in S'_i} f(S \cup \{e\}) - f(S)
\end{split}
\end{equation}

Similar to Inequality \eqref{hineq10}, we can rewrite Inequality \eqref{hineq12} with a summation, since Inequality \eqref{hineq12} holds for any $1 \leq i \leq k$.

\begin{equation}
    \label{hineq13}
    2k/3 < \sum_{1 \leq i \leq k} \sum_{e \in S'_i} f(S \cup \{e\}) - f(S)
\end{equation}

By dividing both sides of Inequality \eqref{hineq13} over $1/ |\bigcup S_i \setminus S|$ we have:

\begin{equation}
    \label{hineq14}
    \begin{split}
    \frac{2k/3}{|\bigcup S_i \setminus S|} &< \frac{1}{|\bigcup S_i \setminus S|}(\sum_{1 \leq i \leq k} \sum_{e \in S'_i} f(S \cup \{e\}) - f(S))\\
    &= \mathbb{E}[f(S \cup \{e\}) - f(S)]
    \end{split}
\end{equation}
\end{proof}
\section{Omitted Proofs of Section \ref{xos}}\label{xosappendix}

\begin{proof}[of Lemma \ref{xos2lemma}]
According to the definition of XOS function, $f(.)$ is an XOS function with a finite set of additive functions $\{g_1, g_2, \ldots, g_{\alpha}\}$ where $f(S) = \max_{i=1}^{\alpha} g_i(S)$ for any set $S \in \domp(f)$. Let $g_j(.)$ be the additive function which maximizes $S$. Let $g_j(S_1) = \alpha_1, g_j(S_2) = \alpha_2, \ldots, g_j(S_k) = \alpha_k$, which yields $\beta = \sum \alpha_i$. Since $g_j(S_i) = \alpha_i$, $f(S \setminus S_i) \geq \beta - \alpha_i$. Therefore, we have:

\begin{equation}
    \label{hineq15}
    \begin{split}
    \sum f(S) - f(S \setminus S_i) &\leq \sum \beta - (\beta - \alpha_i)\\
    &= \beta\\
    &= f(S)
    \end{split}
\end{equation}

\end{proof}

\begin{proof}[of Lemma \ref{2nsets}]
According to the definition of $\MMS$, we know that $a_{i}$ can divide items to $n$ sets ${\cal P} = \langle P_1, P_2, \ldots, P_n \rangle$ such that $V_i(P_j) \geq 1$ for any $P_j$. The catch is that $a_i$ can divide each of these $n$ sets to two disjoint sets such that the value of each of these new sets be at least $2/5$ to him. Let $T = \{b_1, b_2, \ldots, b_\gamma\}$ be one of these $n$ sets, and $g_j(.)$ be an additive function which maximizes $V_i(T)$. Let $T_k = \{b_1, b_2, \ldots, b_k\}$ for any $1 \leq k \leq \gamma$. According to Lemma \ref{remove1}, since the problem is $1/5$-irreducible, the value of any item is less than $1/5$ to $a_i$. Therefore, there is a set $T_k$ among $T_1$ to $T_\gamma$ where $2/5 \leq g_j(T_k) < 3/5$. Since $g_j(.)$ is one of additive functions of XOS function $V_i$, we have $V_i(T_k) \geq 2/5$. Moreover, since $g_j(T_k) < 3/5$, $g_j(T \setminus T_k) \geq 2/5$, which yields $V_i(T \setminus T_k) \geq 2/5$. As a conclusion, we can divide each of $n$ sets to two disjoint sets with at least $2/5$ value to $a_i$.

\end{proof}


\end{document}